\DeclareMathOperator{\bigo}{O}
\DeclareMathOperator{\smallo}{o}
\DeclareMathOperator{\LRDU}{\mathsf{LRDU}}
\DeclareMathOperator{\HV}{\mathsf{HV}}
\DeclareMathOperator{\Llabel}{\mathsf{L}}
\DeclareMathOperator{\Rlabel}{\mathsf{R}}
\DeclareMathOperator{\Dlabel}{\mathsf{D}}
\DeclareMathOperator{\Ulabel}{\mathsf{U}}
\DeclareMathOperator{\Hlabel}{\mathsf{H}}
\DeclareMathOperator{\Vlabel}{\mathsf{V}}
\DeclareMathOperator{\npclass}{\mathsf{NP}}
\DeclareMathOperator{\pclass}{\mathsf{P}}
\DeclareMathOperator{\visibility}{\sim}
\DeclareMathOperator{\layoutToGraph}{\mathsf{G}}
\DeclareMathOperator{\unitSquareGraphs}{\mathsf{USV}}
\DeclareMathOperator{\unitSquareGridGraphs}{\mathsf{USGV}}
\DeclareMathOperator{\notallequalthreeSat}{\mathsf{NAE-3SAT}}
\DeclareMathOperator{\threeSat}{\mathsf{3SAT}}
\DeclareMathOperator{\threepartition}{\mathsf{3Part}}
\DeclareMathOperator{\weak}{\mathsf{w}}
\DeclareMathOperator{\visibilityGraphs}{\mathsf{V}}
\DeclareMathOperator{\recognitionProb}{\textsc{Rec}}
\DeclareMathOperator{\Hvis}{\rightarrow}
\DeclareMathOperator{\Vvis}{\downarrow}
\DeclareMathOperator{\symHvis}{\leftrightarrow}
\DeclareMathOperator{\symVvis}{\updownarrow}
\newcommand{\visomorphic}{\textsf{V}-isomorphic}
\newcommand{\visomorphism}{\textsf{V}-isomorphism}
\newcommand{\visomorphically}{\textsf{V}-isomorphically}
\newtheorem{lemma}{Lemma}
\newtheorem{theorem}{Theorem}
\newtheorem{proposition}{Proposition}
\title{Combinatorial Properties and Recognition of Unit Square Visibility Graphs\thanks{This document is a full version (i.\,e., it contains all proofs in the Appendix) of the conference paper \cite{CasFerGriSchWhi2017}.}}
\author[1]{Katrin Casel}
\author[1]{Henning Fernau}
\author[2]{Alexander Grigoriev}
\author[1]{Markus L. Schmid}
\author[3]{Sue Whitesides}
\affil[1]{Fachbereich 4 -- Abteilung Informatikwissenschaften, Universit\"at Trier, 54286 Trier, Germany, \texttt{\{casel,fernau,mschmid\}@uni-trier.de}}
\affil[2]{Maastricht University, School of Business and Economics, P.O.Box 616, 6200 MD Maastricht, The Netherlands, \texttt{a.grigoriev@maastrichtuniversity.nl}}
\affil[3]{University of Victoria, Department of Computer Science, PO Box 1700, STN CSC, Victoria, BC, Canada V8W 2Y2, \texttt{sue@uvic.ca}}
\begin{document}

\maketitle

\begin{abstract}
Unit square (grid) visibility graphs (USV and USGV, resp.) are described by axis-parallel visibility between unit squares placed (on integer grid coordinates) in the plane. We investigate combinatorial properties of these graph classes and the hardness of variants of the recognition problem, i.\,e., the problem of representing USGV with fixed visibilities within small area and, for USV, the general recognition problem.
\end{abstract}

\section{Introduction}

A visibility representation of a graph $G$ is a set $\mathcal{R} = \{R_i \mid 1 \leq i \leq n\}$ of geometric objects (e.\,g., bars, rectangles, etc.) along with some kind of geometric visibility relation $\sim$ over $\mathcal{R}$ (e.\,g., axis-parallel visibility), such that $G = (\{v_i \mid 1 \leq i \leq n\}, \{\{v_i, v_j\} \mid R_i \sim R_j\})$. In this work, we focus on rectangle visibility graphs, which are represented by axis aligned rectangles in the plane and vertical and horizontal axis parallel visibility between them. In particular, we consider the more restricted variant of \emph{unit square visibility graphs} (see~\cite{DeaEHP2008}), and, in addition, we consider the case where the unit squares are placed on an integer grid (an alternative characterisation of the well-known class of graphs with rectilinear drawings).\par
The study of visibility representations is of interest, both for applications and for graph classes, and has remained an active research area\footnote{The 24th International Symposium on Graph Drawing and Network Visualization (GD 2016) featured an entire session on visibility representation (see \cite{ArlBGEGLMMWW2016a,ChaGGKL2016a,ChaLPW2016a,GiaDELMMW2016a}), and the joint workshop day of the Symposium on Computational Geometry (SoCG) and the ACM Symposium on Theory of Computing (STOC) included a workshop on geometric representations of graphs.} mainly because axis-aligned visibilities give rise to graph and network visualizations that satisfy good readability criteria: straight edges, and edges that cross only at right angles. These properties are highly desirable in the design of layouts of circuits and communication paths. Indeed, the study of graphs arising from vertical visibilities among disjoint, horizontal line segments (``bars'') in the plane originated during the 1980's in the context of VLSI design problems; see \cite{DucHVM83,Wis85,TamTol86}.  \par
Because bar visibility graphs are necessarily planar, this model has been extended in various ways in order to represent larger classes of graphs. Such extensions include new definitions of visibility (e.\,g., sight lines that may penetrate up to $k$ bars~\cite{DeaEGLST2007} or other geometric objects~\cite{BabGenKho2015}), vertex representations by other objects (e.\,g., rectangles, L-shapes~\cite{EvaLioMon2016}, and sets of up to $t$ bars~\cite{GauRosWen2016}), extensions to higher dimensional objects (see, e.\,g., \cite{BoseEFHLMRRSWZ98} for visibility representation in 3D by axis aligned horizontal rectangles with vertical visibilities, or \cite{FekHouWhi9596}, which studies visibility representations by unit squares floating parallel to the $x,y$-plane and lines of sight that are parallel to the $z$ axis). The desire for polysemy, that is, the expression of more than one graph by means of one underlying set of objects, has also provided impetus in the study of visibility representations (see for example \cite{BieLioMon2016} and \cite{StrWhi2003}).\par
Rectangle visibility graphs have the attractive property, for visualization purposes, that they yield right angle crossing drawings (RAC graphs (see~\cite{DidEadLio2011}), whose edges are drawn as sequences of horizontal and vertical segments forming a polyline with orthogonal bends), which have seen considerable interest in the graph drawing community. Unit square graphs form a subfamily of L-visibility graphs (see~\cite{EvaLioMon2016}) and their grid variant a subfamily of RACs with no bends (note that RAC recognition for 0-bends is $\npclass$-hard~\cite{ArgBekSym2012}).\par
Using visibilities among objects is but one example of the use of binary geometric relations for this purpose; other geometric relations include intersection relations (e.g., of strings or straight line segments in the plane, of boxes in arbitrary dimension), proximity relations (e.g., of points in the plane), and contact relations. In the literature, for the resulting graph classes, combinatorial aspects, relationships to other graph classes, as well as computational aspects are studied (see \cite{Fel2013} for a survey focusing on contact representations of rectangles).\par
Finally, we note that visibility properties among sets of objects have been studied in a number of contexts, including motion planning and computer graphics. In~\cite{Nil69} it is proposed to find shortest paths for mobile robots moving in a cluttered environment by looking for shortest paths in the visibility graph of the points located at the vertices of polygonal obstacles. This led to a search for fast algorithms to compute visibility graphs of polygons, as well as to a search for finding shortest paths without computing the entire visibility graph. \par
We extend the known combinatorial properties of unit square visibility graphs from~\cite{DeaEHP2008}, and proof their recognition problem to be $\npclass$-hard (this requires a reduction that is highly non-trivial on a technical level with the main difficulty to identify graph structures that can be shown to be representable by unit square layouts in a unique way to gain sufficient control for designing suitable gadgets). With respect to unit square \emph{grid} visibility graphs, we extend known combinatorial properties and consider variants of its recognition problem. \par
Due to space constraints, all results are formally proven in the Appendix; we sketch, however, the proof ideas for our main results. 

\section{Preliminaries}\label{sec:preliminaries}

A \emph{visibility layout}, or simply \emph{layout}, is a set $\mathcal{R} = \{R_i \mid 1 \leq i \leq n\}$ with $n \in \mathbb{N}$, where $R_i$ are closed and pairwise disjoint  axis-parallel rectangles in the plane; the \emph{position} of such a rectangle is the coordinate of its lower left corner. For every $R_i, R_j \in \mathcal{R}$, a closed non-degenerate axis-parallel rectangle $S$ (i.\,e., a non-empty closed rectangle that is not a line segment) is a \emph{visibility rectangle for $R_i$ and $R_j$} if one side of $S$ is contained in $R_i$ and the opposite side in $R_j$. We define $R_i \Hvis_{\mathcal{R}} R_j$ ($R_i \Vvis_{\mathcal{R}} R_j$), if there is a visibility rectangle $S$ for $R_i$ and $R_j$, such that the left side (upper side) of $S$ is contained in $R_i$, the right side (lower side) of $S$ is contained in $R_j$ and $S \cap R_k = \emptyset$, for every $R_k \in \mathcal{R} \setminus \{R_i, R_j\}$. Let $\symHvis_{\mathcal{R}}$ and $\symVvis_{\mathcal{R}}$ be the symmetric closures of $\Hvis_{\mathcal{R}}$ and $\Vvis_{\mathcal{R}}$, respectively. Finally, $R_i \visibility_{\mathcal{R}} R_j$ if $R_i \symHvis_{\mathcal{R}} R_j$ or $R_i \symVvis_{\mathcal{R}} R_j$ ($\visibility_{\mathcal{R}}$ is the \emph{visibility relation} (\emph{with respect to $\mathcal{R}$})). If the layout $\mathcal{R}$ is clear from the context or negligible, we drop the subscript $\mathcal{R}$. We denote $R_i \visibility R_j$, $R_i \symHvis R_j$ and $R_i \Hvis R_j$ also as $R_i$ \emph{sees} $R_j$, $R_i$ \emph{horizontally sees} $R_j$ and $R_i$ \emph{sees} $R_j$ \emph{from the left}, respectively, and analogous terminology applies to vertical visibilities. For $S, T \subseteq \mathcal{R}$, we use $S \Hvis_{\mathcal{R}} T$ as shorthand form for $\bigwedge_{R \in S, R' \in T} R \Hvis_{\mathcal{R}} R'$. \par
A layout $\mathcal{R} = \{R_i \mid 1 \leq i \leq n\}$ \emph{represents} the undirected graph $\layoutToGraph(\mathcal{R}) = (\{v_i \mid 1 \leq i \leq n\}, \{\{v_i, v_j\} \mid 1 \leq i, j \leq n, R_i \visibility R_j\})$, which is then called a \emph{visibility graph}, and the class of visibility graphs is denoted by $\visibilityGraphs$. A graph is a \emph{weak} visibility graph, if it can be obtained from a visibility graph by deleting some edges and the corresponding class of graphs is denoted by $\visibilityGraphs_{\weak}$. As a convention, for a visibility graph $G = (V, E)$ and a layout representing it we denote by $R_v$ the rectangle for $v\in V$ and define $R_{V'} = \{R_x \mid x \in V'\}$ for every $V' \subseteq V$. We call layouts $\mathcal{R}_1$ and $\mathcal{R}_2$ \emph{isomorphic} if $\layoutToGraph(\mathcal{R}_1)$ and $\layoutToGraph(\mathcal{R}_2)$ are isomorphic. Furthermore, we call $\mathcal{R}_1$ and $\mathcal{R}_2$ \emph{\visomorphic{}} if, for some $x \in \{\Hvis_{\mathcal{R}_1}, \Hvis^{-1}_{\mathcal{R}_1}\}$ and $y \in \{\Vvis_{\mathcal{R}_1}, \Vvis^{-1}_{\mathcal{R}_1}\}$, the relational structure $(\mathcal{R}_1, \Hvis_{\mathcal{R}_1}, \Vvis_{\mathcal{R}_1})$ is isomorphic to $(\mathcal{R}_2, x, y)$ or $(\mathcal{R}_2, y, x)$.\footnote{By $\preceq^{-1}$, we denote the inverse of a binary relation $\preceq$.}\par
\emph{Unit square visibility graphs} ($\unitSquareGraphs$) and \emph{unit square grid visibility graphs} ($\unitSquareGridGraphs$) are represented by \emph{unit square layouts}, where every $R \in \mathcal{R}$ is the unit square, and \emph{unit square grid layouts}, where additionally the position of every $R$ is from $\mathbb{N} \times \mathbb{N}$.\footnote{Note that in the grid case, if a unit square is positioned at $(x,y)$, then this is the only unit square on coordinates $(x', y')$, $x' \in \{x-1, x, x+1\}$, $y' \in \{y-1, y, y+1\}$.} The weak classes $\unitSquareGraphs_{\weak}$ and $\unitSquareGridGraphs_{\weak}$ are defined accordingly.\par
For a graph $G = (V, E)$, $N(v)$ is the \emph{neighbourhood} of $v \in V$, $\vec{E}$ denotes an oriented version of $E$, i.\,e., $E = \{\{u, v\} \mid (u, v) \in \vec{E}\}$, and $f\colon \vec{E}\to E, (u,v)\mapsto \{u,v\}$ is a bijection. Let $\Llabel, \Rlabel$ and $\Dlabel, \Ulabel$ be pairs of complementary values (for $X \in \{\Llabel, \Rlabel, \Dlabel, \Ulabel\}$, $\overline{X}$ denotes its complement). An \emph{$\LRDU$-restriction} (for $G$) is a labeling $\sigma \colon \vec{E} \to \{\Llabel, \Rlabel, \Dlabel, \Ulabel\}$ and it is \emph{valid} if, for every $(u, v) \in \vec{E}$ with $\sigma((u, v)) = X$ and every $w \in V \setminus \{u, v\}$, $\sigma((u, w)) \neq X \neq \sigma((w, v))$ and $\sigma((v, w)) \neq \overline{X} \neq \sigma((w, u))$. Obviously, $\LRDU$-restrictions only exist for graphs with maximum degree $4$. A unit square grid visibility layout \emph{satisfies} an $\LRDU$-restriction $\sigma$ if $\sigma((u, v)) = \Llabel$ implies $R_{v} \Hvis R_{u}$, $\sigma((u, v)) = \Rlabel$ implies $R_{u} \Hvis R_{v}$, $\sigma((u, v)) = \Dlabel$ implies $R_{u} \Vvis R_{v}$ and $\sigma((u, v)) = \Ulabel$ implies $R_{v} \Vvis R_{u}$. An \emph{$\HV$-restriction} (for $G$) is a labeling $\sigma \colon E \to \{\Hlabel, \Vlabel\}$ and it is \emph{valid} if, for every $u \in V$ at most two incident edges are labeled $\Hlabel$ and at most two incident edges are labeled $\Vlabel$. A unit square grid visibility layout \emph{satisfies} an $\HV$-restriction $\sigma$ if $\sigma(\{u, v\}) = \Hlabel$ implies $R_{v} \symHvis R_{u}$ and $\sigma(\{u, v\}) = \Vlabel$ implies $R_{v} \symVvis R_{u}$.\par
For a class $\mathfrak{G}$ of undirected graphs, the \emph{recognition problem for $\mathfrak{G}$} (denoted by $\recognitionProb(\mathfrak{G}))$ for short) is the problem to decide, for a given undirected graph $G$, whether or not $G \in \mathfrak{G}$. In the following, we shall consider the problems $\recognitionProb(\unitSquareGridGraphs)$ and $\recognitionProb(\unitSquareGraphs)$.\par
We briefly recall some established geometric graph representations relevant to this work. A \emph{rectilinear drawing} (see~\cite{EadHonPoo0910,ManPPT1011}) of a graph $G = (V, E)$ is a pair of mappings $x, y \colon V \to \mathbb{Z}$, where, for every $v \in V$, $x(v)$ and $y(v)$ represent the $x$- and $y$-coordinates of $v$ on the grid and, for every edge $\{u, v\} \in E$, $(x(u), y(u))$ and $(x(v), y(v))$ are the endpoints of a horizontal or vertical line segment that does not contain any $(x(w), y(w))$ with $w \in V \setminus \{u, v\}$. A graph has \emph{resolution} $\frac{2\pi}{d}$ if it has a drawing in which the degree of the angle between any two edges incident to a common vertex is at least $\frac{2\pi}{d}$. We call such graphs \emph{resolution-$\frac{2\pi}{d}$ graphs} and  are mainly interested in the case $d = 4$, see~\cite{ForHHKLSWW90}. For planar graphs, resolution-$\frac{2\pi}{4}$ graphs are just rectilinear graphs, see~\cite{BodTel2004}. A \emph{bendless right angle crossing} (BRAC) \emph{drawing} of a graph is a straight-line drawing in which every crossing of two edges is at right angles.\footnote{In the literature (e.\,g.,~\cite{DidEadLio2011}), the edges of a RAC-drawing are usually allowed to have bends; the investigated questions are on finding RAC-drawings that minimise the number of bends and crossings.} Note that in a BRAC-drawing or a resolution-$\frac{2\pi}{4}$ drawing, edges are not necessarily axis-parallel (like it is the case for visibility layouts and rectilinear drawings). A graph is called \emph{rectilinear} or \emph{BRAC graph} if it has a rectilinear or  BRAC-drawing, respectively.

\section{Unit Square Grid Visibility Graphs}\label{gridCaseSection}
 
The readability of graph drawings is mainly affected by its \emph{angular resolution} (angles formed by consecutive edges incident to a common node) and its \emph{crossing resolution} (angles formed at edge crossings); see the discussion in~\cite{ArgBekSym1011}. In this regard, resolution-$\frac{\pi}{2}$ graphs and BRAC graphs have an angular resolution and crossing resolution of $\frac{\pi}{2}$, respectively, while rectilinear drawings and unit square grid visibility layouts force both resolutions to be $\frac{\pi}{2}$.\par
The question arises of how these classes relate to each other and in this regard, we first note that $\unitSquareGridGraphs$ and rectilinear graphs coincide. More precisely, a unit square grid layout can be transformed into a rectilinear drawing by replacing every unit square on position $(x, y)$ by a vertex on position $(x, y)$ and translate the former visibilities into straight-line segments. Transforming a rectilinear drawing into a unit square grid layout requires scaling it first by factor $2$ and then replacing each vertex on position $(x, y)$ by a unit square on position $(x, y)$ (without scaling, sides or corners of unit squares may overlap). This only results in a \emph{weak} layout, since visibilities may be created that do not correspond to edges in the rectilinear drawing. However, any weak unit square grid visibility graph can be transformed into a unit square grid visibility graph (as formally stated below in Theorem~\ref{weakStrongEqualityTheorem}).\par
Since all these graphs except the BRAC graphs have maximum degree $4$, we only consider degree-$4$ BRAC graphs. Obviously, resolution-$\frac{\pi}{2}$ graphs and degree-$4$ BRAC graphs are both superclasses of $\unitSquareGridGraphs$ (and rectilinear graphs). Witnessed by $K_3$, the inclusion in degree-$4$ BRAC graphs is proper, while the analogous question w.\,r.\,t. resolution-$\frac{\pi}{2}$ graphs is open. Moreover, $K_3$ is also an example of a degree-$4$ BRAC graph that is not a resolution-$\frac{\pi}{2}$ graph; whether there exist resolution-$\frac{\pi}{2}$ graphs without a BRAC-drawing is open. \par
Due to the equivalence of $\unitSquareGridGraphs$ and rectilinear graphs, results for the latter graph class carry over to the former. In this regard, we first mention that the $\npclass$-hardness proof of recognizing resolution-$\frac{\pi}{2}$ graphs from~\cite{ForHHKLSWW90} actually produces drawings with axis-aligned edges; thus, it also applies to rectilinear graphs (a similar reduction (for rectilinear graphs and presented in more detail) is provided in~\cite{EadHonPoo0910}). As shown in~\cite{EadHonPoo0910}, the recognition problem for rectilinear graphs can be solved in time $\bigo(24^k \cdot k^{2k} \cdot n)$, where $k$ is the number of vertices with degree at least $3$. In~\cite{ManPPT1011}, it is shown that recognition remains $\npclass$-hard if we ask whether a drawing exists that satisfies a given $\HV$-restriction\footnote{The definition of $\HV$- and $\LRDU$-restriction given above naturally extends to rectilinear  drawings.} or a drawing that satisfies a given circular order of incident edges. However, checking the existence of a rectilinear drawing satisfying a given $\LRDU$-restriction can be done in time $\bigo(|E| \cdot |V|)$. Consequently, by trying all such labellings, we can solve the recognition problem for rectilinear graphs in time $2^{\bigo(n)}$. In this regard, it is worth noting that the hardness reduction from~\cite{EadHonPoo0910} can be easily modified, such that it also provides lower complexity bounds subject to the Exponential-Time Hypothesis (ETH), thereby demonstrating that the $2^{\bigo(n)}$ algorithm is optimal subject to ETH (see Appendix for details).

\subsection{Combinatorial Properties of $\mathsf{USGV}$} 

First, we shall see that the class $\unitSquareGridGraphs$ is downward closed w.\,r.\,t. the subgraph relation, i.\,e., if $G \in \unitSquareGridGraphs$, then all its subgraphs are in $\unitSquareGridGraphs$ (intuitively speaking, deletion of edges can be done by moving unit squares, while deletion of a vertex can be realised by deleting the corresponding unit square and then removing unwanted edges introduced by this operation). This observation will be a convenient tool for obtaining other combinatorial results. 

\begin{lemma}\label{removeVerticesEdgesLemma}
Let $G = (V, E) \in \unitSquareGridGraphs$, let $v \in V$ and $e \in E$. Then $(V, E \setminus \{e\}) \in \unitSquareGridGraphs$ and $(V \setminus \{v\}, E) \in \unitSquareGridGraphs$.
\end{lemma}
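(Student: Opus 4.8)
The plan is to prove the two statements separately, starting from a fixed unit square grid layout $\mathcal{R}$ with $\layoutToGraph(\mathcal{R}) = G$.

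**Edge deletion.** To show $(V, E \setminus \{\{u,w\}\}) \in \unitSquareGridGraphs$, I would take the layout $\mathcal{R}$ and ``stretch'' the grid to separate the offending pair. Concretely, suppose the edge $\{u,w\}$ comes from a horizontal visibility, say $R_u \Hvis R_w$ (the vertical case is symmetric, and if the edge is realised by both a horizontal and a vertical visibility, the two repairs can be applied one after the other along the respective axes). Let $x_0$ be the $x$-coordinate of $R_u$; I would apply the coordinate transformation that fixes all columns with $x$-coordinate $\le x_0$ and shifts every column with $x$-coordinate $> x_0$ further to the right by a large constant $C$. This is a ``column stretch'': inserting empty columns between two adjacent occupied columns. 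Such a transformation preserves every existing visibility that does \emph{not} cross the gap (distances only grow, but grid-adjacency of rows is untouched, and no new square is interposed), it preserves all vertical visibilities entirely (columns move rigidly), and it destroys exactly those horizontal visibilities that straddle the inserted gap. Choosing the gap between the column of $R_u$ and the next occupied column to its right (towards $R_w$) kills $R_u \Hvis R_w$ while, crucially, creating no new visibilities — a horizontal line of sight that was blocked before is still blocked, since the blocking square is still in a column strictly between. Hence the resulting layout represents a graph on the same vertex set whose edge set is contained in $E \setminus \{\{u,w\}\}$; but the class $\unitSquareGridGraphs$ is not obviously closed under \emph{arbitrary} further edge deletion, so I must be careful to delete \emph{only} the one target edge. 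A cleaner route: instead of stretching a whole column, only move $R_u$ itself, or rather re-route by a careful local perturbation; however the global column/row stretch is easiest to argue about because it manifestly introduces no new edges. So the argument is: stretch once to remove a horizontal edge, or, if $\{u,w\}$ is realised only vertically, do the analogous row stretch.

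**Vertex deletion.** For $(V \setminus \{v\}, E) \in \unitSquareGridGraphs$, the natural move is to delete the square $R_v$ from $\mathcal{R}$. This removes all edges incident to $v$ as required, but it may merge formerly separated lines of sight, i.e.\ create new edges $\{a,b\}$ with $a,b \in N(v)$ (or more generally between squares that had $R_v$ as their unique blocker). To repair these, observe that after removing $R_v$, any newly created visibility, say a horizontal one between $R_a$ and $R_b$, passes through the row(s) formerly occupied by $R_v$; I would then invoke the edge-deletion argument just established, applied to the smaller layout, once for each such spurious edge. Since there are at most a constant number of such new edges (a unit square can block at most finitely many previously-realised-through-it sightlines in a grid layout — in fact the new edges all lie among pairs in $N(v)$ together with the at most $O(1)$ ``second-order'' neighbours aligned through $v$), finitely many applications of edge removal suffice, and each application only deletes the one intended edge. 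This yields a layout for exactly $(V \setminus \{v\}, E)$.

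**Main obstacle.** The delicate point — and the one I would spend the most care on — is guaranteeing that the repair operations remove \emph{precisely} the targeted edge and introduce \emph{no} new ones, so that the induction/iteration on ``number of unwanted edges'' actually terminates at the right graph rather than at some proper subgraph. Establishing the invariant that a column stretch (a) never creates a visibility and (b) destroys exactly the straddling ones requires a short but genuine case analysis of how a line of sight interacts with an inserted empty column, keeping track of the grid-position constraint in the footnote (no two squares within Chebyshev distance $1$). A secondary subtlety is the bound on how many new edges appear after deleting $R_v$, which must be finite and identifiable so that the iterated edge-deletion is well-defined; this follows because every new edge corresponds to a maximal axis-parallel sightline through the cell of $R_v$, and there are only two such (one horizontal, one vertical) per grid line through $R_v$'s position, hence boundedly many.
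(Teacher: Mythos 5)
Your edge-deletion argument has a genuine gap at its core step. You claim that shifting every column with $x$-coordinate greater than $x_0$ rightwards by a large constant ``destroys exactly those horizontal visibilities that straddle the inserted gap.'' It destroys none of them: visibility in this model is not distance-bounded, so two unit squares in the same row with no square between them still see each other after you insert arbitrarily many empty columns between them. A line of sight is blocked only by an interposed square, and your stretch interposes nothing. Consequently the stretched layout represents exactly the same graph $G$, and the edge you wanted to remove survives. The fallback you mention in passing (``only move $R_u$ itself, or re-route by a careful local perturbation'') is precisely where the real work lies, and you leave it unargued.

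The repair that actually works --- and the one the paper uses --- is to break the \emph{alignment} of the two endpoints rather than their proximity. Say the edge is realised by $R_u \Vvis R_v$, so both squares sit in column $x_v$ with $R_u$ above. Shift one unit to the right every square at $(x,y)$ with $x > x_v$, or with $x = x_v$ and $y \le y_v$; this moves $R_v$ but not $R_u$, so the two end up in different columns (and their rows differ by at least $2$), hence they no longer see each other either vertically or horizontally. The reason for shifting this ``half-plane plus lower half-column'' rigidly is that nothing else changes: the only candidate visibilities for destruction are pairs in column $x_v$ with one square strictly above $y_v$ and one at or below $y_v$, and any such pair other than $\{R_u, R_v\}$ would have its sightline blocked by $R_u$ or $R_v$ themselves, so the deleted edge is unique. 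Your vertex-deletion step (remove $R_v$, then iterate edge deletion on the at most two visibilities newly revealed along $R_v$'s row and column) is sound in outline and matches the paper, but it invokes the edge-deletion primitive and therefore inherits the gap.
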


It is straightforward to prove the following limitations of $\unitSquareGridGraphs$.

\begin{lemma}\label{simpleObsLemma}
Let $G = (V, E) \in \unitSquareGridGraphs$. Then, $\mathbf{(1)}$ the maximum degree of $G$ is $4$, $\mathbf{(2)}$  for every $u, v \in V$, $|N(u) \cap N(v)| \leq 2$, and, $\mathbf{(3)}$ for every $\{u, v\} \in E$, $N(u) \cap N(v) = \emptyset$.
\end{lemma}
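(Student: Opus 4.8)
The plan is to analyze the local geometry of a unit square grid layout around a fixed square, since all three claims are essentially statements about what configurations of visibilities are possible in a small neighborhood. The key observation to establish first is that, for a unit square $R_v$ at position $(x,y)$, any square $R_u$ with $R_v \symHvis R_u$ must have the same $y$-coordinate (its side overlaps vertically with $R_v$, and on the integer grid this forces equal vertical position), and symmetrically any $R_u$ with $R_v \symVvis R_u$ must have the same $x$-coordinate. Moreover, at most one square can see $R_v$ from the left, at most one from the right, at most one from above, and at most one from below, because the first square encountered in each of the four axis-parallel directions blocks all squares behind it. Once these facts are in place, the three parts follow from short case analyses.

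For part (1), the four directional visibilities just described are the only way an edge at $v$ can arise, so $\deg(v) \le 4$; this is immediate from the blocking observation. For part (3), suppose $\{u,v\} \in E$ and $w \in N(u) \cap N(v)$. The edge $\{u,v\}$ is realized by, say, a horizontal visibility, so $R_u$ and $R_v$ share a $y$-coordinate and are horizontally consecutive (nothing between them). If $w$ is a common neighbor, $R_w$ sees both $R_u$ and $R_v$. If $R_w$ has the same $y$-coordinate, then $R_w$ lies on the common horizontal line with $R_u$ and $R_v$, but then it is either strictly between them (impossible, as $R_u$ sees $R_v$), or on the far side of one of them, in which case the nearer square blocks the view of $w$ — contradiction. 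If $R_w$ has a different $y$-coordinate, it can only see each of $R_u,R_v$ vertically, which forces $R_w$ to share the $x$-coordinate of $R_u$ and also the $x$-coordinate of $R_v$; but $R_u$ and $R_v$ have different $x$-coordinates since they are horizontally consecutive and distinct — contradiction. The case where $\{u,v\}$ is realized by a vertical visibility is symmetric. Hence $N(u)\cap N(v)=\emptyset$.

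For part (2), take arbitrary $u,v \in V$ (not necessarily adjacent) and suppose $|N(u)\cap N(v)| \ge 3$, say $w_1,w_2,w_3$ are distinct common neighbors. Each $w_i$ sees $R_u$ via one of the four directions relative to $R_u$; since there is at most one square in each direction, the three squares $R_{w_1},R_{w_2},R_{w_3}$ occupy three distinct directions around $R_u$ (left/right/up/down), and likewise three distinct directions around $R_v$. One then argues that these constraints are jointly unsatisfiable: for instance, two of the $w_i$ lie on the horizontal line through $R_u$ (one left, one right) — call them $w_L, w_R$ — wait, only if two of the three chosen directions at $u$ are the horizontal pair; in general one splits into cases according to whether the directions used at $R_u$ (and at $R_v$) include both horizontal, both vertical, or a mix, and derives a contradiction in each, using that a square seeing $R_u$ horizontally shares $R_u$'s $y$-coordinate (and similarly vertically), combined with the analogous constraint from $R_v$, to pin down coordinates of the $w_i$ and of $R_u,R_v$ themselves and find two squares forced to the same position or a blocked sightline. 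The cleanest route is probably: among $w_1,w_2,w_3$, at least two, say $w_1,w_2$, see $R_u$ from opposite sides (since among any three of the four directions, two are an opposite pair), so $R_u$ lies strictly between $R_{w_1}$ and $R_{w_2}$ on a line $\ell$; running the same argument at $R_v$ gives that $R_v$ also lies strictly between two of the $w_i$'s on some line $\ell'$; a careful comparison of which $w_i$'s and which lines these are yields the contradiction.

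I expect part (2) to be the main obstacle: parts (1) and (3) are essentially one-line consequences of the blocking and coordinate-alignment observations, but bounding the size of a common neighborhood of a non-adjacent pair requires organizing a modest case distinction over which of the four directions each of the (up to) two squares uses at $u$ and at $v$, and checking that no assignment of grid coordinates simultaneously realizes three common neighbors. The risk is missing a case; the safeguard is to reduce everything to the two clean lemmas (one square per direction; horizontal visibility forces equal $y$, vertical forces equal $x$) and to exploit the opposite-pair pigeonhole to cut the casework down.
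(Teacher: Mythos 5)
Your two preliminary observations (a horizontal visibility between unit squares on the integer grid forces equal $y$-coordinates, a vertical one forces equal $x$-coordinates, and the nearest square in each of the four axis-parallel directions blocks everything behind it, so there is at most one neighbour per direction) are exactly the right foundation, and your arguments for parts (1) and (3) are complete and in substance the same as the paper's. The problem is part (2), which you flag yourself: the pigeonhole-on-opposite-pairs plan is left at the level of ``one then argues that these constraints are jointly unsatisfiable,'' and as sketched it invites precisely the case explosion you worry about (which opposite pair occurs at $R_u$, which at $R_v$, whether they involve the same two $w_i$, where the third one goes). As written, part (2) is a genuine gap, even though the ingredients you have are sufficient to close it.

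The direct completion --- which is what the paper does --- needs no pigeonhole at all. A common neighbour $w$ of $u$ and $v$ sees each of them either horizontally or vertically, and your alignment lemma turns each choice into a coordinate equation: seeing $R_u$ horizontally gives $y_w = y_u$, seeing it vertically gives $x_w = x_u$, and likewise for $v$. If $x_u \neq x_v$ and $y_u \neq y_v$, then $w$ cannot see both horizontally (that would force $y_u = y_w = y_v$) nor both vertically, so it sees one horizontally and the other vertically, which pins $R_w$ to one of the two positions $(x_v, y_u)$ or $(x_u, y_v)$; hence at most two common neighbours. If instead $x_u = x_v$ or $y_u = y_v$, the mixed options would place $R_w$ at the position of $R_u$ or $R_v$ and are impossible, and the remaining option (the same orientation towards both) forces $R_w$ onto the common grid line through $R_u$ and $R_v$, strictly between them with unobstructed views of both, which at most one square can achieve. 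So the bound is $2$ in the unaligned case and $1$ in the aligned case. This four-entry classification replaces your entire part-(2) case analysis and is what I would substitute for it.
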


A consequence of Lemma~\ref{simpleObsLemma} is that no graph from $\unitSquareGridGraphs$ contains $K_{1,5}$, $K_{2,3}$ or $K_3$ as a subgraph, since they violate the first, second and third condition of Lemma~\ref{simpleObsLemma}, respectively.
Obvious examples for graphs from $\unitSquareGridGraphs$ are subgraphs of a grid; as Lemma~\ref{removeVerticesEdgesLemma} shows, even non-induced subgraphs of a grid. In this context, notice that the problem of deciding if a given graph is such a \emph{partial grid graph} is equivalent to deciding if it admits a unit-length VLSI layout, which, even restricted to trees, is an $\npclass$-hard problem; see \cite{BhaCos87} for details. Yet, $\unitSquareGridGraphs$ contains more, especially non-bipartite graphs, with the smallest example being $C_5$.\par

\begin{wrapfigure}{R}{5cm}
\centering
\begin{tabular}{cc}
\centering
\begin{tikzpicture}[nodes={draw,rectangle},scale=0.5,rotate=270]
\tikzstyle{every node}=[inner sep=0mm,minimum size=4mm,circle,draw]

    \node[scale = 0.7] (1) at (0,0) {$7$};
    \node[scale = 0.7] (2) at (2,0) {$8$};
    \node[scale = 0.7] (3) at (1,-1) {$1$};
    \node[scale = 0.7] (4) at (0,-2) {$6$};
    \node[scale = 0.7] (5) at (1,-2) {$2$};
    \node[scale = 0.7] (6) at (1,-3) {$3$};
    \node[scale = 0.7] (7) at (0,-4) {$5$};
    \node[scale = 0.7] (8) at (2,-4) {$4$};
    
    \draw (1) edge (2);
    \draw (1) edge (4);
    \draw (2) edge (3);
    \draw (2) edge (8);
    \draw (3) edge (5);
    \draw (4) edge (5);
    \draw (4) edge (7);
    \draw (5) edge (6);
    \draw (6) edge (8);
    \draw (7) edge (8);
		
\end{tikzpicture}
&
\centering
\begin{tikzpicture}[nodes={draw,rectangle},scale=0.6,rotate=90]
\tikzstyle{every node}=[inner sep=0mm,outer sep=0.5mm,minimum size=4mm,rectangle,draw]
    \node[scale = 0.7] (1) at (0,0) {$7$};
    \node[scale = 0.7] (2) at (1,0) {$8$};
    \node[scale = 0.7] (3) at (2,0) {$1$};
    \node[scale = 0.7] (4) at (0,-1) {$6$};
    \node[scale = 0.7] (5) at (2,-1) {$2$};
    \node[scale = 0.7] (6) at (0,-2) {$5$};
    \node[scale = 0.7] (7) at (1,-2) {$4$}; 
    \node[scale = 0.7] (8) at (2,-2) {$3$};

    \draw[densely dashed] (1) edge (2);
    \draw[densely dashed] (2) edge (3);
    \draw[densely dashed] (3) edge (5);
    \draw[densely dashed] (5) edge (8);
    \draw[densely dashed] (8) edge (7);
    \draw[densely dashed] (7) edge (6);
    \draw[densely dashed] (6) edge (4);
    \draw[densely dashed] (4) edge (1);
    \draw[densely dashed] (2) edge (7);
    \draw[densely dashed] (4) edge (5);
		
\end{tikzpicture}

\cr\\[-0.35cm]
\centering$(a)$&\centering $(b)$
\end{tabular}
\caption{Necessarily non-planar visibility layout for a planar graph.}
\label{Fig:Nonplanar} 
\end{wrapfigure}
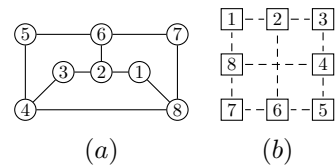

Next, we discuss planarity with a focus on the relationship between the planarity of graphs from $\unitSquareGridGraphs$ and planarity of their respective layouts (where a layout is called \emph{planar} if it does not contain any crossing visibilities). In this regard, we first note that the planarity of a layout is obviously sufficient for the planarity of the represented graph. Moreover, it is trivial to construct non-planar layouts that nevertheless represent planar graphs. Figure~\ref{Fig:Nonplanar}$(a)$ is an example of a planar unit square grid visibility graph, which can only be represented by non-planar layouts (e.\,g., the one of Figure~\ref{Fig:Nonplanar}$(b)$):

\begin{proposition}\label{nonPlanarLayoutProposition}
There exists no planar unit square grid layout for the graph of Fig.~\ref{Fig:Nonplanar}$(a)$.
\end{proposition}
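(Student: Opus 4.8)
The proof must be a case analysis showing that every attempt to place the eight unit squares of the graph $H$ in Figure~\ref{Fig:Nonplanar}$(a)$ on the integer grid forces at least two visibilities to cross. The first step is to record the combinatorial structure of $H$: it has $8$ vertices and $10$ edges, every vertex has degree $2$ or $3$, and (by Lemma~\ref{simpleObsLemma}, which $H$ necessarily satisfies) no two adjacent vertices share a common neighbour and no two vertices share more than two common neighbours. I would identify the ``local'' pieces that constrain a layout the most: each degree-$3$ vertex $v$ has its three neighbours split, in any layout, into those seen horizontally and those seen vertically, i.e.\ into a $\{2,1\}$-split of its incident edges between the H- and V-directions (a valid $\HV$-restriction in the terminology of the preliminaries). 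So the proof naturally proceeds by fixing, for each degree-$3$ vertex, which incident edge is the ``lonely'' one in its direction class, and then arguing that no consistent assignment of grid coordinates realises all ten edges without a crossing.

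**Key steps.** First I would reduce to an $\HV$-restriction: since a planar unit square grid layout induces, for every edge, a label H or V (horizontal vs.\ vertical visibility), and this labelling is a valid $\HV$-restriction, it suffices to show that \emph{no} valid $\HV$-restriction of $H$ can be realised by a planar grid layout. This is a finite check: enumerate the valid $\HV$-restrictions up to the symmetries of $H$ (the graph $H$ has a nontrivial automorphism group, which I would compute and use to cut down the cases). Second, for each surviving $\HV$-restriction, observe that the H-labelled edges form paths/cycles whose vertices must appear in a fixed left-to-right order on common grid rows, and likewise the V-labelled edges fix a bottom-to-top order on common columns; moreover ``$R_i \Hvis R_j$'' means $R_j$ is the \emph{next} square to the right of $R_i$ in that row with nothing between them, which forces the two squares onto the same row with no third square of the layout strictly between them. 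Third, I would show that combining the H-order constraints and the V-order constraints around the two ``crossing-prone'' substructures of $H$ — concretely the $4$-cycles and the pair of vertices of degree $3$ that are ``far apart'' — forces two edges to be routed as a horizontal segment and a vertical segment that must intersect in their interiors, i.e.\ a crossing, contradicting planarity of the layout. The cleanest way to package this is: pick a vertex $w$ of degree $3$; its three neighbours occupy, say, the two horizontal neighbours (same row, one left, one right) and one vertical neighbour; track how the remaining edges of $H$ incident to those neighbours must then be placed, and derive in each branch that some edge is ``trapped'' so that realising it crosses another edge.

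**Main obstacle.** The hard part is the bookkeeping: showing that \emph{every} branch of the case analysis leads to a forced crossing, rather than just one convenient branch. There are several valid $\HV$-restrictions even after symmetry reduction, and within each one the grid-placement constraints interact globally (an H-visibility forbids a third square in between, which is a constraint about \emph{all} eight squares, not just the two endpoints), so the argument cannot be made purely locally around one vertex. I expect to need a small invariant — e.g.\ ``the squares on any single row, read left to right, realise the H-labelled subpath in that order and no other square lies between consecutive ones'' together with the analogous column statement — and then a counting/ordering contradiction of the flavour: the cyclic structure of $H$ forces a ``return'' edge that must jump over an already-placed square, and the only grid-consistent way to draw it crosses another edge. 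Once the invariant is set up, each case is a short planarity argument, but verifying exhaustiveness of the case split (and correctly exploiting $\mathrm{Aut}(H)$ to keep the number of cases small) is where the real work lies; the figure's layout~$(b)$ serves as the witness that the represented graph \emph{is} in $\unitSquareGridGraphs$, so only the non-existence of a planar layout needs proof.
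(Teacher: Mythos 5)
Your high-level strategy --- treat the statement as a finite case analysis over the ways the integer grid can force the eight squares into rows and columns --- is the same one the paper uses, and your individual observations are correct: a horizontal visibility between unit squares at integer coordinates does force equal $y$-coordinates with no third square of the layout strictly between the two endpoints, every layout induces a valid $\HV$-restriction, and each degree-$3$ vertex splits its incident edges $2$--$1$ between the two directions. The genuine gap is that the case analysis itself is never performed. You explicitly defer both the enumeration of the surviving $\HV$-restrictions and the derivation of a forced crossing inside each branch, and with the decomposition you chose that deferred work is exactly where the proof lives: the graph has four degree-$3$ and four degree-$2$ vertices, so even after discarding invalid labellings, quotienting by the (small) automorphism group and by the global exchange of the two directions, you are left with dozens of $\HV$-restrictions, each of which still admits many inequivalent coordinate assignments because an $\HV$-label does not determine on which side, in which order, or at what distance the squares sit. ``Enumerate $\HV$-restrictions, then argue about row and column orders'' is a plan for a proof rather than a proof, and nothing in the proposal certifies that every branch closes with a crossing rather than with a planar layout you failed to rule out.

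The idea that is missing --- and that the paper uses to make the argument short --- is to anchor the case analysis on a rigid subgraph rather than on edge labels. The induced $C_5$ on the vertices $1,2,6,7,8$ has, up to \visomorphism{}, a unique unit square grid layout (three squares on one row and two more below its two end squares, as in Figure~\ref{Fig:NonplanarApp}$(b)$): a $5$-cycle needs at least two edges of each direction, so the split is $2$--$3$, and the two minority edges cannot share a vertex, which pins the picture down. Once that layout is fixed, there are only a handful of ways to assign the labels $1,2,6,7,8$ to its five squares, and in each of them the positions of $3$, $4$ and $5$ are either forced or impossible; the single surviving layout is the non-planar one of Figure~\ref{Fig:Nonplanar}$(b)$. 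I would recommend restructuring your argument around such a rigidity lemma for a suitable subgraph: it converts your open-ended bookkeeping into a handful of explicit pictures and makes the exhaustiveness of the case split verifiable by inspection.
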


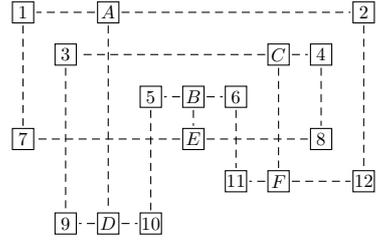
\begin{wrapfigure}{R}{5.2cm}
\begin{tikzpicture}[nodes={draw,rectangle},scale=0.56]
\tikzstyle{every node}=[inner sep=0mm,outer sep=0.5mm,minimum size=4mm,rectangle,draw]
    
\node[scale = 0.7] (B) at (0,0) {$B$};
\node[scale = 0.7] (5) at (-1,0) {$5$};
\node[scale = 0.7] (6) at (1,0) {$6$};
\node[scale = 0.7] (E) at (0,-1) {$E$};
\node[scale = 0.7] (3) at (-3,1) {$3$};
\node[scale = 0.7] (C) at (2,1) {$C$};
\node[scale = 0.7] (4) at (3,1) {$4$};
\node[scale = 0.7] (1) at (-4,2) {$1$};
\node[scale = 0.7] (A) at (-2,2) {$A$};
\node[scale = 0.7] (2) at (4,2) {$2$};
\node[scale = 0.7] (7) at (-4,-1) {$7$};
\node[scale = 0.7] (8) at (3,-1) {$8$};
\node[scale = 0.7] (9) at (-3,-3) {$9$};
\node[scale = 0.7] (D) at (-2,-3) {$D$};
\node[scale = 0.7] (10) at (-1,-3) {$10$};
\node[scale = 0.7] (11) at (1,-2) {$11$};
\node[scale = 0.7] (F) at (2,-2) {$F$};
\node[scale = 0.7] (12) at (4,-2) {$12$};
    
\draw[densely dashed] (A) edge (1);
\draw[densely dashed] (A) edge (2);
\draw[densely dashed] (A) edge (D);
\draw[densely dashed] (B) edge (5);
\draw[densely dashed] (B) edge (E);
\draw[densely dashed] (B) edge (6);
\draw[densely dashed] (C) edge (3);
\draw[densely dashed] (C) edge (F);
\draw[densely dashed] (C) edge (4);
\draw[densely dashed] (D) edge (9);
\draw[densely dashed] (D) edge (10);
\draw[densely dashed] (E) edge (7);
\draw[densely dashed] (E) edge (8);
\draw[densely dashed] (F) edge (11);
\draw[densely dashed] (F) edge (12);
\draw[densely dashed] (1) edge (7);
\draw[densely dashed] (2) edge (12);
\draw[densely dashed] (3) edge (9);
\draw[densely dashed] (4) edge (8);
\draw[densely dashed] (5) edge (10);
\draw[densely dashed] (6) edge (11);

\end{tikzpicture}
\caption{Subdivisions of $K_{3,3}$.}
\label{KFiveSubdivision}
\end{wrapfigure}
It is tempting to assume that graphs in $\unitSquareGridGraphs$ are necessarily planar, but, as demonstrated by Figure~\ref{KFiveSubdivision}, $\unitSquareGridGraphs$ contains a subdivision of $K_{3,3}$ (a layout for a subdivision of $K_5$ can be found in the Appendix). Hence, with Kuratowski's theorem, we conclude: 

\begin{theorem}\label{gridNonPlanarityTheorem}
$\unitSquareGridGraphs$ contains non-planar graphs.
\end{theorem}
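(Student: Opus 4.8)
The plan is to prove the slightly stronger statement that $\unitSquareGridGraphs$ contains a \emph{subdivision of} $K_{3,3}$, and then to invoke Kuratowski's theorem. Concretely, I would take the unit square grid layout $\mathcal{R}$ depicted (schematically) in Figure~\ref{KFiveSubdivision}, argue that $\layoutToGraph(\mathcal{R})$ is non-planar, and conclude, since $\layoutToGraph(\mathcal{R}) \in \unitSquareGridGraphs$, that the theorem holds.

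The structural part of the argument is short. Designate $\{A, B, C\}$ and $\{D, E, F\}$ as the two parts of the branch-vertex bipartition, and read off from the layout the nine connecting paths: the three single-visibility paths realizing $A \visibility D$, $B \visibility E$ and $C \visibility F$, and the six length-three paths $A\,1\,7\,E$, $A\,2\,12\,F$, $B\,5\,10\,D$, $B\,6\,11\,F$, $C\,3\,9\,D$ and $C\,4\,8\,E$, each passing through two of the squares labelled $1, \dots, 12$. A glance at these lists shows that the nine paths are pairwise internally vertex-disjoint and that each of $1, \dots, 12$ is used exactly once, so $A, \dots, F$ together with these paths form a subdivision of $K_{3,3}$ inside $\layoutToGraph(\mathcal{R})$. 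Since a supergraph of a non-planar graph is non-planar, it suffices to verify that the $21$ visibilities occurring along these paths are actually present in $\mathcal{R}$; any additional (spurious) visibilities can only add edges and are harmless, so no completeness check on $\layoutToGraph(\mathcal{R})$ is needed. (One can alternatively show that $\layoutToGraph(\mathcal{R})$ equals the drawn graph on the nose and then also appeal to Lemma~\ref{removeVerticesEdgesLemma}, but this is not necessary.)

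The real work, and the main obstacle, is producing a genuinely valid grid layout and checking those $21$ visibilities. Figure~\ref{KFiveSubdivision} is only schematic: consecutive squares are drawn a single unit apart, which violates the separation required of unit square grid layouts in Section~\ref{sec:preliminaries}. I would therefore first scale all positions by a factor of $2$ (exactly the transformation used earlier to pass from a rectilinear drawing to a grid layout); since this is a homothety that fixes the square size, it leaves the ``same row / same column'' structure governing horizontal and vertical visibility — and hence the entire visibility relation — unchanged, while separating the squares legally. It then remains to check, for each of the $21$ required pairs, that some axis-parallel visibility rectangle joining them avoids all $16$ other squares. This is a finite case analysis over the $18$ squares, made manageable by the repeated, modular structure of the three ``arms'' of the construction around the central squares $B$ and $E$; each individual check amounts to confirming that the obstructing squares block precisely the unwanted corridors and leave the wanted ones free. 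Once this routine verification is complete, Theorem~\ref{gridNonPlanarityTheorem} is immediate from Kuratowski's theorem.
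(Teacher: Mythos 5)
Your proposal is correct and follows essentially the same route as the paper, which proves the theorem simply by exhibiting the layout of Figure~\ref{KFiveSubdivision} as a subdivision of $K_{3,3}$ (with branch vertices $A,B,C$ versus $D,E,F$ and exactly the nine internally disjoint paths you list) and invoking Kuratowski's theorem. Your additional observations — that the schematic figure must be scaled by $2$ to satisfy the grid separation requirement, and that any spurious extra visibilities are harmless since containing a $K_{3,3}$-subdivision as a subgraph already forces non-planarity — are sound and only make explicit what the paper leaves to the reader.
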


Next, we investigate possibilities to characterise $\unitSquareGridGraphs$. In this regard, we first observe that a characterisation by forbidden induced subgraphs is not possible (note that under the assumption $\pclass \neq \npclass$, this also follows from the hardness of recognition).

\begin{theorem}\label{noCharacterisationGridTheorem}
$\unitSquareGridGraphs$ does not admit a characterisation by a finite number of forbidden induced subgraphs.
\end{theorem}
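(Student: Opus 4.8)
The plan is to exhibit, for every $n$, a graph $H_n$ that is \emph{not} in $\unitSquareGridGraphs$ but such that every proper induced subgraph of $H_n$ \emph{is} in $\unitSquareGridGraphs$; the existence of such an infinite family of "minimal obstructions" immediately rules out any finite forbidden-induced-subgraph characterisation, since a finite list of forbidden subgraphs could only exclude finitely many minimal obstructions (each $H_n$ would have to contain one of them as an induced subgraph, but being minimal it must then equal one of them). So the whole proof reduces to constructing the family $H_n$ and verifying the two properties.

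First I would look for the obstruction coming from a global, non-local constraint rather than a local one — local constraints (max degree $4$, the $|N(u)\cap N(v)|$ bounds of Lemma~\ref{simpleObsLemma}) are exactly the kind of thing a finite list would capture, so the obstruction must be "large-scale". A natural candidate is a long cycle or a subdivided structure forced to live on the grid: take a cycle $C_m$ for large $m$ together with enough attached pendant vertices or a fixed small gadget so that any unit square grid layout of the whole graph would need more "turns" or more area than the geometry permits, producing a contradiction (an overlap of two unit squares, or a forced crossing that creates an unwanted visibility edge). The point is to engineer $H_n$ so that the non-representability is a consequence of a counting/parity argument over the entire graph — e.g. a drawing of a cycle on the integer grid must make an even number of left/right turns and an even number of up/down turns, or the total displacement around the cycle must be zero — and then rig a gadget that makes this impossible, while deleting \emph{any single vertex} breaks the cycle (or the gadget) and frees up the geometry so the remainder becomes representable.

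The concrete skeleton I would carry out: \textbf{(i)} Define $H_n$ precisely (a large cycle decorated with a constant-size non-representability gadget, chosen so $H_n$ still has maximum degree $4$ and satisfies the conditions of Lemma~\ref{simpleObsLemma}, so that no "small" forbidden subgraph is the reason it fails). \textbf{(ii)} Prove $H_n \notin \unitSquareGridGraphs$ by assuming a unit square grid layout exists and deriving the contradiction from a global argument on the coordinates of the squares along the cycle — here is where I would use a parity or summation identity forced by the rectilinear geometry together with the gadget's rigidity. \textbf{(iii)} Prove that for every $v \in V(H_n)$, $H_n - v \in \unitSquareGridGraphs$, by explicitly giving a layout; since $H_n$ is essentially a cycle plus a bounded gadget, removing a vertex yields a path-like structure plus the gadget, which can be laid out by a straightforward "unfold it along a line" construction, invoking Lemma~\ref{removeVerticesEdgesLemma} to discard any spurious visibilities. \textbf{(iv)} Conclude: if a finite set $\mathcal{F}$ of forbidden induced subgraphs characterised $\unitSquareGridGraphs$, then each $H_n$ contains some $F \in \mathcal{F}$ as an induced subgraph; but $F$ is itself not in $\unitSquareGridGraphs$ (forbidden), whereas every \emph{proper} induced subgraph of $H_n$ is in $\unitSquareGridGraphs$, forcing $F = H_n$; since $\mathcal{F}$ is finite and the $H_n$ are pairwise non-isomorphic (they have different sizes), this is a contradiction.

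The main obstacle I expect is step (ii): making the non-representability argument \emph{robust} — that is, ensuring the gadget truly forces the global contradiction no matter how cleverly the layout is drawn, and doing so with a gadget small enough that its removal of any internal vertex still leaves something representable. A cycle alone is representable on a grid, so the decoration must be doing real work; the delicate part is arguing that the degrees of freedom in placing the squares cannot absorb the obstruction, which typically requires identifying a sub-configuration whose layout is rigid up to the symmetries in the definition of \visomorphic{} (as the paper's introduction hints, "identifying graph structures that can be shown to be representable by unit square layouts in a unique way" is exactly the recurring technical difficulty). If a purely combinatorial parity argument is not enough, the fallback is to borrow rigidity from an already-established non-representable or uniquely-representable gadget in the paper (e.g. built from the $K_{3,3}$-subdivision of Figure~\ref{KFiveSubdivision} or the non-planar-layout phenomenon of Proposition~\ref{nonPlanarLayoutProposition}) and scale it, at the cost of a more involved minimality verification in step (iii).
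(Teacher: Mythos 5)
Your reduction of the theorem to the existence of an infinite family of minimal obstructions, and the concluding deduction in step (iv), are exactly right and match the paper's strategy; likewise the observation that the obstruction must be global rather than local. The genuine gap is that steps (i)--(iii) --- the construction of $H_n$ and the verification of its two defining properties --- are never carried out. You offer a menu of candidate mechanisms (a turn-parity invariant on a decorated cycle, an area bound, rigidity borrowed from Proposition~\ref{nonPlanarLayoutProposition} or Figure~\ref{KFiveSubdivision}) without committing to or verifying any of them, and you yourself flag step (ii) as the expected obstacle. Since the entire mathematical content of the theorem lives in that construction, the proposal as written establishes only the (standard) fact that infinitely many pairwise non-isomorphic minimal obstructions would suffice.

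For comparison, the paper's construction is more elementary than the gadgets you sketch. Take $G_n$ to be a ladder: a path $u_1,\dots,u_n$, a path $v_2,\dots,v_n$, rungs $\{u_i,v_i\}$ for $2 \le i \le n$, plus one extra vertex $w$ adjacent to $u_1$ and $v_n$. The rungs force any unit square grid layout of $G_n - w$ to place the two paths on two parallel lines (horizontal or vertical) with corresponding squares aligned; hence $R_{u_1}$ and $R_{v_n}$ end up at diagonally opposite ends of the ladder and no unit square can see both, so $G_n \notin \unitSquareGridGraphs$. Deleting any single vertex destroys this rigidity --- the ladder can be bent by ninety degrees at the deletion point --- so an explicit layout for each $G_n - x$ exists, and Lemma~\ref{removeVerticesEdgesLemma} (closure of $\unitSquareGridGraphs$ under subgraphs) then gives that every proper induced subgraph of $G_n$ is in $\unitSquareGridGraphs$. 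Note the contrast with your leading candidate: a plain long cycle is representable, and a \emph{constant-size} decoration is unlikely to yield an obstruction whose minimality survives deletion of vertices far away from the decoration; the ladder sidesteps this by distributing the rigidity-enforcing structure (the rungs) along the entire graph, so that every vertex deletion genuinely relaxes the constraint.
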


By Lemma~\ref{simpleObsLemma}, the classes of cycles, complete graphs and complete bipartite graphs within $\unitSquareGridGraphs$ are easily characterised: $C_i \in \unitSquareGridGraphs$ if and only if $i \geq 4$, $K_i \in \unitSquareGridGraphs$ if and only if $i \leq 2$, $K_{i, j} \in \unitSquareGridGraphs$ (with $i \leq j$) if and only if ($i = 1$ and $j \leq 4$) or ($i = 2$ and $j = 2$). Furthermore, the trees in $\unitSquareGridGraphs$ have a simple characterisation as well:

\begin{theorem}\label{thm-sgvg-tree}
A tree $T$ is in $\unitSquareGridGraphs$ if and only if the maximum degree of $T$ is at most four. 
\end{theorem}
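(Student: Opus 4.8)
The plan is as follows. The forward implication is immediate: if a tree $T$ lies in $\unitSquareGridGraphs$, then by Lemma~\ref{simpleObsLemma}$\mathbf{(1)}$ its maximum degree is at most $4$ (conditions $\mathbf{(2)}$ and $\mathbf{(3)}$ of that lemma hold automatically for any tree). So the content lies in the converse.

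For the converse, recall from the discussion at the beginning of this section that any rectilinear drawing of a graph can be turned into a unit square grid layout by scaling it by a factor of $2$ and placing a unit square at (the scaled position of) each vertex; this is in general only a \emph{weak} layout, but by Theorem~\ref{weakStrongEqualityTheorem} every weak unit square grid visibility graph is in fact in $\unitSquareGridGraphs$. Hence it suffices to prove the purely combinatorial statement that every tree $T$ with maximum degree at most $4$ admits a planar bendless rectilinear drawing, i.e., an assignment of the vertices to distinct grid points such that every edge is drawn as a horizontal or vertical straight segment whose interior contains no other vertex and meets no other edge.

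I would prove this by induction on $|V(T)|$ using a bounding-box invariant. Root $T$ at a leaf $r$; then $r$ has exactly one child and every other vertex has at most three children. In a bendless drawing the (up to four) edges incident to a vertex are forced to leave in the four cardinal directions, so each non-root vertex $v$ has exactly three free directions available (the fourth is used by the edge to its parent). The invariant to maintain is: a subtree $T_v$ is drawn inside an axis-parallel box $\mathcal{B}_v$ with $v$ at the midpoint of the side of $\mathcal{B}_v$ facing its parent, with a one-cell-wide empty strip between $v$ and that side (so the parent edge can enter the box), and with the three edges from $v$ to its children leaving $\mathcal{B}_v$'s interior in the three remaining cardinal directions. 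In the inductive step one draws the three child subtrees recursively, rotates each so that its ``parent side'' faces $v$, and attaches it to $v$ by a straight segment along the corresponding free direction; since the plane is unbounded, the three child boxes can be placed pairwise far apart, and stacked so that their coordinate ranges are separated by generous buffers, which guarantees that no two of them overlap and that none of them meets the three edges emanating from $v$. Finally $r$ is attached to its unique child by one more straight segment, and the bounding box of the whole picture (rotated so that, say, $r$'s edge points upward) is the desired rectilinear drawing. The base case, a single vertex, is trivial.

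The main obstacle is the bookkeeping in the inductive step: one has to choose the buffer sizes and the relative placement of the three child boxes so that planarity and bendlessness are genuinely preserved and, crucially, the invariant is re-established for $T_v$ (in particular the empty strip on its parent side, and the fact that $v$'s three child edges do not re-enter $\mathcal{B}_v$'s forbidden side). This is routine but needs care; everything else — the reduction to the rectilinear case via scaling and Theorem~\ref{weakStrongEqualityTheorem}, the base case, and the forward direction — is immediate.
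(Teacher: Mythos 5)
Your proof is correct, but it takes a genuinely different route from the paper's. The paper handles the \emph{if} direction directly on layouts: it adds the vertices of $T$ one leaf at a time, noting that the attachment vertex has degree at most $3$ in the partial tree, so one of its four grid directions is free for a new unit square, and any spurious visibilities this creates are deleted via Lemma~\ref{removeVerticesEdgesLemma}; induction finishes the job in a few lines. You instead reduce to the existence of a rectilinear drawing of $T$, pass to a weak layout by the scale-by-$2$ trick from Section~\ref{gridCaseSection}, and invoke Theorem~\ref{weakStrongEqualityTheorem} (stated after Theorem~\ref{thm-sgvg-tree}, but it depends only on Lemma~\ref{removeVerticesEdgesLemma}, so there is no circularity). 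This is sound, and since the scaling step needs no planarity, your drawing may even cross edges freely; but you are doing strictly more work than necessary, as the leaf-by-leaf argument discharges unwanted adjacencies with the very same lemma that powers $\unitSquareGridGraphs=\unitSquareGridGraphs_{\weak}$. One concrete caution on your induction: the invariant as stated (the one-cell empty strip next to the parent-facing side, and $v$ at its midpoint) is not actually re-establishable, because the left and right child boxes extend above $v$'s row, pushing the parent-facing side of the new bounding box arbitrarily far from $v$. What you can maintain, and all you need, is that the straight corridor from $v$ to the parent-facing side in $v$'s own row or column is free of vertices and edges of $T_v$; with that weaker invariant your placement with generous buffers goes through.
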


By definition, $\unitSquareGridGraphs \subseteq \unitSquareGridGraphs_{\weak}$ and every $G' \in \unitSquareGridGraphs_{\weak}$ can be obtained from some $G \in \unitSquareGridGraphs$ by deleting some edges. Consequently, by Lemma~\ref{removeVerticesEdgesLemma}, we conclude the following.

\begin{theorem}\label{weakStrongEqualityTheorem}
$\unitSquareGridGraphs = \unitSquareGridGraphs_{\weak}$.
\end{theorem}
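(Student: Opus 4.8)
The plan is straightforward and relies entirely on machinery already established in the excerpt. Recall the inclusion $\unitSquareGridGraphs \subseteq \unitSquareGridGraphs_{\weak}$ holds by definition, so it suffices to prove the reverse inclusion $\unitSquareGridGraphs_{\weak} \subseteq \unitSquareGridGraphs$. First I would fix an arbitrary $G' = (V, E') \in \unitSquareGridGraphs_{\weak}$. By the definition of the weak class, there is some $G = (V, E) \in \unitSquareGridGraphs$ with $E' \subseteq E$; that is, $G'$ is obtained from $G$ by deleting the edges in $E \setminus E'$.

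The key step is an induction on $|E \setminus E'|$, the number of deleted edges, using Lemma~\ref{removeVerticesEdgesLemma} as the inductive engine. For the base case $|E \setminus E'| = 0$ we have $G' = G \in \unitSquareGridGraphs$ and there is nothing to prove. For the inductive step, suppose $|E \setminus E'| = k + 1$ and pick any edge $e \in E \setminus E'$. By Lemma~\ref{removeVerticesEdgesLemma} (applied to $G$ and the edge $e$), the graph $G'' = (V, E \setminus \{e\})$ is again in $\unitSquareGridGraphs$. Since $E' \subseteq E \setminus \{e\}$ (because $e \notin E'$) and $|(E \setminus \{e\}) \setminus E'| = k$, the induction hypothesis applied to $G''$ and $G'$ yields $G' \in \unitSquareGridGraphs$. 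Note that the vertex-deletion half of Lemma~\ref{removeVerticesEdgesLemma} is not needed here, since weak layouts only drop edges, not vertices.

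There is essentially no obstacle: the only thing to be careful about is the bookkeeping, namely that Lemma~\ref{removeVerticesEdgesLemma} as stated removes a single edge from a graph already known to be in $\unitSquareGridGraphs$, so one must iterate it rather than invoke it once for the whole edge set. Chaining the lemma $|E \setminus E'|$ times (formalised by the induction above) closes the gap and gives $\unitSquareGridGraphs_{\weak} \subseteq \unitSquareGridGraphs$, which together with the trivial inclusion establishes the claimed equality.
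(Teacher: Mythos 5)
Your proposal is correct and follows exactly the paper's argument: the inclusion $\unitSquareGridGraphs \subseteq \unitSquareGridGraphs_{\weak}$ is definitional, and the reverse inclusion is obtained by repeatedly applying the edge-deletion part of Lemma~\ref{removeVerticesEdgesLemma}. You merely make the iteration explicit as an induction on the number of deleted edges, which the paper leaves implicit.
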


\subsection{Area-Minimisation}

The \emph{area-minimisation} version of the recognition problem is to decide whether a given graph has a drawing or layout of given width and height. The hardness of recognition for $\unitSquareGridGraphs$ and also for $\HV$-restricted $\unitSquareGridGraphs$ carries over to the area-minimisation version, since an $n$-vertex graph has a layout if and only if it has a $(2n-1) \times (2n-1)$ layout. On the other hand, in the $\LRDU$-restricted rectilinear (or unit square grid) case, recognition can be solved in polynomial time, so the authors of~\cite{ManPPT1011} provide a hardness reduction that proves the area-minimisation recognition problem $\npclass$-complete even for $\LRDU$-restricted rectilinear graphs. However, this construction does not carry over to $\unitSquareGridGraphs$, since the non-edges of a rectilinear drawing translate into non-visibilities, which require space as well;\footnote{In general, this space blow-up cannot be avoided, as witnessed by $n$ isolated vertices which have a $1 \times n$ rectilinear drawing, but a smallest unit square grid layout of size $(2n-1) \times (2n-1)$} moreover, it does not even work for the weak case of $\unitSquareGridGraphs$, due to the necessary scaling by factor $2$ to translate a rectilinear drawing into an equivalent weak unit square grid layout. \par
Next, we provide a reduction that shows the hardness of the area-minimisation version of $\recognitionProb(\unitSquareGridGraphs_{\weak})$, which shall also imply several additional results. The problem \textsf{3-Partition} ($\threepartition$) is defined as follows: Given $B \in \mathbb{N}$ and a multi-set $A = \{a_1, a_2, \ldots, a_{3m}\} \subseteq \mathbb{N}$ with $\frac{B}{4} < a_i < \frac{B}{2}$, $1 \leq i \leq 3m$, and $\sum^{3m}_{i = 1} a_i = m B$, decide whether $A$ can be partitioned into $m$ multi-sets $A_1, \ldots, A_m$, such that $\sum_{a \in A_j} a = B$, $1 \leq j \leq m$ (note that the restriction $\frac{B}{4} < a_i < \frac{B}{2}$ enforces $|A_j| = 3$, $1 \leq j \leq m$). Given a $\threepartition$ instance, we construct a \emph{frame graph} (see Figure~\ref{Fig:frameGraph}) $G_f = (V_f, E_f)$ with:
\begin{align*}
V_f = &\{u_{i, j}, v_{i, j}, w_{i, 1}, w_{i, 2} \mid 1 \leq i \leq m, 0 \leq j \leq B\} \cup \{u_{m + 1, 0}, v_{m + 1, 0}, w_{m+1, 1}, w_{m + 1, 2}\}\,,\\
E_f = &\left\{\{u_{i,j}, u_{i,j+1}\}, \{v_{i,j}, v_{i,j+1}\} \mid 1 \leq i \leq m, 0 \leq j \leq B-1\right\} \cup{} \\
&\left\{\{u_{i, B}, u_{i+1, 0}\}, \{v_{i, B}, v_{i+1, 0}\} \mid 1 \leq i \leq m\right\} \cup \left\{\{u_{i,j}, v_{i,j}\}\mid 1 \leq i \leq m, 1 \leq j \leq B\right\} \cup{}\\
&\left\{\{u_{i,0}, v_{i,0}\}, \{v_{i,0}, w_{i,1}\}, \{w_{i,1}, w_{i,2}\} \mid 1 \leq i \leq m+1\right\}\,.
\end{align*}
\label{areaMinimisationReductionPageRef}
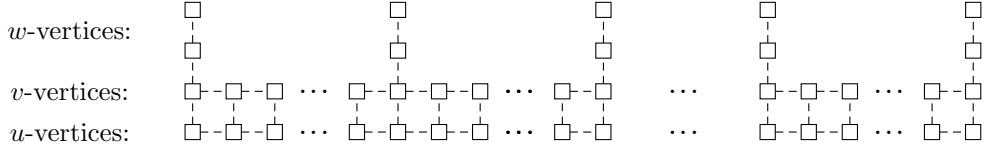
\begin{figure}
\centering
\begin{tikzpicture}[nodes={draw,rectangle},scale=0.6]
\tikzstyle{every node}=[inner sep=0mm,outer sep=0.0mm,minimum size=2mm,rectangle,draw]
     
     \coordinate (xshift) at (0.9,0);
     \coordinate (yshift) at (0,0.9);

    \node (u10) at (0,0){};
    \node (u11) at ($(u10) + (xshift)$){};
    \node (u12) at ($(u11) + (xshift)$){};
    \node (u1B) at ($(u12) + 2*(xshift)$){};
       
    \node (v10) at ($(0,0) + (yshift)$){};
    \node (v11) at ($(v10) + (xshift)$){};
    \node (v12) at ($(v11) + (xshift)$){};
    \node (v1B) at ($(v12) + 2*(xshift)$){};
    
    \node (w11) at ($(v10) + (yshift)$){};
    \node (w12) at ($(w11) + (yshift)$){};
    
    \node[draw=none, scale=0.8] (dots10) at ($(u12) + 1*(xshift)$){\bf \dots};
    \node[draw=none, scale=0.8] (dots11) at ($(v12) + 1*(xshift)$){\bf \dots};
    
    \draw[densely dashed] (u10) edge (v10);
    \draw[densely dashed] (u11) edge (v11);
    \draw[densely dashed] (u12) edge (v12);
    \draw[densely dashed] (u1B) edge (v1B);
    
    \draw[densely dashed] (u10) edge (u11);
    \draw[densely dashed] (u11) edge (u12);
    
    \draw[densely dashed] (v10) edge (v11);
    \draw[densely dashed] (v11) edge (v12);
    
    \draw[densely dashed] (v10) edge (w11);
    \draw[densely dashed] (w11) edge (w12);

    \node (u20) at ($(u1B) + (xshift)$){};
    \node (u21) at ($(u20) + (xshift)$){};
    \node (u22) at ($(u21) + (xshift)$){};
    \node (u2B) at ($(u22) + 2*(xshift)$){};
       
    \node (v20) at ($(u20) + (yshift)$){};
    \node (v21) at ($(v20) + (xshift)$){};
    \node (v22) at ($(v21) + (xshift)$){};
    \node (v2B) at ($(v22) + 2*(xshift)$){};
    
    \node (w21) at ($(v20) + (yshift)$){};
    \node (w22) at ($(w21) + (yshift)$){};
    
    \node (u30) at ($(u2B) + (xshift)$){};
    \node (v30) at ($(u30) + (yshift)$){};
    \node (w31) at ($(v30) + (yshift)$){};
    \node (w32) at ($(w31) + (yshift)$){};
    
    \node[draw=none, scale=0.8] (dots20) at ($(u22) + 1*(xshift)$){\bf \dots};
    \node[draw=none, scale=0.8] (dots21) at ($(v22) + 1*(xshift)$){\bf \dots};
    
    \draw[densely dashed] (u1B) edge (u20);
    \draw[densely dashed] (v1B) edge (v20);
    
    \draw[densely dashed] (u20) edge (v20);
    \draw[densely dashed] (u21) edge (v21);
    \draw[densely dashed] (u22) edge (v22);
    \draw[densely dashed] (u2B) edge (v2B);
    
    \draw[densely dashed] (u20) edge (u21);
    \draw[densely dashed] (u21) edge (u22);
    
    \draw[densely dashed] (v20) edge (v21);
    \draw[densely dashed] (v21) edge (v22);
    
    \draw[densely dashed] (v20) edge (w21);
    \draw[densely dashed] (w21) edge (w22);
    
    \draw[densely dashed] (u2B) edge (u30);
    \draw[densely dashed] (v2B) edge (v30);
    
    \draw[densely dashed] (u30) edge (v30);
    \draw[densely dashed] (v30) edge (w31);
    \draw[densely dashed] (w31) edge (w32);

    \node[draw=none, scale=0.8] (maindots0) at ($(u30) + 2*(xshift)$){\bf \dots};
    \node[draw=none, scale=0.8] (maindots1) at ($(v30) + 2*(xshift)$){\bf \dots};

    \node (um0) at ($(maindots0) + 2*(xshift)$){};
    \node (um1) at ($(um0) + (xshift)$){};
    \node (um2) at ($(um1) + (xshift)$){};
    \node (umB) at ($(um2) + 2*(xshift)$){};
       
    \node (vm0) at ($(um0) + (yshift)$){};
    \node (vm1) at ($(vm0) + (xshift)$){};
    \node (vm2) at ($(vm1) + (xshift)$){};
    \node (vmB) at ($(vm2) + 2*(xshift)$){};
    
    \node (wm1) at ($(vm0) + (yshift)$){};
    \node (wm2) at ($(wm1) + (yshift)$){};
    
    \node[draw=none, scale=0.8] (dots20) at ($(u22) + 1*(xshift)$){\bf \dots};
    \node[draw=none, scale=0.8] (dots21) at ($(v22) + 1*(xshift)$){\bf \dots};

    \draw[densely dashed] (um0) edge (vm0);
    \draw[densely dashed] (um1) edge (vm1);
    \draw[densely dashed] (um2) edge (vm2);
    \draw[densely dashed] (umB) edge (vmB);
    
    \draw[densely dashed] (um0) edge (um1);
    \draw[densely dashed] (um1) edge (um2);
    
    \draw[densely dashed] (vm0) edge (vm1);
    \draw[densely dashed] (vm1) edge (vm2);
    
    \draw[densely dashed] (vm0) edge (wm1);
    \draw[densely dashed] (wm1) edge (wm2);

    \node (umplus1) at ($(umB) + (xshift)$){};
    \node (vmplus1) at ($(vmB) + (xshift)$){};
    
    \node (wmplus11) at ($(vmplus1) + (yshift)$){};
    \node (wmplus12) at ($(wmplus11) + (yshift)$){};
    
    \node[draw=none, scale=0.8] (dotsm0) at ($(um2) + 1*(xshift)$){\bf \dots};
    \node[draw=none, scale=0.8] (dotsm1) at ($(vm2) + 1*(xshift)$){\bf \dots};
    
    \draw[densely dashed] (umB) edge (umplus1);
    \draw[densely dashed] (vmB) edge (vmplus1);
    
    \draw[densely dashed] (umplus1) edge (vmplus1);
    \draw[densely dashed] (vmplus1) edge (wmplus11);
    \draw[densely dashed] (wmplus11) edge (wmplus12);

    \node[draw=none, scale=1] (uvertices) at ($(u10) - 3*(xshift)$){$u$-vertices:};
    \node[draw=none, scale=1] (vvertices) at ($(v10) - 3*(xshift)$){$v$-vertices:};
    \node[draw=none, scale=1] (wvertices) at ($(w11) - 3*(xshift) + 0.5*(yshift)$){$w$-vertices:};

\end{tikzpicture}
\caption{Unit square grid layout for the graph $G_f$.}
\label{Fig:frameGraph}
\end{figure}
\noindent Next, we define a graph $G_A = (V_A, E_A)$ with $V_A = \{b_{i, j}, c_{i, j} \mid 1 \leq i \leq 3m, 1 \leq j \leq a_i\}$ and $E_A = \{\{b_{i, j}, b_{i, j+1}\}, \{c_{i, j}, c_{i, j+1}\} \mid 1 \leq i \leq 3m, 1 \leq j \leq a_i-1\} \cup \{\{b_{i, j}, c_{i, j}\} \mid 1 \leq i \leq 3m, 1 \leq j \leq a_i\}$. Finally, we let $G = (V, E)$ with $V = V_f \cup V_A$ and $E = E_f \cup E_A$.\par
The idea is that $G_f$ forms $m$ size-$B$ compartments and the graphs on $b_{i, j}$, $c_{i, j}$ represent the $a_i$. In a layout respecting the size bounds, the way of allocating the graphs on $b_{i, j}$, $c_{i, j}$ to the compartments corresponds to a partition of $A$ that is a solution for the $\threepartition$-instance.

\begin{lemma}\label{areaMinimisationLemma}
$(B, A)$ is a positive $\threepartition$-instance if and only if $G$ has a $(7 \times (2(mB+m+1)-1))$ unit square grid layout.
\end{lemma}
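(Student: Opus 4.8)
The plan is to turn the lemma into a purely combinatorial statement about grid embeddings. Put $N=mB+m+1$ and let $\Gamma_N$ denote the $4\times N$ grid graph (vertices $(r,c)$ with $0\le r\le3$, $0\le c\le N-1$, and the obvious grid edges). One computes $|V|=|V_f|+|V_A|=(2mB+4m+4)+2mB=4(mB+m+1)=4N$. A unit square grid layout of dimensions $7\times(2N-1)$ has at most four non-empty rows (necessarily at $y$-coordinates $0,2,4,6$), and any two adjacent columns jointly contain at most four squares, since their occupied $y$-coordinates together form a subset of $\{0,\dots,6\}$ whose elements are pairwise at distance at least $2$; pairing up the $2N-1$ columns then shows that the total of $4N$ squares can be reached only when every column at an even $x$-coordinate is full (squares at $y\in\{0,2,4,6\}$) and every column at an odd $x$-coordinate is empty. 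In such a \emph{full} layout, horizontally (resp.\ vertically) consecutive cells always see each other and there are no further visibilities, so its visibility graph is exactly $\Gamma_N$. Since we work in the weak setting, the lemma is therefore equivalent to the statement: $(B,A)$ is a positive $\threepartition$-instance if and only if $G$ is isomorphic to a spanning subgraph of $\Gamma_N$.

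For the ``only if'' direction, given a solution $A_1,\dots,A_m$ I would exhibit the embedding suggested by Figure~\ref{Fig:frameGraph}: put the $u$-path along row $0$ in its natural order $u_{1,0},\dots,u_{1,B},u_{2,0},\dots,u_{m+1,0}$ (one vertex per column), the $v$-path along row $1$ with $v_{i,j}$ directly above $u_{i,j}$, the towers $w_{i,1},w_{i,2}$ in rows $2$ and $3$ of the $m+1$ ``pillar columns'' $0,B+1,\dots,m(B+1)$ (the columns of the $u_{i,0}$), and into each of the $m$ resulting $2\times B$ ``compartments'' in rows $2,3$ the three ladders $G_a$, $a\in A_i$, placed side by side. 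They fit exactly because $\sum_{a\in A_i}a=B$, every edge of $G_f\cup G_A$ becomes a grid edge, and the spurious grid edges (e.g.\ between a tower and an abutting ladder) are harmless in the weak setting.

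For the ``if'' direction, let $\phi$ be a bijection from $V$ onto the cells of $\Gamma_N$ carrying every edge of $G$ to a grid edge; I would show that, up to the row-flip and column-flip automorphisms of $\Gamma_N$, $\phi$ is the embedding above. The $u$-path, the $v$-path and the rungs $\{u_{i,j},v_{i,j}\}$ form a $2\times N$ ladder $D$, and its $\phi$-complement in $\Gamma_N$ is a disjoint union of $3m$ connected $2\times a_i$ grids. Since $D$ is connected and contains $N-1$ consecutive $4$-cycles, while each $2\times a_i$ grid (for $a_i\ge2$) needs two \emph{adjacent} rows to host its $4$-cycles, a column-by-column accounting of the four cells in each column---split among $\phi(D)$, the towers, and the pieces $\phi(G_{a_i})$---forces $\phi(D)$ to be a clean two-row strip occupying rows $\{0,1\}$ (or $\{2,3\}$) of all $N$ columns, with the $u$- and $v$-paths monotone in the column coordinate. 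Then the pillar columns are $0,B+1,\dots,m(B+1)$, and since $w_{i,1}$ is a $\Gamma_N$-neighbour of $\phi(v_{i,0})$ outside $\phi(D)$ and $w_{i,2}$ a neighbour of $w_{i,1}$ outside $\phi(D)$, the towers occupy rows $2,3$ of the pillar columns. The cells left for $\phi(G_A)$ are exactly the $m$ blocks $\{2,3\}\times\{(i-1)(B+1)+1,\dots,i(B+1)-1\}$, each a $2\times B$ grid. Each $G_{a_i}$ is connected, hence lies inside one such block, and a connected $2\times a_i$ grid inside a $2\times B$ grid occupies $a_i$ consecutive columns of it (a short argument via its rung edges); so the ladder sizes landing in block $i$ sum to $B$, and as $\tfrac B4<a_i<\tfrac B2$ exactly three of them land in each block---a valid $\threepartition$ solution.

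The step I expect to be the main obstacle is the rigidity claim in the third paragraph: excluding ``folded'' placements of $G_f$, that is, showing the $u,v$-ladder cannot meander among the four rows and the towers cannot be tucked away outside the pillar columns. This is the technically heavy part; it rests on the fact that the $3m$ pieces of $G_A$ are pairwise disconnected $2\times a_i$ grids each consuming two adjacent rows, which leaves no room for a non-canonical placement of $G_f$. A few degenerate instances with very small $B$ (where every $a_i$ equals $1$ or $2$, e.g.\ $B=3$) escape this argument and are checked directly.
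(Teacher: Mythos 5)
Your overall strategy is sound and genuinely different from the paper's. The paper reasons directly about the geometry of the layout: it asserts that the $u$-/$v$-vertices and each $G_A$-component must be laid out as horizontal ``ladders'', pins down the admissible $y$-coordinates ($y_u=0$, $y_v\in\{2,4\}$) by a case analysis, and then counts free coordinates at height $4$ to force the compartment structure. Your first move --- observing that any two adjacent columns of a height-$7$ grid layout carry at most four squares, so that a $7\times(2N-1)$ layout accommodating all $4N=|V|$ squares (where $N=mB+m+1$) is forced to be the full $4\times N$ array, whose visibility graph is exactly the $4\times N$ grid graph --- collapses the whole geometric problem to the purely combinatorial question of whether $G$ embeds as a spanning subgraph of that grid graph. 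This is cleaner than the paper's route: it eliminates the case distinction $y_v=4$ versus $y_v=2$ entirely, and it makes the final capacity count transparent.

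Two steps need repair. First, the rigidity of the embedded $2\times N$ ladder $D$, which you flag as the main obstacle, does not rest on the $G_A$-pieces ``leaving no room''; that reasoning is close to circular, since how much room is left depends on how $D$ is placed. The rigidity follows from $D$ itself: each $4$-cycle $u_{i,j}v_{i,j}v_{i,j+1}u_{i,j+1}$ must map to a $4$-cycle of the grid graph, i.e.\ a unit cell, so consecutive rungs are parallel edges of a common cell; hence all rungs are parallel, all lie in one pair of adjacent rows (the alternative --- one pair of adjacent columns --- is impossible since there are $N>4$ rungs but only $4$ rows), and the rung columns form an injective $\pm1$ walk, hence are monotone. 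The $G_A$-pieces are needed only to exclude the placement of $D$ in rows $\{1,2\}$: there the leftover cells (rows $0$ and $3$) induce two disjoint paths, which cannot host the $4$-cycles of a $2\times a_i$ ladder. Second, your assertion that $w_{i,2}$ sits in row $3$ of the pillar column is not automatic: being a grid-neighbour of $w_{i,1}=(2,c_i)$ outside $\phi(D)$ also permits $(2,c_i\pm1)$. This sideways option must be excluded by (or folded into) the capacity count --- it would leave strictly fewer than $B$ columns with both rows $2$ and $3$ free in the adjacent compartment, while the ladders still require $mB$ such columns in total; the paper treats this case explicitly as a separate branch. With these two points filled in, your proof goes through.
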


Since the reduction defined above is polynomial in $m$ and $B$, and $\threepartition$ is strongly $\npclass$-complete (see~\cite[Theorem 4.4]{GarJoh79}), we can conclude the following: 

\begin{theorem}\label{areaMinimisationTheorem}
The area-minimisation variant of $\recognitionProb(\unitSquareGridGraphs_{\weak})$ is $\npclass$-complete.
\end{theorem}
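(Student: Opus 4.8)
The plan is to establish the two halves of $\npclass$-completeness separately, the combinatorial core already being isolated in Lemma~\ref{areaMinimisationLemma}. For membership in $\npclass$, an instance is a triple $(G, W, H)$ with $G = (V, E)$, $|V| = n$, asking whether $G$ admits a weak unit square grid layout of width $W$ and height $H$. A layout is the natural certificate: one checks in polynomial time that its $n$ unit squares are pairwise disjoint and grid-placed, that its bounding box respects the prescribed width and height, and — by computing the visibility relation $\visibility_{\mathcal{R}}$ — that $E \subseteq E(\layoutToGraph(\mathcal{R}))$, which is precisely the condition for $\mathcal{R}$ to be a weak layout of $G$. To keep certificates of polynomial size even when $W, H$ are given in binary, one uses the observation recalled above that every $n$-vertex graph in $\unitSquareGridGraphs = \unitSquareGridGraphs_{\weak}$ (Theorem~\ref{weakStrongEqualityTheorem}) already has a $(2n-1) \times (2n-1)$ layout, which can be padded with empty rows/columns: if $W \geq 2n-1$ and $H \geq 2n-1$ the instance is a yes-instance iff $G \in \unitSquareGridGraphs_{\weak}$ and a $(2n-1)\times(2n-1)$ witness suffices, while otherwise any witnessing layout already has a bounding box of side length at most $2n-1$. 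Hence in all cases it suffices to guess layouts with coordinates bounded by $2n$, so the problem is in $\npclass$.

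For $\npclass$-hardness we use the map sending a $\threepartition$-instance $(B, A)$ with $A = \{a_1, \dots, a_{3m}\}$ to the instance $\bigl(G,\, 7,\, 2(mB + m + 1) - 1\bigr)$ of the area-minimisation version of $\recognitionProb(\unitSquareGridGraphs_{\weak})$, where $G = (V_f \cup V_A, E_f \cup E_A)$ is the graph constructed above. Since $|V_f \cup V_A| = \bigo(mB)$ and $|E_f \cup E_A| = \bigo(mB)$, and since $\threepartition$ is strongly $\npclass$-complete \cite[Theorem~4.4]{GarJoh79} — so that it remains $\npclass$-complete with $B$ and the $a_i$ encoded in unary, making $mB$ polynomial in the input size — this map is computable in polynomial time. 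Correctness is then immediate from Lemma~\ref{areaMinimisationLemma}: it states that $(B, A)$ is a positive $\threepartition$-instance if and only if $G$ has a $\bigl(7 \times (2(mB+m+1)-1)\bigr)$ unit square grid layout. A strong layout is in particular a weak one, which yields one direction of the equivalence we need; for the other direction one notes that these dimensions are tight enough that a weak layout of this size must realise exactly the edges of $E_f \cup E_A$ (there is no slack left for additional visibilities once all required ones are present), so it is in fact a strong layout and Lemma~\ref{areaMinimisationLemma} applies — equivalently, the ``only if'' direction of Lemma~\ref{areaMinimisationLemma} already goes through for weak layouts, since its proof only uses that every edge of $G$ is realised as a visibility. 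Thus $(B, A)$ is a yes-instance of $\threepartition$ iff $\bigl(G, 7, 2(mB+m+1)-1\bigr)$ is a yes-instance of the area-minimisation problem, which together with membership in $\npclass$ proves the theorem.

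The real difficulty is entirely inside Lemma~\ref{areaMinimisationLemma}, which this theorem takes as given. Proving it requires, for the ``if'' direction, placing the frame graph $G_f$ and the ladder gadgets of $G_A$ into a strip of height $7$ and width $2(mB+m+1)-1$ according to Figure~\ref{Fig:frameGraph}, using the freedom given by a valid $\threepartition$-partition to fit the three gadgets of lengths summing to $B$ into each compartment; and, for the ``only if'' direction, showing that the rigid ladder-and-rail structure of $G_f$ in a height-$7$ strip forces \emph{any} size-respecting layout to decompose into $m$ consecutive compartments of internal width $B$, into which the $G_A$-gadgets must be allocated without overlap, so that the gadget lengths per compartment necessarily sum to $B$. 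Establishing this rigidity — essentially a uniqueness-up-to-isometry statement for the layout of $G_f$ under the area constraint — is the technically delicate part; granting it (and the NP-membership argument above), Theorem~\ref{areaMinimisationTheorem} follows.
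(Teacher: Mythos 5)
Your overall route is the same as the paper's: $\npclass$-membership via layouts with coordinates bounded by $O(n)$ (as in Theorem~\ref{NPMembershipTheorem}), and hardness by combining Lemma~\ref{areaMinimisationLemma} with the strong $\npclass$-completeness of $\threepartition$, so that the unary encoding makes the construction polynomial. One claim in your handling of the weak/strong distinction is, however, false: you assert that a weak layout of dimensions $7 \times (2(mB+m+1)-1)$ ``must realise exactly the edges of $E_f \cup E_A$'' and is therefore a strong layout. The paper's own witness layout refutes this -- in the construction for the ``if'' direction, the $G_A$-ladders are placed at $y$-coordinates $4$ and $6$ directly above frame vertices at $y$-coordinate $2$, creating vertical visibilities between $b_{i,j}$-squares and $v$-vertices that are not edges of $G$. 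This is precisely why the theorem is stated only for $\unitSquareGridGraphs_{\weak}$. Fortunately, the second justification you offer (and present, incorrectly, as equivalent) is the right one and is what the paper actually relies on: both directions of Lemma~\ref{areaMinimisationLemma} are read in the weak sense -- the constructed layout realises all edges of $G$ (possibly with extra visibilities), and the ``only if'' argument uses only the required visibilities, the pairwise-disjointness of grid-placed unit squares, and the area bound, never the absence of extra edges. With that reading the reduction goes through verbatim, so your proof stands once the spurious ``weak implies strong here'' claim is deleted.
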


The area minimisation variant implicitly solves the general recognition problem, so the question arises whether it is also hard to decide if a graph from  $\unitSquareGridGraphs_{\weak}$ (given as a layout) can be represented by a layout satisfying given size bounds. Since our reduction always produces a graph in $\unitSquareGridGraphs_{\weak}$ (with an obvious layout), independent of the $\threepartition$-instance, it shows that the hardness remains if the input graph is given directly as a layout. Moreover, the problem is still $\npclass$-complete for the $\LRDU$-restricted variant (the $\LRDU$-restriction then simply enforces the structure shown in Figure~\ref{Fig:frameGraph}).\par
The reduction also yields a (substantially simpler) alternative proof for the hardness of the area-minimisation recognition problem for $\LRDU$-restricted rectilinear graphs~\cite{ManPPT1011} (more precisely, it can be shown that $(B, A)$ is a positive $\threepartition$-instance if and only if $G$ has a $(4 \times (mB+m+1))$ rectilinear drawing), and the hardness also carries over to the variant where the input graph is already given as a rectilinear drawing.\par
We conclude this section by pointing out that it is open whether the $\LRDU$-restricted area-minimisation variant of  $\recognitionProb(\unitSquareGridGraphs)$ can be solved in polynomial-time. Intuitively, reducing the size of a rectilinear drawing is difficult, since space can be saved by placing non-adjacent vertices on the same line, which is not possible for \emph{non-weak} unit square grid layouts. However, computing a size-minimal unit square grid layout includes finding out to what extend the scaling by $2$ is really necessary, which seems difficult as well.

\section{Unit Square Visibility Graphs}

Obviously, a larger class of graphs can be represented if the unit squares are not restricted to integer coordinates (see Figure~\ref{visLayoutsCompleteBipartiteGraphs} for some examples). In~\cite{DeaEHP2008}, cycles, complete graphs, complete bipartite graphs and trees in $\unitSquareGraphs$ are characterised as follows: $C_i \in \unitSquareGraphs$, for every $i \in \mathbb{N}$, $K_i \in \unitSquareGraphs$ if and only if $i \leq 4$, $K_{i, j} \in \unitSquareGraphs$ with $i \leq j$ if and only if ($1 \leq i \leq 2$ and $i \leq j \leq 6$) or ($i = 3$ and $3 \leq j \leq 4$),\footnote{For the more general question of representing bipartite graphs as rectangle visibility graphs, we refer to~\cite{DeaHut97}. In particular, a linear upper bound on the number of edges, compared to the number of vertices, is known.} and a tree $T$ is in $\unitSquareGraphs$ if and only if it is the union of two subdivided caterpillar forests with maximum degree $3$ (note that \cite{GauRosWen2016} provides an algorithm that efficiently checks this property). \par

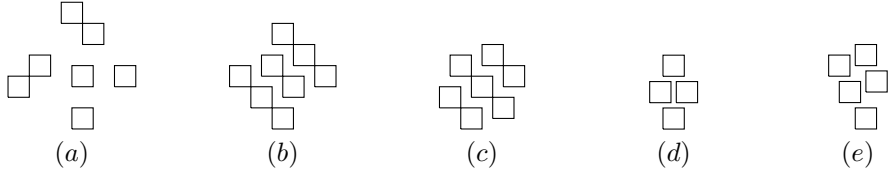
\begin{figure}
\centering
\begin{tabular}{p{2.5cm} p{2.2cm} p{2.2cm} p{2cm} p{2cm}}
\centering
\begin{tikzpicture}[scale=0.7]

\coordinate (width) at (0.4,0);
\coordinate (height) at (0,0.4);
\coordinate (wspace) at (0.2,0);
\coordinate (hspace) at (0,0.2);

\coordinate (1) at (0,0);
\coordinate (6) at ($(1) + 1.5*(width) + (wspace)$);
\coordinate (7) at ($(1) - 1.5*(height) - (hspace)$);
\coordinate (3) at ($(1) + 1.5*(height) + 0.5*(width) + (hspace)$);
\coordinate (2) at ($(3) + (height) - (width)$);
\coordinate (4) at ($(1) - 1.5*(width) + 0.5*(height) - (wspace)$);
\coordinate (5) at ($(4) - (height) - (width)$);

\coordinate (temp) at (1);
\draw[black] ($(temp)$) -- ($(temp) + (width)$) -- ($(temp) + (width) + (height)$) -- ($(temp) + (height)$) -- ($(temp)$);
\coordinate (temp) at (2);
\draw[black] ($(temp)$) -- ($(temp) + (width)$) -- ($(temp) + (width) + (height)$) -- ($(temp) + (height)$) -- ($(temp)$);
\coordinate (temp) at (3);
\draw[black] ($(temp)$) -- ($(temp) + (width)$) -- ($(temp) + (width) + (height)$) -- ($(temp) + (height)$) -- ($(temp)$);
\coordinate (temp) at (4);
\draw[black] ($(temp)$) -- ($(temp) + (width)$) -- ($(temp) + (width) + (height)$) -- ($(temp) + (height)$) -- ($(temp)$);
\coordinate (temp) at (5);
\draw[black] ($(temp)$) -- ($(temp) + (width)$) -- ($(temp) + (width) + (height)$) -- ($(temp) + (height)$) -- ($(temp)$);
\coordinate (temp) at (6);
\draw[black] ($(temp)$) -- ($(temp) + (width)$) -- ($(temp) + (width) + (height)$) -- ($(temp) + (height)$) -- ($(temp)$);
\coordinate (temp) at (7);
\draw[black] ($(temp)$) -- ($(temp) + (width)$) -- ($(temp) + (width) + (height)$) -- ($(temp) + (height)$) -- ($(temp)$);

\end{tikzpicture}
&
\centering
\begin{tikzpicture}[scale=0.7]

\coordinate (width) at (0.4,0);
\coordinate (height) at (0,0.4);

\coordinate (1) at (0,0);
\coordinate (2) at ($(1) + (width) - (height)$);
\coordinate (3) at ($(1) + 1.5*(height) + 0.5*(width)$);
\coordinate (4) at ($(3) + (width) - (height)$);
\coordinate (5) at ($(4) + (width) - (height)$);
\coordinate (6) at ($(1) - 1.5*(width) - 0.5*(height)$);
\coordinate (7) at ($(6) + (width) - (height)$);
\coordinate (8) at ($(7) + (width) - (height)$);

\coordinate (temp) at (1);
\draw[black] ($(temp)$) -- ($(temp) + (width)$) -- ($(temp) + (width) + (height)$) -- ($(temp) + (height)$) -- ($(temp)$);
\coordinate (temp) at (2);
\draw[black] ($(temp)$) -- ($(temp) + (width)$) -- ($(temp) + (width) + (height)$) -- ($(temp) + (height)$) -- ($(temp)$);
\coordinate (temp) at (3);
\draw[black] ($(temp)$) -- ($(temp) + (width)$) -- ($(temp) + (width) + (height)$) -- ($(temp) + (height)$) -- ($(temp)$);
\coordinate (temp) at (4);
\draw[black] ($(temp)$) -- ($(temp) + (width)$) -- ($(temp) + (width) + (height)$) -- ($(temp) + (height)$) -- ($(temp)$);
\coordinate (temp) at (5);
\draw[black] ($(temp)$) -- ($(temp) + (width)$) -- ($(temp) + (width) + (height)$) -- ($(temp) + (height)$) -- ($(temp)$);
\coordinate (temp) at (6);
\draw[black] ($(temp)$) -- ($(temp) + (width)$) -- ($(temp) + (width) + (height)$) -- ($(temp) + (height)$) -- ($(temp)$);
\coordinate (temp) at (7);
\draw[black] ($(temp)$) -- ($(temp) + (width)$) -- ($(temp) + (width) + (height)$) -- ($(temp) + (height)$) -- ($(temp)$);
\coordinate (temp) at (8);
\draw[black] ($(temp)$) -- ($(temp) + (width)$) -- ($(temp) + (width) + (height)$) -- ($(temp) + (height)$) -- ($(temp)$);

\end{tikzpicture}
&
\centering
\begin{tikzpicture}[scale=0.7]

\coordinate (width) at (0.4,0);
\coordinate (height) at (0,0.4);

\coordinate (1) at (0,0);
\coordinate (2) at ($(1) + (width) - (height)$);
\coordinate (3) at ($(2) + (width) - (height)$);
\coordinate (4) at ($(1) + 1.5*(width) + 0.5*(height)$);
\coordinate (5) at ($(4) + (width) - (height)$);
\coordinate (6) at ($(1) - 1.5*(height) - 0.5*(width)$);
\coordinate (7) at ($(6) + (width) - (height)$);

\coordinate (temp) at (1);
\draw[black] ($(temp)$) -- ($(temp) + (width)$) -- ($(temp) + (width) + (height)$) -- ($(temp) + (height)$) -- ($(temp)$);
\coordinate (temp) at (2);
\draw[black] ($(temp)$) -- ($(temp) + (width)$) -- ($(temp) + (width) + (height)$) -- ($(temp) + (height)$) -- ($(temp)$);
\coordinate (temp) at (3);
\draw[black] ($(temp)$) -- ($(temp) + (width)$) -- ($(temp) + (width) + (height)$) -- ($(temp) + (height)$) -- ($(temp)$);
\coordinate (temp) at (4);
\draw[black] ($(temp)$) -- ($(temp) + (width)$) -- ($(temp) + (width) + (height)$) -- ($(temp) + (height)$) -- ($(temp)$);
\coordinate (temp) at (5);
\draw[black] ($(temp)$) -- ($(temp) + (width)$) -- ($(temp) + (width) + (height)$) -- ($(temp) + (height)$) -- ($(temp)$);
\coordinate (temp) at (6);
\draw[black] ($(temp)$) -- ($(temp) + (width)$) -- ($(temp) + (width) + (height)$) -- ($(temp) + (height)$) -- ($(temp)$);
\coordinate (temp) at (7);
\draw[black] ($(temp)$) -- ($(temp) + (width)$) -- ($(temp) + (width) + (height)$) -- ($(temp) + (height)$) -- ($(temp)$);

\end{tikzpicture}
&
\centering
\begin{tikzpicture}[scale=0.7]

\coordinate (width) at (0.4,0);
\coordinate (height) at (0,0.4);

\coordinate (1) at (0,0);
\coordinate (2) at ($(1) + 1.25*(width)$);
\coordinate (3) at ($(1) + 1.25*0.5*(width) + 1.25*(height)$);
\coordinate (4) at ($(1) + 1.25*0.5*(width) - 1.25*(height)$);

\coordinate (temp) at (1);
\draw[black] ($(temp)$) -- ($(temp) + (width)$) -- ($(temp) + (width) + (height)$) -- ($(temp) + (height)$) -- ($(temp)$);
\coordinate (temp) at (2);
\draw[black] ($(temp)$) -- ($(temp) + (width)$) -- ($(temp) + (width) + (height)$) -- ($(temp) + (height)$) -- ($(temp)$);
\coordinate (temp) at (3);
\draw[black] ($(temp)$) -- ($(temp) + (width)$) -- ($(temp) + (width) + (height)$) -- ($(temp) + (height)$) -- ($(temp)$);
\coordinate (temp) at (4);
\draw[black] ($(temp)$) -- ($(temp) + (width)$) -- ($(temp) + (width) + (height)$) -- ($(temp) + (height)$) -- ($(temp)$);

\end{tikzpicture}
&
\centering
\begin{tikzpicture}[scale=0.7]

\coordinate (width) at (0.4,0);
\coordinate (height) at (0,0.4);

\coordinate (1) at ($(0,0)$);
\coordinate (2) at ($(1) + 1.25*(width) + 0.5*(height)$);
\coordinate (3) at ($(1) + 0.5*(width) - 1.25*(height)$);
\coordinate (4) at ($(1) + 1.75*(width) - 0.75*(height)$);
\coordinate (5) at ($(3) + 0.75*(width) - 1.25*(height)$);

\coordinate (temp) at (1);
\draw[black] ($(temp)$) -- ($(temp) + (width)$) -- ($(temp) + (width) + (height)$) -- ($(temp) + (height)$) -- ($(temp)$);
\coordinate (temp) at (2);
\draw[black] ($(temp)$) -- ($(temp) + (width)$) -- ($(temp) + (width) + (height)$) -- ($(temp) + (height)$) -- ($(temp)$);
\coordinate (temp) at (3);
\draw[black] ($(temp)$) -- ($(temp) + (width)$) -- ($(temp) + (width) + (height)$) -- ($(temp) + (height)$) -- ($(temp)$);
\coordinate (temp) at (4);
\draw[black] ($(temp)$) -- ($(temp) + (width)$) -- ($(temp) + (width) + (height)$) -- ($(temp) + (height)$) -- ($(temp)$);
\coordinate (temp) at (5);
\draw[black] ($(temp)$) -- ($(temp) + (width)$) -- ($(temp) + (width) + (height)$) -- ($(temp) + (height)$) -- ($(temp)$);

\end{tikzpicture}\cr
\centering$(a)$&\centering$(b)$&\centering$(c)$&\centering$(d)$&\centering$(e)$
\end{tabular}

\caption{Visibility layouts for $K_{1, 6}$, $K_{2,6}$, $K_{3,4}$, $K_4$ and a $K_5$ with one missing edge.}
\label{visLayoutsCompleteBipartiteGraphs}
\end{figure}

Next, we observe that every graph with at most $4$ vertices is in $\unitSquareGraphs$, while $K_5$ is not (it is not hard to find layouts for graphs with at most $4$ vertices; $K_5 \notin \unitSquareGraphs$ is shown in~\cite{DeaEHP2008}).

\begin{proposition}\label{atMostFourVerticesProposition}
Every graph with at most $4$ vertices is in $\unitSquareGraphs$.
\end{proposition}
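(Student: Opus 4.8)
The plan is to prove the statement by a straightforward case analysis on the number of vertices, building an explicit unit square layout in each case; since the property is closed under taking induced subgraphs (this is not stated for $\unitSquareGraphs$ in the excerpt, but can be argued directly by deleting the unit square of the missing vertex, as nothing forces new squares to appear), it suffices to exhibit layouts for the ``maximal'' graphs on $k$ vertices for $k \in \{1,2,3,4\}$ and then remove squares to obtain all subgraphs. For $k \le 3$ every graph is a subgraph of $K_3$, and $C_3 = K_3 \in \unitSquareGraphs$ is already recorded in the excerpt (indeed $C_i \in \unitSquareGraphs$ for every $i$), so those cases are immediate. The remaining work is therefore concentrated on the $4$-vertex graphs.

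For $k = 4$, I would enumerate the connected graphs on $4$ vertices up to isomorphism --- there are six of them: $K_4$, $K_4$ minus an edge (the ``diamond''), $C_4$, the path $P_4$, the star $K_{1,3}$, and the triangle with a pendant vertex (the ``paw'') --- together with the disconnected ones, all of which are subgraphs of these. I would then point out that it suffices to realise $K_4$ and the diamond, since $C_4$, $P_4$, $K_{1,3}$ are subgraphs of $K_4$ and the paw is a subgraph of the diamond; then deleting the appropriate unit squares (and observing that no unwanted visibilities survive in the reduced layout, or, more cleanly, that the weak class suffices and weak equals strong is not needed here since we may simply slide squares apart) yields all of them. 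For $K_4$ itself, Figure~\ref{visLayoutsCompleteBipartiteGraphs}$(d)$ already provides a unit square layout, and the diamond $K_4 - e$ is realised by Figure~\ref{visLayoutsCompleteBipartiteGraphs}$(e)$ (a $K_5$ minus one edge contains $K_4 - e$ as an induced subgraph, but more directly the five-square layout there, or a sub-layout of it, realises the diamond). Thus the proof reduces to verifying that these two pictured layouts represent the claimed graphs and that every other $4$-vertex graph embeds into one of them as a subgraph.

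In detail, the steps in order: (1) recall from the excerpt that every $C_i$, in particular $C_3$ and $C_4$, is in $\unitSquareGraphs$, disposing of all graphs on at most $3$ vertices and of $C_4$; (2) observe that $\unitSquareGraphs$ is closed under vertex deletion, by removing the corresponding square from a layout and noting that the only effect is the possible loss of visibilities, so we never leave the class (alternatively, invoke that disconnected graphs are handled by placing the components' layouts far apart); (3) for the six connected $4$-vertex graphs, note $P_4, K_{1,3}, C_4 \subseteq K_4$ and $\text{paw} \subseteq K_4 - e$, so only $K_4$ and $K_4 - e$ need explicit layouts; (4) exhibit the layout for $K_4$ (e.g. the one in Figure~\ref{visLayoutsCompleteBipartiteGraphs}$(d)$: four unit squares arranged so that each pair shares a horizontal or vertical sightline) and verify the six visibilities; (5) exhibit a layout for $K_4 - e$ and verify the five visibilities, checking that the two designated squares do \emph{not} see each other.

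The main obstacle --- really the only non-routine point --- is step (2) combined with the subgraph reductions in step (3): one must be careful that deleting a square from a layout of $K_4$ does not leave a layout whose graph has \emph{extra} edges beyond the desired subgraph. The clean fix is to give \emph{direct} small layouts for $P_4$, $K_{1,3}$, $C_4$ and the paw rather than obtaining them by deletion, or to appeal to the fact that one may always perturb/slide the squares of a layout to destroy any unwanted visibility without creating new ones (since unit squares not pinned to a grid have continuous freedom of movement); either way the verification is a finite, elementary check of which pairs of axis-aligned unit squares admit an unobstructed visibility rectangle, and I would present it as a short table or a single annotated figure per graph.
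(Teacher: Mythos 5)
Your endgame---explicit layouts for $K_4$ and the diamond, together with direct small layouts for $P_4$, $K_{1,3}$, $C_4$ and the paw---is exactly what the paper does, so with your ``clean fix'' applied the argument is complete and coincides with the paper's proof. However, the primary route you propose rests on a false principle: $\unitSquareGraphs$ is \emph{not} closed under taking subgraphs, and the ``slide squares apart to destroy unwanted visibilities'' step is not available in general. The paper makes this point explicitly: it remarks that no analogue of Lemma~\ref{removeVerticesEdgesLemma} holds for $\unitSquareGraphs$, and Theorem~\ref{weakStrongSubsetTheorem} ($\unitSquareGraphs \subsetneq \unitSquareGraphs_{\weak}$, witnessed by $K_{1,7}$) shows that deleting edges from a representable graph can leave the class entirely. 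Deleting a vertex's square is equally delicate, since removing a square can \emph{create} visibilities it used to block, and your claim that ``nothing forces new squares to appear'' does not address this. So the reduction ``only $K_4$ and $K_4-e$ need layouts, the rest are subgraphs'' is not a legitimate shortcut; the repair you name at the end is not optional but is the whole proof. For four vertices one simply exhibits (or declares straightforward) a layout for each of the finitely many graphs, which is what the paper does: graphs on at most $3$ vertices (hence all disconnected $4$-vertex graphs), $P_4$, $C_4$, $K_{1,3}$, then $K_4$ via Figure~\ref{visLayoutsCompleteBipartiteGraphs}$(d)$, and finally the paw and the diamond via two further explicit four-square layouts (rather than via a sub-layout of the $K_5$-minus-an-edge picture, which would again presuppose that deletion is harmless).
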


A crucial difference between $\unitSquareGridGraphs$ and $\unitSquareGraphs$ is that for the latter, the degree is not bounded, as witnessed by layouts of the following form:
\begin{tikzpicture}[scale=0.5]

\coordinate (width) at (0.4,0);
\coordinate (height) at (0,0.4);

\coordinate (1) at (0,0);
\coordinate (2) at ($(1) + 1.5*(width) + 0.8*(height)$);
\coordinate (3) at ($(2) + 1.5*(width) - 0.1*(height)$);
\coordinate (4) at ($(3) + 1.5*(width) - 0.1*(height)$);
\coordinate (5) at ($(4) + 1.5*(width) - 0.1*(height)$);
\coordinate (6) at ($(5) + 1.5*(width) - 0.1*(height)$);
\coordinate (7) at ($(6) + 1.5*(width) - 0.1*(height)$);

\coordinate (temp) at (1);
\draw[black] ($(temp)$) -- ($(temp) + (width)$) -- ($(temp) + (width) + (height)$) -- ($(temp) + (height)$) -- ($(temp)$);
\coordinate (temp) at (2);
\draw[black] ($(temp)$) -- ($(temp) + (width)$) -- ($(temp) + (width) + (height)$) -- ($(temp) + (height)$) -- ($(temp)$);
\coordinate (temp) at (3);
\draw[black] ($(temp)$) -- ($(temp) + (width)$) -- ($(temp) + (width) + (height)$) -- ($(temp) + (height)$) -- ($(temp)$);
\coordinate (temp) at (4);
\draw[black] ($(temp)$) -- ($(temp) + (width)$) -- ($(temp) + (width) + (height)$) -- ($(temp) + (height)$) -- ($(temp)$);
\coordinate (temp) at (5);
\draw[black] ($(temp)$) -- ($(temp) + (width)$) -- ($(temp) + (width) + (height)$) -- ($(temp) + (height)$) -- ($(temp)$);
\coordinate (temp) at (6);
\draw[black] ($(temp)$) -- ($(temp) + (width)$) -- ($(temp) + (width) + (height)$) -- ($(temp) + (height)$) -- ($(temp)$);
\coordinate (temp) at (7);
\draw[black] ($(temp)$) -- ($(temp) + (width)$) -- ($(temp) + (width) + (height)$) -- ($(temp) + (height)$) -- ($(temp)$);
\end{tikzpicture}. However, if a unit square sees at least $7$ other unit squares, then these must be placed in such a way that visibilities or ``paths'' between some of them are enforced (note that any $K_{1,n}$ may exist as induced subgraph, as can be demonstrated by modifying the above example layout such that between each two consecutive  neighbours another ``visibility-blocking'' unit square is inserted). In \cite{DeaEHP2008}, it is formally proven that in graphs from $\unitSquareGraphs$ any vertex of degree at least $7$ must lie on a cycle. In particular, these observations point out that an analogue of Lemma~\ref{removeVerticesEdgesLemma} is not possible for $\unitSquareGraphs$.\par
For the class of trees within $\unitSquareGraphs$, as long as we consider trees with maximum degree strictly less or larger than $6$, a much simpler characterisation (compared to the one mentioned at the beginning of this section) applies:

\begin{theorem}\label{nonGridTreeTheorem}
Let $T$ be a tree with maximum degree $k$. If $k \leq 5$, then $T \in \unitSquareGraphs$, and if $k \geq 7$, then $T \notin \unitSquareGraphs$.
\end{theorem}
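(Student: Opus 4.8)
The plan is to treat the two claims separately, as they require very different arguments. For the positive direction ($k \le 5$), the plan is to construct an explicit unit square layout for $T$ by a recursive/inductive placement, using the "fan" pattern illustrated just before the theorem statement: a single unit square can be made to see, from \emph{one} of its sides, arbitrarily many pairwise non-adjacent unit squares by shifting each successive neighbour slightly, and symmetrically on all four sides. The key observation is that a unit square has four sides, so a vertex of degree at most $5$ can have its $\le 5$ children distributed so that each side is used by at most two of them without creating unwanted visibilities among the children or blocking the edge to the parent. I would root $T$ at an arbitrary vertex and process it in BFS order, maintaining the invariant that when we place the square $R_v$ for a vertex $v$, one designated side of $R_v$ is "reserved" for the visibility to $v$'s parent (already placed), and the other three sides (plus possibly part of the reserved side) are free to host the $\le 4$ remaining children; since the degree bound is $5$ and one unit of the degree is the parent edge, this is exactly enough room. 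Children are placed in a thin "frontier" region far enough from all previously placed squares that no spurious visibilities arise, and sibling subtrees are kept in horizontally disjoint strips. The routine geometry to check is that (i) the parent–child visibility is unobstructed, (ii) no two children of the same vertex see each other, and (iii) no square in one subtree sees a square in a disjoint subtree — all of which follow from placing each new subtree in a fresh, sufficiently separated strip.

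For the negative direction ($k \ge 7$), the plan is to invoke the result quoted from~\cite{DeaEHP2008} that in any graph from $\unitSquareGraphs$, every vertex of degree at least $7$ must lie on a cycle. Since a tree is acyclic, no vertex can lie on a cycle, so a tree with a vertex of degree $\ge 7$ cannot be in $\unitSquareGraphs$; a tree with maximum degree $k \ge 7$ has such a vertex by definition, hence $T \notin \unitSquareGraphs$. This direction is essentially immediate once the cited lemma is granted.

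The main obstacle is the $k \le 5$ construction — specifically, verifying that the degree-$5$ vertices can always be accommodated. A naïve "four sides, at most two children per side" scheme works comfortably for degree $\le 4$ but is tight at $5$: after reserving one side-segment for the parent, one must still fit $4$ children, which is fine if we are willing to use \emph{both} ends of a single side, but one must be careful that two children hanging off opposite ends of the same side of $R_v$ do not accidentally see each other, and that the parent's visibility rectangle (which enters through a sub-segment of one side) is not blocked by a child placed on the same side. I would handle this by always devoting one \emph{entire} side of $R_v$ to the parent edge and distributing the $\le 4$ children over the other three sides with at most two per side, positioned near opposite corners, and by routing each child's subtree immediately away from $R_v$ into its own strip so that the only visibility between $R_v$ and that strip is the single intended edge. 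The bookkeeping — choosing the strip widths, the vertical/horizontal offsets, and the separation distances as explicit functions of $|T|$ — is routine but must be spelled out carefully; I expect this to be the bulk of the formal proof, with the rest following quickly.
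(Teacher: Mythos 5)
Your handling of the $k \geq 7$ direction is exactly the paper's: cite the result from~\cite{DeaEHP2008} that in a unit square visibility graph every vertex of degree at least $7$ lies on a cycle, and conclude from acyclicity. The $k \leq 5$ direction, however, has a genuine gap, beginning with your opening observation: a unit square can \emph{not} see arbitrarily many pairwise non-adjacent unit squares from one side. In the staggered fan layout, consecutive neighbours see each other, and suppressing those visibilities requires extra blocking squares that are not vertices of your tree; if your claim were true as stated, trees of arbitrary degree would lie in $\unitSquareGraphs$, contradicting the $k \geq 7$ direction you just proved. More seriously, the load-bearing step of your construction --- placing children ``far enough from all previously placed squares that no spurious visibilities arise'' --- rests on an intuition that is backwards for visibility representations: distance does not destroy visibility. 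Two aligned squares see each other at any distance unless a third square blocks the sight line, so what you must actually guarantee is that the horizontal and vertical unit-width bands through each newly placed square are empty except for the intended neighbour, and ``sufficiently separated'' does not give you this. Similarly, ``sibling subtrees in horizontally disjoint strips'' cannot hold once children are attached to both the top and the bottom of $R_v$ (both must overlap $R_v$'s $x$-range in order to see it), and even genuinely disjoint vertical strips do not prevent horizontal visibilities between squares lying in different strips at the same height.

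The paper's induction avoids all of this with one idea your scheme lacks: it never commits to the position of $R_v$ before $v$'s children are known. It repeatedly appends at most four children to a \emph{leaf} of an already-represented tree; since a leaf's square has a single visibility (to its parent), it can first be translated arbitrarily far along the direction of that visibility without creating or destroying any edge, which clears an entire horizontal band around it. All four children are then placed inside that cleared band in a fixed local pattern (one to the left, one to the right shifted up by half a unit, one further right shifted down, one directly below), so that they pairwise avoid each other and each sees only $R_v$; rooting $T$, starting from the star around the root, and processing the remaining vertices in BFS order (each has at most four children) completes the induction. If you wish to keep your top-down recursion, you must either incorporate a re-positioning step of this kind or replace ``sufficiently separated'' by an explicit emptiness invariant on the row and column of every square at the moment it is placed; as written, your checks (i)--(iii) do not follow from the stated invariants.
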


Figure~\ref{Fig:degreeSixTree}$(a)$ shows an example of a tree from $\unitSquareGraphs$ with maximum degree $6$ and Figure~\ref{Fig:degreeSixTree}$(b)$ its representing layout. It can be easily verified that any node of degree $6$ must be represented \visomorphically{} to Figure~\ref{visLayoutsCompleteBipartiteGraphs}$(a)$ (note that this also holds for nodes $A$ and $B$ in Figures~\ref{Fig:degreeSixTree}$(a)$~and~$(b)$). Figure~\ref{visLayoutsCompleteBipartiteGraphs}$(a)$ also demonstrates that not all trees with maximum degree $6$ can be represented: let $R$ denote the square below the central square in the layout, then it is impossible for $R$ to see $5$ additional unit squares that exclusively see $R$.
On the other hand, $\unitSquareGraphs$ contains trees with arbitrarily many degree-$6$ vertices, e.\,g., trees of the form depicted in Figure~\ref{Fig:degreeSixTree}$(c)$ (it is straightforward to see that they can be represented as the union of two forests of caterpillars with maximum degree $3$).
This reasoning shows that not all planar graphs are in $\unitSquareGraphs$, while it follows from \cite{Wis85} that all planar graphs are (non-unit square) rectangle visibility graphs (also see \cite{TamTol86}).\par
Finally, we note that $\unitSquareGraphs$ is a proper subset of $\unitSquareGraphs_{\weak}$ (e.\,g., $K_{1, 7}$ is a separating example):

\begin{theorem}\label{weakStrongSubsetTheorem}
$\unitSquareGraphs \subsetneq \unitSquareGraphs_{\weak}$.
\end{theorem}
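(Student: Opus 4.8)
The plan is to establish the two inclusions separately. The inclusion $\unitSquareGraphs \subseteq \unitSquareGraphs_{\weak}$ is immediate from the definitions: any unit square layout representing $G \in \unitSquareGraphs$ also witnesses $G \in \unitSquareGraphs_{\weak}$, since the weak class is obtained by additionally allowing edge deletions (in fact $G$ itself is obtained by deleting the empty set of edges). So the whole content of the theorem lies in exhibiting a separating graph, and the excerpt already suggests $K_{1,7}$. Thus I would prove (i) $K_{1,7} \in \unitSquareGraphs_{\weak}$ and (ii) $K_{1,7} \notin \unitSquareGraphs$.

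For (i), I would observe that $K_{1,8}$ (or any $K_{1,n}$) is easily representable in a way that has $K_{1,7}$ as a \emph{subgraph}: take the layout sketched in the running text just before Proposition~\ref{atMostFourVerticesProposition} is used (a central unit square staggered against a row of $n$ squares, each seeing the centre), which represents a graph $G'$ containing $K_{1,7}$; deleting the superfluous edges among the leaves yields $K_{1,7}$, so $K_{1,7} \in \unitSquareGraphs_{\weak}$. Actually the cleanest route is the one already indicated in the text: exhibit a unit square layout whose visibility graph $G'$ has a degree-$\geq 7$ vertex $c$ together with other edges, and note $K_{1,7}$ is a subgraph of $G'$; since $\unitSquareGraphs_{\weak}$ is by definition closed under edge deletion, $K_{1,7} \in \unitSquareGraphs_{\weak}$.

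For (ii), which is the crux, I would argue that $K_{1,7} \notin \unitSquareGraphs$ using the structural fact quoted from~\cite{DeaEHP2008} and alluded to in the excerpt: in any layout representing a graph from $\unitSquareGraphs$, a unit square that sees at least $7$ others forces visibilities (or connecting paths) among some of those $7$ neighbours — concretely, any vertex of degree $\geq 7$ must lie on a cycle. Since $K_{1,7}$ is a tree, its centre has degree $7$ but lies on no cycle, a contradiction; hence $K_{1,7} \notin \unitSquareGraphs$. Alternatively, one can give a direct geometric argument: suppose $R_c$ sees $7$ pairwise non-adjacent unit squares; around the boundary of $R_c$ there are only four ``sides'', and the squares seeing $R_c$ from, say, the left must be vertically ordered, so among any $\geq 2$ of them the lower one blocks or is blocked, forcing an unwanted visibility or mutual obstruction — a careful case analysis on how $7$ squares distribute among the four directions (by pigeonhole at least two share a direction, and then among those at least two are ``adjacent'' in the induced order and see each other) yields a contradiction. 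The main obstacle is precisely this case analysis: making rigorous the claim that two unit squares seeing a common square from the same side, and consecutive in the induced vertical (or horizontal) order, necessarily see each other — one must handle the possibility of a third blocker, which then itself must see $R_c$, and iterate; invoking the cited result from~\cite{DeaEHP2008} is the expedient way to discharge this, and I would do so, relegating the self-contained geometric argument to a remark.
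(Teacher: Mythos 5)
Your proposal is correct and matches the paper's argument: the paper also separates the classes via $K_{1,7}$, obtaining $K_{1,7}\in\unitSquareGraphs_{\weak}$ from a staggered layout in which one square sees seven others (deleting the surplus edges), and $K_{1,7}\notin\unitSquareGraphs$ from Theorem~\ref{nonGridTreeTheorem}, whose degree-$\geq 7$ case rests on the cited fact from~\cite{DeaEHP2008} that any vertex of degree at least $7$ must lie on a cycle. Deferring to that citation rather than carrying out the direct geometric case analysis is exactly what the paper does as well.
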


\begin{figure}
\centering
\begin{tabular}{ccc}

\centering
\begin{tikzpicture} 
\tikzstyle{every node}=[inner sep=0mm,minimum size=4mm,circle,draw]
     
    \node[scale = 0.7] (A) at (0,0.5) {$A$};
    \node[scale = 0.7] (B) at (0,-0.5) {$B$};
    \node[scale = 0.7] (1) at ($(A) + (0.5,0)$) {$1$};
    \node[scale = 0.7] (2) at ($(A) + (0.5,0.5)$) {$2$};
    \node[scale = 0.7] (3) at ($(A) + (0,0.5)$) {$3$};
    \node[scale = 0.7] (4) at ($(A) + (-0.5,0.5)$) {$4$};
    \node[scale = 0.7] (5) at ($(A) + (-0.5,0)$) {$5$};
    \node[scale = 0.7] (6) at ($(B) + (-0.5,0)$) {$6$};
    \node[scale = 0.7] (7) at ($(B) + (-0.5,-0.5)$) {$7$};
    \node[scale = 0.7] (8) at ($(B) + (0,-0.5)$) {$8$};
    \node[scale = 0.7] (9) at ($(B) + (0.5,-0.5)$) {$9$};
    \node[scale = 0.7] (10) at ($(B) + (0.5,0)$) {$10$};

    \draw (A) edge (1);
    \draw (A) edge (2);
    \draw (A) edge (3);
    \draw (A) edge (4);
    \draw (A) edge (5);
    \draw (A) edge (B);
    \draw (B) edge (6);
    \draw (B) edge (7);
    \draw (B) edge (8);
    \draw (B) edge (9);
    \draw (B) edge (10);

\end{tikzpicture}
&
\hspace{0.5cm}
\centering
\begin{tikzpicture} [scale=0.7]

\coordinate (width) at (0.4,0);
\coordinate (height) at (0,0.4);  
\coordinate (labelshift) at (0.2, 0.2);

\coordinate (A) at (0,0);
\coordinate (1) at ($(A) - 1.75*(width)$);
\coordinate (2) at ($(A) - 0.45*(width) + 1.75*(height)$);
\coordinate (3) at ($(A) + 0.55*(width) + 2.85*(height)$);
\coordinate (4) at ($(A) + 1.75*(width) + 0.5*(height)$);
\coordinate (5) at ($(A)- 2*(height)$);

\coordinate (B) at ($(A) + 5*(width) - 0.5*(height)$);
\coordinate (6) at ($(B) + 4.4*(height)$);
\coordinate (7) at ($(B) + 1.75*(width)$);
\coordinate (8) at ($(B) - 0.45*(width) - 2.75*(height)$);
\coordinate (9) at ($(B) + 0.55*(width) - 3.85*(height)$);
\coordinate (10) at ($(B) - 1.75*(width) - 0.5*(height)$);

\coordinate (temp) at (A);
\draw[black] ($(temp)$) -- ($(temp) + (width)$) -- ($(temp) + (width) + (height)$) -- ($(temp) + (height)$) -- ($(temp)$);
\node[scale=0.8] (label) at ($(temp) + (labelshift)$)  {$A$};
\coordinate (temp) at (1);
\draw[black] ($(temp)$) -- ($(temp) + (width)$) -- ($(temp) + (width) + (height)$) -- ($(temp) + (height)$) -- ($(temp)$);
\node[scale=0.8] (label) at ($(temp) + (labelshift)$)  {$1$};
\coordinate (temp) at (2);
\draw[black] ($(temp)$) -- ($(temp) + (width)$) -- ($(temp) + (width) + (height)$) -- ($(temp) + (height)$) -- ($(temp)$);
\node[scale=0.8] (label) at ($(temp) + (labelshift)$)  {$2$};
\coordinate (temp) at (3);
\draw[black] ($(temp)$) -- ($(temp) + (width)$) -- ($(temp) + (width) + (height)$) -- ($(temp) + (height)$) -- ($(temp)$);
\node[scale=0.8] (label) at ($(temp) + (labelshift)$)  {$3$};
\coordinate (temp) at (4);
\draw[black] ($(temp)$) -- ($(temp) + (width)$) -- ($(temp) + (width) + (height)$) -- ($(temp) + (height)$) -- ($(temp)$);
\node[scale=0.8] (label) at ($(temp) + (labelshift)$)  {$4$};
\coordinate (temp) at (5);
\draw[black] ($(temp)$) -- ($(temp) + (width)$) -- ($(temp) + (width) + (height)$) -- ($(temp) + (height)$) -- ($(temp)$);
\node[scale=0.8] (label) at ($(temp) + (labelshift)$)  {$5$};
\coordinate (temp) at (B);
\draw[black] ($(temp)$) -- ($(temp) + (width)$) -- ($(temp) + (width) + (height)$) -- ($(temp) + (height)$) -- ($(temp)$);
\node[scale=0.8] (label) at ($(temp) + (labelshift)$)  {$B$};
\coordinate (temp) at (6);
\draw[black] ($(temp)$) -- ($(temp) + (width)$) -- ($(temp) + (width) + (height)$) -- ($(temp) + (height)$) -- ($(temp)$);
\node[scale=0.8] (label) at ($(temp) + (labelshift)$)  {$6$};
\coordinate (temp) at (7);
\draw[black] ($(temp)$) -- ($(temp) + (width)$) -- ($(temp) + (width) + (height)$) -- ($(temp) + (height)$) -- ($(temp)$);
\node[scale=0.8] (label) at ($(temp) + (labelshift)$)  {$7$};
\coordinate (temp) at (8);
\draw[black] ($(temp)$) -- ($(temp) + (width)$) -- ($(temp) + (width) + (height)$) -- ($(temp) + (height)$) -- ($(temp)$);
\node[scale=0.8] (label) at ($(temp) + (labelshift)$)  {$8$};
\coordinate (temp) at (9);
\draw[black] ($(temp)$) -- ($(temp) + (width)$) -- ($(temp) + (width) + (height)$) -- ($(temp) + (height)$) -- ($(temp)$);
\node[scale=0.8] (label) at ($(temp) + (labelshift)$)  {$9$};
\coordinate (temp) at (10);
\draw[black] ($(temp)$) -- ($(temp) + (width)$) -- ($(temp) + (width) + (height)$) -- ($(temp) + (height)$) -- ($(temp)$);
\node[scale=0.8] (label) at ($(temp) + (labelshift)$)  {$10$};

\end{tikzpicture}
&
\hspace{0.5cm}
\centering
\begin{tikzpicture} 
\tikzstyle{every node}=[inner sep=0mm,minimum size=4mm,circle,draw,scale=0.5]
     
     \coordinate (xshift) at (0.8,0);
     \coordinate (yshift) at (0,0.5);

    \node (1) at (0,0){};
    \node (2) at ($(1) + (xshift)$){};
    \node (3) at ($(2) + (xshift)$){};
    \node (4) at ($(3) + (xshift)$){};
    \node (5) at ($(4) + (xshift)$){};
    \node (6) at ($(5) + (xshift)$){};
    \node (7) at ($(6) + (xshift)$){};
    \node (8) at ($(7) + (xshift)$){};
    \node (9) at ($(8) + (xshift)$){};
        
    \node (2a) at ($(2) + (yshift) - 0.5*(xshift)$){};
    \node (2b) at ($(2) + (yshift) + 0.5*(xshift)$){};
    \node (2c) at ($(2) - (yshift) - 0.5*(xshift)$){};
    \node (2d) at ($(2) - (yshift) + 0.5*(xshift)$){};
    
    \node (4a) at ($(4) + (yshift) - 0.5*(xshift)$){};
    \node (4b) at ($(4) + (yshift) + 0.5*(xshift)$){};
    \node (4c) at ($(4) - (yshift) - 0.5*(xshift)$){};
    \node (4d) at ($(4) - (yshift) + 0.5*(xshift)$){};
    
    \node (6a) at ($(6) + (yshift) - 0.5*(xshift)$){};
    \node (6b) at ($(6) + (yshift) + 0.5*(xshift)$){};
    \node (6c) at ($(6) - (yshift) - 0.5*(xshift)$){};
    \node (6d) at ($(6) - (yshift) + 0.5*(xshift)$){};
    
    \node (8a) at ($(8) + (yshift) - 0.5*(xshift)$){};
    \node (8b) at ($(8) + (yshift) + 0.5*(xshift)$){};
    \node (8c) at ($(8) - (yshift) - 0.5*(xshift)$){};
    \node (8d) at ($(8) - (yshift) + 0.5*(xshift)$){};

    \draw (1) edge (2);
    \draw (2) edge (3);
    \draw (3) edge (4);
    \draw (4) edge (5);
    \draw (5) edge (6);
    \draw (6) edge (7);
    \draw (7) edge (8);
    \draw (8) edge (9);
    
    \draw (2) edge (2a);
    \draw (2) edge (2b);
    \draw (2) edge (2c);
    \draw (2) edge (2d);
    
    \draw (4) edge (4a);
    \draw (4) edge (4b);
    \draw (4) edge (4c);
    \draw (4) edge (4d);
    
    \draw (6) edge (6a);
    \draw (6) edge (6b);
    \draw (6) edge (6c);
    \draw (6) edge (6d);

    \draw (8) edge (8a);
    \draw (8) edge (8b);
    \draw (8) edge (8c);
    \draw (8) edge (8d);

\end{tikzpicture}\cr
\centering$(a)$&\centering$(b)$&\centering$(c)$
\end{tabular}

\caption{Illustration for trees from $\unitSquareGraphs$ with maximum degree $6$.}
\label{Fig:degreeSixTree}
\end{figure}
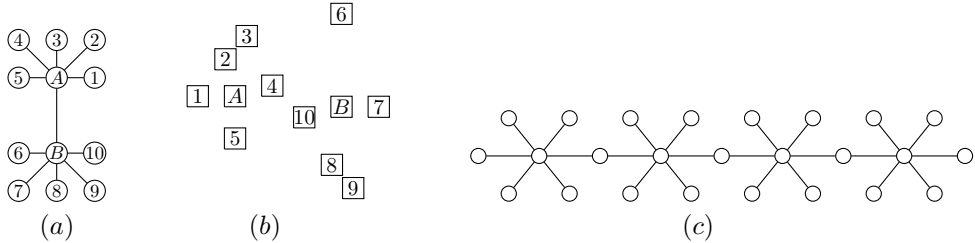

\subsection{The Recognition Problem}

The recognition problem for $\unitSquareGraphs$ consists in checking whether a given graph can be represented by a unit square layout. We first observe that this problem is in $\npclass$ (note that this is not completely trivial, since we cannot naively guess a layout) and the main result of this section shall be its hardness (see Theorem~\ref{nonGridNPCompletenessTheorem}). 

\begin{theorem}\label{NPMembershipTheorem}
$\recognitionProb(\unitSquareGraphs) \in \npclass$.
\end{theorem}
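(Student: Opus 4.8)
The difficulty is that a unit square layout is a tuple of real numbers which, a priori, may be irrational or of super‑polynomial bit length, so a layout itself cannot serve as an $\npclass$‑certificate. The plan is to replace a layout by a \emph{polynomial‑size combinatorial fingerprint} — essentially the weak order of the coordinates of all square corners along each axis — to show that this fingerprint already determines the represented graph, that its admissibility can be checked in polynomial time, and that $G\in\unitSquareGraphs$ iff some admissible fingerprint with represented graph $G$ exists.

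Concretely, for a layout $\mathcal R=\{R_1,\dots,R_n\}$ with $R_i=[x_i,x_i+1]\times[y_i,y_i+1]$, let $P_x$ be the weak linear order induced on the $2n$ symbols $\{\ell^x_i,r^x_i\mid 1\le i\le n\}$ by $\ell^x_i\mapsto x_i$, $r^x_i\mapsto x_i+1$, and define $P_y$ analogously; both have description size $\bigo(n\log n)$. The first and main step is to verify that the entire visibility structure of $\mathcal R$ is a function of the pair $(P_x,P_y)$: disjointness of $R_i,R_j$ is just separation of the two closed intervals in $P_x$ or in $P_y$; and $R_i\Hvis R_j$ holds iff $r^x_i<\ell^x_j$ in $P_x$, the $y$‑projections overlap in more than a single point according to $P_y$, and there is a level strictly inside the $y$‑overlap that avoids the $y$‑projection of every square whose $x$‑projection meets $[x_i+1,x_j]$ — a one‑dimensional interval‑covering condition inside $P_y$ among the ``blocker'' squares selected by $P_x$; the cases $\HvisR,\Vvis,\VvisR$ are symmetric. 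Two observations make these reductions work: for a horizontal visibility one may always use the thinnest possible channel rectangle $[x_i+1,x_j]\times[y-\varepsilon,y+\varepsilon]$, and in a legal layout two squares whose projections overlap in more than a point in one axis must be strictly separated in the other, so the channel is genuinely non‑degenerate. Consequently, from $(P_x,P_y)$ alone one can compute in polynomial time both $\layoutToGraph(\mathcal R)$ and whether the squares are pairwise disjoint.

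Call a pair $(P_x,P_y)$ \emph{admissible for $G$} if (i) $P_x$ and $P_y$ are each realizable by $n$ closed unit‑length intervals, (ii) the disjointness conditions hold for all pairs, and (iii) the induced visibility relation equals $E(G)$. Item (i) is a difference‑constraint feasibility problem: after substituting $r^x_i=\ell^x_i+1$, the constraints have the form $\ell^x_i-\ell^x_j\le c$, $\ell^x_i-\ell^x_j<c$, or $\ell^x_i-\ell^x_j=c$ with $c\in\{-1,0,1\}$, and such systems are decidable in polynomial time (e.g.\ by a shortest‑path computation on the constraint graph, treating the strict edges separately); items (ii) and (iii) are the polynomial‑time readouts from the first step. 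Furthermore, if $P_x$ and $P_y$ are realizable, then choosing realizing unit‑interval families $\{I^x_i\}$ and $\{I^y_i\}$ and setting $R_i:=I^x_i\times I^y_i$ produces an actual unit square layout whose fingerprint is $(P_x,P_y)$; by the first step it then represents $G$ and its squares are pairwise disjoint, i.e.\ it is a legal unit square layout for $G$.

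The $\npclass$‑algorithm is therefore: guess $(P_x,P_y)$ and accept iff it is admissible for $G$; by the above this runs in nondeterministic polynomial time. Correctness: if $G\in\unitSquareGraphs$, any layout for $G$ induces a pair that is admissible for $G$; conversely an admissible pair yields a layout for $G$ as in the previous paragraph. The step I expect to be the real obstacle is the first one — carefully checking that the visibility relation, with all of its blocking conditions and the strict/non‑strict corner cases at touching squares and degenerate channels, is honestly determined by $(P_x,P_y)$; once that is pinned down, the remaining ingredients are standard polynomial‑time subroutines. (The same analysis also shows that every $G\in\unitSquareGraphs$ admits a layout with rational coordinates of bit length $\mathrm{poly}(n)$, so one could alternatively guess such a layout directly and verify it with the readout procedure.)
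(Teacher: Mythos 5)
Your proof is correct, but it takes a genuinely different route from the paper's. The paper argues (rather informally) that any layout can be normalized so that every coordinate lies in the polynomial-size grid $\{\frac{a}{n} \mid 0 \leq a \leq n^2\}$ --- the justification being that only maximum-degree-many distinct fractional shifts are ever needed --- and then the certificate is simply the $n$ coordinate pairs guessed from this grid, verified by a polynomial-time feasibility check. You instead certify membership by the purely combinatorial order type $(P_x,P_y)$ of the $2n$ interval endpoints on each axis, prove that the visibility relation (including the blocking condition and the touching/degenerate-channel corner cases) is a function of this order data alone, and recover an actual layout from an admissible order type via difference-constraint feasibility. Your route is longer but more robust: it completely avoids the discretization claim, which is the one step of the paper's proof that is merely sketched, and it yields as a by-product that every representable graph admits a layout with rational coordinates of polynomial bit length (essentially what the paper asserts without a detailed argument). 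The paper's approach buys brevity at the cost of leaving the bound on the number and granularity of the needed fractional shifts informal; both arguments establish $\npclass$-membership.
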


We prove the $\npclass$-hardness by a reduction from $\notallequalthreeSat$, i.\,e., the not-all-equal $3$-satisfiability problem~\cite{Sch78}. To this end, let $F = \{c_1, \ldots, c_m\}$ be a $3$-CNF formula over the variables $x_1,\dots,x_n$, such that no variable occurs more than once in any clause, and, for the sake of convenience, let $c_i = \{y_{i, 1}, y_{i, 2}, y_{i, 3}\}$, $1 \leq i \leq m$.\par 
The general idea of the reduction is as follows: We identify graph structures that can be shown to have a (more or less) unique representation as a unit square layout. With these main building blocks, we construct a sequence of clause and variable gadgets, called \emph{backbone} (see Figure~\ref{backbone}), that can only be represented by a layout in a linear way, say horizontally. Furthermore, every clause gadget is vertically connected to its three literals, two of which are below and the other one above the backbone, or the other way around. The allocation of literal vertices to a variable $x_i$ is done by a path of all literal vertices corresponding to $x_i$ that is connected to the variable vertex for $x_i$. Such paths must lie either completely above or below the backbone. Interpreting the situation that a path lies above the backbone as assigning \emph{true} to the corresponding literal, yields a not-all-equal satisfying assignment, as it is not possible that all the paths for a clause lie on the same side of the backbone.\par
We assume that each clause of $F$ contains at most one negated variable, which is no restriction to not-all-equal satisfiability as a clause over literals $l_1,l_2,l_3$ is not-all-equal satisfied by an assignment if and only if a clause over literals $\bar l_1,\bar l_2,\bar l_3$ is. Furthermore, we also assume that every literal occurs at least three times in the formula. 
We first transform $F$ into $F' = \{c_1, \ldots, c_{2m}\}$, where $c_{m+i}=c_i$ for $i=1,\dots,m$. Then, we transform $F'$ into a graph $G=(V,E)$ as follows. The set of vertices is defined by $V = V_c \cup V_x\cup V_h$, where 
\begin{align*}
V_c =&\:\{c_j, c^1_j,c^2_j \mid 0 \leq j \leq 2m-1\} \cup \{c_{2m}\} \cup \{l_j^1,l_j^2,l_j^3 \mid 1 \leq j \leq 2m\}\,,\\
V_x =&\:\{x_i, x^1_i, x^2_i \mid 1 \leq i \leq n + 1\} \cup \{t_i, \overset{_{\rightarrow}}{t_i}, \overset{_{\leftarrow}}{t_i}, f^1_i, \overset{_{\rightarrow}}{f_i^1}, \overset{_{\leftarrow}}{f_i^1}, f^2_i, \overset{_{\rightarrow}}{f_i^2}, \overset{_{\leftarrow}}{f_i^2} \mid 1 \leq i \leq n\}\,,\\
V_h=&\:\{h_{t_i}^r,h_{f_i^1}^r,h_{f_i^2}^r \mid 1 \leq i \leq n, 0 \leq r \leq 4\}\,.
\end{align*}

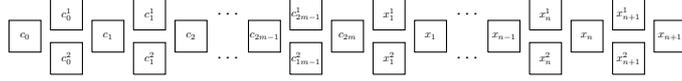
\begin{figure}
\centering
\begin{tikzpicture}[scale=0.7, transform shape]
\coordinate (width) at (0.6,0);
\coordinate (height) at (0,0.6);
\coordinate (labelshift) at (0.3, 0.3);

\coordinate (1) at ($(0,0)$);
\coordinate (2) at ($(1) + 2.6*(width)$);
\coordinate (3) at ($(1) + 0.7*(height) + 1.3*(width)$);
\coordinate (4) at ($(1) - 0.7*(height) + 1.3*(width)$);
\coordinate (5) at ($(2) + 2.6*(width)$);
\coordinate (6) at ($(2) + 0.7*(height) + 1.3*(width)$);
\coordinate (7) at ($(2) - 0.7*(height) + 1.3*(width)$);

\coordinate (8) at ($(6) + 3*(width) + 0.5*(height)$);
\coordinate (9) at ($(7) + 3*(width) + 0.5*(height)$);

\coordinate (10) at ($7.5*(width)$);
\coordinate (11) at ($(10) + 2.6*(width)$);
\coordinate (12) at ($(10) + 0.7*(height) + 1.3*(width)$);
\coordinate (13) at ($(10) - 0.7*(height) + 1.3*(width)$);
\coordinate (14) at ($(11) + 2.6*(width)$);
\coordinate (15) at ($(11) + 0.7*(height) + 1.3*(width)$);
\coordinate (16) at ($(11) - 0.7*(height) + 1.3*(width)$);

\coordinate (17) at ($(15) + 3*(width) + 0.5*(height)$);
\coordinate (18) at ($(16) + 3*(width) + 0.5*(height)$);

\coordinate (101) at ($15*(width)$);
\coordinate (111) at ($(101) + 2.6*(width)$);
\coordinate (121) at ($(101) + 0.7*(height) + 1.3*(width)$);
\coordinate (131) at ($(101) - 0.7*(height) + 1.3*(width)$);
\coordinate (141) at ($(111) + 2.6*(width)$);
\coordinate (151) at ($(111) + 0.7*(height) + 1.3*(width)$);
\coordinate (161) at ($(111) - 0.7*(height) + 1.3*(width)$);

\coordinate (temp) at (1);
\draw[black] ($(temp)$) -- ($(temp) + (width)$) -- ($(temp) + (width) + (height)$) -- ($(temp) + (height)$) -- ($(temp)$);
\node[scale=0.6] (label) at ($(temp) + (labelshift)$)  {$c_{0}$};

\coordinate (temp) at (2);
\draw[black] ($(temp)$) -- ($(temp) + (width)$) -- ($(temp) + (width) + (height)$) -- ($(temp) + (height)$) -- ($(temp)$);
\node[scale=0.6] (label) at ($(temp) + (labelshift)$)  {$c_{1}$};

\coordinate (temp) at (3);
\draw[black] ($(temp)$) -- ($(temp) + (width)$) -- ($(temp) + (width) + (height)$) -- ($(temp) + (height)$) -- ($(temp)$);
\node[scale=0.6] (label) at ($(temp) + (labelshift)$)  {$c_{0}^1$};

\coordinate (temp) at (4);
\draw[black] ($(temp)$) -- ($(temp) + (width)$) -- ($(temp) + (width) + (height)$) -- ($(temp) + (height)$) -- ($(temp)$);
\node[scale=0.6] (label) at ($(temp) + (labelshift)$)  {$c_{0}^2$};

\coordinate (temp) at (5);
\draw[black] ($(temp)$) -- ($(temp) + (width)$) -- ($(temp) + (width) + (height)$) -- ($(temp) + (height)$) -- ($(temp)$);
\node[scale=0.6] (label) at ($(temp) + (labelshift)$)  {$c_{2}$};

\coordinate (temp) at (6);
\draw[black] ($(temp)$) -- ($(temp) + (width)$) -- ($(temp) + (width) + (height)$) -- ($(temp) + (height)$) -- ($(temp)$);
\node[scale=0.6] (label) at ($(temp) + (labelshift)$)  {$c_{1}^1$};

\coordinate (temp) at (7);
\draw[black] ($(temp)$) -- ($(temp) + (width)$) -- ($(temp) + (width) + (height)$) -- ($(temp) + (height)$) -- ($(temp)$);
\node[scale=0.6] (label) at ($(temp) + (labelshift)$)  {$c_{1}^2$};

\coordinate (temp) at (10);
\draw[black] ($(temp)$) -- ($(temp) + (width)$) -- ($(temp) + (width) + (height)$) -- ($(temp) + (height)$) -- ($(temp)$);
\node[scale=0.6] (label) at ($(temp) + (labelshift)$)  {$c_{2m-1}$};

\coordinate (temp) at (11);
\draw[black] ($(temp)$) -- ($(temp) + (width)$) -- ($(temp) + (width) + (height)$) -- ($(temp) + (height)$) -- ($(temp)$);
\node[scale=0.6] (label) at ($(temp) + (labelshift)$)  {$c_{2m}$};

\coordinate (temp) at (12);
\draw[black] ($(temp)$) -- ($(temp) + (width)$) -- ($(temp) + (width) + (height)$) -- ($(temp) + (height)$) -- ($(temp)$);
\node[scale=0.6] (label) at ($(temp) + (labelshift)$)  {$c_{2m-1}^1$};

\coordinate (temp) at (13);
\draw[black] ($(temp)$) -- ($(temp) + (width)$) -- ($(temp) + (width) + (height)$) -- ($(temp) + (height)$) -- ($(temp)$);
\node[scale=0.6] (label) at ($(temp) + (labelshift)$)  {$c_{1m-1}^2$};

\coordinate (temp) at (14);
\draw[black] ($(temp)$) -- ($(temp) + (width)$) -- ($(temp) + (width) + (height)$) -- ($(temp) + (height)$) -- ($(temp)$);
\node[scale=0.6] (label) at ($(temp) + (labelshift)$)  {$x_1$};

\coordinate (temp) at (15);
\draw[black] ($(temp)$) -- ($(temp) + (width)$) -- ($(temp) + (width) + (height)$) -- ($(temp) + (height)$) -- ($(temp)$);
\node[scale=0.6] (label) at ($(temp) + (labelshift)$)  {$x_1^1$};

\coordinate (temp) at (16);
\draw[black] ($(temp)$) -- ($(temp) + (width)$) -- ($(temp) + (width) + (height)$) -- ($(temp) + (height)$) -- ($(temp)$);
\node[scale=0.6] (label) at ($(temp) + (labelshift)$)  {$x_1^2$};

\coordinate (temp) at (101);
\draw[black] ($(temp)$) -- ($(temp) + (width)$) -- ($(temp) + (width) + (height)$) -- ($(temp) + (height)$) -- ($(temp)$);
\node[scale=0.6] (label) at ($(temp) + (labelshift)$)  {$x_{n-1}$};

\coordinate (temp) at (111);
\draw[black] ($(temp)$) -- ($(temp) + (width)$) -- ($(temp) + (width) + (height)$) -- ($(temp) + (height)$) -- ($(temp)$);
\node[scale=0.6] (label) at ($(temp) + (labelshift)$)  {$x_n$};

\coordinate (temp) at (121);
\draw[black] ($(temp)$) -- ($(temp) + (width)$) -- ($(temp) + (width) + (height)$) -- ($(temp) + (height)$) -- ($(temp)$);
\node[scale=0.6] (label) at ($(temp) + (labelshift)$)  {$x_n^1$};

\coordinate (temp) at (131);
\draw[black] ($(temp)$) -- ($(temp) + (width)$) -- ($(temp) + (width) + (height)$) -- ($(temp) + (height)$) -- ($(temp)$);
\node[scale=0.6] (label) at ($(temp) + (labelshift)$)  {$x_n^2$};

\coordinate (temp) at (141);
\draw[black] ($(temp)$) -- ($(temp) + (width)$) -- ($(temp) + (width) + (height)$) -- ($(temp) + (height)$) -- ($(temp)$);
\node[scale=0.6] (label) at ($(temp) + (labelshift)$)  {$x_{n+1}$};

\coordinate (temp) at (151);
\draw[black] ($(temp)$) -- ($(temp) + (width)$) -- ($(temp) + (width) + (height)$) -- ($(temp) + (height)$) -- ($(temp)$);
\node[scale=0.6] (label) at ($(temp) + (labelshift)$)  {$x_{n+1}^1$};

\coordinate (temp) at (161);
\draw[black] ($(temp)$) -- ($(temp) + (width)$) -- ($(temp) + (width) + (height)$) -- ($(temp) + (height)$) -- ($(temp)$);
\node[scale=0.6] (label) at ($(temp) + (labelshift)$)  {$x_{n+1}^2$};

\node[draw=none]  at ($(8)$) {$\cdots$};
\node[draw=none]  at ($(9)$) {$\cdots$};
\node[draw=none]  at ($(17)$) {$\cdots$};
\node[draw=none]  at ($(18)$) {$\cdots$};
\end{tikzpicture}
\caption{The backbone-gadget.}
\label{backbone}
\end{figure}

Vertices $c_j$ and $x_i$ represent the corresponding clauses and variables and the  vertices $c^{r}_j$, $x^r_{i}$, $r \in \{1, 2\}$ are used to enforce the  \emph{backbone} structure as described at the beginning of this section.
The corresponding edges are implicitly defined, by requiring, for every  $0 \leq i \leq 2m-1$ and  $1 \leq i \leq n$, the following groups of $4$ vertices to form a $K_4$: $\{c_j, c^1_j, c^2_j, c_{j + 1}\}$, $\{x_i, x^1_{i + 1}, x^2_{i + 1}, x_{i + 1}\}$, 
and $\{c_{2m}, x^1_1, x^2_1, x_1\}$. Also, for every $j \in \{1, 2\}$, the vertices $c^j_0, c^j_1, \ldots, c^j_{2m - 1}, x^j_1, x^j_2, \ldots, x^j_{n + 1}$ form a path in this order. Consequently, these vertices form the subgraph represented by the layout in Figure~\ref{backbone}, which shall be the \emph{backbone}.
Vertices $t_i$, represent the literal $x_i$, $f^1_{i}$ represent the literal $\overline{x_i}$ in the first $m$ clauses, and $f^2_{i}$ represent the literal $\overline{x_i}$ in the remaining clauses. Vertices $l^1_{j}, l^2_{j}, l^3_{j}$ represent the literals of clause $c_j$. These roles are reflected with edges $\{x_i, t_i\}$, $\{x_i, f^1_i\}$, $\{x_i, f^2_i\}$ for all  $1 \leq i \leq n$ and $\{c_j, l^r_j\}$ for all $1 \leq j \leq 2m$ and $1 \leq r \leq 3$.
The connection between literals and variable assignments is build by turning $l_{j, r}$ with $y_{j, r} = x_i$ into a path connected to $t_i$; analogously, $l_{j, r}$ with $y_{j, r} = \overline{x_i}$ in the first (the last, respectively) $m$ clauses form a path connected to $f^1_{i}$ ($f^2_{i}$, respectively). More precisely, for every  $1 \leq j \leq 2m$,  $1 \leq i \leq n$ and  $1 \leq r \leq 3$:
 \begin{itemize}
 \item if $y_{j, r} = x_i$, there are edges $\{l^r_j, \overset{_{\rightarrow}}{t_i}\}$, $\{l^r_j, \overset{_{\leftarrow}}{t_i}\}$,
 \item if $y_{j,r} = \overline{x_i}$ and $1 \leq j \leq m$, there are edges $\{l^r_j, \overset{_{\rightarrow}}{f^1_i}\}$, $\{l^r_j, \overset{_{\leftarrow}}{f^1_i}\}$ and  $\{l^r_{j+m}, \overset{_{\rightarrow}}{f^2_i}\}$, $\{l^r_{j+m}, \overset{_{\leftarrow}}{f^2_i}\}$,
\item there are edges $\{t_i, \overset{_{\rightarrow}}{t_i}\}$, $\{t_i, \overset{_{\leftarrow}}{t_i}\}$ and $\{\overset{_{\rightarrow}}{t_i},h_{t_i}^p\}$,$\{\overset{_{\leftarrow}}{t_i},h_{t_i}^p\}$ for all $0\leq p\leq 4$,
\item there are edges $\{f_i^s, \overset{_{\rightarrow}}{f_i^s}\}$, $\{f_i, \overset{_{\leftarrow}}{f_i^s}\}$  and $\{\overset{_{\rightarrow}}{f_i^s},h_{f_i^s}^p\}$,$\{\overset{_{\leftarrow}}{f_i^s},h_{f_i^s}^p\}$ for all $0\leq p\leq 4$, $s\in \{1,2\}$,
\end{itemize}
Moreover, for every $i$, $1 \leq i \leq n$,  
\begin{itemize}
\item if $N(\overset{_{\rightarrow}}{t_i}) = \{h_{t_i}^1,h_{t_i}^2,l^{r_1}_{j_1}, l^{r_2}_{j_2}, \ldots, l^{r_{q}}_{j_{q}},h_{t_i}^0,t_i,h_{t_i}^3,h_{t_i}^4\}$ with $j_1 < j_2 < \ldots < j_q$, then these vertices form a path in this order,
\item if $N(\overset{_{\rightarrow}}{f^s_i}) = \{h_{f_i^s}^1,h_{f_i^s}^2,l^{r_1}_{j_1}, l^{r_2}_{j_2}, \ldots, l^{r_{q}}_{j_{q}},h_{f_i^s}^0,f_i^s,h_{f_i^s}^3,h_{f_i^s}^4\}$ with $j_1 < j_2 < \ldots < j_q$ and $s\in \{1,2\}$, then these vertices form a path in this order,
\end{itemize}

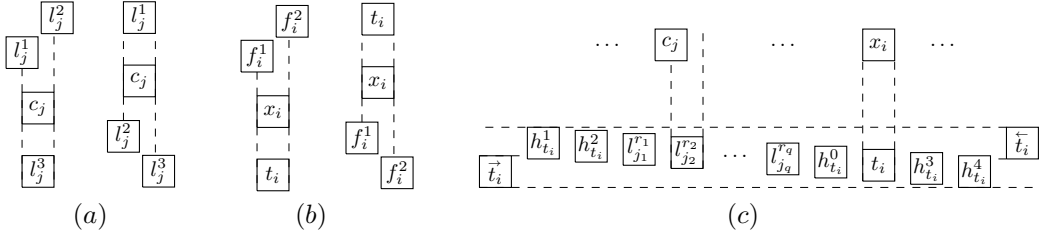
\begin{figure}
\centering
\begin{tabular}{ccc}
\begin{tikzpicture} [scale=0.7]

\coordinate (width) at (0.6,0);
\coordinate (height) at (0,0.6);  
\coordinate (labelshift) at (0.3, 0.3);

\coordinate (cj) at (0,0);
\coordinate (lj1) at ($(cj) - 0.5*(width) + 1.75*(height)$);
\coordinate (lj2) at ($(cj) + 0.6*(width) + 2.85*(height)$);
\coordinate (lj3) at ($(cj)- 2*(height)$);

\coordinate (temp) at (cj);
\draw[black] ($(temp)$) -- ($(temp) + (width)$) -- ($(temp) + (width) + (height)$) -- ($(temp) + (height)$) -- ($(temp)$);
\node[scale=0.8] (label) at ($(temp) + (labelshift)$)  {$c_j$};
\coordinate (temp) at (lj1);
\draw[black] ($(temp)$) -- ($(temp) + (width)$) -- ($(temp) + (width) + (height)$) -- ($(temp) + (height)$) -- ($(temp)$);
\node[scale=0.8] (label) at ($(temp) + (labelshift)$)  {$l^1_j$};
\coordinate (temp) at (lj2);
\draw[black] ($(temp)$) -- ($(temp) + (width)$) -- ($(temp) + (width) + (height)$) -- ($(temp) + (height)$) -- ($(temp)$);
\node[scale=0.8] (label) at ($(temp) + (labelshift)$)  {$l^2_j$};
\coordinate (temp) at (lj3);
\draw[black] ($(temp)$) -- ($(temp) + (width)$) -- ($(temp) + (width) + (height)$) -- ($(temp) + (height)$) -- ($(temp)$);
\node[scale=0.8] (label) at ($(temp) + (labelshift)$)  {$l^3_j$};

\draw[dashed, very thin] ($(cj)$) -- ($(cj |- lj1)$);
\draw[dashed, very thin] ($(cj) + (width)$) -- ($(cj |- lj2) + (width)$);
\draw[dashed, very thin] ($(cj)$) -- ($(cj |- lj3)$);
\draw[dashed, very thin] ($(cj) + (width)$) -- ($(cj |- lj3) + (width)$);

\end{tikzpicture}
\hspace{0.2cm}
\begin{tikzpicture} [scale=0.7]

\coordinate (width) at (0.6,0);
\coordinate (height) at (0,0.6);  
\coordinate (labelshift) at (0.3, 0.3);

\coordinate (cj) at (0,0);
\coordinate (lj1) at ($(cj) + 2*(height)$);
\coordinate (lj2) at ($(cj) - 0.5*(width) - 1.75*(height)$);
\coordinate (lj3) at ($(cj) + 0.6*(width) - 2.85*(height)$);

\coordinate (temp) at (cj);
\draw[black] ($(temp)$) -- ($(temp) + (width)$) -- ($(temp) + (width) + (height)$) -- ($(temp) + (height)$) -- ($(temp)$);
\node[scale=0.8] (label) at ($(temp) + (labelshift)$)  {$c_j$};
\coordinate (temp) at (lj1);
\draw[black] ($(temp)$) -- ($(temp) + (width)$) -- ($(temp) + (width) + (height)$) -- ($(temp) + (height)$) -- ($(temp)$);
\node[scale=0.8] (label) at ($(temp) + (labelshift)$)  {$l^1_j$};
\coordinate (temp) at (lj2);
\draw[black] ($(temp)$) -- ($(temp) + (width)$) -- ($(temp) + (width) + (height)$) -- ($(temp) + (height)$) -- ($(temp)$);
\node[scale=0.8] (label) at ($(temp) + (labelshift)$)  {$l^2_j$};
\coordinate (temp) at (lj3);
\draw[black] ($(temp)$) -- ($(temp) + (width)$) -- ($(temp) + (width) + (height)$) -- ($(temp) + (height)$) -- ($(temp)$);
\node[scale=0.8] (label) at ($(temp) + (labelshift)$)  {$l^3_j$};

\draw[dashed, very thin] ($(cj)$) -- ($(cj |- lj1)$);
\draw[dashed, very thin] ($(cj) + (width)$) -- ($(cj |- lj1) + (width)$);
\draw[dashed, very thin] ($(cj)$) -- ($(cj |- lj2) + (height)$);
\draw[dashed, very thin] ($(cj) + (width)$) -- ($(cj |- lj3) + (width) + (height)$);

\end{tikzpicture}
&
\hspace{0.3cm}
\begin{tikzpicture} [scale=0.7]

\coordinate (width) at (0.6,0);
\coordinate (height) at (0,0.6);  
\coordinate (labelshift) at (0.3, 0.3);

\coordinate (xi) at (0,0);
\coordinate (fi1) at ($(xi) - 0.5*(width) + 1.75*(height)$);
\coordinate (fi2) at ($(xi) + 0.6*(width) + 2.85*(height)$);
\coordinate (ti) at ($(xi)- 2*(height)$);

\coordinate (temp) at (xi);
\draw[black] ($(temp)$) -- ($(temp) + (width)$) -- ($(temp) + (width) + (height)$) -- ($(temp) + (height)$) -- ($(temp)$);
\node[scale=0.8] (label) at ($(temp) + (labelshift)$)  {$x_i$};
\coordinate (temp) at (ti);
\draw[black] ($(temp)$) -- ($(temp) + (width)$) -- ($(temp) + (width) + (height)$) -- ($(temp) + (height)$) -- ($(temp)$);
\node[scale=0.8] (label) at ($(temp) + (labelshift)$)  {$t_i$};
\coordinate (temp) at (fi1);
\draw[black] ($(temp)$) -- ($(temp) + (width)$) -- ($(temp) + (width) + (height)$) -- ($(temp) + (height)$) -- ($(temp)$);
\node[scale=0.8] (label) at ($(temp) + (labelshift)$)  {$f^1_i$};
\coordinate (temp) at (fi2);
\draw[black] ($(temp)$) -- ($(temp) + (width)$) -- ($(temp) + (width) + (height)$) -- ($(temp) + (height)$) -- ($(temp)$);
\node[scale=0.8] (label) at ($(temp) + (labelshift)$)  {$f^2_i$};

\draw[dashed, very thin] ($(xi)$) -- ($(xi |- fi1)$);
\draw[dashed, very thin] ($(xi) + (width)$) -- ($(xi |- fi2) + (width)$);
\draw[dashed, very thin] ($(xi)$) -- ($(xi |- ti)$);
\draw[dashed, very thin] ($(xi) + (width)$) -- ($(xi |- ti) + (width)$);

\end{tikzpicture}
\hspace{0.2cm}
\begin{tikzpicture} [scale=0.7]

\coordinate (width) at (0.6,0);
\coordinate (height) at (0,0.6);  
\coordinate (labelshift) at (0.3, 0.3);

\coordinate (xi) at (0,0);
\coordinate (ti) at ($(xi) + 2*(height)$);
\coordinate (fi1) at ($(xi) - 0.5*(width) - 1.75*(height)$);
\coordinate (fi2) at ($(xi) + 0.6*(width) - 2.85*(height)$);

\coordinate (temp) at (xi);
\draw[black] ($(temp)$) -- ($(temp) + (width)$) -- ($(temp) + (width) + (height)$) -- ($(temp) + (height)$) -- ($(temp)$);
\node[scale=0.8] (label) at ($(temp) + (labelshift)$)  {$x_i$};
\coordinate (temp) at (ti);
\draw[black] ($(temp)$) -- ($(temp) + (width)$) -- ($(temp) + (width) + (height)$) -- ($(temp) + (height)$) -- ($(temp)$);
\node[scale=0.8] (label) at ($(temp) + (labelshift)$)  {$t_i$};
\coordinate (temp) at (fi1);
\draw[black] ($(temp)$) -- ($(temp) + (width)$) -- ($(temp) + (width) + (height)$) -- ($(temp) + (height)$) -- ($(temp)$);
\node[scale=0.8] (label) at ($(temp) + (labelshift)$)  {$f^1_i$};
\coordinate (temp) at (fi2);
\draw[black] ($(temp)$) -- ($(temp) + (width)$) -- ($(temp) + (width) + (height)$) -- ($(temp) + (height)$) -- ($(temp)$);
\node[scale=0.8] (label) at ($(temp) + (labelshift)$)  {$f^2_i$};

\draw[dashed, very thin] ($(xi)$) -- ($(xi |- ti)$);
\draw[dashed, very thin] ($(xi) + (width)$) -- ($(xi |- ti) + (width)$);
\draw[dashed, very thin] ($(xi)$) -- ($(xi |- fi1) + (height)$);
\draw[dashed, very thin] ($(xi) + (width)$) -- ($(xi |- fi2) + (width) + (height)$);
\end{tikzpicture}
&
\hspace{0.3cm}
\begin{tikzpicture}[scale=0.7]

\coordinate (width) at (0.6,0);
\coordinate (height) at (0,0.6);

\coordinate (tright) at (0,0);
\coordinate (ht1) at ($(tright) + 1.5*(width) + 0.9*(height)$);
\coordinate (ht2) at ($(ht1) + 1.5*(width) - 0.1*(height)$);
\coordinate (l1) at ($(ht2) + 1.5*(width) - 0.1*(height)$);
\coordinate (l2) at ($(l1) + 1.5*(width) - 0.1*(height)$);
\coordinate (dots) at ($(l2) + 1.5*(width) - 0.1*(height)$);
\coordinate (lq) at ($(dots) + 1.5*(width) - 0.1*(height)$);
\coordinate (ht0) at ($(lq) + 1.5*(width) - 0.1*(height)$);
\coordinate (t) at ($(ht0) + 1.5*(width) - 0.1*(height)$);
\coordinate (ht3) at ($(t) + 1.5*(width) - 0.1*(height)$);
\coordinate (ht4) at ($(ht3) + 1.5*(width) - 0.1*(height)$);
\coordinate (tleft) at ($(ht4) + 1.5*(width) + 0.9*(height)$);

\coordinate (temp) at (tright);
\draw[black] ($(temp)$) -- ($(temp) + (width)$) -- ($(temp) + (width) + (height)$) -- ($(temp) + (height)$) -- ($(temp)$);
\node[scale=0.8] (label) at ($(temp) + (labelshift)$)  {$\overset{_{\rightarrow}}{t_i}$};

\coordinate (temp) at (ht1);
\draw[black] ($(temp)$) -- ($(temp) + (width)$) -- ($(temp) + (width) + (height)$) -- ($(temp) + (height)$) -- ($(temp)$);
\node[scale=0.8] (label) at ($(temp) + (labelshift)$)  {$h^1_{t_i}$};

\coordinate (temp) at (ht2);
\draw[black] ($(temp)$) -- ($(temp) + (width)$) -- ($(temp) + (width) + (height)$) -- ($(temp) + (height)$) -- ($(temp)$);
\node[scale=0.8] (label) at ($(temp) + (labelshift)$)  {$h^2_{t_i}$};

\coordinate (temp) at (l1);
\draw[black] ($(temp)$) -- ($(temp) + (width)$) -- ($(temp) + (width) + (height)$) -- ($(temp) + (height)$) -- ($(temp)$);
\node[scale=0.8] (label) at ($(temp) + (labelshift)$)  {$l^{r_1}_{j_1}$};

\coordinate (temp) at (l2);
\draw[black] ($(temp)$) -- ($(temp) + (width)$) -- ($(temp) + (width) + (height)$) -- ($(temp) + (height)$) -- ($(temp)$);
\node[scale=0.8] (label) at ($(temp) + (labelshift)$)  {$l^{r_2}_{j_2}$};

\coordinate (temp) at (dots);
\node[scale=0.8] (label) at ($(temp) + (labelshift)$)  {$\ldots$};

\coordinate (temp) at (lq);
\draw[black] ($(temp)$) -- ($(temp) + (width)$) -- ($(temp) + (width) + (height)$) -- ($(temp) + (height)$) -- ($(temp)$);
\node[scale=0.8] (label) at ($(temp) + (labelshift)$)  {$l^{r_q}_{j_q}$};

\coordinate (temp) at (ht0);
\draw[black] ($(temp)$) -- ($(temp) + (width)$) -- ($(temp) + (width) + (height)$) -- ($(temp) + (height)$) -- ($(temp)$);
\node[scale=0.8] (label) at ($(temp) + (labelshift)$)  {$h^0_{t_i}$};

\coordinate (temp) at (t);
\draw[black] ($(temp)$) -- ($(temp) + (width)$) -- ($(temp) + (width) + (height)$) -- ($(temp) + (height)$) -- ($(temp)$);
\node[scale=0.8] (label) at ($(temp) + (labelshift)$)  {$t_i$};
 
\coordinate (temp) at (ht3);
\draw[black] ($(temp)$) -- ($(temp) + (width)$) -- ($(temp) + (width) + (height)$) -- ($(temp) + (height)$) -- ($(temp)$);
\node[scale=0.8] (label) at ($(temp) + (labelshift)$)  {$h^3_{t_i}$};

\coordinate (temp) at (ht4);
\draw[black] ($(temp)$) -- ($(temp) + (width)$) -- ($(temp) + (width) + (height)$) -- ($(temp) + (height)$) -- ($(temp)$);
\node[scale=0.8] (label) at ($(temp) + (labelshift)$)  {$h^4_{t_i}$};

\coordinate (temp) at (tleft);
\draw[black] ($(temp)$) -- ($(temp) + (width)$) -- ($(temp) + (width) + (height)$) -- ($(temp) + (height)$) -- ($(temp)$);
\node[scale=0.8] (label) at ($(temp) + (labelshift)$)  {$\overset{_{\leftarrow}}{t_i}$};
 
\draw[dashed, very thin] ($(tright)$) -- ($(tleft |- tright) + (width)$);
\draw[dashed, very thin] ($(tright) + (height)$) -- ($(ht1 |- tright) + (height)$);
\draw[dashed, very thin] ($(tleft)$) -- ($(ht4 |- tleft) + (width)$);
\draw[dashed, very thin] ($(tleft) + (height)$) -- ($(tright |- tleft) + (height)$);
 
\coordinate (xi) at ($(t |- tright) + 4*(height)$);
\coordinate (cj) at ($(l2 |- tright) + 4*(height) - 0.5*(width)$);

\coordinate (temp) at (xi);
\draw[black] ($(temp)$) -- ($(temp) + (width)$) -- ($(temp) + (width) + (height)$) -- ($(temp) + (height)$) -- ($(temp)$);
\node[scale=0.8] (label) at ($(temp) + (labelshift)$)  {$x_i$};

\coordinate (temp) at (cj);
\draw[black] ($(temp)$) -- ($(temp) + (width)$) -- ($(temp) + (width) + (height)$) -- ($(temp) + (height)$) -- ($(temp)$);
\node[scale=0.8] (label) at ($(temp) + (labelshift)$)  {$c_j$};

\draw[dashed, very thin] ($(t)$) -- ($(t |- xi)$);
\draw[dashed, very thin] ($(t) + (width)$) -- ($(t |- xi) + (width)$);
\draw[dashed, very thin] ($(l2)$) -- ($(l2 |- cj)$);
\draw[dashed, very thin] ($(l2) + (width)$) -- ($(l2 |- cj) + (width) + (height)$);
 
\coordinate (dots1) at ($(cj) - 2*(width)$);
\coordinate (dots2) at ($(cj) + 3.5*(width)$);
\coordinate (dots3) at ($(xi) +2*(width)$);

\node[scale=0.8] (label) at ($(dots1) + (labelshift)$)  {$\ldots$};
\node[scale=0.8] (label) at ($(dots2) + (labelshift)$)  {$\ldots$};
\node[scale=0.8] (label) at ($(dots3) + (labelshift)$)  {$\ldots$};
 
\end{tikzpicture} 
\cr
\centering$(a)$&\centering$(b)$&\centering$(c)$

\end{tabular}

\caption{Possible placements of literal vertices, possible placements of assignment vertices, and the clause path for $x_i$.}
\label{gadgetsFigure}

\end{figure}

Next, we assume that the formula $F'$ is not-all-equal satisfiable and show how a layout for $G$ can be constructed. First, we represent the backbone as illustrated in Figure~\ref{backbone}. If a variable $x_i$ is assigned the value \emph{true}, then we place the unit squares $R_{\{x_i, t_i, f^1_{i}, f^2_{i}\}}$ as illustrated on the left side of Figure~\ref{gadgetsFigure}$(b)$, and otherwise as illustrated on the right side. 
The edges for the vertices $t_i, \overset{_{\rightarrow}}{t_i}, \overset{_{\leftarrow}}{t_i}, h_{t_i}^r$, $0 \leq r \leq 4$, and all $l^r_{j}$ with $y_{j, r} = x_i$ can be realised as illustrated in Figure~\ref{gadgetsFigure}$(c)$ (either placed above or below the backbone, according to the position of $R_{t_i}$). An analogous construction applies to the unit squares for $l^r_{j}$ with $y_{j, r} = \overline{x_i}$, with the only difference that we have two such paths (one for the first $m$ clauses and one for the remaining clauses) and that they both lie on the opposite side of the backbone with respect to $R_{t_i}$. Moreover, in these paths, the $R_{l^r_{j}}$ must be horizontally shifted such that they can see their corresponding $R_{c_j}$ from above or from below, according to whether the path lies above or below the backbone (as indicated in Figure~\ref{gadgetsFigure}$(c)$). As long as not all paths for the three literals of the same clause lie all above or all below the backbone, this is possible by arranging the unit squares as illustrated in Figure~\ref{gadgetsFigure}$(a)$. However, if for some clause all paths lie on the same side of the backbone, then the literals of the clause are either all set to \emph{true} or all set to \emph{false}, which is a contradiction to the assumption that the assignment is not-all-equal satisfiable. Consequently, we can represent $G$ as described. A formal definition of the layout and a full example of the reduction can be found in the Appendix.

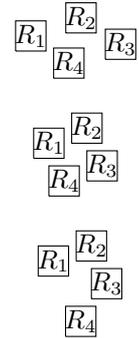
\begin{wrapfigure}{R}{2.4cm}
\centering
\begin{tikzpicture}

\coordinate (width) at (0.4,0);
\coordinate (height) at (0,0.4);
\coordinate (labelshift) at (0.2, 0.2);

\coordinate (1) at (0,0);
\coordinate (2) at ($(1) + 1.5*(height) + 0.4*(width)$);
\coordinate (3) at ($(1) + 1.25*0.5*(height) + 1.7*(width)$);
\coordinate (4) at ($(1) + 0.9*(height) - 1.25*(width)$);

\coordinate (temp) at (1);
\draw[black] ($(temp)$) -- ($(temp) + (width)$) -- ($(temp) + (width) + (height)$) -- ($(temp) + (height)$) -- ($(temp)$);
\node (label) at ($(temp) + (labelshift)$)  {$R_4$};

\coordinate (temp) at (2);
\draw[black] ($(temp)$) -- ($(temp) + (width)$) -- ($(temp) + (width) + (height)$) -- ($(temp) + (height)$) -- ($(temp)$);
\node (label) at ($(temp) + (labelshift)$)  {$R_2$};

\coordinate (temp) at (3);
\draw[black] ($(temp)$) -- ($(temp) + (width)$) -- ($(temp) + (width) + (height)$) -- ($(temp) + (height)$) -- ($(temp)$);
\node (label) at ($(temp) + (labelshift)$)  {$R_3$};

\coordinate (temp) at (4);
\draw[black] ($(temp)$) -- ($(temp) + (width)$) -- ($(temp) + (width) + (height)$) -- ($(temp) + (height)$) -- ($(temp)$);
\node (label) at ($(temp) + (labelshift)$)  {$R_1$};

\end{tikzpicture}
\vspace{0.3cm}

\begin{tikzpicture}

\coordinate (width) at (0.4,0);
\coordinate (height) at (0,0.4);
\coordinate (labelshift) at (0.2, 0.2);

\coordinate (1) at ($(0,0)$);
\coordinate (2) at ($(1) + 1.25*(width) + 0.5*(height)$);
\coordinate (3) at ($(1) + 0.5*(width) - 1.25*(height)$);
\coordinate (4) at ($(1) + 1.75*(width) - 0.75*(height)$);

\coordinate (temp) at (1);
\draw[black] ($(temp)$) -- ($(temp) + (width)$) -- ($(temp) + (width) + (height)$) -- ($(temp) + (height)$) -- ($(temp)$);
\node (label) at ($(temp) + (labelshift)$)  {$R_1$};

\coordinate (temp) at (2);
\draw[black] ($(temp)$) -- ($(temp) + (width)$) -- ($(temp) + (width) + (height)$) -- ($(temp) + (height)$) -- ($(temp)$);
\node (label) at ($(temp) + (labelshift)$)  {$R_2$};

\coordinate (temp) at (3);
\draw[black] ($(temp)$) -- ($(temp) + (width)$) -- ($(temp) + (width) + (height)$) -- ($(temp) + (height)$) -- ($(temp)$);
\node (label) at ($(temp) + (labelshift)$)  {$R_4$};

\coordinate (temp) at (4);
\draw[black] ($(temp)$) -- ($(temp) + (width)$) -- ($(temp) + (width) + (height)$) -- ($(temp) + (height)$) -- ($(temp)$);
\node (label) at ($(temp) + (labelshift)$)  {$R_3$};

\end{tikzpicture}
\vspace{0.3cm}

\begin{tikzpicture}

\coordinate (width) at (0.4,0);
\coordinate (height) at (0,0.4);
\coordinate (labelshift) at (0.2, 0.2);

\coordinate (1) at ($(0,0)$);
\coordinate (2) at ($(1) + 1.25*(width) + 0.5*(height)$);
\coordinate (3) at ($(1) + 0.9*(width) - 2*(height)$);
\coordinate (4) at ($(1) + 1.75*(width) - 0.75*(height)$);

\coordinate (temp) at (1);
\draw[black] ($(temp)$) -- ($(temp) + (width)$) -- ($(temp) + (width) + (height)$) -- ($(temp) + (height)$) -- ($(temp)$);
\node (label) at ($(temp) + (labelshift)$)  {$R_1$};

\coordinate (temp) at (2);
\draw[black] ($(temp)$) -- ($(temp) + (width)$) -- ($(temp) + (width) + (height)$) -- ($(temp) + (height)$) -- ($(temp)$);
\node (label) at ($(temp) + (labelshift)$)  {$R_2$};

\coordinate (temp) at (3);
\draw[black] ($(temp)$) -- ($(temp) + (width)$) -- ($(temp) + (width) + (height)$) -- ($(temp) + (height)$) -- ($(temp)$);
\node (label) at ($(temp) + (labelshift)$)  {$R_4$};

\coordinate (temp) at (4);
\draw[black] ($(temp)$) -- ($(temp) + (width)$) -- ($(temp) + (width) + (height)$) -- ($(temp) + (height)$) -- ($(temp)$);
\node (label) at ($(temp) + (labelshift)$)  {$R_3$};

\end{tikzpicture}
\caption{Re- presenting $K_4$.}\label{visLayoutsCompleteBipartiteGraphsFigure}
\end{wrapfigure}

\begin{lemma}\label{nonGridReductionEasyDirectionLemma}
If $F$ is not-all-equal satisfiable, then $G \in \unitSquareGraphs$.                                       
\end{lemma}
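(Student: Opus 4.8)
The plan is to give the easy (``completeness'') direction of the reduction constructively: from a not-all-equal satisfying assignment of $F$ I would build an explicit unit square layout $\mathcal{R}$ with $\layoutToGraph(\mathcal{R}) = G$. Since $F'$ arises from $F$ by duplicating each clause ($c_{m+i}=c_i$), a NAE assignment of $F$ is simultaneously one of $F'$; fix such an assignment $\beta$. The layout is assembled from the pieces already described before the lemma. First I would place the backbone horizontally as in Figure~\ref{backbone}: each of the $4$-vertex groups $\{c_j, c^1_j, c^2_j, c_{j + 1}\}$, $\{x_i, x^1_{i + 1}, x^2_{i + 1}, x_{i + 1}\}$ and $\{c_{2m}, x^1_1, x^2_1, x_1\}$ is a $K_4$, represented \visomorphically{} to one of the configurations in Figure~\ref{visLayoutsCompleteBipartiteGraphsFigure}, while the $c^1$-/$c^2$- and $x^1$-/$x^2$-vertices form two long horizontal ``rails''. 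This realises exactly the induced subgraph carried by $V_c\cup V_x$ restricted to backbone edges, and, crucially, leaves every $R_{c_j}$ and every $R_{x_i}$ with unobstructed visibility straight up and straight down.

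\textbf{Gadgets.} Next, for each variable $x_i$ I would place $R_{x_i},R_{t_i},R_{f^1_i},R_{f^2_i}$ according to $\beta$: the left configuration of Figure~\ref{gadgetsFigure}$(b)$ if $\beta$ assigns \emph{true} to $x_i$ and the right one otherwise, so that $R_{t_i}$ lies above the backbone precisely when $x_i$ is true (the two $R_{f^s_i}$ then lying below, and vice versa). For each assignment square I then route the corresponding literal path --- $\overset{_{\rightarrow}}{t_i},h^1_{t_i},h^2_{t_i},l^{r_1}_{j_1},\dots,l^{r_q}_{j_q},h^0_{t_i},t_i,h^3_{t_i},h^4_{t_i},\overset{_{\leftarrow}}{t_i}$, and the analogous two paths for $f^1_i$ and $f^2_i$ (one for the first $m$ clauses, one for the last $m$) --- as a long horizontal strip on the side of the backbone dictated by the placement of the assignment square, following Figure~\ref{gadgetsFigure}$(c)$; the staircase offsets and the ``cap'' squares $h^0_{t_i},\dots,h^4_{t_i}$ are exactly what makes consecutive squares of a strip mutually visible and non-consecutive ones not, and make $\overset{_{\rightarrow}}{t_i},\overset{_{\leftarrow}}{t_i}$ see precisely their five $h$-squares, $t_i$, and the $l^r_j$ of the path. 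Finally, inside each strip the $R_{l^r_j}$ are shifted horizontally so that $R_{l^r_j}$ sees $R_{c_j}$ vertically (from above or below, according to the side of the strip), locally using the arrangement of Figure~\ref{gadgetsFigure}$(a)$ around each clause position. This is the only place the NAE hypothesis enters: $\beta$ never puts all three literals of $c_j$ on the same side, so $c_j$ has at least one literal square above it and at least one below, a $2$--$1$ split, which the arrangement of Figure~\ref{gadgetsFigure}$(a)$ accommodates.

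\textbf{Correctness.} It remains to verify $\layoutToGraph(\mathcal{R}) = G$. That every edge of $G$ is realised as a visibility follows by inspection of the local configurations above (the $K_4$'s, the two rails, the literal paths, the links $\{x_i,t_i\},\{x_i,f^s_i\}$ and $\{c_j,l^r_j\}$). For the converse, that no spurious visibility appears, the plan is to separate the strips (the backbone, and each literal strip above or below it) by mutually very large vertical distances, and inside a strip to leave large horizontal gaps between consecutive clause positions, with ``very large'' quantified in terms of $m$ and $n$. Then every inter-strip sightline other than an intended vertical $R_{l^r_j}$--$R_{c_j}$ one is blocked by a backbone rail or an $h$-square, and every horizontal sightline stays inside a single strip; one checks the finitely many local patterns exhibited by the figures. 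The excerpt defers this bookkeeping to the Appendix, and that is exactly the step I expect to be the main obstacle: simultaneously, over all $2m$ clauses and all literal paths, (i) placing each $R_{l^r_j}$ so that it sees its $R_{c_j}$ while still fitting the three interleaved paths meeting at $c_j$ in the pattern of Figure~\ref{gadgetsFigure}$(a)$ --- the geometric content of the NAE condition --- and (ii) ruling out any unintended horizontal or vertical visibility among the densely packed literal, $h$-, and backbone squares. Both are routine but delicate; (i) is precisely where a $3$--$0$ clause split would be fatal, since then $c_j$ could not see all three of its literals, which is why the not-all-equal hypothesis is needed.
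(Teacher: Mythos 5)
Your proposal is correct and follows essentially the same route as the paper: the paper's main text describes exactly this construction (backbone as in Figure~\ref{backbone}, assignment squares placed per Figure~\ref{gadgetsFigure}$(b)$ according to the truth value, literal paths routed on the corresponding side per Figure~\ref{gadgetsFigure}$(c)$, and the NAE condition used only to guarantee a $2$--$1$ split at each clause as in Figure~\ref{gadgetsFigure}$(a)$), and the appendix proof merely makes your deferred ``bookkeeping'' explicit by giving concrete coordinates for every vertex. Your identification of where the not-all-equal hypothesis enters matches the paper's argument.
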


Proving that a layout for $G$ translates into a satisfying not-all-equal assignment for $F$, is much more involved. The general idea is to show that any layout for $G$ must be \visomorphic{} to the layout constructed above. However, this cannot be done separately for the individual gadgets, e.\,g., showing that the backbone must be represented as in Figure~\ref{backbone} (in fact, the structure of the backbone alone does not enforce such a layout) and the literal vertices must form a path as in Figure~\ref{gadgetsFigure}$(c)$ and so on. Instead, the desired structure of the layout is only enforced by a rather complicated interplay of the different parts of $G$.\par
A main building stone is that a $K_4$ can only be represented in $3$ different ways (up to \visomorphism{}), which are illustrated in Figure~\ref{visLayoutsCompleteBipartiteGraphsFigure}. This observation is important, since the backbone is a sequence of $K_4$.

\begin{lemma}\label{K4Lemma}
Every layout for $K_4$ is \visomorphic{} to one of the three layouts of Figure~\ref{visLayoutsCompleteBipartiteGraphsFigure}.
\end{lemma}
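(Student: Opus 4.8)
The plan is to first classify, purely combinatorially, the possible \emph{horizontal/vertical patterns} that a layout for $K_4$ can have, and then to realize each surviving pattern geometrically and show that its realization is unique up to \visomorphism{}.

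To set up the pattern, note that the four squares are pairwise disjoint closed unit squares, so any two of them have disjoint $x$-projections or disjoint $y$-projections but not both (overlap in both coordinates would force the squares to intersect) and not neither (then the two squares would not see each other, contradicting that every pair of $K_4$ is an edge). Hence each of the six pairs is either \emph{horizontal} (disjoint $x$-projections, overlapping $y$-projections) or \emph{vertical} (the symmetric case); and, since the pair is an edge, a horizontal pair is in fact horizontally visible via an unobstructed visibility rectangle, and a vertical pair vertically visible. This gives a partition $E(K_4)=E_H\cup E_V$; writing $H=(V(K_4),E_H)$ and $V=(V(K_4),E_V)=\overline H$, I want all admissible $H$ up to isomorphism and complementation, complementation being harmless because interchanging the two coordinate axes is one of the moves allowed in a \visomorphism{}.

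Two observations prune the list. First, $H$ is the intersection graph of the four $y$-projections and $V$ that of the four $x$-projections, so both $H$ and $V$ are interval graphs; since $C_4$ is not an interval graph and $\overline{C_4}$ is a perfect matching, neither $E_H$ nor $E_V$ can be a $4$-cycle. Second, if three squares are pairwise horizontally visible, their $x$-projections are pairwise disjoint; ordering them left/middle/right, a unit interval cannot meet both the left one and the right one (it would have to contain the middle one's span), so no square is vertically visible to both the left and the right square — in particular no four squares are pairwise horizontally visible, and the dual statement holds with the two axes exchanged. Applying this dual statement to the three leaves of a hypothetical $K_{1,3}$ in $H$ (they are pairwise vertical, and the center is horizontally visible to all three) rules out $H=K_{1,3}$, and dually $V\ne K_{1,3}$. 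Running through all $4$-vertex graphs up to isomorphism and complementation, the only patterns satisfying these constraints are, up to complementation, $H\in\{\,K_4-e,\ P_3\cup K_1,\ P_4\,\}$, with $V=\overline H$ equal to $K_2$, $\overline{P_3\cup K_1}$, and $P_4$ respectively. These three patterns are pairwise distinct even allowing the complement swap, and they are exactly the patterns of the three layouts of Figure~\ref{visLayoutsCompleteBipartiteGraphsFigure}; so those three layouts are pairwise non-\visomorphic{}, and it remains only to show that each admissible pattern forces its layout.

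For this, I fix a pattern and normalize the layout using the \visomorphism{} freedoms (reflection in either coordinate axis, interchange of the axes). The horizontal pairs then impose a left-to-right order on the squares they involve and the vertical pairs a bottom-to-top order; the requirement that a pair declared vertical be \emph{not} horizontally visible (and symmetrically) forces the relevant intermediate square to block the long-range sight line, which in turn constrains the $y$-extents (resp.\ $x$-extents). Working through the remaining freedom one checks that these constraints pin each pattern down to a single configuration, which one then identifies with the corresponding layout of Figure~\ref{visLayoutsCompleteBipartiteGraphsFigure}. The step I expect to be the main obstacle is exactly this last one: for a fixed pattern there are still several a priori choices of which side of which square carries each visibility, and one must discard every choice that would either create an extra visibility (altering the pattern) or destroy a required one, while keeping track of which of the surviving configurations are identified by the \visomorphism{} group. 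It is this finite-but-delicate bookkeeping, rather than any single geometric fact, that carries the weight of the argument.
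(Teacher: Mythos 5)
Your strategy (first classify the undirected partition of $E(K_4)$ into horizontal and vertical pairs via interval-graph and unit-length arguments, then realize each surviving pattern) is genuinely different from the paper's proof, which fixes one short visibility $R_1 \Hvis R_2$ and exhaustively places $R_3$ and $R_4$. The first half of your plan is mostly sound and rather elegant, but it has one concrete hole: the claim that ``no four squares are pairwise horizontally visible'' does \emph{not} follow, ``in particular,'' from the observation that no unit interval meets both the outer $x$-projections of a horizontal triple --- in an all-horizontal pattern no square is vertically visible to anything, so that observation is vacuous. Excluding the pattern $E_H = E(K_4)$ (equivalently $E_V=\emptyset$) needs its own argument: with $x$-projections ordered $I_1<I_2<I_3<I_4$, the visibility rectangle realizing the edge $\{1,4\}$ spans $I_2\cup I_3$ horizontally, so its $y$-extent must lie in $(J_1\cap J_4)\setminus(J_2\cup J_3)$, and one checks that for four pairwise-intersecting \emph{unit} intervals $J_i$ this set never contains a nondegenerate interval. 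Without this, your list of admissible patterns is not established.

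The more serious problem is that the second half --- that each admissible pattern forces a unique layout up to \visomorphism{} --- is exactly where the content of the lemma lives, and you have only described it as ``finite-but-delicate bookkeeping'' rather than carried it out. A \visomorphism{} preserves the \emph{directed} relations $\Hvis$ and $\Vvis$ up to globally inverting each and swapping them, so the undirected $H/V$ pattern does not by itself determine the \visomorphism{} class. For instance, for the diamond pattern ($\{2,4\}$ the unique vertical pair) the $x$-order $I_1<\{I_2,I_4\}<I_3$ gives the first layout of Figure~\ref{visLayoutsCompleteBipartiteGraphsFigure}, but the a priori possible order $I_1<I_3<\{I_2,I_4\}$ would make $R_3$ an $\Hvis$-source rather than an $\Hvis$-sink towards $\{R_2,R_4\}$, which is \emph{not} \visomorphic{} to it; ruling it out requires showing that $R_1$ cannot see both $R_2$ and $R_4$ past $R_3$ (again a unit-interval computation: one direction forces $s_1<s_3$ for the lower $y$-offsets and the other forces $s_3<s_1$). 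Analogous exclusions are needed for the $P_4$ and $P_3\cup K_1$ patterns. Until these realization arguments are written out, the proposal proves only that at most three $H/V$ patterns occur, not the lemma's statement about layouts up to \visomorphism{}.
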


We now assume that $G$ can be represented by some layout $\mathcal{R}$. For every $j$, $1 \leq j \leq m$, we define $L_j = \{l^1_{j}, l^2_{j}, l^3_{j}\}$, for every $i$, $1 \leq i \leq n$, we define $A_{i} = \{t_i, f^1_i, f^2_i\}$, and, for every $j$, $1 \leq j \leq m-1$, we define $C^l_{j} = \{c_j, c_{j-1}, c^1_{j-1}, c^2_{j-1}\}$, $C^r_{j} = \{c_j, c_{j+1}, c^1_j, c^2_j\}$ and $C_j = C^l_j \cup C^r_j$.\par
We shall prove the desired structure of $\mathcal{R}$ by first considering the neighbourhood of $c_j$; once we have fixed the layout for this subgraph, the structure of the whole layout can be concluded inductively. The neighbourhood of $c_j$ consists of $C^l_{j}$ and $C^r_{j}$ (two $K_4$ joined by $c_j$) and $L_j$, where all vertices of the two $K_4$ (except $c_j$) are not connected to any vertex of $L_j$. Intuitively speaking, this independence between $L_j$ and the $K_4$ of the backbone will force the backbone to expand along one dimension, say horizontally (as depicted in Figure~\ref{backbone}), while the visibilities between $L_j$ and $c_j$ must then be vertical (as depicted in Figure~\ref{gadgetsFigure}$(a)$). However, formally proving this turns out to be quite complicated.\par
The general proof idea is to somehow place the unit squares of~$R_{L_{j}}$ in such a way that they see $R_{c_j}$ without creating unwanted visibilities. Then, the areas of visibility for the $R_{L_j}$ are blocked for any unit squares from the backbone-neighbourhood $R_{C_{j}}$, since these are independent of~$R_{L_j}$. 
For example, consider the situation depicted in Figure~\ref{proofSketchFigure}. Here, placing unit squares from $R_{C_j}$ in the grey areas implies that they are within visibility of some unit squares from $R_{L_j}$. This leaves only few possibilities to place the unit squares from $R_{C_j}$ and by applying arguments of this type, it can be concluded, by exhaustively searching all possibilities and under application of Lemma~\ref{K4Lemma}, that the only possible layouts have the above described form.\par
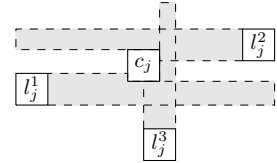
\begin{wrapfigure}{R}{0.3\textwidth}
\centering
\begin{tikzpicture}[scale=0.7]

\coordinate (width) at (0.6,0);
\coordinate (height) at (0,0.6);
\coordinate (labelshift) at (0.3, 0.3);

\coordinate (cj) at (0,0);
\coordinate (lj1) at ($(cj) - 3.5*(width) - 0.75*(height)$);
\coordinate (lj2) at ($(cj) + 3.6*(width) + 0.65*(height)$);
\coordinate (lj3) at ($(cj)- 2.5*(height) + 0.5*(width)$);

\draw[fill, black!10] ($(lj1) + (width)$) -- ($(lj2 |- lj1) + (width)$) -- ($(lj2 |- cj) + (width)$) -- ($(cj |- cj) + (width)$) -- ($(cj |- cj)$) -- ($(cj |- lj1) + (height)$) -- ($(lj1 |- lj1) + (height) + (width)$);
\draw[fill, black!10] ($(lj3) + (height)$) -- ($(lj3 |- cj)$) -- ($(cj |- cj) + (width)$) -- ($(cj |- lj2) + (width) + (height) + (0, 0.5)$) -- ($(lj3 |- lj2) + (width) + (height) + (0, 0.5)$) -- ($(lj3 |- lj3) + (width) + (height)$);
\draw[fill, black!10] ($(lj2)$) -- ($(cj |- lj2) + (width)$) -- ($(cj |- cj) + (width) + (height)$) -- ($(lj1 |- cj) + (height)$) -- ($(lj1 |- lj2) + (height)$) -- ($(lj2 |- lj2) + (height)$);

\draw[dashed, very thin] ($(lj1) + (width)$) -- ($(lj2 |- lj1) + (width)$) -- ($(lj2 |- cj) + (width)$) -- ($(cj |- cj) + (width)$) -- ($(cj |- cj)$) -- ($(cj |- lj1) + (height)$) -- ($(lj1 |- lj1) + (height) + (width)$);
\draw[dashed, very thin] ($(lj3) + (height)$) -- ($(lj3 |- cj)$) -- ($(cj |- cj) + (width)$) -- ($(cj |- lj2) + (width) + (height) + (0, 0.5)$) -- ($(lj3 |- lj2) + (width) + (height) + (0, 0.5)$) -- ($(lj3 |- lj3) + (width) + (height)$);
\draw[dashed, very thin] ($(lj2)$) -- ($(cj |- lj2) + (width)$) -- ($(cj |- cj) + (width) + (height)$) -- ($(lj1 |- cj) + (height)$) -- ($(lj1 |- lj2) + (height)$) -- ($(lj2 |- lj2) + (height)$);

\coordinate (temp) at (cj);
\draw[black] ($(temp)$) -- ($(temp) + (width)$) -- ($(temp) + (width) + (height)$) -- ($(temp) + (height)$) -- ($(temp)$);
\node[scale=0.8] (label) at ($(temp) + (labelshift)$)  {$c_j$};
\coordinate (temp) at (lj1);
\draw[black] ($(temp)$) -- ($(temp) + (width)$) -- ($(temp) + (width) + (height)$) -- ($(temp) + (height)$) -- ($(temp)$);
\node[scale=0.8] (label) at ($(temp) + (labelshift)$)  {$l^1_j$};
\coordinate (temp) at (lj2);
\draw[black] ($(temp)$) -- ($(temp) + (width)$) -- ($(temp) + (width) + (height)$) -- ($(temp) + (height)$) -- ($(temp)$);
\node[scale=0.8] (label) at ($(temp) + (labelshift)$)  {$l^2_j$};
\coordinate (temp) at (lj3);
\draw[black] ($(temp)$) -- ($(temp) + (width)$) -- ($(temp) + (width) + (height)$) -- ($(temp) + (height)$) -- ($(temp)$);
\node[scale=0.8] (label) at ($(temp) + (labelshift)$)  {$l^3_j$};
\end{tikzpicture}
\caption{Possible placement of literal vertices for $c_j$.}
\label{proofSketchFigure}
\end{wrapfigure}
However, this argument is flawed: it is possible to place a unit square $R_x$ within the grey areas, as long as the forbidden visibilities are blocked by other unit squares. This type of blocking would require a path between $x$ and $c_j$ or some vertex from $L_j$, respectively, which does exist as structure in $G$. Consequently, in order to make the above described argument applicable, we first have to show that the existence of such visibility-blocking unit squares leads to a contradiction. This substantially increases the combinatorial depth of the already technical proof idea described above.\par
For the next lemma, which is the main tool in proving how the neighbourhood of $c_j$ is represented, we need some notation. Let $R_i$, $R_j$, $R_k$ be unit squares. 
If some (or every) visibility rectangle for $R_i$ and $R_k$ intersects $R_j$, then $R_j$ is \emph{strictly between} $R_i$ and $R_k$ (or $R_j$ \emph{blocks the view} between $R_i$ and $R_k$, respectively).

\begin{lemma}\label{between_all}
For all $1\leq i\leq 2m$ and $r\in\{1,2,3\}$ and every $z\in N(c_i)\setminus\{l_i^r\}$ there exists no visibility rectangle for $R_{l_i^r}$ and $R_z$ that is not intersected by $R_{c_i}$. In particular, this implies: $R_z$ is not strictly between $R_{c_i}$ and $R_{l_i^r}$, $R_{l_i^r}$ is not strictly between $R_{c_i}$ and $R_z$, and, if $R_{c_i}$ is strictly between $R_{l_i^r}$ and $R_z$, then $R_{c_i}$ blocks the view between $R_{l_i^r}$ and $R_z$.
\end{lemma}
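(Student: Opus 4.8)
The plan is to argue by contradiction, after first recording a purely combinatorial fact. I would check that $l_i^r$ and $z$ are non-adjacent in $G$ for every $z\in N(c_i)\setminus\{l_i^r\}$: the only neighbours of $l_i^r$ are $c_i$, the two ``arrow'' vertices of the literal path carrying $l_i^r$, and the two neighbours of $l_i^r$ on that path; since no variable occurs twice in a clause, the three literal vertices of $c_i$ lie on pairwise distinct variable paths, and the backbone $K_4$s meet the literal sets $L_j$ only in the clause vertices $c_j$, so none of those five neighbours of $l_i^r$ belongs to $N(c_i)$. Hence in every layout $\mathcal{R}$ of $G$ the squares $R_{l_i^r}$ and $R_z$ do not see each other.

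Now assume towards a contradiction that $\mathcal{R}$ is a layout of $G$ containing a visibility rectangle $S$ for $R_{l_i^r}$ and $R_z$ with $S\cap R_{c_i}=\emptyset$. Applying a suitable symmetry of the plane (a rotation by a multiple of $\frac{\pi}{2}$, possibly followed by a reflection), I may assume that $S$ is horizontal, with its left side in $R_{l_i^r}$ and its right side in $R_z$; in particular $R_{l_i^r}$ and $R_z$ have overlapping $y$-ranges, and $R_{l_i^r}$ lies weakly to the left of $R_z$. Because $R_{l_i^r}$ and $R_z$ do not see each other, $S$ must meet some square $R_w$ with $w\notin\{l_i^r,z\}$, and since $S\cap R_{c_i}=\emptyset$ we automatically have $w\neq c_i$.

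The core of the proof is to upgrade this single obstruction into a \emph{blocking chain}. Sweeping rightwards inside $S$ and repeatedly taking a first obstruction, one obtains a path $l_i^r=u_0,u_1,\dots,u_t=z$ in $G$ in which consecutive $R_{u_p}$ see each other, each of $R_{u_1},\dots,R_{u_{t-1}}$ meets $S$, and therefore (as $R_{c_i}\cap S=\emptyset$) no $u_p$ with $1\le p\le t-1$ equals $c_i$; by the non-adjacency established above, $t\ge 2$, so this is a genuine path from $l_i^r$ to $z$ avoiding $c_i$. Every $R_{u_p}$ with $1\le p\le t-1$ meets the horizontal strip $S$, hence all these squares are confined to a band of bounded height lying horizontally between $R_{l_i^r}$ and $R_z$; yet $R_{l_i^r}$ and $R_z$ must each still have an obstruction-free sight line to $R_{c_i}$, which lies outside that band. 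I would then derive the contradiction from this metric confinement together with the local structure of $G$ around $c_i$: the two backbone $K_4$s sharing $c_i$ (whose layouts are restricted by Lemma~\ref{K4Lemma}), the requirement that all nine neighbours of $c_i$ simultaneously see $R_{c_i}$ with obstruction-free sight lines, and the non-adjacency of the literal squares $R_{l_i^1},R_{l_i^2},R_{l_i^3}$ to the backbone squares; these together make it impossible to route such a confined path without forcing either $R_{c_i}$ into $S$ or a forbidden visibility (a backbone square seeing a literal square of $c_i$, or two literal squares of $c_i$ seeing each other ``around'' $R_{c_i}$).

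I expect this last step to be the main obstacle: combinatorially the blocking chain is merely a path of $G$ avoiding $c_i$, and there are many such paths, so the entire force of the argument has to come from the confinement of the chain inside the thin strip $S$ and from the rigidity of $K_4$-layouts. Carrying it out amounts to a careful, essentially exhaustive case distinction on the position of $R_{c_i}$ relative to $S$ (to the left of $R_{l_i^r}$, to the right of $R_z$, or above or below the strip but horizontally between the two squares) and, within each case, on the placements of the remaining neighbours of $c_i$; this is where the bulk of the geometric bookkeeping lies. Finally, the three ``in particular'' assertions are immediate: if $R_z$ (respectively $R_{l_i^r}$) were strictly between $R_{c_i}$ and $R_{l_i^r}$ (respectively $R_z$), then truncating the witnessing visibility rectangle at $R_z$ (respectively $R_{l_i^r}$) and shrinking it in the transversal direction so that it fits inside that square yields a visibility rectangle for $R_{l_i^r}$ and $R_z$ avoiding $R_{c_i}$, contradicting the main claim; and the statement that $R_{c_i}$ blocks the view between $R_{l_i^r}$ and $R_z$ is just a restatement of the main claim.
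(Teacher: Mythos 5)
Your setup matches the paper's strategy: you correctly verify that $l_i^r$ and $z$ are non-adjacent, and you correctly reduce the statement to showing that the squares obstructing the rectangle $S$ would form a path in $G$ from $l_i^r$ to $z$ avoiding $c_i$ whose squares are all confined to the thin strip $S$ (the paper additionally normalises this chain by taking the obstructing squares of \emph{extremal} coordinate in the transversal direction, so that consecutive squares are guaranteed to see each other and nothing lies strictly between them on the extremal side --- your ``first obstruction'' sweep does not quite give this, but that is repairable). The genuine gap is the step you explicitly defer: ruling out such a confined path. This is the entire content of the lemma, and the route you sketch for it --- an exhaustive local case analysis around $c_i$ using the two backbone $K_4$'s at $c_i$ and the requirement that all of $c_i$'s neighbours see $R_{c_i}$ --- is aimed at the wrong place. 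The blocking chain is a path of $G$ avoiding $c_i$, and such paths exist that leave the neighbourhood of $c_i$ entirely: the chain can run from $l_i^r$ along its literal-path gadget to another clause vertex $c_s$ and then along the backbone back to $N(c_i)$. No amount of local rigidity at $c_i$ excludes these, so the confinement of the chain to the strip $S$ plus Lemma~\ref{K4Lemma} applied at $c_i$ is not enough.

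The paper isolates this step as a separate claim (Lemma~\ref{aligned_path}: no aligned, unobstructed-on-one-side path from $l_i^r$ to a neighbour of $c_i$ avoiding $c_i$ exists), and its proof depends on global design features of $G$ that your proposal never invokes: the clauses are reordered so that consecutive occurrences of any literal are at least six indices apart; every literal occurs at least three times; and each literal path carries five auxiliary vertices $h^0,\dots,h^4$ whose common-neighbour structure forces that no two consecutive squares of a literal path can be aligned without a third square wedged between them. These facts force any aligned chain to traverse at least five consecutive backbone clause vertices $c_s,c_{s-1},\dots,c_{s-4}$ in a rigid configuration, and the contradiction is then obtained at $c_{s-2}$ --- far away from $c_i$ --- where the literal square $R_{l_{s-2}^1}$ can no longer be placed without creating a forbidden visibility. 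Without this (or some substitute exploiting the same global gadget properties), the central claim of the lemma is unproved; the ``in particular'' reductions at the end of your proposal are fine but they are the easy part.
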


By applying Lemma~\ref{between_all}, we can now show that $R_{C^l_{j}}$ and $R_{C^r_{j}}$ cannot all see $R_{c_j}$ from the same side, which can then be used in order to prove that either all $R_{L_j}$ see $R_{c_j}$ vertically or all of them see $R_{c_j}$ horizontally:

\begin{lemma}\label{notAllOneSideLemma}
For every $j$, $1 \leq j \leq 2m - 1$ and $y \in C_j \setminus \{c_j\}$, $R_{c_j} \Hvis R_{C_j \setminus \{y, c_j\}}$ is not possible.
\end{lemma}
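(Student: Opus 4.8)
The plan is to argue by contradiction. Suppose that for some $j$ with $1\le j\le 2m-1$ and some $y\in C_j\setminus\{c_j\}$ one has $R_{c_j}\Hvis R_z$ for every $z\in C_j\setminus\{y,c_j\}$; that is, five of the six backbone neighbours of $c_j$ are represented by unit squares lying to the right of $R_{c_j}$, each horizontally visible from $R_{c_j}$. Since $C_j\setminus\{c_j\}$ is the disjoint union of the triangle $C^l_j\setminus\{c_j\}$ and the triangle $C^r_j\setminus\{c_j\}$, deleting one vertex leaves all three vertices of one of these triangles together with exactly two of the other; by the symmetry of the configuration between $C^l_j$ and $C^r_j$ we may assume $R_{c_j}\Hvis R_z$ for every $z\in C^l_j\setminus\{c_j\}$ and for exactly two $z\in C^r_j\setminus\{c_j\}$.

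The next step is to constrain the geometry of the two $K_4$'s through $c_j$ using Lemma~\ref{K4Lemma}: the $K_4$ on $C^l_j$ and the $K_4$ on $C^r_j$ are each $\visomorphic$ to one of the three layouts of Figure~\ref{visLayoutsCompleteBipartiteGraphsFigure}. The hypothesis that $R_{c_j}$ sees all three of the remaining $C^l_j$-squares horizontally from the left selects only a handful of admissible arrangements, and in each of them $R_{c_{j-1}},R_{c^1_{j-1}},R_{c^2_{j-1}}$ are placed very rigidly to the right of $R_{c_j}$: two of them lie one directly above the other and, together with $R_{c_j}$, span the whole vertical extent of the right side of $R_{c_j}$, while the third is reachable from $R_{c_j}$ only through the thin horizontal corridor left between those two. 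Since $R_{c_j}$ sees exactly two (not three) of the $C^r_j$-squares from the left, Lemma~\ref{K4Lemma} narrows that $K_4$ down similarly and fixes where $R_{c_{j+1}},R_{c^1_j},R_{c^2_j}$ may lie.

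The contradiction then follows from a ``screening'' argument: arranged as above, the squares $R_{C^l_j\setminus\{c_j\}}$ cover essentially all of the part of the plane to the right of $R_{c_j}$ that is horizontally visible from $R_{c_j}$, so any further square placed there and horizontally visible from $R_{c_j}$ is forced into a horizontal or vertical visibility with one of $R_{c_{j-1}},R_{c^1_{j-1}},R_{c^2_{j-1}}$. As $c_{j+1}$ is adjacent to none of $C^l_j\setminus\{c_j\}$ and $c^1_j$ (resp.\ $c^2_j$) is adjacent among them only to $c^1_{j-1}$ (resp.\ $c^2_{j-1}$), a short case distinction over which two of $\{c_{j+1},c^1_j,c^2_j\}$ are seen from the left, driven by the placements dictated by Lemma~\ref{K4Lemma} for $C^r_j$, yields in every branch a visibility corresponding to a non-edge of $G=\layoutToGraph(\mathcal{R})$, a contradiction. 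As already noted in the discussion preceding the lemma, the screening step is valid only once one rules out the possibility that the offending visibility is destroyed by an intervening square; this is exactly where Lemma~\ref{between_all} (together with the fact that the literal vertices $l^r_j$ are non-adjacent to the backbone neighbours of $c_j$) enters, excluding the required blocking squares. I expect this last ingredient to be the main obstacle: the finite but delicate case analysis over the admissible sub-layouts of the two $K_4$'s, organised so that in each branch the forbidden visibility is simultaneously unavoidable and unblockable. Everything else is routine bookkeeping with the definitions of Section~\ref{sec:preliminaries}.
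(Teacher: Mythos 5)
Your opening moves match the paper's: one of the two $K_4$'s, say the one on $C^l_j$, is entirely seen horizontally from $R_{c_j}$ in one direction, Lemma~\ref{K4Lemma} then restricts both $K_4$'s to a handful of sub-layouts, and the adjacency of $c^1_{j-1}$ with $c^1_j$ and of $c^2_{j-1}$ with $c^2_j$ forces $y=c_{j+1}$. But the step where you actually derive the contradiction is wrong. You claim that the squares $R_{C^l_j\setminus\{c_j\}}$ ``screen'' the right-hand side of $R_{c_j}$ so that the remaining members of $C^r_j$ cannot be placed without a visibility realising a non-edge \emph{inside} $C_j$. That is false: the configurations in Figure~\ref{notAllOneSideLemmaFigure}$(a)$--$(c)$ are precisely consistent placements of all seven squares of $C_j$ satisfying your hypothesis (with $R_{c_{j+1}}$ seen vertically or horizontally, and $R_{c^1_j},R_{c^2_j}$ tucked above and below the corridor leading to $R_{c_{j-1}}$). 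No forbidden visibility arises among these seven squares alone, so a case analysis confined to $C_j$ cannot close the proof.

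The contradiction in fact only appears when you bring in vertices \emph{outside} $C_j$: the squares $R_{c^1_{j+1}}$ and $R_{c^2_{j+1}}$, each of which must simultaneously see $R_{c_{j+1}}$, its partner $R_{c^1_j}$ resp.\ $R_{c^2_j}$, and the other one of the pair; and, in the branches where this is not yet enough, the literal squares $R_{L_{j+1}}$, whose admissible positions around $R_{c_{j+1}}$ are pinned down by Lemma~\ref{between_all} (alignment/shifting constraints, at most one literal square per side, etc.) in a way that leaves no room for $R_{c^1_{j+1}}$ and $R_{c^2_{j+1}}$. Your proposal assigns Lemma~\ref{between_all} only the auxiliary role of excluding blocking squares, and never mentions $c^1_{j+1}$, $c^2_{j+1}$ or $L_{j+1}$ at all, so the essential mechanism is missing. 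A smaller issue: by assuming ``exactly two'' squares of $R_{C^r_j\setminus\{c_j\}}$ are seen from the left you silently discard the case $R_{c_j}\Hvis R_y$ (all six backbone neighbours to the right), which is not immediately absurd and needs the same external-vertex argument (the paper's Case~2, Figure~\ref{notAllOneSideLemmaFigure}$(c)$).
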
 

\begin{lemma}\label{literalsLemma}
For every $j$, $1 \leq j \leq m$, either $R_{c_j} \symHvis R_{L_j}$ or $R_{c_j} \symVvis R_{L_j}$.
\end{lemma}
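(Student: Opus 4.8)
The plan is to argue by contradiction. Assume that for some $j$ the squares $R_{l^1_j}, R_{l^2_j}, R_{l^3_j}$ neither all see $R_{c_j}$ horizontally nor all see it vertically. Since each of them sees $R_{c_j}$ through exactly one of its four sides, there is then a literal square meeting a vertical side of $R_{c_j}$ and another meeting a horizontal side. Applying a rotation or reflection of the plane to a layout produces another unit square layout of the same graph, with $\Hvis$ and $\Vvis$ possibly swapped or reversed; hence we may assume $R_{c_j} \Hvis R_{l^1_j}$ (so that $R_{l^1_j}$ meets the right side of $R_{c_j}$) and $R_{l^2_j} \Vvis R_{c_j}$ (so that $R_{l^2_j}$ meets the top side), while $R_{l^3_j}$ meets one of the four sides.

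Recall that $N(c_j)$ consists of $L_j$ together with $C^r_j \setminus \{c_j\}$ and $C^l_j \setminus \{c_j\}$, that $C^r_j$ and $C^l_j$ each span a $K_4$ containing $c_j$, and that no vertex of $(C^l_j \cup C^r_j) \setminus \{c_j\}$ is adjacent to a vertex of $L_j$; in particular $\deg_G(c_j) = 9$. Each neighbour of $c_j$ uses exactly one side of $R_{c_j}$, and by Lemma~\ref{notAllOneSideLemma}, applied to the given layout and to its rotations and reflections, at most four of the six backbone neighbours $C_j \setminus \{c_j\}$ use any one side of $R_{c_j}$ (if five or six used the same side, this would contradict Lemma~\ref{notAllOneSideLemma} for a suitable choice of $y$).

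Next I would show that $R_{l^1_j}$ and $R_{l^2_j}$ render the right and top sides of $R_{c_j}$ essentially unavailable to backbone neighbours. Two unit squares with overlapping $y$-ranges share an endpoint of each range, so any square seeing $R_{c_j}$ from the right uses a subinterval of $R_{c_j}$'s right side anchored at its top or its bottom end, and similarly for the other sides. If a backbone neighbour $R_z$ used the right side of $R_{c_j}$ through a subinterval not disjoint from the one used by $R_{l^1_j}$, then $R_z$ and $R_{l^1_j}$ would overlap in the $y$-direction while both lying to the right of $R_{c_j}$, and they would see each other through a visibility rectangle disjoint from $R_{c_j}$; this is forbidden by Lemma~\ref{between_all} (here $z \in N(c_j) \setminus \{l^1_j\}$) unless a further square blocked that view, a possibility that must itself be excluded, again via Lemma~\ref{between_all}. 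Excluding such hypothetical blockers is precisely the subtlety flagged in the proof sketch above. Carrying this analysis out, symmetrically for the top side and $R_{l^2_j}$, forces all six backbone neighbours onto the left side, the bottom side, or---for $R_{l^3_j}$ only---the remaining side. Now Lemma~\ref{K4Lemma} enters: the $K_4$'s spanned by $C^l_j$ and $C^r_j$ are each represented in one of only three ways, which restricts how $R_{c_j}$ can see its six backbone neighbours to a bounded list of patterns. Running through these, one checks that, with the right and top sides no longer available, every pattern leads to a contradiction---either some side of $R_{c_j}$ must carry at least five backbone neighbours, contradicting Lemma~\ref{notAllOneSideLemma}, or the prescribed $K_4$-layout cannot be completed around $R_{c_j}$, as two partners would have to overlap or one would have to lie in a region kept clear by Lemma~\ref{between_all}. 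Hence the assumed mixed placement is impossible, i.e.\ $R_{c_j} \symHvis R_{L_j}$ or $R_{c_j} \symVvis R_{L_j}$.

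The step I expect to be hardest is the clearing of the right and top sides. A naive geometric argument would just assert that no backbone neighbour can share a side of $R_{c_j}$ with a literal square, but this is false unless one also rules out squares that block the offending visibilities, and $G$ indeed contains vertices---those on the literal-paths attached to $L_j$---whose squares could a priori play that role. Lemma~\ref{between_all} is tailored to exclude exactly these, yet even granted it, the bookkeeping (three layouts for each of the two backbone $K_4$'s, the choice of anchoring ends for $R_{l^1_j}$ and $R_{l^2_j}$, and the placement of $R_{l^3_j}$) is what makes the full argument long; this is why the paper carries it out in the appendix.
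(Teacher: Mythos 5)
Your overall strategy---argue by contradiction, normalise the mixed configuration by a rotation/reflection, use Lemma~\ref{between_all} to keep backbone squares away from the sides of $R_{c_j}$ occupied by literal squares, and then combine Lemma~\ref{K4Lemma} with Lemma~\ref{notAllOneSideLemma} to exhaust the remaining placements---is the same strategy the paper follows, so there is no disagreement about the route. The problem is that the step carrying all the weight is the one you never perform: ``running through these, one checks that every pattern leads to a contradiction.'' That check \emph{is} the lemma; what precedes it in your write-up does not come close to forcing a contradiction on its own, so the proposal is a plan rather than a proof.

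Concretely: (i) Lemma~\ref{between_all} does not make the right and top sides of $R_{c_j}$ unavailable; it only forces a backbone square seeing $R_{c_j}$ from the right to occupy the anchored subinterval at the end \emph{opposite} to the one used by $R_{l^1_j}$, so those sides remain half-available, and nothing prevents several mutually adjacent backbone squares from sharing that remaining slot. (ii) The counting bound from Lemma~\ref{notAllOneSideLemma} (at most four backbone neighbours per side) is compatible with splitting the six backbone neighbours as $3+3$ or $4+2$ over the two unrestricted sides, so no contradiction follows from counting. (iii) The contradictions that actually close the argument are not of the two shapes you forecast (five squares on one side, or a $K_4$ that cannot be completed in isolation); they come from the interaction of a \emph{specific} Lemma~\ref{K4Lemma} layout for one of the two $K_4$'s with the position of the \emph{third} literal square and with the other $K_4$---for instance, two squares, one from each $K_4$, both forced to sit within vertical visibility of $R_{c_j}$ at distance less than one, hence horizontally next to each other, which places one of them strictly between a literal square and $R_{c_j}$ against Lemma~\ref{between_all}. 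Your normalisation to ``one literal on the right, one on top'' discards exactly the distinctions (whether the two like-oriented literals sit on the same or on opposite sides of $R_{c_j}$, and where $R_{l^3_j}$ goes) around which that case analysis is organised. A minor point in your favour: the hedge about having to exclude blocking squares when invoking Lemma~\ref{between_all} is unnecessary, since that lemma already forbids the mere existence of a visibility rectangle between $R_{l^r_j}$ and $R_z$ not intersected by $R_{c_j}$, whether or not other squares obstruct it.
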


We are now able to combine these lemmas in order to prove that a layout for $G$ translates into a not-all-equal satisfying assignment for the formula $F$. To this end, we first note that the neighbourhood of a variable vertex $x_i$ has an identical structure as the neighbourhood of the clause vertices, which implies that Lemmas~\ref{between_all},~\ref{notAllOneSideLemma}~and~\ref{literalsLemma} also apply to this part of the graph. By combining Lemmas~\ref{between_all}~and~\ref{literalsLemma}, we can show that for each clause $c_j$, either $R_{c_j}\symHvis R_{C_j \setminus \{c_j\}}$ or  $R_{c_j}\symVvis R_{C_j \setminus \{c_j\}}$. By Lemma~\ref{K4Lemma}, this means that the two corresponding induced $K_4$ are represented as shown in Figure~\ref{backbone}, and, furthermore, an inductive application of Lemma~\ref{notAllOneSideLemma} forces them to form the shown horizontal or vertical backbone. Due to Lemma~\ref{literalsLemma}, the literal vertices and the assignment vertices corresponding to the same variable must all form a path on the same side of the backbone. We can now assign $x_i$ the value \emph{true} if and only if $R_{t_i}$ is below the backbone. As long as, for the variables occurring in some clause $c_j$, $R_{f^1_{i}}$ is on the opposite side of $R_{t_i}$, clause $c_j$ is not-all-equal satisfied, because then literals are set to \emph{true} if and only if they are below the backbone and, due to Lemma~\ref{between_all}, it is not possible that they all lie on the same side. However, if $R_{f^1_{i}}$ lies on the same side as $R_{t_i}$, which is possible, then $R_{f^2_{i}}$, again due to Lemma~\ref{between_all}, must lie on the opposite side of $R_{t_i}$ and, by the same argument, it follows that $c_{j + m}$, which is a copy of $c_j$,  is not-all-equal satisfied (note that every clause has at most one negated variable). 

\begin{lemma}\label{final}
If $G \in \unitSquareGraphs$, then $F$ is not-all-equal satisfiable.
\end{lemma}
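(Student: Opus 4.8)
\textbf{Proof plan for Lemma~\ref{final}.} The plan is to assume that a layout $\mathcal{R}$ for $G$ exists and to reconstruct its global shape from the local structure around the hub vertices $c_j$ and $x_i$, and then read off a not-all-equal satisfying assignment for $F$ from which side of the backbone each literal-path lies on. The whole argument is a controlled propagation: fix the shape at one hub, then force it along the entire chain.

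First I would pin down each backbone $K_4$. For an index $j$, the subgraph on $C_j\cup L_j$ is two $K_4$'s ($C_j^l$ and $C_j^r$) glued at $c_j$, together with the pendant triple $L_j$ whose three vertices are non-adjacent to every vertex of $C_j\setminus\{c_j\}$. Combining Lemma~\ref{literalsLemma} (all of $R_{L_j}$ sees $R_{c_j}$ horizontally, or all of it vertically) with Lemma~\ref{notAllOneSideLemma} (the three $K_4$-partners of $c_j$ cannot all sit on one side of $R_{c_j}$), I would argue that $R_{C_j\setminus\{c_j\}}$ occupies exactly the two sides of $R_{c_j}$ left unused by $R_{L_j}$, i.e.\ either $R_{c_j}\symHvis R_{C_j\setminus\{c_j\}}$ or $R_{c_j}\symVvis R_{C_j\setminus\{c_j\}}$; here Lemma~\ref{between_all} is needed to rule out the "flawed" configurations in which a square of $R_{C_j}$ intrudes into a region that a square of $R_{L_j}$ would see, rescued only by an extra blocking square (no such blocking path exists in $G$). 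Feeding this into Lemma~\ref{K4Lemma}, the only $K_4$-layouts compatible with "$c_j$ sees its three partners from two opposite sides" are the ones of Figure~\ref{backbone}, so each backbone $K_4$ has that form, with $c_j^1,c_j^2$ (resp.\ $x_i^1,x_i^2$) on the two perpendicular sides.

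Next I would propagate this along the whole chain $c_0,c_1,\ldots,c_{2m},x_1,\ldots,x_{n+1}$ by induction on the hubs: having fixed the orientation, say horizontal, at $c_j$, an application of Lemma~\ref{notAllOneSideLemma} at $c_{j+1}$ forbids the chain from "turning", since a turn would place all three partners of $c_{j+1}$ on one side of $R_{c_{j+1}}$; hence $\mathcal{R}$ restricted to the backbone is (up to \visomorphism{}) the straight strip of Figure~\ref{backbone}. The variable neighbourhoods have the same combinatorial structure as the clause neighbourhoods, so Lemmas~\ref{between_all}, \ref{notAllOneSideLemma} and \ref{literalsLemma} apply verbatim there too, and the triple $A_i=\{t_i,f_i^1,f_i^2\}$ sees $R_{x_i}$ vertically, placed as in Figure~\ref{gadgetsFigure}$(b)$. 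Tracing the induced path through $\overset{_{\rightarrow}}{t_i}$, $\overset{_{\leftarrow}}{t_i}$, their private pendants $h_{t_i}^0,\dots,h_{t_i}^4$ and all $l^r_j$ with $y_{j,r}=x_i$, and invoking Lemma~\ref{between_all} (each $R_{l^r_j}$ sees $R_{c_j}$ only vertically, from the side its path lies on), I would conclude that this path lies entirely above or entirely below the backbone, and likewise for the two $\overline{x_i}$-paths through $f_i^1$ and $f_i^2$, which by Lemma~\ref{between_all} at $R_{x_i}$ lie on the opposite side from the $t_i$-path whenever $R_{f_i^s}$ is placed opposite $R_{t_i}$.

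Finally I would extract the assignment: set $x_i$ to \emph{true} iff $R_{t_i}$ lies below the backbone. For a clause $c_j$ with $1\le j\le m$, the three literal squares $R_{l^1_j},R_{l^2_j},R_{l^3_j}$ see $R_{c_j}$ from above or below according to which side the corresponding literal-path is on, and Lemma~\ref{between_all} prevents all three from being on the same side, so $c_j$ is not-all-equal satisfied by the induced assignment. The only subtlety — which is why $F$ was doubled into $F'$ and restricted to at most one negated variable per clause — is that for the negated variable $x_i$ of $c_j$ the $f_i^1$-path may land on the same side as the $t_i$-path (the layout of $R_{x_i},R_{t_i},R_{f_i^1}$ does not force opposite sides); but then the $f_i^2$-path is forced to the opposite side, and since $c_{j+m}$ is a verbatim copy of $c_j$ that uses $f_i^2$ where $c_j$ uses $f_i^1$, clause $c_{j+m}$ is not-all-equal satisfied and hence so is $c_j$. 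Over all clauses this yields a not-all-equal satisfying assignment for $F$. I expect the main obstacle to be the rigorous form of the second and third steps: establishing that the backbone cannot turn or locally deform and that the literal-paths cannot sneak across it, which requires turning the informal "blocked-region" argument into an exhaustive case analysis over the three $K_4$-layouts of Lemma~\ref{K4Lemma}, using Lemma~\ref{between_all} to kill every configuration that would otherwise be salvaged by an auxiliary blocking square.
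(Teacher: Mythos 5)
Your proposal follows essentially the same route as the paper's proof: fix the local shape at each hub by combining Lemmas~\ref{literalsLemma}, \ref{notAllOneSideLemma} and \ref{between_all} to force $R_{c_j}\symHvis R_{C_j\setminus\{c_j\}}$ or $R_{c_j}\symVvis R_{C_j\setminus\{c_j\}}$, propagate the orientation inductively along the chain via Lemma~\ref{notAllOneSideLemma} to obtain the straight backbone, transfer everything to the structurally identical variable neighbourhoods, pin the literal paths and $t_i$/$f_i^s$ squares to one side of the backbone, and read off the assignment, handling the single negated variable per clause via the duplicated clause $c_{j+m}$ exactly as the paper does. The approach and the extracted assignment ($x_i$ true iff $R_{x_i}\Vvis R_{t_i}$) coincide with the paper's, so no further comment is needed.
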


\begin{theorem}\label{nonGridNPCompletenessTheorem}
$\recognitionProb(\unitSquareGraphs)$ is $\npclass$-complete.
\end{theorem}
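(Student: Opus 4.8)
The plan is to assemble Theorem~\ref{nonGridNPCompletenessTheorem} from the pieces established above. Membership in $\npclass$ is exactly Theorem~\ref{NPMembershipTheorem}, so the remaining task is $\npclass$-hardness, which I would obtain from the map $F \mapsto G$ constructed in this section, read as a many-one reduction from $\notallequalthreeSat$ (which is $\npclass$-hard by Schaefer~\cite{Sch78}).

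First I would check that the reduction is polynomial-time. The preprocessing steps are all cheap: restricting to instances with at most one negated variable per clause is harmless for not-all-equal satisfiability (an assignment not-all-equal satisfies a clause iff it does so after complementing all three of its literals), ensuring every literal occurs at least three times is achieved by padding with dummy variables and clauses in a way that preserves not-all-equal satisfiability, and passing from $F$ to $F' = \{c_1,\dots,c_{2m}\}$ merely duplicates every clause. The resulting graph $G = (V,E)$ with $V = V_c \cup V_x \cup V_h$ has $|V| = \bigo(m+n)$, and all edges --- the backbone $K_4$'s, the clause--literal and variable--assignment edges, and the literal/assignment paths --- are computable in polynomial time from $F'$. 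Hence $G$ is built in time polynomial in $|F|$.

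Next I would invoke correctness of the reduction. Lemma~\ref{nonGridReductionEasyDirectionLemma} gives that $F$ not-all-equal satisfiable implies $G \in \unitSquareGraphs$, and Lemma~\ref{final} gives the converse, $G \in \unitSquareGraphs$ implies $F$ not-all-equal satisfiable; together they yield $F \in \notallequalthreeSat \iff G \in \unitSquareGraphs$. Thus $F \mapsto G$ is a polynomial-time many-one reduction witnessing $\npclass$-hardness of $\recognitionProb(\unitSquareGraphs)$, and combining this with Theorem~\ref{NPMembershipTheorem} proves $\npclass$-completeness.

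The genuinely hard part is not this assembly but Lemma~\ref{final}: its proof rests on Lemmas~\ref{K4Lemma},~\ref{between_all},~\ref{notAllOneSideLemma} and~\ref{literalsLemma}, which force any layout of $G$ to be \visomorphic{} to the backbone-plus-paths shape of Figures~\ref{backbone}--\ref{gadgetsFigure}, so that interpreting ``path above the backbone'' as ``literal true'' is well-defined and produces a not-all-equal assignment. Since all those lemmas are available here, the proof of the theorem itself is short; the only remaining points requiring (routine) care are verifying that the WLOG preprocessing really preserves not-all-equal satisfiability and that $\notallequalthreeSat$ stays $\npclass$-hard under the imposed structural restrictions (no repeated variable within a clause, at most one negation per clause).
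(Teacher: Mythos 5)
Your proposal is correct and matches the paper's proof exactly: $\npclass$-membership is Theorem~\ref{NPMembershipTheorem}, and hardness follows from the reduction from $\notallequalthreeSat$ via Lemmas~\ref{nonGridReductionEasyDirectionLemma} and~\ref{final}. Your additional remarks on the polynomial-time computability of $F \mapsto G$ and the harmlessness of the preprocessing are routine checks the paper leaves implicit, but they do not change the argument.
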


Since in our reduction the size of the graph is linear in the size of the formula, we can also conclude ETH-lower bounds for $\recognitionProb(\unitSquareGraphs)$.

\section{Conclusions}

The hardness of $\recognitionProb(\unitSquareGraphs_{\weak})$ is still open (note that in our reduction, we heavily used the argument that certain constellations yield forbidden edges, which falls apart in the weak case) and we conjecture it to be $\npclass$-hard as well. Two open problems concerning graph classes related to $\unitSquareGridGraphs$ are mentioned in Section~\ref{gridCaseSection}: (1) are $\unitSquareGridGraphs$ and the class of resolution-$\frac{\pi}{2}$ graphs identical, (2) are there resolution-$\frac{\pi}{2}$ graphs without BRAC-drawing? {Note that a positive answer to (2) gives a negative answer to (1).}\par
From a parameterised complexity point of view, our $\npclass$-completeness result shows that the number of different rectangle shapes (considered as a parameter) has no influence on the hardness of recognition. Another interesting parameter to explore would be the step size of the grid, i.\,e., for $k \in \mathbb{N}$, let $\unitSquareGridGraphs^k$ be defined like $\unitSquareGridGraphs$, but for a $\{\frac{\ell}{k} \mid \ell \in \mathbb{N}\}^2$ grid. We note that these classes form an infinite hierarchy between $\unitSquareGridGraphs = \unitSquareGridGraphs^1$ and $\unitSquareGraphs = \bigcup_k \unitSquareGridGraphs^k$, and it is hard to define them in terms of extensions of rectilinear graphs. Another interesting observation is that the hardness reduction for the recognition problem of rectilinear graphs from~\cite{EadHonPoo0910}, if interpreted as reduction for $\recognitionProb(\unitSquareGridGraphs)$, does not work for $\unitSquareGridGraphs^2$. The classes $\unitSquareGridGraphs^k$ might be practically more relevant, since placing objects in the plane with discrete distances is more realistic.

\subsection*{Acknowledgements}
We acknowledge the support of the first author by the Deutsche Forschungsgemeinschaft, grant FE 560/6-1, and the support of the last author by the NSERC Discovery Grant program of Canada. We thank the organizers of the Lorentz Center workshop `Fixed Parameter Computational Geometry' for the great atmosphere that stimulated this project.

\bibliographystyle{plainurl}
\bibliography{main_biblio}

\begin{thebibliography}{10}

\bibitem{ArgBekSym1011}
E.~N. Argyriou, M.~A. Bekos, and A.~Symvonis.
\newblock Maximizing the total resolution of graphs.
\newblock In U.~Brandes and S.~Cornelsen, editors, {\em Graph Drawing, GD
  2010}, volume 6502 of {\em LNCS}, pages 62--67. Springer, 2011.

\bibitem{ArgBekSym2012}
E.~N. Argyriou, M.~A. Bekos, and A.~Symvonis.
\newblock The straight-line {RAC} drawing problem is {NP}-hard.
\newblock 16(2):569--597, 2012.

\bibitem{ArlBGEGLMMWW2016a}
A.~Arleo, C.~Binucci, E.~Di Giacomo, W.~S. Evans, L.~Grilli, G.~Liotta,
  H.~Meijer, F.~Montecchiani, S.~Whitesides, and S.~K. Wismath.
\newblock Visibility representations of boxes in 2.5 dimensions.
\newblock In Y.~Hu and M.~N{\"{o}}llenburg, editors, {\em Graph Drawing and
  Network Visualization - 24th International Symposium, {GD}}, volume 9801 of
  {\em LNCS}, pages 251--265. Springer, 2016.

\bibitem{BabGenKho2015}
M.~Babbitt, J.~Geneson, and T.~Khovanova.
\newblock On $k$-visibility graphs.
\newblock 19(1):345--360, 2015.

\bibitem{BhaCos87}
S.~N. Bhatt and S.~S. Cosmadakis.
\newblock The complexity of minimizing wire lengths in {VLSI} layouts.
\newblock {\em Information Processing Letters}, 25(4):263--267, 1987.

\bibitem{BieLioMon2016}
T.~C. Biedl, G.~Liotta, and F.~Montecchiani.
\newblock On visibility representations of non-planar graphs.
\newblock In S.~P. Fekete and A.~Lubiw, editors, {\em 32nd International
  Symposium on Computational Geometry, SoCG}, volume~51 of {\em LIPIcs}, pages
  19:1--19:16. Schloss Dagstuhl - Leibniz-Zentrum f\"ur Informatik, 2016.

\bibitem{BodTel2004}
H.~L. Bodlaender and G.~Tel.
\newblock A note on rectilinearity and angular resolution.
\newblock 8:89--94, 2004.

\bibitem{BoseEFHLMRRSWZ98}
P.~Bose, H.~Everett, S.~P. Fekete, M.~E. Houle, A.~Lubiw, H.~Meijer,
  K.~Romanik, G.~Rote, T.~C. Shermer, S.~Whitesides, and C.~Zelle.
\newblock A visibility representation for graphs in three dimensions.
\newblock 2(2), 1998.

\bibitem{CasFerGriSchWhi2017}
K.~Casel, H.~Fernau, A.~Grigoriev, M~L. Schmid, and S.~Whitesides.
\newblock Combinatorial properties and recognition of unit square visibility
  graphs.
\newblock In {\em MFCS 2017}, LIPIcs. Schloss Dagstuhl - Leibniz-Zentrum f\"ur
  Informatik, 2017.
\newblock Accepted for publication.

\bibitem{ChaGGKL2016a}
S.~Chaplick, G.~Gu\'spiel, G.~Gutowski, T.~Krawczyk, and G.~Liotta.
\newblock The partial visibility representation extension problem.
\newblock In Y.~Hu and M.~N{\"{o}}llenburg, editors, {\em Graph Drawing and
  Network Visualization - 24th International Symposium, {GD}}, volume 9801 of
  {\em LNCS}, pages 266--279. Springer, 2016.

\bibitem{ChaLPW2016a}
S.~Chaplick, F.~Lipp, J.{-}W. Park, and A.~Wolff.
\newblock Obstructing visibilities with one obstacle.
\newblock In Y.~Hu and M.~N{\"{o}}llenburg, editors, {\em Graph Drawing and
  Network Visualization - 24th International Symposium, {GD}}, volume 9801 of
  {\em LNCS}, pages 295--308. Springer, 2016.

\bibitem{DeaEHP2008}
A.~M. Dean, J.~A. Ellis-Monaghan, S.~Hamilton, and G.~Pangborn.
\newblock Unit rectangle visibility graphs.
\newblock {\em Electr. J. Comb.}, 15, 2008.

\bibitem{DeaEGLST2007}
A.~M. Dean, W.~S. Evans, E.~Gethner, J.~D. Laison, M.~A. Safari, and W.~T.
  Trotter.
\newblock Bar $k$-visibility graphs.
\newblock 11(1):45--59, 2007.

\bibitem{DeaHut97}
A.~M. Dean and J.~P. Hutchinson.
\newblock Rectangle-visibility representations of bipartite graphs.
\newblock 75(1):9--25, 1997.

\bibitem{DidEadLio2011}
W.~Didimo, P.~Eades, and G.~Liotta.
\newblock Drawing graphs with right angle crossings.
\newblock {\em Theoretical Computer Science}, 412(39):5156--5166, 2011.

\bibitem{DucHVM83}
P.~Duchet, Y.~Hamidoune, M.~Las Vergnas, and H.~Meyniel.
\newblock Representing a planar graph by vertical lines joining different
  levels.
\newblock 46(3):319--321, 1983.

\bibitem{EadHonPoo0910}
P.~Eades, S.{-}H. Hong, and S.{-}H. Poon.
\newblock On rectilinear drawing of graphs.
\newblock In D.~Eppstein and E.~R. Gansner, editors, {\em Graph Drawing, 17th
  International Symposium, {GD} 2009}, volume 5849 of {\em LNCS}, pages
  232--243. Springer, 2010.

\bibitem{EvaLioMon2016}
W.~S. Evans, G.~Liotta, and F.~Montecchiani.
\newblock Simultaneous visibility representations of plane $st$-graphs using
  {L}-shapes.
\newblock {\em Theoretical Computer Science}, 645:100--111, 2016.

\bibitem{FekHouWhi9596}
S.~P. Fekete, M.~E. Houle, and S.~Whitesides.
\newblock New results on a visibility representation of graphs in {3D}.
\newblock In F.{-}J. Brandenburg, editor, {\em Graph Drawing, Symposium on
  Graph Drawing, {GD} '95}, volume 1027 of {\em LNCS}, pages 234--241.
  Springer, 1996.

\bibitem{Fel2013}
S.~Felsner.
\newblock {\em Rectangle and Square Representations of Planar Graphs}, pages
  213--248.
\newblock Springer New York, 2013.

\bibitem{ForHHKLSWW90}
M.~Formann, T.~Hagerup, J.~Haralambides, M.~Kaufmann, F.~T. Leighton,
  A.~Symvonis, E.~Welzl, and G.~J. Woeginger.
\newblock Drawing graphs in the plane with high resolution.
\newblock In {\em 31st Annual Symposium on Foundations of Computer Science,
  FOCS, Volume {I}}, pages 86--95. {IEEE} Computer Society, 1990.

\bibitem{GarJoh79}
M.~R. Garey and D.~S. Johnson.
\newblock {\em Computers and Intractability}.
\newblock New York: Freeman, 1979.

\bibitem{GauRosWen2016}
E.~Gaub, M.~Rose, and P.~S. Wenger.
\newblock The unit bar visibility number of a graph.
\newblock 20(2):269--297, 2016.

\bibitem{GiaDELMMW2016a}
E.~Di Giacomo, W.~Didimo, W.~S. Evans, G.~Liotta, H.~Meijer, F.~Montecchiani,
  and S.~K. Wismath.
\newblock Ortho-polygon visibility representations of embedded graphs.
\newblock In Y.~Hu and M.~N{\"{o}}llenburg, editors, {\em Graph Drawing and
  Network Visualization - 24th International Symposium, {GD}}, volume 9801 of
  {\em LNCS}, pages 280--294. Springer, 2016.

\bibitem{ManPPT1011}
J.~Ma\v{n}uch, M.~Patterson, S.{-}H. Poon, and C.~Thachuk.
\newblock Complexity of finding non-planar rectilinear drawings of graphs.
\newblock In U.~Brandes and S.~Cornelsen, editors, {\em Graph Drawing - 18th
  International Symposium, {GD} 2010}, volume 6502 of {\em LNCS}, pages
  305--316. Springer, 2011.

\bibitem{Nil69}
N.~J. Nilsson.
\newblock A mobile automaton: An application of artificial intelligence
  techniques.
\newblock In D.~E. Walker and L.~M. Norton, editors, {\em Proceedings of the
  1st International Joint Conference on Artificial Intelligence, IJCAI}, pages
  509--520. William Kaufmann, 1969.

\bibitem{Sch78}
T.~J. Schaefer.
\newblock The complexity of satisfiability problems.
\newblock In {\em Proc. 10th Ann. ACM Symp. Theory of Computing STOC}, pages
  216--226. ACM Press, 1978.

\bibitem{StrWhi2003}
I.~Streinu and S.~Whitesides.
\newblock Rectangle visibility graphs: Characterization, construction, and
  compaction.
\newblock In H.~Alt and M.~Habib, editors, {\em 20th Annual Symposium on
  Theoretical Aspects of Computer Science, {STACS}}, volume 2607 of {\em LNCS},
  pages 26--37. Springer, 2003.

\bibitem{TamTol86}
R.~Tamassia and I.~G. Tollis.
\newblock A unified approach to visibility representations of planar graphs.
\newblock {\em Discrete {\&} Computational Geometry}, 1(4):321--341, 1986.

\bibitem{Wis85}
S.~K. Wismath.
\newblock Characterizing bar line-of-sight graphs.
\newblock In J.~O'Rourke, editor, {\em Proceedings of the First Annual
  Symposium on Computational Geometry}, pages 147--152. {ACM}, 1985.

\end{thebibliography}

\section*{Appendix}

\subsection*{A Minor Modification of the Reduction from~\cite{EadHonPoo0910}}

The reduction from~\cite{EadHonPoo0910}, which proves the recognition problem for rectilinear graphs $\npclass$-hard, transforms a $\threeSat$ instance with $n$ variables and $m$ clauses into a graph of size $\bigo(n \cdot m)$.\footnote{Since rectilinear graphs have maximum degree $4$, we measure their size in the number of vertices.} The main part of this graph is an L-shaped frame of size $\bigo(n + m)$ (containing $n$ connecting ports in its horizontal and $m$ connecting ports in its vertical arm) and, for every variable $x_i$, a tower with $m$ levels. These levels are aligned with the $m$ clause-ports and are connected by edges only if the clause contains this variable or its negation. Consequently, in every variable tower for $x_i$, only those levels matter that correspond to clauses which contain $x_i$ (or $\overline{x_i}$) and the rest can be ignored. In fact, simply removing those superfluous levels result in a reduction that works in the same way, but constructs a graph of size $\bigo(m)$.\par
With this linear reduction from $\threeSat$, we can conclude that the recognition problem cannot be solved in time $2^{\smallo(n)}$, unless ETH fails. On the other hand, as explained at the beginning of Section~\ref{gridCaseSection}, a $2^{\bigo(n)}$ algorithm follows easily by enumerating all possible $\LRDU$-restrictions and then applying the algorithm from~\cite{ManPPT1011} for solving recognition of $\LRDU$-restricted graphs (in time $\bigo(|E| \cdot |V|)$).

\subsection*{Proof of Lemma~\ref{removeVerticesEdgesLemma}}


\begin{proof}
We first prove the first statement. To this end, let $e = \{u, v\}$, where $u$ and $v$ are represented by unit squares $R_u$ and $R_v$ at coordinates $(x_u, y_u)$ and $(x_v, y_v)$, respectively, and, without loss of generality, we assume that $R_u \Vvis R_v$ (note that this implies $x_u = x_v$). We now modify the layout as follows. Every unit square $R$ on a coordinate $(x, y)$ is moved one unit to the right, if $x > x_v$ or $x = x_v$ and $y \leq y_v$ (note that this means that $R_{v}$ is also moved to the right, but $R_{u}$ is not). Obviously, this modification cannot create any new visibilities and the only visibilities that are destroyed are between unit squares $R$ and $R'$ on coordinates $(x_v, y)$ and $(x_v, y')$ with $y > y_v$ and $y' \leq y_v$, but the only unit squares that satisfy this condition are $R_u$ and $R_v$. Consequently, the modified layout represents $(V, E \setminus \{e\})$.\par
In order to show the second statement, we observe that removing $R_v$, the unit square for $v$, from the layout results in a layout for $(V \setminus \{v\}, E \cup E')$, where $E'$ is a set of at most two edges not present in $(V \setminus \{v\}, E)$. These additional edges can successively be deleted as described above, in order to obtain a layout for $(V \setminus \{v\}, E)$. 
\end{proof}

\subsection*{Proof of Lemma~\ref{simpleObsLemma}}


\begin{proof}
In a grid layout, any unit square can see at most $4$ other squares; thus, the maximum degree of $G$ is $4$. \par
Let $u, v \in V$ be represented by unit squares $R_u$ and $R_v$ on coordinates $(x_u, y_u)$ and $(x_v, y_v)$, respectively. If $x_u = x_v$ or $y_u = y_v$, then there is at most one unit square than can see both $R_u$ and $R_v$. If $x_u \neq x_v$ and $y_u \neq y_v$, then there are at most two unit squares that can see both $R_u$ and $R_v$. This implies the second statement.\par
If $R_u$ sees $R_v$, then it is impossible for any unit square to see both $R_u$ and $R_v$, which implies the third statement.
\end{proof}

\subsection*{Proof of Proposition~\ref{nonPlanarLayoutProposition}}


\begin{proof}
The proof shall be illustrated by Figure~\ref{Fig:NonplanarApp}. We first consider the $C_5$ on the vertices 1,2,6,7,8 which requires a visibility layout \visomorphic{} to Figure~\ref{Fig:NonplanarApp}$(b)$. Figures~\ref{Fig:NonplanarApp}$(c)$ to $(g)$ demonstrate attempts to create a layout for this graph with all possibilities to represent the $C_5$ subgraph on vertices  1,2,6,7,8 with the layout from Figure~\ref{Fig:NonplanarApp}$(b)$. Cases $(c)$ and $(d)$ show the only possibility to add the vertices 3 and 4 which leads to a layout where vertex 5 cannot be added with visibility to both 4 and 6. For cases $(e)$ and $(f)$ it is already impossible to add the vertices 3 and 4 such that they build a $C_5$ with vertices 1,2 and 8. The only possible layout is the non-planar Figure~\ref{Fig:NonplanarApp}$(g)$ which, up to \visomorphism{}, is the only unit square grid representation for the graph in Figure~\ref{Fig:NonplanarApp}$(a)$.
\end{proof}

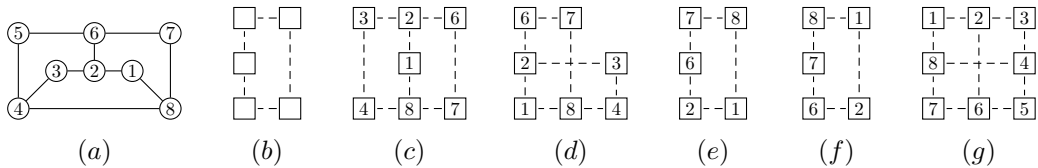
\begin{figure}[h]
\centering
\begin{tabular}{ p{2.5cm} p{1.2cm} p{1.7cm} p{1.7cm} p{1.2cm} p{1.2cm} p{1.7cm}}
\centering
\begin{tikzpicture}[nodes={draw,rectangle},scale=0.5,rotate=270]
\tikzstyle{every node}=[inner sep=0mm,minimum size=4mm,circle,draw]

    \node[scale = 0.7] (1) at (0,0) {$7$};
    \node[scale = 0.7] (2) at (2,0) {$8$};
    \node[scale = 0.7] (3) at (1,-1) {$1$};
    \node[scale = 0.7] (4) at (0,-2) {$6$};
    \node[scale = 0.7] (5) at (1,-2) {$2$};
    \node[scale = 0.7] (6) at (1,-3) {$3$};
    \node[scale = 0.7] (7) at (0,-4) {$5$};
    \node[scale = 0.7] (8) at (2,-4) {$4$};
    
    \draw (1) edge (2);
    \draw (1) edge (4);
    \draw (2) edge (3);
    \draw (2) edge (8);
    \draw (3) edge (5);
    \draw (4) edge (5);
    \draw (4) edge (7);
    \draw (5) edge (6);
    \draw (6) edge (8);
    \draw (7) edge (8);
		
\end{tikzpicture}&
\centering
\begin{tikzpicture}[nodes={draw,rectangle},scale=0.6,rotate=90]
\tikzstyle{every node}=[inner sep=0mm,outer sep=0.5mm,minimum size=4mm,rectangle,draw]
    \node[scale = 0.7] (2) at (0,0) {};
    \node[scale = 0.7] (3) at (1,0) {};
    \node[scale = 0.7] (5) at (2,0) {};
    \node[scale = 0.7] (1) at (0,-1) {};
    \node[scale = 0.7] (4) at (2,-1) {};

    \draw[densely dashed] (1) edge (2);
    \draw[densely dashed] (2) edge (3);
    \draw[densely dashed] (3) edge (5);
    \draw[densely dashed] (4) edge (1);
    \draw[densely dashed] (4) edge (5);

\end{tikzpicture}
&

\centering
\begin{tikzpicture}[nodes={draw,rectangle},scale=0.6,rotate=90]
\tikzstyle{every node}=[inner sep=0mm,outer sep=0.5mm,minimum size=4mm,rectangle,draw]
    \node[scale = 0.7] (1) at (0,-1) {$7$};
    \node[scale = 0.7] (2) at (0,0) {$8$};
    \node[scale = 0.7] (3) at (1,0) {$1$};
    \node[scale = 0.7] (4) at (2,-1) {$6$};
    \node[scale = 0.7] (5) at (2,0) {$2$};
    \node[scale = 0.7] (7) at (0,1)  {$4$};
    \node[scale = 0.7] (8) at (2,1) {$3$};  
    
    \draw[densely dashed] (1) edge (2);
    \draw[densely dashed] (2) edge (3);
    \draw[densely dashed] (3) edge (5);
    \draw[densely dashed] (4) edge (1);
    \draw[densely dashed] (4) edge (5);
    \draw[densely dashed] (5) edge (8);
    \draw[densely dashed] (8) edge (7);
    \draw[densely dashed] (2) edge (7);

\end{tikzpicture}
&
\centering
\begin{tikzpicture}[nodes={draw,rectangle},scale=0.6,rotate=90]
\tikzstyle{every node}=[inner sep=0mm,outer sep=0.5mm,minimum size=4mm,rectangle,draw]
    \node[scale = 0.7] (3) at (0,0) {$1$};
    \node[scale = 0.7] (5) at (1,0) {$2$};
    \node[scale = 0.7] (4) at (2,0) {$6$};
    \node[scale = 0.7] (2) at (0,-1) {$8$};
    \node[scale = 0.7] (1) at (2,-1) {$7$};
    \node[scale = 0.7] (7) at  (0,-2)  {$4$};
    \node[scale = 0.7] (8) at (1,-2) {$3$};    
    
    \draw[densely dashed] (1) edge (2);
    \draw[densely dashed] (2) edge (3);
    \draw[densely dashed] (3) edge (5);
    \draw[densely dashed] (4) edge (1);
    \draw[densely dashed] (4) edge (5);
    \draw[densely dashed] (5) edge (8);
    \draw[densely dashed] (8) edge (7);
    \draw[densely dashed] (2) edge (7);

\end{tikzpicture}

&
\centering
\begin{tikzpicture}[nodes={draw,rectangle},scale=0.6,rotate=90]
\tikzstyle{every node}=[inner sep=0mm,outer sep=0.5mm,minimum size=4mm,rectangle,draw]
    \node[scale = 0.7] (5) at (0,0) {$2$};
    \node[scale = 0.7] (4) at (1,0) {$6$};
    \node[scale = 0.7] (1) at (2,0) {$7$};
    \node[scale = 0.7] (3) at (0,-1) {$1$};
    \node[scale = 0.7] (2) at (2,-1) {$8$};
    
    \draw[densely dashed] (1) edge (2);
    \draw[densely dashed] (2) edge (3);
    \draw[densely dashed] (3) edge (5);
    \draw[densely dashed] (4) edge (1);
    \draw[densely dashed] (4) edge (5);

\end{tikzpicture}
&
\centering
\begin{tikzpicture}[nodes={draw,rectangle},scale=0.6,rotate=90]
\tikzstyle{every node}=[inner sep=0mm,outer sep=0.5mm,minimum size=4mm,rectangle,draw]
    \node[scale = 0.7] (4) at (0,0) {$6$};
    \node[scale = 0.7] (1) at (1,0) {$7$};
    \node[scale = 0.7] (2) at (2,0) {$8$};
    \node[scale = 0.7] (5) at (0,-1) {$2$};
    \node[scale = 0.7] (3) at (2,-1) {$1$};

    \draw[densely dashed] (1) edge (2);
    \draw[densely dashed] (2) edge (3);
    \draw[densely dashed] (3) edge (5);
    \draw[densely dashed] (4) edge (1);
    \draw[densely dashed] (4) edge (5);

\end{tikzpicture}
&
\centering
\begin{tikzpicture}[nodes={draw,rectangle},scale=0.6,rotate=90]
\tikzstyle{every node}=[inner sep=0mm,outer sep=0.5mm,minimum size=4mm,rectangle,draw]
    \node[scale = 0.7] (1) at (0,0) {$7$};
    \node[scale = 0.7] (2) at (1,0) {$8$};
    \node[scale = 0.7] (3) at (2,0) {$1$};
    \node[scale = 0.7] (4) at (0,-1) {$6$};
    \node[scale = 0.7] (5) at (2,-1) {$2$};
    \node[scale = 0.7] (6) at (0,-2) {$5$};
    \node[scale = 0.7] (7) at (1,-2) {$4$}; 
    \node[scale = 0.7] (8) at (2,-2) {$3$};

    \draw[densely dashed] (1) edge (2);
    \draw[densely dashed] (2) edge (3);
    \draw[densely dashed] (3) edge (5);
    \draw[densely dashed] (5) edge (8);
    \draw[densely dashed] (8) edge (7);
    \draw[densely dashed] (7) edge (6);
    \draw[densely dashed] (6) edge (4);
    \draw[densely dashed] (4) edge (1);
    \draw[densely dashed] (2) edge (7);
    \draw[densely dashed] (4) edge (5);
		
\end{tikzpicture}
\cr\\[-0.75cm]
\centering$(a)$&\centering $(b)$&\centering $(c)$&\centering $(d)$&\centering $(e)$&\centering $(f)$&\centering $(g)$
\end{tabular}
\caption{Illustrations for the proof of Proposition~\ref{nonPlanarLayoutProposition}.}
\label{Fig:NonplanarApp} 
\end{figure}

\subsection*{A Unit Square Grid Layout for a Subdivision of $K_5$}

\begin{center}
\begin{tikzpicture}[nodes={draw,rectangle},scale=0.5]
\tikzstyle{every node}=[inner sep=0mm,outer sep=0.5mm,minimum size=4mm,rectangle,draw]
    \node[scale = 0.7] (C) at (0,0) {$C$};
    \node[scale = 0.7] (B) at (-1,0) {$B$};
    \node[scale = 0.7] (A) at (0,1) {$A$};
    \node[scale = 0.7] (D) at (1,0) {$D$};
    \node[scale = 0.7] (E) at (0,-1) {$E$};
    \node[scale = 0.7] (1) at (0,3) {$1$};
    \node[scale = 0.7] (2) at (3,3) {$2$};
    \node[scale = 0.7] (3) at (-2,2) {$3$};
    \node[scale = 0.7] (4) at (2,2) {$4$};
    \node[scale = 0.7] (5) at (-1,1) {$5$};
    \node[scale = 0.7] (6) at (1,1) {$6$};
    \node[scale = 0.7] (7) at (-2,0) {$7$};
    \node[scale = 0.7] (8) at (2,0) {$8$};
    \node[scale = 0.7] (9) at (-1,-1) {$9$};
    \node[scale = 0.7] (10) at (1,-1) {$10$};
    \node[scale = 0.7] (11) at (0,-2) {$11$};
    \node[scale = 0.7] (12) at (3,-2) {$12$};

    \draw[densely dashed] (A) edge (1);
    \draw[densely dashed] (A) edge (5);
    \draw[densely dashed] (A) edge (6);
    \draw[densely dashed] (A) edge (C);
    \draw[densely dashed] (B) edge (5);
    \draw[densely dashed] (B) edge (7);
    \draw[densely dashed] (B) edge (9);
	\draw[densely dashed] (B) edge (C);
    \draw[densely dashed] (C) edge (D);
    \draw[densely dashed] (C) edge (E);
    \draw[densely dashed] (D) edge (6);
    \draw[densely dashed] (D) edge (8); 
    \draw[densely dashed] (D) edge (10);
    \draw[densely dashed] (E) edge (9);
    \draw[densely dashed] (E) edge (10);
    \draw[densely dashed] (E) edge (11);
    \draw[densely dashed] (1) edge (2);
    \draw[densely dashed] (3) edge (4);
    \draw[densely dashed] (3) edge (7);
    \draw[densely dashed] (4) edge (8);
    \draw[densely dashed] (11) edge (12);
    \draw[densely dashed] (2) edge (12);
        
\end{tikzpicture}

\end{center}

\subsection*{Proof of Theorem~\ref{noCharacterisationGridTheorem}}


\begin{proof}
Consider the family of graphs $\{G_n \mid n \geq 3\}$, where $G_n=(V_n,E_n)$ with 
\begin{equation*}
V_n=\{u_1,\dots,u_n\}\cup\{v_2,\dots,v_n\}\cup\{w\}\,,
\end{equation*}
and 
\begin{align*}
E_n =\: &\{\{u_i, u_{i+1}\}, \{v_i, v_{i+1}\}, \{u_i, v_i\} \mid 2 \leq i \leq n-1\} \cup\\ &\{\{u_1, u_2\}, \{u_n, v_n\}, \{u_1, w\}, \{v_n, w\}\}\,.
\end{align*}
We note that, for every $n \geq 3$, a grid layout for $G_n - w$ (the graph created from $G_n$ by deleting the vertex $w$ and its incident edges) can be constructed by placing the unit squares for the vertices $u_i$, $1 \leq i \leq n$, on a horizontal line in this order and the unit squares for the vertices $v_i$, $2 \leq i \leq n$, on a parallel horizontal line in this order, such that, for every $i$, $2 \leq i \leq n$, the horizontal components of the coordinates of the unit squares for $u_i$ and $v_i$ correspond. Furthermore, every grid layout for $G_n - w$ has either this structure or places the unit squares analogously on two parallel vertical lines. This consideration not only shows that $G_n - w \in \unitSquareGridGraphs$, but also demonstrates that $G_n \notin \unitSquareGridGraphs$, since it is impossible for a unit square to see both the unit squares for $u_1$ and $v_n$. In the following, we observe that, for every $x \in V_n \setminus \{w\}$, $G_n - x \in \unitSquareGridGraphs$. For $x \in \{u_1, v_2, v_n, u_n\}$, this property can be easily verified. For $x = u_i$, $2 \leq i \leq n-1$, we can construct a grid layout by rotating the part representing vertices $\{u_1, \ldots, u_{i-1}, v_2, \ldots, v_{i-1}\}$ by ninety degrees, and an analogous construction applies in the case $x = v_i$, $3 \leq i \leq n-1$.\par
By Lemma~\ref{removeVerticesEdgesLemma}, it follows that, for every $n \geq 3$, every proper subgraph of $G_n$ is in $\unitSquareGridGraphs$, while $G_n \notin \unitSquareGridGraphs$. Consequently, it is not possible to characterise $\unitSquareGridGraphs$ by a finite number of forbidden induced subgraphs.

\end{proof}

\subsection*{Proof of Theorem~\ref{thm-sgvg-tree}}


\begin{proof}
The \emph{only if} direction follows from Lemma~\ref{simpleObsLemma}. 
To prove the \emph{if} direction, let $T \in \unitSquareGridGraphs$ be a tree with a vertex $v$ of degree at most $3$. In order to append a new vertex to $v$, we can place a new unit square $R$ within visibility of $R_v$, the unit square for $v$, without destroying any visibilities. Possible new visibilities between $R$ and other unit squares can be removed due to Lemma~\ref{removeVerticesEdgesLemma}. The statement of the lemma follows by induction.
\end{proof}



\subsection*{Proof of Lemma~\ref{areaMinimisationLemma}}


First, we recall the reduction from page~\pageref{areaMinimisationReductionPageRef}. Let $B \in \mathbb{N}$ and $A = \{a_1, a_2, \ldots, a_{3m}\} \subseteq \mathbb{N}$ be a $\threepartition$ instance. Moreover, by simple scaling, we can assume that $a_i > 2$, $1 \leq i \leq 3m$. We construct a \emph{frame graph} $G_f = (V_f, E_f)$ with:
\begin{align*}
V_f = &\:\{u_{i, j}, v_{i, j}, w_{i, 1}, w_{i, 2} \mid 1 \leq i \leq m, 0 \leq j \leq B\} \cup \{u_{m + 1, 0}, v_{m + 1, 0}, w_{m+1, 1}, w_{m + 1, 2}\},\\
E_f = &\:\left\{\{u_{i,j}, u_{i,j+1}\}, \{v_{i,j}, v_{i,j+1}\} \mid 1 \leq i \leq m, 0 \leq j \leq B-1\right\} \cup{} \\
&\left\{\{u_{i, B}, u_{i+1, 0}\}, \{v_{i, B}, v_{i+1, 0}\} \mid 1 \leq i \leq m\right\} \cup{}\\ 
&\left\{\{u_{i,j}, v_{i,j}\}\mid 1 \leq i \leq m, 1 \leq j \leq B\right\} \cup{}\\
&\left\{\{u_{i,0}, v_{i,0}\}, \{v_{i,0}, w_{i,1}\}, \{w_{i,1}, w_{i,2}\} \mid 1 \leq i \leq m+1\right\}\,.
\end{align*}
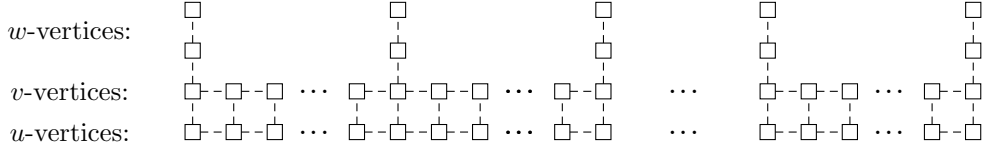
\begin{figure}

\centering
\begin{tikzpicture}[nodes={draw,rectangle},scale=0.6]
\tikzstyle{every node}=[inner sep=0mm,outer sep=0.0mm,minimum size=2mm,rectangle,draw]
     
     \coordinate (xshift) at (0.9,0);
     \coordinate (yshift) at (0,0.9);

     
    \node (u10) at (0,0){};
    \node (u11) at ($(u10) + (xshift)$){};
    \node (u12) at ($(u11) + (xshift)$){};
    \node (u1B) at ($(u12) + 2*(xshift)$){};
       
    \node (v10) at ($(0,0) + (yshift)$){};
    \node (v11) at ($(v10) + (xshift)$){};
    \node (v12) at ($(v11) + (xshift)$){};
    \node (v1B) at ($(v12) + 2*(xshift)$){};
    
    \node (w11) at ($(v10) + (yshift)$){};
    \node (w12) at ($(w11) + (yshift)$){};
    
    \node[draw=none, scale=0.8] (dots10) at ($(u12) + 1*(xshift)$){\bf \dots};
    \node[draw=none, scale=0.8] (dots11) at ($(v12) + 1*(xshift)$){\bf \dots};
    
    \draw[densely dashed] (u10) edge (v10);
    \draw[densely dashed] (u11) edge (v11);
    \draw[densely dashed] (u12) edge (v12);
    \draw[densely dashed] (u1B) edge (v1B);
    
    \draw[densely dashed] (u10) edge (u11);
    \draw[densely dashed] (u11) edge (u12);
    
    \draw[densely dashed] (v10) edge (v11);
    \draw[densely dashed] (v11) edge (v12);
    
    \draw[densely dashed] (v10) edge (w11);
    \draw[densely dashed] (w11) edge (w12);

     
    \node (u20) at ($(u1B) + (xshift)$){};
    \node (u21) at ($(u20) + (xshift)$){};
    \node (u22) at ($(u21) + (xshift)$){};
    \node (u2B) at ($(u22) + 2*(xshift)$){};
       
    \node (v20) at ($(u20) + (yshift)$){};
    \node (v21) at ($(v20) + (xshift)$){};
    \node (v22) at ($(v21) + (xshift)$){};
    \node (v2B) at ($(v22) + 2*(xshift)$){};
    
    \node (w21) at ($(v20) + (yshift)$){};
    \node (w22) at ($(w21) + (yshift)$){};
    
    \node (u30) at ($(u2B) + (xshift)$){};
    \node (v30) at ($(u30) + (yshift)$){};
    \node (w31) at ($(v30) + (yshift)$){};
    \node (w32) at ($(w31) + (yshift)$){};
    
    \node[draw=none, scale=0.8] (dots20) at ($(u22) + 1*(xshift)$){\bf \dots};
    \node[draw=none, scale=0.8] (dots21) at ($(v22) + 1*(xshift)$){\bf \dots};
    
    \draw[densely dashed] (u1B) edge (u20);
    \draw[densely dashed] (v1B) edge (v20);
    
    \draw[densely dashed] (u20) edge (v20);
    \draw[densely dashed] (u21) edge (v21);
    \draw[densely dashed] (u22) edge (v22);
    \draw[densely dashed] (u2B) edge (v2B);
    
    \draw[densely dashed] (u20) edge (u21);
    \draw[densely dashed] (u21) edge (u22);
    
    \draw[densely dashed] (v20) edge (v21);
    \draw[densely dashed] (v21) edge (v22);
    
    \draw[densely dashed] (v20) edge (w21);
    \draw[densely dashed] (w21) edge (w22);
    
    \draw[densely dashed] (u2B) edge (u30);
    \draw[densely dashed] (v2B) edge (v30);
    
    \draw[densely dashed] (u30) edge (v30);
    \draw[densely dashed] (v30) edge (w31);
    \draw[densely dashed] (w31) edge (w32);
    
    
    \node[draw=none, scale=0.8] (maindots0) at ($(u30) + 2*(xshift)$){\bf \dots};
    \node[draw=none, scale=0.8] (maindots1) at ($(v30) + 2*(xshift)$){\bf \dots};
    
    
    \node (um0) at ($(maindots0) + 2*(xshift)$){};
    \node (um1) at ($(um0) + (xshift)$){};
    \node (um2) at ($(um1) + (xshift)$){};
    \node (umB) at ($(um2) + 2*(xshift)$){};
       
    \node (vm0) at ($(um0) + (yshift)$){};
    \node (vm1) at ($(vm0) + (xshift)$){};
    \node (vm2) at ($(vm1) + (xshift)$){};
    \node (vmB) at ($(vm2) + 2*(xshift)$){};
    
    \node (wm1) at ($(vm0) + (yshift)$){};
    \node (wm2) at ($(wm1) + (yshift)$){};
    
    \node[draw=none, scale=0.8] (dots20) at ($(u22) + 1*(xshift)$){\bf \dots};
    \node[draw=none, scale=0.8] (dots21) at ($(v22) + 1*(xshift)$){\bf \dots};
    
    
    \draw[densely dashed] (um0) edge (vm0);
    \draw[densely dashed] (um1) edge (vm1);
    \draw[densely dashed] (um2) edge (vm2);
    \draw[densely dashed] (umB) edge (vmB);
    
    \draw[densely dashed] (um0) edge (um1);
    \draw[densely dashed] (um1) edge (um2);
    
    \draw[densely dashed] (vm0) edge (vm1);
    \draw[densely dashed] (vm1) edge (vm2);
    
    \draw[densely dashed] (vm0) edge (wm1);
    \draw[densely dashed] (wm1) edge (wm2);
    
    
    \node (umplus1) at ($(umB) + (xshift)$){};
    \node (vmplus1) at ($(vmB) + (xshift)$){};
    
    \node (wmplus11) at ($(vmplus1) + (yshift)$){};
    \node (wmplus12) at ($(wmplus11) + (yshift)$){};
    
    \node[draw=none, scale=0.8] (dotsm0) at ($(um2) + 1*(xshift)$){\bf \dots};
    \node[draw=none, scale=0.8] (dotsm1) at ($(vm2) + 1*(xshift)$){\bf \dots};
    
    \draw[densely dashed] (umB) edge (umplus1);
    \draw[densely dashed] (vmB) edge (vmplus1);
    
    \draw[densely dashed] (umplus1) edge (vmplus1);
    \draw[densely dashed] (vmplus1) edge (wmplus11);
    \draw[densely dashed] (wmplus11) edge (wmplus12);
    
    
    \node[draw=none, scale=1] (uvertices) at ($(u10) - 3*(xshift)$){$u$-vertices:};
    \node[draw=none, scale=1] (vvertices) at ($(v10) - 3*(xshift)$){$v$-vertices:};
    \node[draw=none, scale=1] (wvertices) at ($(w11) - 3*(xshift) + 0.5*(yshift)$){$w$-vertices:};


\end{tikzpicture}
\caption{Unit square grid layout for the graph $G_f$.}
\label{Fig:frameGraphAppendix}
\end{figure}

Next, we define a graph $G_A = (V_A, E_A)$ with 
\begin{align*}
V_A = &\:\bigcup^{3m}_{i = 1} \{b_{i, j}, c_{i, j} \mid 1 \leq j \leq a_i\}\,,\\
E_A = &\:\{\{b_{i, j}, b_{i, j+1}\}, \{c_{i, j}, c_{i, j+1}\} \mid 1 \leq i \leq 3m, 1 \leq j \leq a_i-1\} \cup{}\\
&\{\{b_{i, j}, c_{i, j}\} \mid 1 \leq i \leq 3m, 1 \leq j \leq a_i\}\,. 
\end{align*}
Finally, we let $G = (V, E)$ with $V = V_f \cup V_A$ and $E = E_f \cup E_A$.\par
We now give the proof of Lemma~\ref{areaMinimisationLemma}.

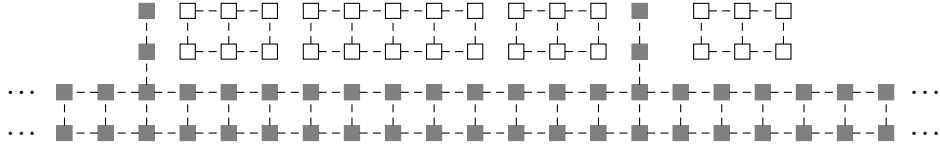
\begin{figure}

\centering
\begin{tikzpicture}[nodes={draw,rectangle},scale=0.6]
\tikzstyle{every node}=[inner sep=0mm,outer sep=0.0mm,minimum size=2mm,rectangle,draw]
     
     \coordinate (xshift) at (0.9,0);
     \coordinate (yshift) at (0,0.9);
     \coordinate (start) at (0,0);
     
     
    \node[fill, color=gray] (u20) at (start){};
    \node[fill, color=gray] (u21) at ($(u20) + (xshift)$){};
    \node[fill, color=gray] (u22) at ($(u21) + (xshift)$){};
    \node[fill, color=gray] (u23) at ($(u22) + (xshift)$){};
    \node[fill, color=gray] (u24) at ($(u23) + (xshift)$){};
    \node[fill, color=gray] (u25) at ($(u24) + (xshift)$){};
    \node[fill, color=gray] (u26) at ($(u25) + (xshift)$){};
    \node[fill, color=gray] (u27) at ($(u26) + (xshift)$){};
    \node[fill, color=gray] (u28) at ($(u27) + (xshift)$){};
    \node[fill, color=gray] (u29) at ($(u28) + (xshift)$){};
    \node[fill, color=gray] (u30) at ($(u29) + (xshift)$){};
    \node[fill, color=gray] (u2B) at ($(u30) + (xshift)$){};

    \node[fill, color=gray] (un) at ($(u2B) + (xshift)$){};
      
    \node[fill, color=gray] (v20) at ($(start) + (yshift)$){};
    \node[fill, color=gray] (v21) at ($(v20) + (xshift)$){};
    \node[fill, color=gray] (v22) at ($(v21) + (xshift)$){};
    \node[fill, color=gray] (v23) at ($(v22) + (xshift)$){};
    \node[fill, color=gray] (v24) at ($(v23) + (xshift)$){};
    \node[fill, color=gray] (v25) at ($(v24) + (xshift)$){};
    \node[fill, color=gray] (v26) at ($(v25) + (xshift)$){};
    \node[fill, color=gray] (v27) at ($(v26) + (xshift)$){};
    \node[fill, color=gray] (v28) at ($(v27) + (xshift)$){};
    \node[fill, color=gray] (v29) at ($(v28) + (xshift)$){};
    \node[fill, color=gray] (v30) at ($(v29) + (xshift)$){};
    \node[fill, color=gray] (v2B) at ($(v30) + (xshift)$){};

    \node[fill, color=gray] (vn) at ($(v2B) + (xshift)$){};
    
    \node[fill, color=gray] (w21) at ($(v20) + (yshift)$){};
    \node[fill, color=gray] (w22) at ($(w21) + (yshift)$){};
    
    \node[fill, color=gray] (w31) at ($(vn) + (yshift)$){};
    \node[fill, color=gray] (w32) at ($(w31) + (yshift)$){};

    \draw[densely dashed] (u20) edge (v20);
    \draw[densely dashed] (u21) edge (v21);
    \draw[densely dashed] (u22) edge (v22);
    \draw[densely dashed] (u23) edge (v23);
    \draw[densely dashed] (u24) edge (v24);
    \draw[densely dashed] (u25) edge (v25);
    \draw[densely dashed] (u26) edge (v26);
    \draw[densely dashed] (u27) edge (v27);
    \draw[densely dashed] (u28) edge (v28);
    \draw[densely dashed] (u29) edge (v29);
    \draw[densely dashed] (u30) edge (v30);
    \draw[densely dashed] (u2B) edge (v2B);
    \draw[densely dashed] (un) edge (vn);

    \draw[densely dashed] (u20) edge (u21);
    \draw[densely dashed] (u21) edge (u22);
    \draw[densely dashed] (u22) edge (u23);
    \draw[densely dashed] (u23) edge (u24);
    \draw[densely dashed] (u24) edge (u25);
    \draw[densely dashed] (u25) edge (u26);
    \draw[densely dashed] (u26) edge (u27);
    \draw[densely dashed] (u27) edge (u28);
    \draw[densely dashed] (u28) edge (u29);
    \draw[densely dashed] (u29) edge (u30);
    \draw[densely dashed] (u30) edge (u2B);
    \draw[densely dashed] (u2B) edge (un);

    \draw[densely dashed] (v20) edge (v21);
    \draw[densely dashed] (v21) edge (v22);
    \draw[densely dashed] (v22) edge (v23);
    \draw[densely dashed] (v23) edge (v24);
    \draw[densely dashed] (v24) edge (v25);
    \draw[densely dashed] (v25) edge (v26);
    \draw[densely dashed] (v26) edge (v27);
    \draw[densely dashed] (v27) edge (v28);
    \draw[densely dashed] (v28) edge (v29);
    \draw[densely dashed] (v29) edge (v30);
    \draw[densely dashed] (v30) edge (v2B);
    \draw[densely dashed] (v2B) edge (vn);

    \draw[densely dashed] (v20) edge (w21);
    \draw[densely dashed] (w21) edge (w22);

    \draw[densely dashed] (vn) edge (w31);
    \draw[densely dashed] (w31) edge (w32);


\node[fill, color=gray] (uu0) at ($(un) + (xshift)$){};
\node[fill, color=gray] (uu1) at ($(uu0) + (xshift)$){};
\node[fill, color=gray] (uu2) at ($(uu1) + (xshift)$){};
\node[fill, color=gray] (uu3) at ($(uu2) + (xshift)$){};
\node[fill, color=gray] (uu4) at ($(uu3) + (xshift)$){};
\node[fill, color=gray] (uu5) at ($(uu4) + (xshift)$){};

\node[fill, color=gray] (vv0) at ($(uu0) + (yshift)$){};
\node[fill, color=gray] (vv1) at ($(vv0) + (xshift)$){};
\node[fill, color=gray] (vv2) at ($(vv1) + (xshift)$){};
\node[fill, color=gray] (vv3) at ($(vv2) + (xshift)$){};
\node[fill, color=gray] (vv4) at ($(vv3) + (xshift)$){};
\node[fill, color=gray] (vv5) at ($(vv4) + (xshift)$){};

\draw[densely dashed] (un) edge (uu0);
\draw[densely dashed] (uu0) edge (uu1);
\draw[densely dashed] (uu1) edge (uu2);
\draw[densely dashed] (uu2) edge (uu3);
\draw[densely dashed] (uu3) edge (uu4);
\draw[densely dashed] (uu4) edge (uu5);

\draw[densely dashed] (vn) edge (vv0);
\draw[densely dashed] (vv0) edge (vv1);
\draw[densely dashed] (vv1) edge (vv2);
\draw[densely dashed] (vv2) edge (vv3);
\draw[densely dashed] (vv3) edge (vv4);
\draw[densely dashed] (vv4) edge (vv5);

\draw[densely dashed] (uu0) edge (vv0);
\draw[densely dashed] (uu1) edge (vv1);
\draw[densely dashed] (uu2) edge (vv2);
\draw[densely dashed] (uu3) edge (vv3);
\draw[densely dashed] (uu4) edge (vv4);
\draw[densely dashed] (uu5) edge (vv5);

\node[fill, color=gray] (ul0) at ($(u20) - (xshift)$){};
\node[fill, color=gray] (ul1) at ($(ul0) - (xshift)$){};

\node[fill, color=gray] (vl0) at ($(ul0) + (yshift)$){};
\node[fill, color=gray] (vl1) at ($(vl0) - (xshift)$){};

\draw[densely dashed] (u20) edge (ul0);
\draw[densely dashed] (ul0) edge (ul1);

\draw[densely dashed] (v20) edge (vl0);
\draw[densely dashed] (vl0) edge (vl1);

\draw[densely dashed] (ul0) edge (vl0);
\draw[densely dashed] (ul1) edge (vl1);


	\node (b11) at ($(v21) + (yshift)$){};
	\node (b12) at ($(b11) + (xshift)$){};
	\node (b13) at ($(b12) + (xshift)$){};
	\node (c11) at ($(b11) + (yshift)$){};
	\node (c12) at ($(b12) + (yshift)$){};
	\node (c13) at ($(b13) + (yshift)$){};

	\node (b21) at ($(b13) + (xshift)$){};
	\node (b22) at ($(b21) + (xshift)$){};
	\node (b23) at ($(b22) + (xshift)$){};
	\node (b24) at ($(b23) + (xshift)$){};
	\node (b25) at ($(b24) + (xshift)$){};
	\node (c21) at ($(b21) + (yshift)$){};
	\node (c22) at ($(b22) + (yshift)$){};
	\node (c23) at ($(b23) + (yshift)$){};
	\node (c24) at ($(b24) + (yshift)$){};
	\node (c25) at ($(b25) + (yshift)$){};

	\node (b31) at ($(b25) + (xshift)$){};
	\node (b32) at ($(b31) + (xshift)$){};
	\node (b33) at ($(b32) + (xshift)$){};
	\node (c31) at ($(b31) + (yshift)$){};
	\node (c32) at ($(b32) + (yshift)$){};
	\node (c33) at ($(b33) + (yshift)$){};

	\draw[densely dashed] (b11) edge (b12);
	\draw[densely dashed] (b12) edge (b13);
	\draw[densely dashed] (c11) edge (c12);
	\draw[densely dashed] (c12) edge (c13);
	\draw[densely dashed] (b11) edge (c11);
	\draw[densely dashed] (b12) edge (c12);
	\draw[densely dashed] (b13) edge (c13);

	\draw[densely dashed] (b21) edge (b22);
	\draw[densely dashed] (b22) edge (b23);
	\draw[densely dashed] (b23) edge (b24);
	\draw[densely dashed] (b24) edge (b25);
	\draw[densely dashed] (c21) edge (c22);
	\draw[densely dashed] (c22) edge (c23);
	\draw[densely dashed] (c23) edge (c24);	
	\draw[densely dashed] (c24) edge (c25);
	\draw[densely dashed] (b21) edge (c21);
	\draw[densely dashed] (b22) edge (c22);
	\draw[densely dashed] (b23) edge (c23);
	\draw[densely dashed] (b24) edge (c24);
	\draw[densely dashed] (b25) edge (c25);

	\draw[densely dashed] (b31) edge (b32);
	\draw[densely dashed] (b32) edge (b33);
	\draw[densely dashed] (c31) edge (c32);
	\draw[densely dashed] (c32) edge (c33);
	\draw[densely dashed] (b31) edge (c31);
	\draw[densely dashed] (b32) edge (c32);
	\draw[densely dashed] (b33) edge (c33);


	\node (bm11) at ($(vv0) + (yshift) + 0.5*(xshift)$){};
	\node (bm12) at ($(bm11) + (xshift)$){};
	\node (bm13) at ($(bm12) + (xshift)$){};
	\node (cm11) at ($(bm11) + (yshift)$){};
	\node (cm12) at ($(bm12) + (yshift)$){};
	\node (cm13) at ($(bm13) + (yshift)$){};

	\draw[densely dashed] (bm11) edge (bm12);
	\draw[densely dashed] (bm12) edge (bm13);
	\draw[densely dashed] (cm11) edge (cm12);
	\draw[densely dashed] (cm12) edge (cm13);
	\draw[densely dashed] (bm11) edge (cm11);
	\draw[densely dashed] (bm12) edge (cm12);
	\draw[densely dashed] (bm13) edge (cm13);


    \node[draw=none, scale=0.8] (dots20) at ($(uu5) + 1*(xshift)$){\bf \dots};
    \node[draw=none, scale=0.8] (dots21) at ($(vv5) + 1*(xshift)$){\bf \dots};

    \node[draw=none, scale=0.8] (dots22) at ($(ul1) - 1*(xshift)$){\bf \dots};
    \node[draw=none, scale=0.8] (dots23) at ($(vl1) - 1*(xshift)$){\bf \dots};

\end{tikzpicture}
\caption{Illustration of how the unit squares for the subgraphs corresponding to $A_i = \{3, 5, 3\}$ can be placed on coordinates $(p_i + 1,\ell), (p_i + 2,\ell), \ldots, (p_i + 2B + 1,\ell)$, $\ell \in \{3, 4, 5, 6\}$, with $p_i = (i-1) \cdot 2(B+1)$ (in the case $B = 11$), and of how the unit squares for a subgraph corresponding to some $a_j = 3$ can be placed ``shifted'' with respect to the frame graph.}
\label{Fig:frameGraphAppendixTwo}
\end{figure}

\begin{proof}
For the sake of convenience, in the following, we denote the vertices $u_{i, j}$, $1 \leq i \leq m, 0 \leq j \leq B$, and $u_{m + 1, 0}$ by \emph{$u$-vertices}, the vertices $v_{i, j}$, $1 \leq i \leq m, 0 \leq j \leq B$, and $v_{m + 1, 0}$ by \emph{$v$-vertices} and the vertices $w_{i, 1}, w_{i, 2}$, $1 \leq i \leq m + 1$, by \emph{$w$-vertices}.\par
We now assume that $A_1, \ldots, A_m$ is a partition of $A$ with $\sum_{a \in A_i} = B$, $1 \leq i \leq m$. We can construct a $(7 \times (2(mB+m+1)-1))$ unit square grid layout for $G_f$, by representing all $u$- and $v$-vertices as a horizontal ``ladder'', as illustrated in Figure~\ref{Fig:frameGraphAppendix}, where vertex $u_{1,0}$ is positioned at coordinate $(0, 0)$. All $w$-vertices can then be placed above their adjacent $v$-vertices (see Figure~\ref{Fig:frameGraphAppendix}). In the thus obtained layout, for every $i$, $1 \leq i \leq m + 1$, the unit squares for $u_{i, 0}$, $v_{i, 0}$, $w_{i, 1}$, $w_{i, 2}$ are positioned at $((i-1) \cdot 2(B+1), 0)$, $((i-1) \cdot 2(B+1), 2)$, $((i-1) \cdot 2(B+1), 4)$, $((i-1) \cdot 2(B+1), 6)$, respectively. Consequently, for every $i$, $1 \leq i \leq m$, and $\ell \in \{3, 4, 5, 6\}$, the coordinates $(p_i + 1,\ell), (p_i + 2,\ell), \ldots, (p_i + 2B + 1,\ell)$, where $p_i = (i-1) \cdot 2(B+1)$, are free and form a $4 \times (2B + 1)$ rectangle (note that, apart from the coordinates ``in between'' the already placed unit squares, these are the only free coordinates). Now let $A_i = \{a_{q_{i, 1}}, a_{q_{i, 2}}, a_{q_{i, 3}}\}$, $1 \leq i \leq m$. Since $a_{q_{i, 1}} + a_{q_{i, 2}} + a_{q_{i, 3}} = B$, the three connected components on vertices $b_{q_{i, r}, s}$ and $c_{q_{i, r}, s}$, $1 \leq r \leq 3$, $1 \leq s \leq a_{q_{i, r}}$, can be placed horizontally on the free coordinates $(p_i + 1,\ell), (p_i + 2,\ell), \ldots, (p_i + 2B + 1,\ell)$ with $p_i = (i-1) \cdot 2(B+1)$ for $\ell = 4$ and $\ell =6$, respectively, as illustrated in Figure~\ref{Fig:frameGraphAppendixTwo}. This constructs a $(7 \times (2(mB+m+1)-1))$ unit square grid layout for $G$.\par 
In order to prove the other direction, we assume that there exists a $(7 \times (2(mB+m+1)-1))$ unit square grid layout for $G$. We first note that, in any such layout, the unit squares for the $u$- and $v$-vertices must be represented as a horizontally or vertically oriented ``ladder'' and the same holds for the subgraphs on vertices $b_{i, j}$ and $c_{i, j}$. Moreover, since the layout has height of only $7$, we can further assume that the orientation for these ladders is indeed horizontal. Due to the fact that the layout has width $2(mB+m+1)-1$, all $mB+m+1$ many $u$ vertices are placed on coordinates $(2x, y_u)$, $0 \leq x \leq mB+m$, all $mB+m+1$ many $v$ vertices are placed on coordinates $(2x, y_v)$, $0 \leq x \leq mB+m$ (otherwise the ladder simply would not fit), and $|y_v - y_u| > 1$. Further, no unit square is placed at $y$-coordinate $y_u-1,y_u+1,y_v-1$ or $y_v+1$, otherwise it would intersect with at least one of the $u$- or $v$-vertices. Without loss of generality, we assume $y_u < y_v$. Since, for every $i$, $1 \leq i \leq m + 1$, the edge $\{v_{i, 0}, w_{i, 1}\}$ must be realised by a visibility of the form $R_{w_{i, 1}} \Vvis R_{v_{i, 0}}$ (note that the other three visibilities of $R_{v_{i, 0}}$ are already used), we conclude that $y_v \leq 4$.  The ladders from $G_A$ require two non-adjacent $y$-coordinates which are not blocked by the $u$- and $v$-vertices, so if  $y_v = 4$, then $y_u=0$ and if $y_v<4$ then $y_v=2$ and $y_u=0$.
If $y_v = 4$, then one side of every horizontal ladder from $G_A$  must be represented by unit squares on coordinates $(2x-1, 6)$, $1 \leq x \leq mB+m$, and in order to represent all such ladders (which represent the subgraphs on vertices $b_{i,j}$ and $c_{i, j}$), $mB$ many of those coordinates are needed. However, for every $i$, $1 \leq i \leq m$, positioning the unit squares for vertices $w_{i, 1}$ and $w_{i, 2}$ blocks at least two of these coordinates (either because one of those unit squares is placed on it or directly next to it). Thus, there are not enough free coordinates to position all these unit squares and therefore we conclude that $y_u = 0$ and $y_v = 2$, which means that the ladder representing the $u$- and $v$-vertices is as illustrated in Figure~\ref{Fig:frameGraphAppendix}. Moreover, for every $i$, $1 \leq i \leq m+1$, we have $R_{w_{i, 1}} \Vvis R_{v_{i, 0}}$ and either $R_{w_{i, 2}} \Vvis R_{w_{i, 1}}$ or $R_{w_{i, 1}} \symHvis R_{w_{i, 2}}$. We assume now that, for every $i$, $1 \leq i \leq m+1$, the former case holds (and consider the latter case later on), which implies that the frame graph is represented as illustrated in Figure~\ref{Fig:frameGraphAppendix}.\par
As mentioned above, for every $i$, $1 \leq i \leq 3m$, the subgraph on vertices $b_{i, j}, c_{i, j}$, $1 \leq j \leq a_i$, is represented by a horizontal ladder, for which the unit squares must be placed on the free coordinates $(p_i + 1,\ell), (p_i + 2,\ell), \ldots, (p_i + 2B + 1,\ell)$, where $p_i = (i-1) \cdot 2(B+1)$ and $\ell \in \{3, 4, 5, 6\}$. This is only possible if the unit squares are aligned with the unit squares of the frame graph (as shown in the middle of the layout of Figure~\ref{Fig:frameGraphAppendixTwo}) or horizontally shifted by one unit (as shown on the left of the layout of Figure~\ref{Fig:frameGraphAppendixTwo}). We now only consider the coordinates $(x, 4)$, $0 \leq x \leq 2(mB+m)$, and note that in total we have to fit $(m + 1) + mB$ unit squares on these coordinates (the unit squares for vertices $w_{i, 1}$ (or an equivalent free space for the visibility $R_{w_{i,1}}\Vvis R{v_{i,0}}$) plus the unit squares for the vertices $b_{i, j}$). We need a gap of one unit in between every two neighbouring unit squares, we need at least $2(m + 1 + mB)-1 = 2(mB + m)+1$ coordinates; observe that if $R_{w_{i,1}}$ is not placed at $y$-coordinate $4$, we still can not place a square for say some $b_{i,j}$ from a $G_A$-ladder at vertical distance one from $R_{w_{i,1}}$, as the square for the corresponding neighbour $c_{i,j}$ would have to be placed right above $R_{b_{i,j}}$ and intersect with $R_{w_{i,1}}$. This implies that for each $1\leq i\leq m$ exactly the coordinates $(2(i-1)(B+1)+2x, 4)$, $1 \leq x \leq B$, are occupied by unit squares, which means that all the horizontal ladders must be aligned with the unit squares of the frame graph. Consequently, partitioning $A$ according to how the ladders are placed in the compartments of the frame graph yields a solution for the $\threepartition$-instance $(B, A)$.\par
The argument for the case that $R_{w_{i, 1}} \symHvis R_{w_{i, 2}}$ is analogous; the only difference is that even more coordinates are already occupied by unit squares. 
\end{proof}

\subsection*{Proof of Proposition~\ref{atMostFourVerticesProposition}}


\begin{proof}
It is straightforward to construct layouts for graphs with at most $3$ vertices (thus, also for graphs with $4$ vertices that are not connected) and for $P_4$, $C_4$ and $K_{1, 3}$. This only leaves $K_4$, for which a layout is presented in Figure~\ref{visLayoutsCompleteBipartiteGraphs}, and the two graphs represented by the following layouts:
\hspace{3em}
\begin{tikzpicture}[scale=0.7]

\coordinate (width) at (0.4,0);
\coordinate (height) at (0,0.4);

\coordinate (1) at (0,0);
\coordinate (2) at ($(1) + 1.25*(width)$);
\coordinate (3) at ($(1) + 1.25*0.5*(width) + 1.25*(height)$);
\coordinate (4) at ($(1) + 2.5*(width)$); 

\coordinate (temp) at (1);
\draw[black] ($(temp)$) -- ($(temp) + (width)$) -- ($(temp) + (width) + (height)$) -- ($(temp) + (height)$) -- ($(temp)$);
\coordinate (temp) at (2);
\draw[black] ($(temp)$) -- ($(temp) + (width)$) -- ($(temp) + (width) + (height)$) -- ($(temp) + (height)$) -- ($(temp)$);
\coordinate (temp) at (3);
\draw[black] ($(temp)$) -- ($(temp) + (width)$) -- ($(temp) + (width) + (height)$) -- ($(temp) + (height)$) -- ($(temp)$);
\coordinate (temp) at (4);
\draw[black] ($(temp)$) -- ($(temp) + (width)$) -- ($(temp) + (width) + (height)$) -- ($(temp) + (height)$) -- ($(temp)$);

\end{tikzpicture}
\hspace{3em}
\begin{tikzpicture}[scale=0.7]

\coordinate (width) at (0.4,0);
\coordinate (height) at (0,0.4);

\coordinate (1) at (0,0);
\coordinate (2) at ($(1) + 1.25*(width)$);
\coordinate (3) at ($(1) + 1.25*0.5*(width) + 1.25*(height)$);
\coordinate (4) at ($(1) + 2.5*(width) + 0.75*(height)$); 

\coordinate (temp) at (1);
\draw[black] ($(temp)$) -- ($(temp) + (width)$) -- ($(temp) + (width) + (height)$) -- ($(temp) + (height)$) -- ($(temp)$);
\coordinate (temp) at (2);
\draw[black] ($(temp)$) -- ($(temp) + (width)$) -- ($(temp) + (width) + (height)$) -- ($(temp) + (height)$) -- ($(temp)$);
\coordinate (temp) at (3);
\draw[black] ($(temp)$) -- ($(temp) + (width)$) -- ($(temp) + (width) + (height)$) -- ($(temp) + (height)$) -- ($(temp)$);
\coordinate (temp) at (4);
\draw[black] ($(temp)$) -- ($(temp) + (width)$) -- ($(temp) + (width) + (height)$) -- ($(temp) + (height)$) -- ($(temp)$);
\end{tikzpicture}
\end{proof}

\subsection*{Proof of Theorem~\ref{nonGridTreeTheorem}}


\begin{proof}
The second statement follows from the fact that for unit square visibility graphs, any vertex of degree at least $7$ lies on a cycle, which has been shown in \cite{DeaEHP2008}. \par
Let $T \in \unitSquareGraphs$ be a tree with a maximum degree of $5$ represented by a layout $\mathcal{R}$. We show that if we append at most $4$ nodes to an arbitrary leaf of $T$, the resulting tree can still be represented by a layout. The first statement of the lemma follows then by induction. Let $v$ be a leaf of $T$ with a parent node $u$ and let $R_v, R_u \in \mathcal{R}$ be the corresponding unit squares. Without loss of generality, we assume that $R_v \Vvis R_u$. Next, we note that that there is no $R \in \mathcal{R}$ with $R \Hvis R_v$, $R_v \Hvis R$ or $R_v \Vvis R$, which, in particular, means that $R_{v}$ can be moved arbitrarily far down without destroying or introducing any visibilities. Consequently, we can assume that the two rectangles of height $0.5$ and infinite width just above and below $R_{v}$ are not intersected by any $R \in \mathcal{R}$. This implies that we can append new vertices $w_i$, $1 \leq i \leq 4$, to $v$ by placing new unit squares $R_{w_i}$, $1 \leq i \leq 4$, as follows:

\begin{tikzpicture} [scale=0.9]

\coordinate (width) at (0.4,0);
\coordinate (height) at (0,0.4);  
\coordinate (labelshift) at (0.2, 0.2);

\coordinate (u) at (0,0);
\coordinate (v) at ($(u) - 1.75*(height)$);
\coordinate (w1) at ($(v) - 1.5*(width)$);
\coordinate (w2) at ($(v) + 2*(width) + 0.5*(height)$);
\coordinate (w3) at ($(v) + 4*(width) - 0.5*(height)$);
\coordinate (w4) at ($(v) - 1.75*(height)$);

\coordinate (temp) at (u);
\draw[black] ($(temp)$) -- ($(temp) + (width)$) -- ($(temp) + (width) + (height)$) -- ($(temp) + (height)$) -- ($(temp)$);
\node[scale=0.7] (label) at ($(temp) + (labelshift)$)  {$u$};
\coordinate (temp) at (v);
\draw[black] ($(temp)$) -- ($(temp) + (width)$) -- ($(temp) + (width) + (height)$) -- ($(temp) + (height)$) -- ($(temp)$);
\node[scale=0.7] (label) at ($(temp) + (labelshift)$)  {$v$};
\coordinate (temp) at (w1);
\draw[black] ($(temp)$) -- ($(temp) + (width)$) -- ($(temp) + (width) + (height)$) -- ($(temp) + (height)$) -- ($(temp)$);
\node[scale=0.7] (label) at ($(temp) + (labelshift)$)  {$w_1$};
\coordinate (temp) at (w2);
\draw[black] ($(temp)$) -- ($(temp) + (width)$) -- ($(temp) + (width) + (height)$) -- ($(temp) + (height)$) -- ($(temp)$);
\node[scale=0.7] (label) at ($(temp) + (labelshift)$)  {$w_2$};
\coordinate (temp) at (w3);
\draw[black] ($(temp)$) -- ($(temp) + (width)$) -- ($(temp) + (width) + (height)$) -- ($(temp) + (height)$) -- ($(temp)$);
\node[scale=0.7] (label) at ($(temp) + (labelshift)$)  {$w_3$};
\coordinate (temp) at (w4);
\draw[black] ($(temp)$) -- ($(temp) + (width)$) -- ($(temp) + (width) + (height)$) -- ($(temp) + (height)$) -- ($(temp)$);
\node[scale=0.7] (label) at ($(temp) + (labelshift)$)  {$w_4$};
 
\end{tikzpicture}

\noindent 
Moreover, the only new edges are between the $w_i$, $1 \leq i \leq 4$, and $v$, and no existing edges are destroyed. Consequently, the obtained layout represents the tree $T'$ that is obtained from $T$ by appending $4$ new nodes to the leaf $v$. In a similar way, we can also append less than $4$ new vertices to $v$.
\end{proof}

\subsection*{Proof of Theorem~\ref{NPMembershipTheorem}}

\begin{proof}
Assuming there exists a $\unitSquareGraphs$ layout for a graph $G$ over $n$ vertices, this layout can obviously be considered to use space reasonably, hence with $x$- and $y$-coordinates within range $0$ to $n$. Further, squares do not have to be shifted arbitrarily: Shifting the $x$-coordinate of a rectangle $R$ with respect to the $x$-coordinate of another rectangle $R'$ by more than zero but less than one is only necessary if $R$ needs to see another rectangle to the same side as $R'$. The number of different shifts of distance strictly between zero and one which are necessary for a layout is hence bounded by the maximum degree of the input graph. In general, this means that if $G\in \unitSquareGraphs$, guessing all possibilities to choose coordinates $(x,y)$ with $x,y\in \{\frac an\mid 0\leq a\leq n^2\}$ for each vertex in $G$ yields at least one layout for $G$. Since checking if a set of coordinates yields a feasible layout for a graph $G$ can be done in polynomial time, this kind of guessing  $n$ coordinates from a set of $(n+1)^4$ possibilities yields $\npclass$-membership for $\recognitionProb(\unitSquareGraphs)$. For $\recognitionProb(\unitSquareGridGraphs)$, the similar arguments apply and it is even sufficient to only guess integer coordinates $(x,y)$ with $0\leq x,y\leq 2n-1$. 
\end{proof}

\newpage

\subsection*{Proof of Lemma~\ref{nonGridReductionEasyDirectionLemma}}

\begin{proof}
Let $F$ be a not-all-equal satisfiable 3-SAT formula with clauses $c_1,\dots,c_m$ over variables $v_1,\dots, v_n$ and let $\phi\colon \{x_1\dots,x_n\}\rightarrow \{0,1\}$ be an according not-all-equal satisfying assignment. The following coordinates yield a $\unitSquareGraphs$ drawing for the corresponding graph $G$ (see Figure~\ref{reductionIllustrationFigure} for an illustration):

For  $j\in \{1,\dots,2m\}$, $h\in\{1,2\}$, $i\in \{1,\dots,n\}$, $r\in\{1,2,3\}$, \begin{center}
\begin{tabular}{c||@{\ }l|@{\ }l}
vertex \ &\ $x$-coordinate \ & \ $y$-coordinate\ \\
\hline
&& \\[-2.3ex]
\hline
&& \\[-2.3ex]
$c_j$&$4j$& $0$ \\
&&  \\[-2.3ex]
\hline
&&\\[-2.3ex]
$c_j^h$&$4j+2$& $2-1.3h$ \\
&&  \\[-2.3ex]
\hline
&& \\[-2.3ex]
$x_i$&$8(m+i)$& $0$  \\
&&  \\[-2.3ex]
\hline
&& \\[-2.3ex]
$x_i^h$&$8(m+i)-6$& $2-1.3h$ \\
&&  \\[-2.3ex]
\hline
&& \\[-2.3ex]
$t_i$&$8(m+i)$& $(-1)^{(1-\phi(x_i))}(5i+2)+1-\frac{2m+4}{2m+8}$\\
&&  \\[-2.3ex]
\hline
&& \\[-2.3ex]
$f_i^h$&$8(m+i)+\frac{3}{2}-h$& $(-1)^{\phi(x_i)}(5i+2h)+1-\frac{2m+4}{2m+8}$\\
&&  \\[-2.3ex]
\hline
&& \\[-2.3ex]
$l_j^r$&$4j+\frac12(\frac{r-k}{|r-k|})$ &$(-1)^{(1-\phi(x_i))}(5i+2)+1-\frac{j+2}{2m+8}$\\
& \multicolumn{2}{l}{} \\[-2ex]
&\multicolumn{2}{c}{ for $y_{j,h}=x_i$, $k=\text{argmax}\{|r-k|\ \mid \phi(l_j^r)=\phi(l_j^k)\}$}\\[1ex]
\hline
&& \\[-2.3ex]
$l_j^r$&$4j+\frac12(\frac{r-k}{|r-k|})$& $(-1)^{\phi(x_i)}(5i+2\lceil \frac{j}{m}\rceil)+1-\frac{j+2}{2m+8}$\\& \multicolumn{2}{l}{} \\[-2ex]&\multicolumn{2}{c}{ for $y_{j,h}=\bar x_i$, $k=\text{argmax}\{|r-k| \ \mid \phi(l_j^r)=\phi(l_j^k)\}$}\\[1ex]
\hline
&& \\[-2.3ex]
 $\overset{_{\rightarrow}}{t_i}$&$-9i$&$(-1)^{(1-\phi(x_i)}(5i+2)$\\
&&  \\[-2.3ex]
\hline
&& \\[-2.3ex]
 $\overset{_{\leftarrow}}{t_i}$& $8(n+1+m)+9i$&$(-1)^{(1-\phi(x_i)}(5i+2) +1$\\
&&  \\[-2.3ex]
\hline
&& \\[-2.3ex]
$\overset{_{\rightarrow}}{f_i^h}$&$-9i-h$& $(-1)^{\phi(x_i)}(5i+2h)$\\
&&  \\[-2.3ex]
\hline
&& \\[-2.3ex]
$\overset{_{\leftarrow}}{f_i^h}$&$8(n+1+m)+9i-h$& $(-1)^{\phi(x_i)}(5i+2h)+1$\\
&&  \\[-2.3ex]
\hline
&& \\[-2.3ex]
$h^0_{t_i}$&$8(m+i)-3$& $(-1)^{(1-\phi(x_i)}(5i+2)+1-\frac{2m+3}{2m+8}$\\
&&  \\[-2.3ex]
\hline
&& \\[-2.3ex]
$h^0_{f_i^h}$&$8(m+i)-2h$& $(-1)^{\phi(x_i)}(5i+2h)+1-\frac{2m+3}{2m+8}$\\
&&  \\[-2.3ex]
\hline
&& \\[-2.3ex]
$h^r_{t_i}$&$-9i+3r$&$(-1)^{(1-\phi(x_i)}(5i+2) +1-\frac{r}{2m+8} $ \\
& \multicolumn{2}{c}{} \\[-2ex]&\multicolumn{2}{c}{for $r\in\{1,2\}$}\\[1ex]
\hline 
&& \\[-2.3ex]
$h^r_{f_i^h}$&$-9i+3r-h$& $(-1)^{\phi(x_i)}(5i+2h)+1-\frac{r}{2m+8}$\\
& \multicolumn{2}{c}{} \\[-2ex]&\multicolumn{2}{c}{for $r\in\{1,2\}$}\\[1ex]
\hline
&& \\[-2.3ex]
$h^r_{t_i}$&$8(n+1+m)+9i+3r-15$&$(-1)^{(1-\phi(x_i)}(5i+2) +\frac{5-r}{2m+8} $\\
& \multicolumn{2}{c}{} \\[-2ex]&\multicolumn{2}{c}{ for $r\in\{3,4\}$}\\[1ex]
 \hline
&& \\[-2.3ex]
$h^r_{f_i^h}$&$8(n+1+m)+9i+3r-h-15$& $(-1)^{\phi(x_i)}(5i+2h)+\frac{5-r}{2m+8}$\\
& \multicolumn{2}{c}{} \\[-2ex]&\multicolumn{2}{c}{for  $r\in\{3,4\}$}\\
\end{tabular}
\end{center}

\end{proof}

\begin{landscape}
\thispagestyle{empty}
\begin{figure}

\hspace{-1.5cm}\input{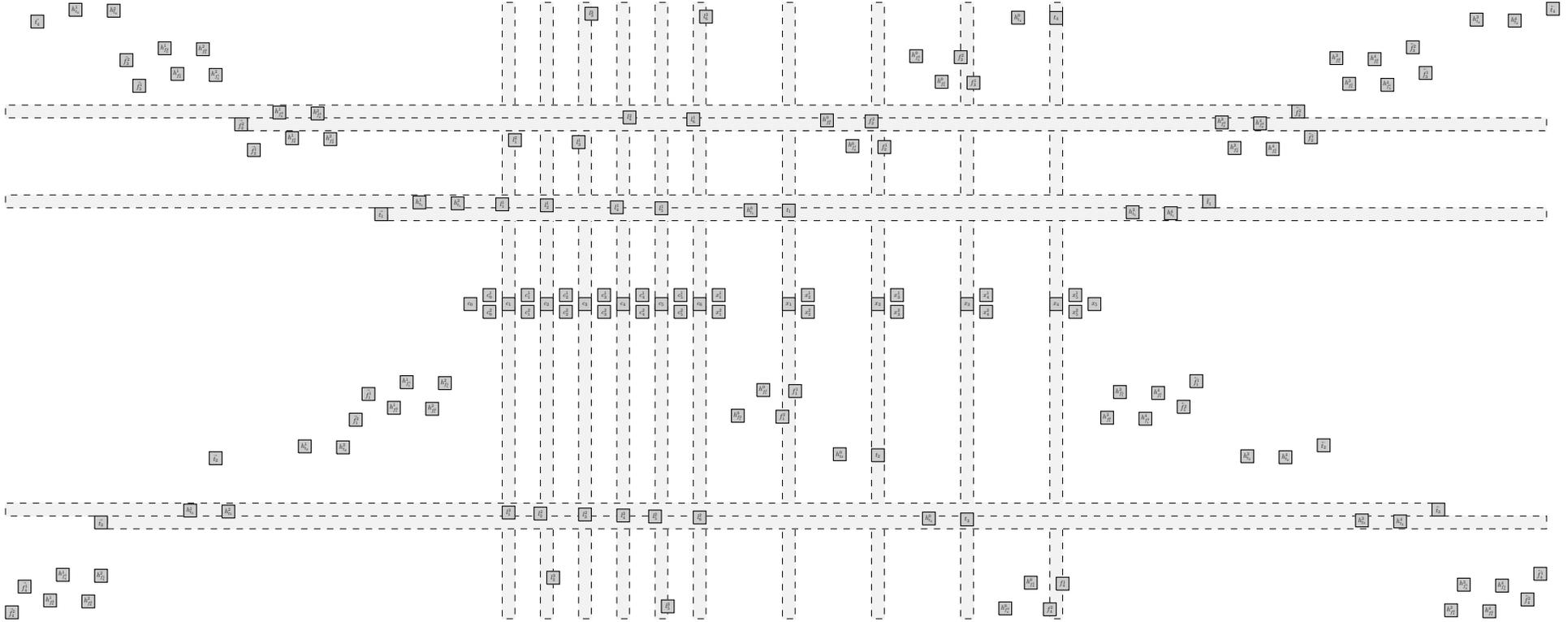}

\caption{An illustration of the visibility layout for the not-all-equal satisfiable formula $\{c_1, c_2, c_3\}$ with $c_1 = \{x_1, \bar x_2, x_3\}, c_2 = \{x_1, x_3, \bar x_4\}, c_3 = \{\bar x_2, x_3, x_4\}$ (note that, since the clauses are copied, we have $2m = 6$ clause gadgets). The vertical visibilities of the unit squares for the clause vertices $c_j$, $1 \leq j \leq 6$, and of the unit squares of the variable vertices $x_i$, $1 \leq i \leq 4$, are highlighted in grey. In addition, the left-to-right horizontal visibilities of the unit squares for the vertices $\overset{_{\rightarrow}}{f_{2}^{2}}$, $\overset{_{\rightarrow}}{t_{2}}$, $\overset{_{\rightarrow}}{t_{3}}$, and the right-to-left horizontal visibilities of the unit squares for the vertices $\overset{_{\leftarrow}}{f_{2}^{2}}$, $\overset{_{\leftarrow}}{t_{2}}$, $\overset{_{\leftarrow}}{t_{3}}$ are highlighted as well (this should also illustrate how the edges between the vertices $\overset{_{\rightarrow}}{t_{i}}, h^1_{t_i}, h^2_{t_i}, l^{q_1}_{j_1}, \ldots, l^{r_q}_{j_q}, h^0_{t_i}, t_i, h^3_{t_i}, h^4_{t_i}, \overset{_{\leftarrow}}{t_{i}}$ (and the corresponding vertex sets including $f^1_{i}$ or $f^2_{i}$) are represented by the layout). The satisfying assignment can be obtained by setting $x_i$ to \emph{true} if and only if $R_{x_i} \Vvis R_{t_i}$, which yields the assignment $x_1 \to 0$, $x_2 \to 1$, $x_3 \to 1$, $x_4 \to 0$.}
\label{reductionIllustrationFigure}
\end{figure}
\end{landscape}

\subsection*{Proof of Lemma~\ref{K4Lemma}}

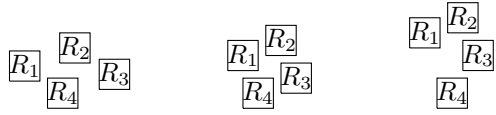
\begin{figure}
\begin{center}

\begin{tikzpicture}

\coordinate (width) at (0.4,0);
\coordinate (height) at (0,0.4);
\coordinate (labelshift) at (0.2, 0.2);

\coordinate (1) at (0,0);
\coordinate (2) at ($(1) + 1.5*(height) + 0.4*(width)$);
\coordinate (3) at ($(1) + 1.25*0.5*(height) + 1.7*(width)$);
\coordinate (4) at ($(1) + 0.9*(height) - 1.25*(width)$);

\coordinate (temp) at (1);
\draw[black] ($(temp)$) -- ($(temp) + (width)$) -- ($(temp) + (width) + (height)$) -- ($(temp) + (height)$) -- ($(temp)$);
\node (label) at ($(temp) + (labelshift)$)  {$R_4$};

\coordinate (temp) at (2);
\draw[black] ($(temp)$) -- ($(temp) + (width)$) -- ($(temp) + (width) + (height)$) -- ($(temp) + (height)$) -- ($(temp)$);
\node (label) at ($(temp) + (labelshift)$)  {$R_2$};

\coordinate (temp) at (3);
\draw[black] ($(temp)$) -- ($(temp) + (width)$) -- ($(temp) + (width) + (height)$) -- ($(temp) + (height)$) -- ($(temp)$);
\node (label) at ($(temp) + (labelshift)$)  {$R_3$};

\coordinate (temp) at (4);
\draw[black] ($(temp)$) -- ($(temp) + (width)$) -- ($(temp) + (width) + (height)$) -- ($(temp) + (height)$) -- ($(temp)$);
\node (label) at ($(temp) + (labelshift)$)  {$R_1$};

\end{tikzpicture}
\hspace{0.8cm}
\begin{tikzpicture}

\coordinate (width) at (0.4,0);
\coordinate (height) at (0,0.4);
\coordinate (labelshift) at (0.2, 0.2);

\coordinate (1) at ($(0,0)$);
\coordinate (2) at ($(1) + 1.25*(width) + 0.5*(height)$);
\coordinate (3) at ($(1) + 0.5*(width) - 1.25*(height)$);
\coordinate (4) at ($(1) + 1.75*(width) - 0.75*(height)$);

\coordinate (temp) at (1);
\draw[black] ($(temp)$) -- ($(temp) + (width)$) -- ($(temp) + (width) + (height)$) -- ($(temp) + (height)$) -- ($(temp)$);
\node (label) at ($(temp) + (labelshift)$)  {$R_1$};

\coordinate (temp) at (2);
\draw[black] ($(temp)$) -- ($(temp) + (width)$) -- ($(temp) + (width) + (height)$) -- ($(temp) + (height)$) -- ($(temp)$);
\node (label) at ($(temp) + (labelshift)$)  {$R_2$};

\coordinate (temp) at (3);
\draw[black] ($(temp)$) -- ($(temp) + (width)$) -- ($(temp) + (width) + (height)$) -- ($(temp) + (height)$) -- ($(temp)$);
\node (label) at ($(temp) + (labelshift)$)  {$R_4$};

\coordinate (temp) at (4);
\draw[black] ($(temp)$) -- ($(temp) + (width)$) -- ($(temp) + (width) + (height)$) -- ($(temp) + (height)$) -- ($(temp)$);
\node (label) at ($(temp) + (labelshift)$)  {$R_3$};

\end{tikzpicture}
\hspace{0.8cm}
\begin{tikzpicture}

\coordinate (width) at (0.4,0);
\coordinate (height) at (0,0.4);
\coordinate (labelshift) at (0.2, 0.2);

\coordinate (1) at ($(0,0)$);
\coordinate (2) at ($(1) + 1.25*(width) + 0.5*(height)$);
\coordinate (3) at ($(1) + 0.9*(width) - 2*(height)$);
\coordinate (4) at ($(1) + 1.75*(width) - 0.75*(height)$);

\coordinate (temp) at (1);
\draw[black] ($(temp)$) -- ($(temp) + (width)$) -- ($(temp) + (width) + (height)$) -- ($(temp) + (height)$) -- ($(temp)$);
\node (label) at ($(temp) + (labelshift)$)  {$R_1$};

\coordinate (temp) at (2);
\draw[black] ($(temp)$) -- ($(temp) + (width)$) -- ($(temp) + (width) + (height)$) -- ($(temp) + (height)$) -- ($(temp)$);
\node (label) at ($(temp) + (labelshift)$)  {$R_2$};

\coordinate (temp) at (3);
\draw[black] ($(temp)$) -- ($(temp) + (width)$) -- ($(temp) + (width) + (height)$) -- ($(temp) + (height)$) -- ($(temp)$);
\node (label) at ($(temp) + (labelshift)$)  {$R_4$};

\coordinate (temp) at (4);
\draw[black] ($(temp)$) -- ($(temp) + (width)$) -- ($(temp) + (width) + (height)$) -- ($(temp) + (height)$) -- ($(temp)$);
\node (label) at ($(temp) + (labelshift)$)  {$R_3$};

\end{tikzpicture}

\caption{The three ways of representing $K_4$ by a layout.}
\label{visLayoutsCompleteBipartiteGraphsFigureApp}
\end{center}
\end{figure}

We note that these three possibilities are uniquely determined by the horizontal and vertical visibilities (up to a renaming of the unit squares), e.\,g., for the first layout of Figure~\ref{visLayoutsCompleteBipartiteGraphsFigureApp}, we have $R_1 \Hvis \{R_2, R_3, R_4\}$, $R_2 \Hvis R_3$, $R_2 \Vvis R_4$, $R_4 \Hvis R_3$. We shall now formally prove that any layout for $K_4$ is \visomorphic{} to one of the three layouts of Figure~\ref{visLayoutsCompleteBipartiteGraphsFigureApp} (note that the three cases of the following lemma correspond to the three layouts of Figure~\ref{visLayoutsCompleteBipartiteGraphsFigureApp}).
To make these cases easier to understand in connection with the previous sentences, we re-worded  Lemma~\ref{K4Lemma} and cast it in the following form.

\begin{lemma}
Every layout for $K_4$ is isomorphic to a layout $\{R_1, R_2, R_3, R_4\}$ that satisfies one of the following cases:
\begin{enumerate}
\item $R_1 \Hvis \{R_2, R_3, R_4\}$, $R_2 \Hvis R_3$, $R_2 \Vvis R_4$, $R_4 \Hvis R_3$,
\item $R_1 \Hvis \{R_2, R_3\}$, $R_1 \Vvis R_4$, $R_2 \Vvis \{R_3, R_4\}$, $R_4 \Hvis R_3$,
\item $R_1 \Hvis \{R_2, R_3\}$, $R_1 \Vvis R_4$, $R_2 \Vvis \{R_3, R_4\}$, $R_3 \Vvis R_4$.
\end{enumerate}
\end{lemma}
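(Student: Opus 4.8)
The plan is to argue purely about the structure of the visibility relations $\Hvis$ and $\Vvis$ on four unit squares that pairwise see each other. First I would record the basic facts about unit squares that will be used repeatedly: (i) for a fixed unit square $R$, among the squares seen by $R$ from a given direction (say, the right), the visibility areas partition horizontally, so at most one other square can be "horizontally aligned" with $R$ in a way that every square seen from the right by $R$ is at a strictly smaller/larger $y$-coordinate than $R$ — more usefully, two squares $R'$ and $R''$ both seen by $R$ from the right cannot both see each other horizontally through the same strip; (ii) if $R \Hvis R'$ then no square can be vertically between them in that strip, etc.; and (iii) crucially, since there are only four squares, the six edges of $K_4$ are each realised by exactly one of $\symHvis$ or $\symVvis$. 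I would then classify by how many of the six visibilities are horizontal. Because a single square has at most two horizontal visibilities (one to the left, one to the right) and at most two vertical ones, and each of the four squares is incident to three edges, a counting argument (sum of horizontal-degrees over the four squares is even and each is $\le 2$ while each vertex has degree $3$, so at least one horizontal and at least one vertical edge at every vertex) pins down the possible horizontal/vertical edge-partitions of $K_4$ up to isomorphism.

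The key combinatorial observation is that $K_4$ has, up to isomorphism, very few ways to 2-colour its edges (H/V) so that every vertex is incident to at most two edges of each colour: either the H-edges form a perfect matching (two disjoint edges) and likewise the V-edges — this is the "$3{+}3$ split into two matchings plus... " — no: $K_4$ has $6$ edges; a partition into H and V with both colour-classes having max degree $2$ forces each class to be either a path $P_3$ (two incident edges, using $3$ vertices) together with an isolated edge, or a $4$-cycle, or a triangle plus isolated vertex, or a $P_4$. Enumerating: the complement of a triangle in $K_4$ is a star $K_{1,3}$, which has a degree-$3$ vertex — forbidden. So no colour class is a triangle. A $4$-cycle has complement a perfect matching; that gives one type. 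A $P_4$ (Hamiltonian path, $3$ edges) has complement also a $P_4$; that gives another type. A $P_3$-plus-edge has complement $P_3$-plus-edge as well; a third type. So there are exactly three isomorphism types of H/V edge-partitions: (a) H $=$ $4$-cycle / V $=$ matching (or vice versa), (b) H $=$ V $=$ Hamiltonian path, (c) H $=$ V $=$ "cherry plus pendant edge". I would check that each of these three types corresponds to exactly one geometric realisability class, matching the three cases in the statement, and that the remaining combinatorial types are geometrically impossible. For the realisable ones, exhibiting the three layouts of Figure~\ref{visLayoutsCompleteBipartiteGraphsFigureApp} already shows each type occurs; I then need to show that within each type the full relational structure $(\mathcal R, \Hvis, \Vvis)$ is determined up to \visomorphism{} (renaming of squares and swapping of axes/directions).

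The main obstacle I anticipate is the rigidity step: showing that the H/V edge-partition together with unit-square geometry forces the \emph{directed} relations (which square sees which from the left, which from below) to be essentially unique. Concretely, in case (a) with H a $4$-cycle $R_1R_2R_3R_4R_1$, I need to argue the $4$-cycle of horizontal visibilities cannot "wrap around" inconsistently — e.g. that one cannot have $R_1\Hvis R_2$, $R_2\Hvis R_3$, $R_3\Hvis R_4$, $R_4\Hvis R_1$ all pointing the same rotational way, because then the $x$-coordinates would have to strictly increase around the cycle, a contradiction; hence two of the arrows point "inward", which together with the two vertical (matching) visibilities $R_1\Vvis R_3$ and $R_2\Vvis R_4$ (say) forces the picture up to relabelling. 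Similar case analyses handle (b) and (c). I expect to lean on two geometric lemmas: if $R\Hvis R'$ and $R\Hvis R''$ then $R'$ and $R''$ occupy disjoint $y$-ranges within the "shadow" of $R$, so they cannot see each other horizontally; and if $R_1\Hvis R_2$ and also $R_2 \Hvis R_3$ with $R_1\Hvis R_3$ too, then $R_2$ would block $R_1$ from $R_3$ unless the $y$-ranges are arranged just so — in fact this configuration is impossible, which is exactly why no colour class can be a $P_4$ with all three edges "collinear". I would organise the write-up as: (1) geometric preliminaries on unit-square visibilities; (2) the degree/counting argument reducing to three combinatorial H/V types; (3) for each type, a short rigidity argument fixing the directed structure; (4) conclude that every $K_4$ layout is \visomorphic{} to one of the three displayed, i.e. satisfies one of the three listed cases. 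The honest expectation is that steps (1) and (3) carry essentially all the work, while (2) is a clean finite check.
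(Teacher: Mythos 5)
Your classification step rests on the claim that each unit square is incident to at most two horizontal and at most two vertical visibilities, ``one to the left, one to the right.'' That is a fact about $\unitSquareGridGraphs$ (Lemma~\ref{simpleObsLemma}), not about $\unitSquareGraphs$: with real coordinates, three pairwise disjoint unit squares can all lie to the right of $R$ and each see $R$ through its own thin horizontal strip (this is exactly how the paper gets unbounded degree, and how $K_{1,6}$ is realised in Figure~\ref{visLayoutsCompleteBipartiteGraphs}$(a)$). The lemma you are proving refutes your premise directly: in case~1 we have $R_1 \Hvis \{R_2,R_3,R_4\}$, so $R_1$ has horizontal degree $3$ and the H/V edge partition is $5{+}1$; in case~3 the square $R_4$ has vertical degree $3$ ($R_1 \Vvis R_4$, $R_2 \Vvis R_4$, $R_3 \Vvis R_4$). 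Only case~2 is a $P_4$/$P_4$ split of the kind your counting argument permits. So your step~(2) excludes two of the three target configurations rather than deriving them. Your auxiliary geometric lemma ``if $R \Hvis R'$ and $R \Hvis R''$ then $R'$ and $R''$ cannot see each other horizontally'' fails for the same reason (case~1 has $R_1 \Hvis R_2$, $R_1 \Hvis R_3$ and $R_2 \Hvis R_3$ simultaneously): the corridors from $R$ to $R'$ and to $R''$ are disjoint strips inside $R$'s $y$-extent, but $R'$ and $R''$ are themselves unit squares whose $y$-extents can still overlap each other.

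There is also an internal inconsistency even granting your (false) degree bound: on four vertices a ``$P_3$ plus an isolated edge'' needs five vertices, and attaching the extra edge to the $P_3$ yields a Hamiltonian path, a triangle, or a star; so your enumeration actually produces only two admissible types (4-cycle/matching and $P_4$/$P_4$), not three, which should have been a warning sign given the lemma lists three cases. The paper's proof avoids all of this by arguing geometrically rather than combinatorially: it observes that some edge of $K_4$ must be realised by a visibility of width strictly less than one, normalises that edge to $R_1 \Hvis R_2$ with a fixed relative vertical offset, and then exhaustively places $R_3$ and $R_4$ by examining which sides of $R_1$ and $R_2$ they can occupy, using blocking arguments to kill the impossible placements. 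If you want to salvage a classification-first approach, you would need to redo the enumeration with per-side visibility counts that can exceed one, at which point the finite check is no longer clean and you are essentially back to the paper's placement analysis.
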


\begin{proof}
It can be easily verified that at least one of the edges of $K_4$ must be represented by a visibility of length strictly less than $1$. Hence, we assume that this is true for the visibility between $R_1$ and $R_2$ and, furthermore, we assume that $R_1 \Hvis R_2$ and that for the $y$-components $y_1$ and $y_2$ of the coordinates of $R_1$ and $R_2$, respectively, we have $y_2 \leq y_1$ (i.\,e., $R_1$ is to the left of $R_2$ and $R_2$ is either horizontally aligned with $R_1$ or further down (see also Figure~\ref{K4LemmaFigureApp})). We now investigate all possibilities of how the remaining unit squares $R_3$ and $R_4$ can be placed in the layout in order to represent $K_4$. 
\begin{itemize}
\item $\{R_3, R_4\} \symVvis \{R_1, R_2\}$: This implies that $R_3$ must be placed above and $R_4$ below $R_1$ and $R_2$, or vice versa (see Figure~\ref{K4LemmaFigureApp}$(a)$), which means that we have case $1$.
\item $\{R_3, R_4\} \symHvis \{R_1, R_2\}$: If $R_3$ and $R_4$ are placed on opposite sides of $R_1$ and $R_2$, then they either cannot see each other or one of them cannot see $R_1$ or $R_2$. If they are placed on the same side of $R_1$ and $R_2$, then at most one of them can see both $R_1$ or $R_2$. Thus, this case is not possible.
\item $\{R_3\} \symHvis \{R_1, R_2\}$ and $\{R_4\} \symVvis \{R_1, R_2\}$ or $\{R_3\} \symVvis \{R_1, R_2\}$ and $\{R_4\} \symHvis \{R_1, R_2\}$: We only consider case $\{R_3\} \symHvis \{R_1, R_2\}$ and $\{R_4\} \symVvis \{R_1, R_2\}$, since the other case is symmetric. If $R_3 \Hvis \{R_1, R_2\}$, then $\{R_1, R_2\} \Vvis R_4$ and $R_3\Hvis R_4$, which means that we have case $3$. Analogously, if $\{R_1, R_2\} \Hvis R_3$, then $R_4 \Vvis \{R_1, R_2\}$ and $R_4\Hvis R_3$, which again means that we have case $3$.
\end{itemize}
Hence, from now on, we can assume that at least one of $R_3$ and $R_4$ is placed such that it sees one of $R_1$ and $R_2$ horizontally and the other one vertically. Without loss of generality, we assume that this is the case for $R_3$, which means that either $R_3 \symHvis R_1$ and $R_3 \symVvis R_2$ or $R_3 \symVvis R_1$ and $R_3 \symHvis R_2$. Moreover, due to the relative positions of $R_1$ and $R_2$, this is only possible if $R_1 \Hvis R_3$ and $R_3 \Vvis R_2$ or $R_1 \Vvis R_3$ and $R_3 \Hvis R_2$. We assume the former situation (as illustrated in Figure~\ref{K4LemmaFigureApp}$(b) - (e)$) and now check all possibilities of how $R_4$ can be placed in the layout in order to represent $K_4$. 
\begin{itemize}
\item $R_4 \symVvis \{R_1, R_2\}$: Since $R_4$ does not vertically fit between $R_1$ and $R_2$, this means that either $\{R_1,R_2\} \Vvis R_4$ or $R_4 \Vvis \{R_1,R_2\}$. The case $\{R_1,R_2\} \Vvis R_4$ implies $R_3\Vvis R_4$ (see Figure~\ref{K4LemmaFigureApp}$(b)$) and thus, we have case $3$.  $R_4 \Vvis \{R_1,R_2\}$ requires that for the $x$-coordinates $x_i$ of $R_i$ we have $x_1<x_4<x_2<x_3$ and hence either $R_4\Vvis R_3$ which also yields case $3$, or $R_4\Hvis R_3$ which yields case $2$.
\item $R_4 \symHvis \{R_1, R_2\}$: Since $R_4$ must see $R_3$, this means that $R_1 \Hvis R_4$ and $R_2 \Hvis R_4$ which means that either either $R_3\Hvis R_4$ (see Figure~\ref{K4LemmaFigureApp}$(c)$) which yields  case 1, or $R_3\Vvis R_4$ (see Figure~\ref{K4LemmaFigureApp}$(d)$) which yields case 3.
\item $R_4 \symHvis R_1$, $R_4 \symVvis R_2$: We note that if $R_4 \Hvis R_1$, then $R_4$ cannot see $R_2$ vertically, which implies $R_1 \Hvis R_4$. In particular, this also implies $R_4 \Vvis R_2$ (see Figure~\ref{K4LemmaFigureApp}$(d)$; $R_3$ and $R_4$ can switch). Consequently, we have case $3$.
\item $R_4 \symHvis R_2$, $R_4 \symVvis R_1$: Similarly to the previous case, if $R_4 \Vvis R_1$, then $R_4$ cannot see $R_2$ horizontally; thus, $R_1 \Vvis R_4$, which, in particular, implies $R_4 \Hvis R_2$ (see Figure~\ref{K4LemmaFigureApp}$(e)$). Consequently, we have case $2$.
\end{itemize}
The case where $R_1 \Vvis R_3$ and $R_3 \Hvis R_2$ is symmetric to the case $R_1 \Hvis R_3$ and $R_3 \Vvis R_2$ considered above. Furthermore, the cases that $y_1 \leq y_2$ or that the visibility between $R_1$ and $R_2$ is vertical can be handled analogously. This completes the proof.

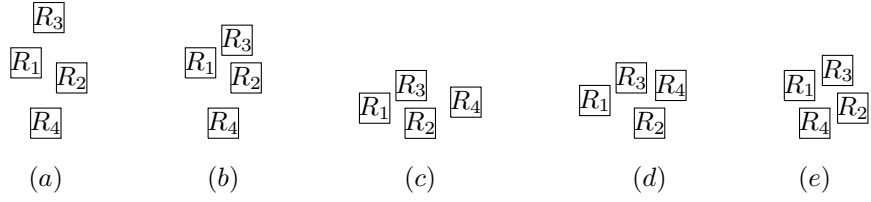
\begin{figure}
\begin{center}

\begin{tikzpicture}

\coordinate (width) at (0.4,0);
\coordinate (height) at (0,0.4);
\coordinate (labelshift) at (0.2, 0.2);

\coordinate (1) at ($(0,0)$);
\coordinate (2) at ($(1) + 1.5*(width) - 0.5*(height)$);
\coordinate (3) at ($(1) + 0.75*(width) + 1.5*(height)$);
\coordinate (4) at ($(1) + 0.65*(width) - 2*(height)$);
\coordinate (5) at ($(4) - (0,0.75)$);

\node (label) at ($(5) + (labelshift)$)  {$(a)$};

\coordinate (temp) at (1);
\draw[black] ($(temp)$) -- ($(temp) + (width)$) -- ($(temp) + (width) + (height)$) -- ($(temp) + (height)$) -- ($(temp)$);
\node (label) at ($(temp) + (labelshift)$)  {$R_1$};

\coordinate (temp) at (2);
\draw[black] ($(temp)$) -- ($(temp) + (width)$) -- ($(temp) + (width) + (height)$) -- ($(temp) + (height)$) -- ($(temp)$);
\node (label) at ($(temp) + (labelshift)$)  {$R_2$};

\coordinate (temp) at (3);
\draw[black] ($(temp)$) -- ($(temp) + (width)$) -- ($(temp) + (width) + (height)$) -- ($(temp) + (height)$) -- ($(temp)$);
\node (label) at ($(temp) + (labelshift)$)  {$R_3$};

\coordinate (temp) at (4);
\draw[black] ($(temp)$) -- ($(temp) + (width)$) -- ($(temp) + (width) + (height)$) -- ($(temp) + (height)$) -- ($(temp)$);
\node (label) at ($(temp) + (labelshift)$)  {$R_4$};

\end{tikzpicture}
\hspace{0.8cm}
\begin{tikzpicture}

\coordinate (width) at (0.4,0);
\coordinate (height) at (0,0.4);
\coordinate (labelshift) at (0.2, 0.2);

\coordinate (1) at ($(0,0)$);
\coordinate (2) at ($(1) + 1.5*(width) - 0.5*(height)$);
\coordinate (3) at ($(1) + 1.2*(width) + 0.75*(height)$);
\coordinate (4) at ($(1) + 0.75*(width) - 2*(height)$);
\coordinate (5) at ($(4) - (0,0.75)$);

\node (label) at ($(5) + (labelshift)$)  {$(b)$};

\coordinate (temp) at (1);
\draw[black] ($(temp)$) -- ($(temp) + (width)$) -- ($(temp) + (width) + (height)$) -- ($(temp) + (height)$) -- ($(temp)$);
\node (label) at ($(temp) + (labelshift)$)  {$R_1$};

\coordinate (temp) at (2);
\draw[black] ($(temp)$) -- ($(temp) + (width)$) -- ($(temp) + (width) + (height)$) -- ($(temp) + (height)$) -- ($(temp)$);
\node (label) at ($(temp) + (labelshift)$)  {$R_2$};

\coordinate (temp) at (3);
\draw[black] ($(temp)$) -- ($(temp) + (width)$) -- ($(temp) + (width) + (height)$) -- ($(temp) + (height)$) -- ($(temp)$);
\node (label) at ($(temp) + (labelshift)$)  {$R_3$};

\coordinate (temp) at (4);
\draw[black] ($(temp)$) -- ($(temp) + (width)$) -- ($(temp) + (width) + (height)$) -- ($(temp) + (height)$) -- ($(temp)$);
\node (label) at ($(temp) + (labelshift)$)  {$R_4$};

\end{tikzpicture}
\hspace{0.8cm}
\begin{tikzpicture}

\coordinate (width) at (0.4,0);
\coordinate (height) at (0,0.4);
\coordinate (labelshift) at (0.2, 0.2);

\coordinate (1) at ($(0,0)$);
\coordinate (2) at ($(1) + 1.5*(width) - 0.5*(height)$);
\coordinate (3) at ($(1) + 1.2*(width) + 0.75*(height)$);
\coordinate (4) at ($(1) + 3*(width) + 0.2*(height)$);
\coordinate (5) at ($(2) - (0,0.75)$);

\node (label) at ($(5) + (labelshift)$)  {$(c)$};

\coordinate (temp) at (1);
\draw[black] ($(temp)$) -- ($(temp) + (width)$) -- ($(temp) + (width) + (height)$) -- ($(temp) + (height)$) -- ($(temp)$);
\node (label) at ($(temp) + (labelshift)$)  {$R_1$};

\coordinate (temp) at (2);
\draw[black] ($(temp)$) -- ($(temp) + (width)$) -- ($(temp) + (width) + (height)$) -- ($(temp) + (height)$) -- ($(temp)$);
\node (label) at ($(temp) + (labelshift)$)  {$R_2$};

\coordinate (temp) at (3);
\draw[black] ($(temp)$) -- ($(temp) + (width)$) -- ($(temp) + (width) + (height)$) -- ($(temp) + (height)$) -- ($(temp)$);
\node (label) at ($(temp) + (labelshift)$)  {$R_3$};

\coordinate (temp) at (4);
\draw[black] ($(temp)$) -- ($(temp) + (width)$) -- ($(temp) + (width) + (height)$) -- ($(temp) + (height)$) -- ($(temp)$);
\node (label) at ($(temp) + (labelshift)$)  {$R_4$};

\end{tikzpicture}
\hspace{0.8cm}
\begin{tikzpicture}

\coordinate (width) at (0.4,0);
\coordinate (height) at (0,0.4);
\coordinate (labelshift) at (0.2, 0.2);

\coordinate (1) at ($(0,0)$);
\coordinate (2) at ($(1) + 1.8*(width) - 0.75*(height)$);
\coordinate (3) at ($(1) + 1.2*(width) + 0.75*(height)$);
\coordinate (4) at ($(1) + 2.5*(width) + 0.5*(height)$);
\coordinate (5) at ($(2) - (0,0.75)$);

\node (label) at ($(5) + (labelshift)$)  {$(d)$};

\coordinate (temp) at (1);
\draw[black] ($(temp)$) -- ($(temp) + (width)$) -- ($(temp) + (width) + (height)$) -- ($(temp) + (height)$) -- ($(temp)$);
\node (label) at ($(temp) + (labelshift)$)  {$R_1$};

\coordinate (temp) at (2);
\draw[black] ($(temp)$) -- ($(temp) + (width)$) -- ($(temp) + (width) + (height)$) -- ($(temp) + (height)$) -- ($(temp)$);
\node (label) at ($(temp) + (labelshift)$)  {$R_2$};

\coordinate (temp) at (3);
\draw[black] ($(temp)$) -- ($(temp) + (width)$) -- ($(temp) + (width) + (height)$) -- ($(temp) + (height)$) -- ($(temp)$);
\node (label) at ($(temp) + (labelshift)$)  {$R_3$};

\coordinate (temp) at (4);
\draw[black] ($(temp)$) -- ($(temp) + (width)$) -- ($(temp) + (width) + (height)$) -- ($(temp) + (height)$) -- ($(temp)$);
\node (label) at ($(temp) + (labelshift)$)  {$R_4$};

\end{tikzpicture}
\hspace{0.8cm}
\begin{tikzpicture}

\coordinate (width) at (0.4,0);
\coordinate (height) at (0,0.4);
\coordinate (labelshift) at (0.2, 0.2);

\coordinate (1) at ($(0,0)$);
\coordinate (2) at ($(1) + 1.75*(width) - 0.75*(height)$);
\coordinate (3) at ($(1) + 1.25*(width) + 0.5*(height)$);
\coordinate (4) at ($(1) + 0.5*(width) - 1.25*(height)$);
\coordinate (5) at ($(4) - (0,0.75)$);

\node (label) at ($(5) + (labelshift)$)  {$(e)$};

\coordinate (temp) at (1);
\draw[black] ($(temp)$) -- ($(temp) + (width)$) -- ($(temp) + (width) + (height)$) -- ($(temp) + (height)$) -- ($(temp)$);
\node (label) at ($(temp) + (labelshift)$)  {$R_1$};

\coordinate (temp) at (2);
\draw[black] ($(temp)$) -- ($(temp) + (width)$) -- ($(temp) + (width) + (height)$) -- ($(temp) + (height)$) -- ($(temp)$);
\node (label) at ($(temp) + (labelshift)$)  {$R_2$};

\coordinate (temp) at (3);
\draw[black] ($(temp)$) -- ($(temp) + (width)$) -- ($(temp) + (width) + (height)$) -- ($(temp) + (height)$) -- ($(temp)$);
\node (label) at ($(temp) + (labelshift)$)  {$R_3$};

\coordinate (temp) at (4);
\draw[black] ($(temp)$) -- ($(temp) + (width)$) -- ($(temp) + (width) + (height)$) -- ($(temp) + (height)$) -- ($(temp)$);
\node (label) at ($(temp) + (labelshift)$)  {$R_4$};

\end{tikzpicture}

\caption{Illustrations for the proof of Lemma~\ref{K4Lemma}.}
\label{K4LemmaFigureApp}
\end{center}
\end{figure}
\end{proof}

\subsection*{Proof of Lemma~\ref{between_all}}

Before we can prove Lemma~\ref{between_all}, we have to make a few more assumptions about the structure of the formula. Without loss of generality (if necessary with additional satisfiable clauses over new variables), we can assume that the clauses $c_1,\dots,c_m$ are ordered such that for each $i$, the indices $prev(y_{i,h}):=\max\{-6,\sup\{j<i\mid c_j \textrm{ contains } y_{i,h} \textrm{ as literal}\}\}$ and $succ(y_{i,h}):=\min\{2m+6,\inf\{j>i\mid c_j \textrm{ contains } y_{i,h} \textrm{ as literal }\}\}$ for $h=1,2,3$ differ from $i$ and from each other for different values of $h$ by at least six.\par
Next, we first have to prove the following lemma:

\begin{lemma}\label{aligned_path} 
For all $1\leq i\leq 2m$ and $r\in\{1,2,3\}$ and  $z\in N(c_i)$, there is no path between $l_i^r$ and $z$ which does not include $c_i$ such that the associated unit squares are vertically aligned and such that there exists no other unit square which is strictly between two unit squares of this path with strictly smaller $x$-coordinate. 
\end{lemma}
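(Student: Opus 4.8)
\textbf{Proof plan for Lemma~\ref{aligned_path}.}

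The plan is to argue by contradiction, assuming that such a vertically aligned path $P$ between $l_i^r$ and $z\in N(c_i)$ exists with no blocking square of strictly smaller $x$-coordinate strictly between two consecutive squares of $P$. Since the unit squares along $P$ are pairwise vertically aligned (all sharing the same $x$-coordinate), and since consecutive squares on $P$ are adjacent in $G$, the visibility between two consecutive path squares must be a vertical visibility of length strictly less than $1$. First I would set up notation: write $P = (R_{l_i^r} = Q_0, Q_1, \ldots, Q_p = R_z)$ and let $x_0$ be their common $x$-coordinate. The condition ``no square strictly between two squares of $P$ with strictly smaller $x$-coordinate'' tells us that the vertical strip of width slightly less than the squares immediately to the left of column $x_0$, between consecutive $Q_k$, is essentially unobstructed on that side, so that the squares $Q_k$ of $P$ can ``see leftwards past each other'' in a controlled way.

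The key step is to analyse how $R_{c_i}$ interacts with this column. Since $l_i^r \in N(c_i)$, the square $R_{c_i}$ sees $R_{l_i^r} = Q_0$; and since $z \in N(c_i)$, it also sees $Q_p = R_z$. I would first rule out that $R_{c_i}$ lies in the same vertical column (i.e.\ has $x$-coordinate making it vertically aligned with the $Q_k$), because then $R_{c_i}$ together with the path would form a cycle through $c_i$ — but more importantly, then $R_{c_i}$ would be forced to see many of the $Q_k$, creating edges that don't exist in $G$ (recall $l_i^1, l_i^2, l_i^3$ are literal vertices whose only backbone-side neighbour is $c_i$, and the internal path vertices are not adjacent to $c_i$). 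So $R_{c_i}$ is in a different column, meaning its visibilities to $Q_0$ and to $Q_p$ are either both horizontal, or one is horizontal and one vertical, or both vertical-but-that-forces-same-column which we've excluded. If $R_{c_i}$ sees $Q_0$ horizontally and $Q_p$ horizontally from the same side, then because $Q_0, \ldots, Q_p$ are all stacked in one column with gaps $<1$, $R_{c_i}$ would have a clear horizontal line of sight to every $Q_k$ in between — but the squares $Q_k$ themselves block each other's horizontal sightlines only if they overlap vertically, which they need not; I'd need the ``no smaller-$x$-coordinate blocker'' hypothesis precisely here to conclude $R_{c_i}$ sees some internal $Q_k$, contradiction. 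The remaining configurations (one horizontal, one vertical, or opposite horizontal sides) are handled by a short case check using that $R_{c_i}$ is a single unit square of bounded extent and cannot simultaneously reach both ends of a long column without touching the middle.

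I expect the main obstacle to be the case where $R_{c_i}$ sees the two endpoints $Q_0$ and $Q_p$ from \emph{opposite} horizontal sides of the column, or sees one end horizontally and the other vertically — here one must exploit more carefully both the geometry (a unit square spanning across the column would intersect it) and the structural hypothesis about no blocker of strictly smaller $x$-coordinate, possibly combined with a symmetric statement about blockers of strictly larger $x$-coordinate that would need to be extracted or assumed. The cleanest route is probably: observe that for $R_{c_i}$ to see both $Q_0$ and $Q_p$ without seeing an intermediate $Q_k$, there must be some $Q_k$ whose square lies ``in the way'' horizontally of $R_{c_i}$'s view to one endpoint, which forces that $Q_k$ to block, contradicting adjacency structure; then the no-left-blocker hypothesis is what prevents escaping this via an auxiliary square. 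Throughout I would lean on Lemma~\ref{K4Lemma} only indirectly (it is used downstream) and instead on the elementary visibility facts from the Preliminaries plus the explicit non-adjacencies in $G$: namely that path-internal vertices are independent of $c_i$ and that the three literal squares $R_{l_i^1}, R_{l_i^2}, R_{l_i^3}$ have no common neighbour besides $c_i$.
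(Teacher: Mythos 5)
Your proposal takes a fundamentally different route from the paper, and it has a genuine gap: the contradiction you aim for does not follow. You try to refute the existence of the aligned path $P$ by arguing about how $R_{c_i}$ must see the two endpoints $Q_0=R_{l_i^r}$ and $Q_p=R_z$, hoping to force $R_{c_i}$ to see an internal square $Q_k$. This fails for two reasons. First, the claim that $R_{c_i}$ ``would have a clear horizontal line of sight to every $Q_k$ in between'' is unjustified: the hypothesis only excludes blockers with strictly smaller $x$-coordinate \emph{strictly between consecutive path squares}; it says nothing about squares lying between $R_{c_i}$ and the column, and such squares can perfectly well screen $R_{c_i}$ from every internal $Q_k$ while leaving corridors to the endpoints. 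Ruling out exactly such auxiliary blocking chains is the entire purpose of Lemma~\ref{aligned_path} within the proof of Lemma~\ref{between_all}, so its own proof cannot assume them away. Second, even if $R_{c_i}$ did see some internal $Q_k$, this is not automatically a contradiction: the path is only required to avoid the vertex $c_i$, not to consist of non-neighbours of $c_i$, and your assertion that ``the internal path vertices are not adjacent to $c_i$'' is exactly the kind of structural fact about where $P$ can go that has to be \emph{proved}, not assumed.

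The paper's argument is of a completely different character and relies essentially on how $G$ was engineered. It first proves a local claim: for any vertex $u$ on a clause path $S$ (other than the bracket vertices $\overset{_{\leftarrow}}{t_j},\overset{_{\rightarrow}}{t_j}$ and the auxiliary $h$-vertices), $R_u$ cannot be vertically aligned with a path-neighbour without a left-blocker, because $u$ and that neighbour always have two common neighbours $s_1,s_2$ which themselves share a further neighbour $s_3$ --- and the two possible placements of $R_{s_1},R_{s_2}$ around an aligned pair either create a left-blocker or make $R_{s_3}$ unplaceable. This is why the gadget carries the five $h^p$-vertices and the two bracket vertices in the first place. The claim forces any candidate path $P$ to exit the clause-path gadget via a neighbouring literal vertex into the backbone; the spacing assumption on clause indices ($prev$/$succ$ differing by at least six) then forces $P$ to traverse at least five consecutive backbone vertices $c_s,\dots,c_{s-4}$ in a vertically aligned stack, whose layout is rigid by Lemma~\ref{K4Lemma}, after which the edge $\{c_{s-2},l^1_{s-2}\}$ cannot be realised. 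None of this machinery appears in your plan, and since the lemma would be false without these engineered features, a purely local analysis of $R_{c_i}$ versus the column cannot succeed.
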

\begin{proof}
Assume that there is a path $P$ between $l_i^r$ and $z$ which does not include $c_i$ such that the associated unit squares are vertically aligned and such that there exists no other unit square which is strictly between two unit squares of this path with strictly smaller $x$-coordinate. Regardless whether $z = l_i^t$ for some $t\in \{1,2,3\}\setminus\{r\}$ or $z\in C_{i} \setminus \{c_i\}$, $P$ has to start with some neighbour of  $l_i^r$. Let in the following, without loss of generality, $y_{i,r}=x_j$ and let $\{h_{t_j}^1,h_{t_j}^2,l^{r_1}_{j_1}, l^{r_2}_{j_2}, \ldots, l^{r_{q}}_{j_{q}},h_{t_j}^0,t_j,h_{t_j}^3,h_{t_j}^4\}$ with $j_1 < j_2 < \ldots < j_q$  be the neighbourhood of $\overset{_{\leftarrow}}{t_j}$ which, by definition, builds a path in this order.

First, observe the following:\\
{\it Claim:} No square for a vertex in  $S\setminus\{\overset{_{\leftarrow}}{t_j},\overset{_{\rightarrow}}{t_j},h_{t_j}^0,\dots,h_{t_j}^4\}$ can be vertically aligned with one of its neighbours from $S$ such that there is no unit square with smaller $x$-coordinate strictly between them. To see this, observe generally that for two vertically aligned squares $R_{u},
R_{v}$  there are only the two layouts from Figure~\ref{previously-wrong-claim} to place squares for two common neighbours $s_1,s_2$ of $u$ and $w$. In Figure~\ref{previously-wrong-claim}$(a)$ there is no possibility to avoid placing either $R_{s_1}$ or $R_{s_2}$ strictly between  $R_{u}$ and $R_{v}$  with smaller $x$-coordinate. In Figure~\ref{previously-wrong-claim}$(b)$, there is no possibility to place another square $R_{s_3}$ which sees both  $R_{s_1}$ and $R_{s_2}$. With some vertex from $S\setminus\{\overset{_{\leftarrow}}{t_j},\overset{_{\rightarrow}}{t_j},h_{t_j}^0,\dots,h_{t_j}^4\}$  in the role of $u$, and $w\in N(u)\cap S$, there are always vertices with the properties of $s_1,s_2,s_3$. If $w$ is in $S\setminus \{\overset{_{\leftarrow}}{t_j},\overset{_{\rightarrow}}{t_j}\}$ the vertices $\overset{_{\leftarrow}}{t_j}$ and $ \overset{_{\rightarrow}}{t_j}$ are the common neighbours $s_1$ and $s_2$ while $h_{t_j}^1$ or $h_{t_j}^4$ can be considered as their common neighbour $s_3$. If $w\in \{\overset{_{\leftarrow}}{t_j},\overset{_{\rightarrow}}{t_j}\}$, the two vertices in $N(u)\setminus \{\overset{_{\leftarrow}}{t_j},\overset{_{\rightarrow}}{t_j}\}$ (observe that with the additional vertices $h_{t_j}^0,\dots,h_{t_j}^4$, $u$ has exactly two neighbours in $S$ other than $\overset{_{\leftarrow}}{t_j}$ and $\overset{_{\rightarrow}}{t_j}$) are the common neighbours $s_1$ and $s_2$ of $u$ and $w$, while the vertex in $\{\overset{_{\leftarrow}}{t_j},\overset{_{\rightarrow}}{t_j}\}\setminus \{w\}$ is their common neighbour $s_3$.

\begin{figure}
\begin{tabular}{ p{3cm} p{3cm}}
\centering
\begin{tikzpicture}[scale=0.7,transform shape]
\coordinate (width) at (0.8,0);
\coordinate (height) at (0,0.8);
\coordinate (labelshift) at (0.4, 0.4);

\coordinate (1) at ($(0,0)$);
\coordinate (2) at ($(1) + 2.6*(height)$);
\coordinate (3) at ($(1) + 0.8*(width) + 1.2*(height)$);
\coordinate (4) at ($(1) - 0.8*(width) + 1.4*(height)$);

\coordinate (temp) at (1);
\draw[black] ($(temp)$) -- ($(temp) + (width)$) -- ($(temp) + (width) + (height)$) -- ($(temp) + (height)$) -- ($(temp)$);
\node (label) at ($(temp) + (labelshift)$)  {$u$};

\coordinate (temp) at (2);
\draw[black] ($(temp)$) -- ($(temp) + (width)$) -- ($(temp) + (width) + (height)$) -- ($(temp) + (height)$) -- ($(temp)$);
\node (label) at ($(temp) + (labelshift)$)  {$w$};

\coordinate (temp) at (3);
\draw[black] ($(temp)$) -- ($(temp) + (width)$) -- ($(temp) + (width) + (height)$) -- ($(temp) + (height)$) -- ($(temp)$);
\node (label) at ($(temp) + (labelshift)$)  {${s_1}$};

\coordinate (temp) at (4);
\draw[black] ($(temp)$) -- ($(temp) + (width)$) -- ($(temp) + (width) + (height)$) -- ($(temp) + (height)$) -- ($(temp)$);
\node (label) at ($(temp) + (labelshift)$)  {${s_2}$};
\end{tikzpicture}&\centering
\begin{tikzpicture}[scale=0.7,transform shape]
\coordinate (width) at (0.8,0);
\coordinate (height) at (0,0.8);
\coordinate (labelshift) at (0.4, 0.4);

\coordinate (1) at ($(0,0)$);
\coordinate (2) at ($(1) + 1.2*(height)$);
\coordinate (3) at ($(1) - 1.3*(width) + 0.5*(height)$);
\coordinate (4) at ($(1) +  1.3*(width) + 0.5*(height)$);

\coordinate (temp) at (1);
\draw[black] ($(temp)$) -- ($(temp) + (width)$) -- ($(temp) + (width) + (height)$) -- ($(temp) + (height)$) -- ($(temp)$);
\node (label) at ($(temp) + (labelshift)$)  {$u$};

\coordinate (temp) at (2);
\draw[black] ($(temp)$) -- ($(temp) + (width)$) -- ($(temp) + (width) + (height)$) -- ($(temp) + (height)$) -- ($(temp)$);
\node (label) at ($(temp) + (labelshift)$)  {$w$};

\coordinate (temp) at (3);
\draw[black] ($(temp)$) -- ($(temp) + (width)$) -- ($(temp) + (width) + (height)$) -- ($(temp) + (height)$) -- ($(temp)$);
\node (label) at ($(temp) + (labelshift)$)  {$s_1$};

\coordinate (temp) at (4);
\draw[black] ($(temp)$) -- ($(temp) + (width)$) -- ($(temp) + (width) + (height)$) -- ($(temp) + (height)$) -- ($(temp)$);
\node (label) at ($(temp) + (labelshift)$)  {$s_2$};
\end{tikzpicture}\cr
\centering $(a)$&\centering$(b)$\cr
\end{tabular}\caption{Illustrations for the claim that $l^{r_x}_{j_x},l^{r_y}_{j_y}\in S$ cannot be aligned}\label{previously-wrong-claim}
\end{figure}
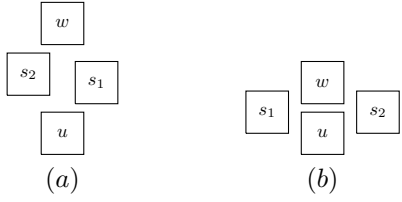

Since $y_{i,r}=x_j$, we can conclude that $l_{j_q}^{r_q}=l_i^r$, for some $p\in \{1,\dots,q\}$. In order to reach $z\in N(c_i)$, $P$ has to contain at least one vertex from  $S\setminus\{\overset{_{\leftarrow}}{t_j},\overset{_{\rightarrow}}{t_j},h_{t_j}^0,\dots,h_{t_j}^4\}$. By the claim above, it follows that $P$ has to start with either $l_{j_{p-1}}^{r_{p-1}}$ which then has to be followed by $c_{j_{p-1}}$ (if $p>1$) or $l_{j_{p+1}}^{r_{p+1}}$ which then has to be followed by $c_{j_{p+1}}$ (if $p<q$). Consider, without loss of generality, that $p<q$ and that $P$  contains $l_{j_{p+1}}^{r_{p+1}}$ followed by $c_{j_{p+1}}$. With the above definition of $succ$, we know that $j_{p+1}=succ(y_{i,r})$. By the previously assumed properties of the input-formula, we know that $j_{p+1}$ differs from $i$  and also from the smallest index $k>i$ for which $c_i$ shares a literal other than $y_{i,r}$ with $c_k$ by at least six; especially $c_i$ and  $c_{j_{p+1}}$ share no common literal other than $x_j = y_{i, r}$. This means that $P$  has to continue from  $c_{j_{p+1}}$ with at least five vertices from $V_c\setminus\{l_s^t\mid 1\leq s\leq 2m, 1\leq t\leq 3\}$; observe that all other paths to $z$ contain at least two vertices from some  path $S'$ with at least one of them from $S'\setminus\{\overset{_{\leftarrow}}{t'_j},\overset{_{\rightarrow}}{t'_j},h_{t'_j}^0,\dots,h_{t'_j}^4\}$  which is excluded by the claim above. 

Let $s=j_{p+1}$. Assume that the vertex following $c_s$ on $P$ is some $v\in C^l_s \setminus \{c_s\}$ (the case $v\in C^r_s \setminus \{c_s\}$ is analogous and these are the only non-literal neighbour-vertices of~$c_s$). The vertices $C^l_{s}$ build a $K_4$ and, by Lemma~\ref{K4Lemma}, the only possibility for a layout of this $K_4$ such that there is no unit square with smaller $x$-coordinate strictly between the vertically aligned $R_{v}$ and $R_{c_s}$, is case 1 from Figure~\ref{visLayoutsCompleteBipartiteGraphsFigureApp} with $R_{v}$ and $R_{c_s}$ taking the role of $R_2$ and~$R_4$. Observe that in case 1 there is no possibility to add a unit square which sees both $R_1$ and $R_3$ without destroying any of the $K_4$ edges. The only possibility for $v$ is hence $c_{s-1}$, since $c_{s}$ and $c_{s-1}^t$  have the common neighbour $c_{s}^t$ for $t=1,2$ which could not be placed otherwise. By the same argument, the whole part of at least five vertices from $V_c\setminus\{l_s^t\mid 1\leq s\leq 2m, 1\leq t\leq 3\}$ in $P$ are in fact vertices in $\{c_j\mid 1\leq j\leq 2m\}$ and especially contain the sequence $c_s,c_{s-1},c_{s-2},c_{s-3},c_{s-4}$ which has to be arranged as illustrated below.

\begin{center}\begin{tikzpicture}[scale=0.7, transform shape]
\coordinate (height) at (0.8,0);
\coordinate (width) at (0,0.8);
\coordinate (labelshift) at (0.4, 0.4);

\coordinate (1) at ($(0,0)$);
\coordinate (2) at ($(1) + 1.6*(width)$);
\coordinate (3) at ($(1) + 1.4*(height) + 0.8*(width)$);
\coordinate (4) at ($(1) - 1.4*(height) + 0.8*(width)$);
\coordinate (5) at ($(2)+ 1.6*(width)$);
\coordinate (6) at ($(5) + 1.6*(width)$);
\coordinate (7) at ($(5) + 1.4*(height) + 0.8*(width)$);
\coordinate (8) at ($(5) - 1.4*(height) + 0.8*(width)$);
\coordinate (9) at ($(6) + 1.6*(width)$);
\coordinate (10) at ($(6) + 1.4*(height) + 0.8*(width)$);
\coordinate (11) at ($(6) - 1.4*(height) + 0.8*(width)$);
\coordinate (12) at ($(2) + 1.4*(height) + 0.8*(width)$);
\coordinate (13) at ($(2) - 1.4*(height) + 0.8*(width)$);

\coordinate (temp) at (1);
\draw[black] ($(temp)$) -- ($(temp) + (width)$) -- ($(temp) + (width) + (height)$) -- ($(temp) + (height)$) -- ($(temp)$);
\node (label) at ($(temp) + (labelshift)$)  {$c_{s}$};

\coordinate (temp) at (2);
\draw[black] ($(temp)$) -- ($(temp) + (width)$) -- ($(temp) + (width) + (height)$) -- ($(temp) + (height)$) -- ($(temp)$);
\node (label) at ($(temp) + (labelshift)$)  {$c_{s-1}$};

\coordinate (temp) at (3);
\draw[black] ($(temp)$) -- ($(temp) + (width)$) -- ($(temp) + (width) + (height)$) -- ($(temp) + (height)$) -- ($(temp)$);
\node (label) at ($(temp) + (labelshift)$)  {$c_{s-1}^1$};

\coordinate (temp) at (4);
\draw[black] ($(temp)$) -- ($(temp) + (width)$) -- ($(temp) + (width) + (height)$) -- ($(temp) + (height)$) -- ($(temp)$);
\node (label) at ($(temp) + (labelshift)$)  {$c_{s-1}^2$};

\coordinate (temp) at (12);
\draw[black] ($(temp)$) -- ($(temp) + (width)$) -- ($(temp) + (width) + (height)$) -- ($(temp) + (height)$) -- ($(temp)$);
\node (label) at ($(temp) + (labelshift)$)  {$c_{s-2}^1$};

\coordinate (temp) at (13);
\draw[black] ($(temp)$) -- ($(temp) + (width)$) -- ($(temp) + (width) + (height)$) -- ($(temp) + (height)$) -- ($(temp)$);
\node (label) at ($(temp) + (labelshift)$)  {$c_{s-2}^2$};

\coordinate (temp) at (5);
\draw[black] ($(temp)$) -- ($(temp) + (width)$) -- ($(temp) + (width) + (height)$) -- ($(temp) + (height)$) -- ($(temp)$);
\node (label) at ($(temp) + (labelshift)$)  {$c_{s-2}$};

\coordinate (temp) at (6);
\draw[black] ($(temp)$) -- ($(temp) + (width)$) -- ($(temp) + (width) + (height)$) -- ($(temp) + (height)$) -- ($(temp)$);
\node (label) at ($(temp) + (labelshift)$)  {$c_{s-3}$};

\coordinate (temp) at (7);
\draw[black] ($(temp)$) -- ($(temp) + (width)$) -- ($(temp) + (width) + (height)$) -- ($(temp) + (height)$) -- ($(temp)$);
\node (label) at ($(temp) + (labelshift)$)  {$c_{s-3}^1$};

\coordinate (temp) at (8);
\draw[black] ($(temp)$) -- ($(temp) + (width)$) -- ($(temp) + (width) + (height)$) -- ($(temp) + (height)$) -- ($(temp)$);
\node (label) at ($(temp) + (labelshift)$)  {$c_{s-3}^2$};

\coordinate (temp) at (9);
\draw[black] ($(temp)$) -- ($(temp) + (width)$) -- ($(temp) + (width) + (height)$) -- ($(temp) + (height)$) -- ($(temp)$);
\node (label) at ($(temp) + (labelshift)$)  {$c_{s-4}$};

\coordinate (temp) at (10);
\draw[black] ($(temp)$) -- ($(temp) + (width)$) -- ($(temp) + (width) + (height)$) -- ($(temp) + (height)$) -- ($(temp)$);
\node (label) at ($(temp) + (labelshift)$)  {$c_{s-4}^1$};

\coordinate (temp) at (11);
\draw[black] ($(temp)$) -- ($(temp) + (width)$) -- ($(temp) + (width) + (height)$) -- ($(temp) + (height)$) -- ($(temp)$);
\node (label) at ($(temp) + (labelshift)$)  {$c_{s-4}^2$};

\end{tikzpicture}
\end{center}

There is no possibility to enable visibility representing the edge $\{c_{s-2},l_{s-2}^1\}$ such that $R_{l_{s-2}^1}$ sees none of the unit squares $R_{c_{s-2}^1},R_{c_{s-2}^2},R_{c_{s-3}^1},R_{c_{s-3}^2}$, for these unit squares however, all neighbours have been placed, so there is no possibility to block this unwanted visibility which overall yields a contradiction to the existence of $P$. 

\end{proof}

We are now ready to give a proof for Lemma~\ref{between_all}. However, in order to make this Lemma easier applicable in the following proofs, we restate it in a way that the three different cases are labeled.

\begin{lemma}
For all $i$, $1\leq i\leq 2m$, $r\in\{1,2,3\}$ and every $R_z$ with $z\in N(c_i)\setminus\{l_i^r\}$, there exists no non-degenerate axis-parallel rectangle $S$ which is not intersected by $R_{c_i}$ such that one side of $S$ is in $R_{l_i^r}$ and the opposite side is in $R_z$. In particular, this implies the following properties:\begin{enumerate}
\item $R_z$ is not strictly between $R_{c_i}$ and $R_{l_i^r}$.
\item $R_{l_i^r}$ is not strictly between $R_{c_i}$ and $R_z$ .
\item If $R_{c_i}$ is strictly between $R_{l_i^r}$ and $R_z$ then  $R_{c_i}$ blocks the view between $R_{l_i^r}$ and $R_z$.
\end{enumerate}
\end{lemma}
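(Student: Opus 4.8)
The plan is to prove the stronger claim that no visibility rectangle for $R_{l_i^r}$ and $R_z$ avoids $R_{c_i}$, and then observe that the three enumerated consequences follow immediately from the definitions of ``strictly between'' and ``blocks the view.'' The argument rests on \emph{Lemma~\ref{aligned_path}} together with \emph{Lemma~\ref{K4Lemma}}, and the key geometric observation is that any visibility rectangle $S$ for $R_{l_i^r}$ and $R_z$ that is disjoint from $R_{c_i}$ would certify that the unit squares for some path in $G$ between $l_i^r$ and $z$ avoiding $c_i$ can be placed in an ``aligned, left-monotone'' fashion — exactly the configuration ruled out by Lemma~\ref{aligned_path}. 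So the heart of the matter is to set up the right notion of alignment and reduce to that lemma.

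First I would fix $i$, $r$, and $z \in N(c_i) \setminus \{l_i^r\}$, and suppose for contradiction that there is a non-degenerate axis-parallel rectangle $S$ with one side in $R_{l_i^r}$, the opposite side in $R_z$, and $S \cap R_{c_i} = \emptyset$. Without loss of generality $S$ is ``horizontal,'' i.e. its left side is in $R_{l_i^r}$ and its right side in $R_z$ (the vertical case is symmetric, and one can also swap the roles of $R_{l_i^r}$ and $R_z$). The strip spanned by $S$ — the union of all unit squares meeting the closed infinite horizontal band of height equal to that of $S$ and lying between $R_{l_i^r}$ and $R_z$ in the $x$-direction — then contains a sequence of unit squares forming a path in the visibility graph from $l_i^r$ to $z$; because the band has the height of $S$ and $S$ misses $R_{c_i}$, this path can be chosen not to pass through $c_i$. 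Among all such paths choose one whose squares are as ``left-monotone'' as possible: more precisely, I would extract from the band an induced chain $R_{l_i^r} = R_{p_0}, R_{p_1}, \dots, R_{p_t} = R_z$ of consecutive horizontally visible squares, and argue that by the geometry of a horizontal visibility band no square of the chain can be strictly between two others of the chain with strictly smaller $x$-coordinate. The edges of $G$ realized by these consecutive visibilities give a walk in $G$ from $l_i^r$ to $z$ avoiding $c_i$; pruning it to a simple path and noting that alignment/monotonicity is inherited yields exactly a path $P$ of the kind forbidden by Lemma~\ref{aligned_path}, a contradiction.

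The main obstacle I expect is the bookkeeping in the reduction to Lemma~\ref{aligned_path}: turning the single rectangle $S$ into a genuine \emph{path in $G$} whose squares are simultaneously ``vertically aligned'' (in the terminology of that lemma — here it would be ``horizontally aligned,'' but the statement is symmetric under a $90^\circ$ rotation, which Lemma~\ref{aligned_path} covers since its hypotheses are stated for both orientations via the general setup) and left-monotone in the precise sense ``no other unit square is strictly between two of them with strictly smaller $x$-coordinate.'' One has to be careful that the chain extracted from the band is truly a path in $G$ — i.e. that consecutive band squares are not merely close but actually see each other — and that the monotonicity condition survives the extraction. A clean way is to order the band squares by $x$-coordinate and take a maximal chain of consecutive visibilities; since every square in the band lies within the height-$\mathrm{ht}(S)$ horizontal strip, consecutive ones in this order see each other horizontally, and a square placed strictly between two chain squares with smaller $x$-coordinate would itself lie in the strip and contradict maximality or consecutiveness. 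With $P$ in hand, Lemma~\ref{aligned_path} gives the contradiction.

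Finally, the three itemized consequences are formal. If $R_z$ were strictly between $R_{c_i}$ and $R_{l_i^r}$, then \emph{every} visibility rectangle for $R_{c_i}$ and $R_{l_i^r}$ meets $R_z$; but $c_i$ and $l_i^r$ are adjacent in $G$, so some visibility rectangle $S'$ for them exists with $S' \cap R_k = \emptyset$ for all other $R_k$, in particular $S' \cap R_z = \emptyset$ — contradiction. If $R_{l_i^r}$ were strictly between $R_{c_i}$ and $R_z$, then taking $S'$ as above for the adjacent pair $c_i,z$ (or for $c_i$ and an appropriate witness if $z$ is not adjacent to $c_i$ — but $z \in N(c_i)$, so it is), again $S'$ misses $R_{l_i^r}$, contradiction. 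And if $R_{c_i}$ is strictly between $R_{l_i^r}$ and $R_z$ but does not block the view, there would be a visibility rectangle for $R_{l_i^r}$ and $R_z$ disjoint from $R_{c_i}$, which is exactly what the main claim forbids. This completes the proof.
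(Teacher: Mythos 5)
Your overall strategy --- reducing the existence of an $R_{c_i}$-avoiding rectangle $S$ between $R_{l_i^r}$ and $R_z$ to a forbidden path via Lemma~\ref{aligned_path} --- is exactly the paper's, but the reduction as you describe it does not go through. The central gap is your claim that the unit squares meeting the band of $S$, ordered by $x$-coordinate, form a chain of consecutive visibilities and hence a path in $G$. A square \emph{meeting} a band of height $\mathrm{ht}(S)\le 1$ only has its unit $y$-projection intersect that band; two such squares can have disjoint $y$-projections (e.g., band $[0,0.9]$ and squares with $y$-projections $[-0.95,0.05]$ and $[0.85,1.85]$), in which case they do not see each other horizontally at all, so no path can be extracted this way. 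The paper instead works only with the squares that actually intersect $S$, restricts to the extremal layer among them (minimum or maximum coordinate in the direction orthogonal to $S$), and --- crucially --- uses the fact that $R_{c_i}$ must see both $R_{l_i^r}$ and $R_z$ to pin down where that layer can lie and to establish the monotonicity hypothesis of Lemma~\ref{aligned_path} (no square with strictly smaller $x$-coordinate strictly between consecutive squares): a square violating it would block the visibility between $R_{c_i}$ and $R_z$ or $R_{l_i^r}$. Your proposal never invokes these visibilities, so the monotonicity condition is unobtainable, and you also miss the resulting case analysis on how $R_{c_i}$ sees the two squares; in particular, in the paper's case where one extremal blocker is itself a vertical neighbour of $R_{c_i}$ (its case 3(b)), a \emph{second} layer of blockers $R_{k_1},\dots,R_{k_t}$ must be analysed before Lemma~\ref{aligned_path} can be applied. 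None of this is recoverable from the band argument alone.

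A secondary issue: your derivations of consequences 1 and 2 misread ``strictly between''. In the paper, $R_j$ is strictly between $R_i$ and $R_k$ if \emph{some} visibility rectangle for $R_i,R_k$ intersects $R_j$ (the universal version is ``blocks the view''), so exhibiting one clean visibility rectangle for the adjacent pair $c_i,l_i^r$ does not contradict $R_z$ being strictly between them. The correct derivation takes the witnessing visibility rectangle $T$ for $R_{c_i}$ and $R_{l_i^r}$ that meets $R_z$ and cuts it at $R_z$, producing a non-degenerate rectangle with one side in $R_{l_i^r}$, the opposite side in $R_z$, and disjoint from $R_{c_i}$ --- exactly what the main claim forbids; item 2 is symmetric. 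Your argument for item 3 is correct.
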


\begin{proof} 
Let $z$, $i$ and $r$ be as in the statement of the lemma. We assume that there exists a non-degenerate axis-parallel rectangle $S$ which is not intersected by $R_{c_i}$ such that one side of $S$ is in $R_{l_i^r}$ and the opposite side is in $R_z$. Without loss of generality, we assume that, after removing all unit squares except $R_{l_i^r}$ and $R_{z}$ from the layout, we have $R_{l_i^r} \Vvis R_{z}$, and, furthermore, that the $x$-coordinate of $R_{l_i^r}$ is not smaller than the $x$-coordinate of $R_z$ (so we have the situation shown in Figure~\ref{betweenlemma}). Moreover, since $z$ is not adjacent to $l_i^r$, some further unit square(s) have to block the visibility (indicated by the rectangle $S$) between $R_{l_i^r}$ and $R_z$, while $R_{c_i}$ has to see both  $R_z$ and $R_{l_i^r}$. There are only the following possibilities for this situation:
\begin{enumerate}
\item $R_{c_i}\Vvis \{R_{l_i^r},R_z\}$ or $\{R_{l_i^r},R_z\}\Vvis R_{c_i}$ (see Figure~\ref{betweenlemma}$(a)$): We only consider the case $R_{c_i}\Vvis \{R_{l_i^r},R_z\}$, since $\{R_{l_i^r},R_z\}\Vvis R_{c_i}$ can be dealt with analogously. Let $R_{h_1},\dots,R_{h_s}$ be the unit squares strictly between $R_{l_i^r}$ and $R_z$ (to intersect $S$  in order to block the view) of minimum $x$-coordinate sorted by $y$-coordinate. Observe that each $R_{h_t}$ has a larger $x$-coordinate than $R_{z}$ since otherwise there is no visibility between $R_{c_i}$ and $R_z$. If there is a unit square $R$ strictly between some $R_{h_t}$ and $R_{h_{t+1}}$ with smaller $x$-coordinate, then this either contradicts the definition of the $R_{h_t}$ (i.\,e., if $R$ is strictly between $R_{l_i^r}$ and $R_z$) or, again, the visibility between $c_i$ and $z$ would be blocked. Consequently, there is no such unit square strictly between some $R_{h_t}$ and $R_{h_{t+1}}$ with smaller $x$-coordinate. The vertices $h_1,\dots,h_s$ corresponding to $R_{h_1},\dots,R_{h_s}$ hence describe a path which does not include $c_i$ and for which $h_1$ is adjacent to $z\in N(c_i)\setminus\{l_i^r\}$ and $h_s$ is adjacent to $l_i^r$. The unit squares $R_{h_1},\dots,R_{h_s}$ are, by definition, aligned and no unit square with smaller $x$-coordinate is strictly between any $R_{h_t}$ and $R_{h_{t+1}}$ which is a contradiction to Lemma~\ref{aligned_path}.
\item ($R_z\Hvis R_{c_i}$ and $R_{l_i^r}\Vvis R_{c_i}$) or ($R_{c_i}\Vvis R_z$ and $R_{c_i}\Hvis R_{l_i^r}$) (see Figure~\ref{betweenlemma}$(b)$): We only consider the case ($R_z\Hvis R_{c_i}$ and $R_{l_i^r}\Vvis R_{c_i}$), since ($R_{c_i}\Vvis R_z$ and $R_{c_i}\Hvis R_{l_i^r}$) can be dealt with analogously. To preserve the visibility $R_{l_i^r}\Vvis R_{c_i}$, all unit squares which intersect $S$ have to be strictly left of $R_{l_i^r}$. Let again $R_{h_1},\dots,R_{h_s}$ be the unit squares strictly between $R_{l_i^r}$ and $R_z$ (to intersect $S$ in order to block the view) of maximum $x$-coordinate. These unit squares have the same properties as for case $1$, which yields a contradiction to Lemma~\ref{aligned_path}.
\item $R_{c_i}\Vvis R_z$ and $R_{l_i^r}\Vvis R_{c_i}$ (see Figure~\ref{betweenlemma}$(c)$ and $(d)$): Note that $S$ is not intersected by $R_{c_i}$ and, without loss of generality, we assume that $S$ lies to the left of $R_{c_i}$. If there exists a unit square that intersects $S$ and has a larger $x$-coordinate than $R_z$ or $R_{l_i^r}$, then there is no visibility between $R_{c_i}$ and $R_z$ or $R_{l_i^r}$ and $R_{c_j}$. Consequently, all unit squares which intersect $S$ have to be strictly to the left of $R_z$ and $R_{l_i^r}$. Among all unit squares which intersect $S$, let $R_{h_1},\dots,R_{h_s}$ be the ones of maximum $x$-coordinate, sorted by $y$-coordinate. We consider two different cases according two whether one of the unit squares $R_{h_1},\dots,R_{h_s}$ sees $R_{c_i}$ vertically or not:
\begin{enumerate}
\item 
We assume that there is no $j$, $1\leq j\leq s$, such that $R_{h_j}$ sees $R_{c_i}$ vertically (see Figure~\ref{betweenlemma}(c)). This implies that the vertices associated to $R_{h_1},\dots,R_{h_s}$ build a path from $z$ to $l_i^r$ and $c_i$ is not included in this path (observe that although one or even two of the vertices $h_1,\dots,h_s$ could be neighbours of $c_j$, vertex $c_j$ is not among the vertices $h_1,\dots,h_s$). Further, since the $x$-coordinate of the unit squares $R_{h_1},\dots,R_{h_s}$ is assumed to be maximum among the unit squares which intersect $S$, there is no other unit square which lies strictly between some $R_{h_t}$ and $R_{h_{t+1}}$ and has a larger $x$-coordinate. Since $R_{h_1},\dots,R_{h_s}$ are vertically aligned by definition, this is a contradiction to Lemma~\ref{aligned_path}.
\item 
We assume that, for some $j$, $1\leq j\leq s$, the unit square $R_{h_j}$ sees $R_{c_i}$ vertically (see Figure~\ref{betweenlemma}(d)). This is only possible if this $R_{h_j}$ is between $R_{c_i}$ and either $R_{l_i^r}$ or $R_z$. However, if $R_{h_j}$ is between $R_{c_i}$ and $R_{l_i^r}$, then, with $R_{h_j}$ playing the role of $z$ (note that $h_j \in N(c_i)\setminus\{l_i^r\}$), we obtain case $1$ again. Thus, we can assume that $R_{h_j}$ is between $R_{c_i}$ and $R_z$. Furthermore, if $j < s$, then $R_{h_{j+1}}$ is between $R_{c_i}$ and $R_{l_i^r}$ and we obtain case $1$ as before. Consequently, $j=s$ and $h_s\in N(c_i)$. Since $l_i^r$ is not adjacent to any neighbour of $c_i$, $l_i^r$ and $h_s$ are not adjacent; thus, the visibility between $R_{h_s}$ and $R_{l_i^r}$ must be blocked. Let $R_{k_1},\dots, R_{k_t}$ be the unit squares of maximum $x$-coordinate which intersect every visibility rectangle between $R_{h_s}$ and $R_{l_i^r}$, sorted by $y$-coordinate. We note that if they are strictly between $R_{l^r_i}$ and $R_{c_i}$, then we obtain case $1$ again, with $R_{k_1}$ playing the role of $z$. Moreover, if they have an $x$-coordinate that is more than one less than the $x$-coordinate of $R_{c_i}$, then they would not block the visibility between $R_{h_s}$ and $R_{l_i^r}$. Consequently, these unit squares all have the same $x$-coordinate which is exactly one less than the $x$-coordinate of $c_i$ (as shown in Figure~\ref{betweenlemma}(d)). The vertices associated to  $R_{k_1},\dots, R_{k_t}$ again build a path between $l_i^r$ and some neighbour of $c_i$ and do not include $c_i$, and, as explained above, there is no unit square strictly between some $R_{k_t}$ and $R_{k_{t+1}}$ of larger $x$-coordinate. Hence, the path $R_{k_1},\dots, R_{k_t}$ is also a contradiction to Lemma~\ref{aligned_path}.
\end{enumerate}
\end{enumerate}
\begin{center}
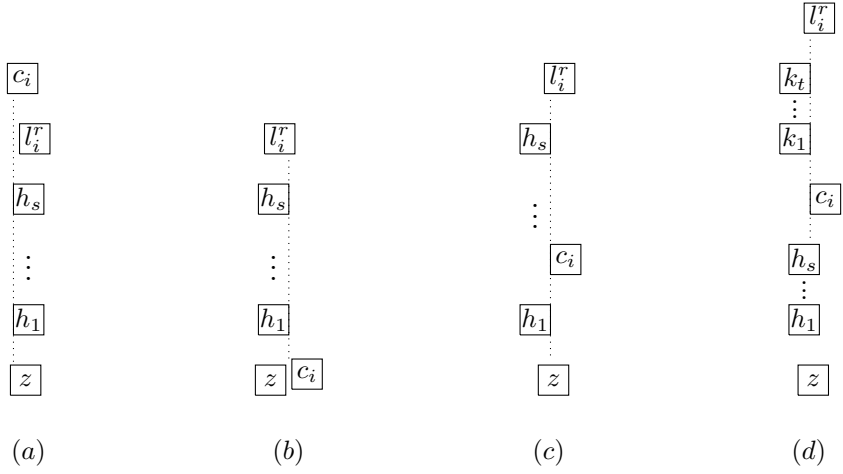
\begin{figure}
\begin{tabular}{ p{3cm} p{3cm} p{3cm} p{3cm}}
\centering \begin{tikzpicture}
\coordinate (width) at (0.4,0);
\coordinate (height) at (0,0.4);
\coordinate (labelshift) at (0.2, 0.2);

\coordinate (1) at ($(0,0)$);
\coordinate (2) at ($(1) + 0.3*(width) + 8*(height)$);
\coordinate (3) at ($(1) - 0.1*(width) + 10*(height)$);
\coordinate (4) at ($(1) + 0.1*(width) + 2*(height)$);
\coordinate (5) at ($(1) + 0.1*(width) + 4*(height)$);
\coordinate (6) at ($(1) + 0.1*(width) + 6*(height)$);
\coordinate (7) at ($(4) + 0.5*(width) + 2.5*(height)$);
\coordinate (8) at ($(4) + 8.1*(height) $);
\coordinate (9) at ($(4) - 1.2*(height) $);

\coordinate (temp) at (1);
\draw[black] ($(temp)$) -- ($(temp) + (width)$) -- ($(temp) + (width) + (height)$) -- ($(temp) + (height)$) -- ($(temp)$);
\node (label) at ($(temp) + (labelshift)$)  {$z$};

\coordinate (temp) at (2);
\draw[black] ($(temp)$) -- ($(temp) + (width)$) -- ($(temp) + (width) + (height)$) -- ($(temp) + (height)$) -- ($(temp)$);
\node (label) at ($(temp) + (labelshift)$)  {$l^r_i$};

\coordinate (temp) at (3);
\draw[black] ($(temp)$) -- ($(temp) + (width)$) -- ($(temp) + (width) + (height)$) -- ($(temp) + (height)$) -- ($(temp)$);
\node (label) at ($(temp) + (labelshift)$)  {$c_i$};

\coordinate (temp) at (4);
\draw[black] ($(temp)$) -- ($(temp) + (width)$) -- ($(temp) + (width) + (height)$) -- ($(temp) + (height)$) -- ($(temp)$);
\node (label) at ($(temp) + (labelshift)$)  {$h_1$};

\coordinate (temp) at (6);
\draw[black] ($(temp)$) -- ($(temp) + (width)$) -- ($(temp) + (width) + (height)$) -- ($(temp) + (height)$) -- ($(temp)$);
\node (label) at ($(temp) + (labelshift)$)  {$h_s$};

\node[draw=none]  at (7) {$\large\vdots$};
\node[draw=none] (u) at (8) {};
\node[draw=none] (l) at (9) {};
\draw[dotted] (u)--(l);
\end{tikzpicture}& \centering \begin{tikzpicture}
\coordinate (width) at (0.4,0);
\coordinate (height) at (0,0.4);
\coordinate (labelshift) at (0.2, 0.2);

\coordinate (1) at ($(0,0)$);
\coordinate (2) at ($(1) + 0.3*(width) + 8*(height)$);
\coordinate (3) at ($(1) + 1.2*(width) + 0.2*(height)$);
\coordinate (4) at ($(1) + 0.1*(width) + 2*(height)$);
\coordinate (5) at ($(1) + 0.1*(width) + 4*(height)$);
\coordinate (6) at ($(1) + 0.1*(width) + 6*(height)$);
\coordinate (7) at ($(4) + 0.5*(width) + 2.5*(height)$);
\coordinate (8) at ($(4) + 6.1*(height)+(width) $);
\coordinate (9) at ($(4) - 1.2*(height)+(width) $);

\coordinate (temp) at (1);
\draw[black] ($(temp)$) -- ($(temp) + (width)$) -- ($(temp) + (width) + (height)$) -- ($(temp) + (height)$) -- ($(temp)$);
\node (label) at ($(temp) + (labelshift)$)  {$z$};

\coordinate (temp) at (2);
\draw[black] ($(temp)$) -- ($(temp) + (width)$) -- ($(temp) + (width) + (height)$) -- ($(temp) + (height)$) -- ($(temp)$);
\node (label) at ($(temp) + (labelshift)$)  {$l^r_i$};

\coordinate (temp) at (3);
\draw[black] ($(temp)$) -- ($(temp) + (width)$) -- ($(temp) + (width) + (height)$) -- ($(temp) + (height)$) -- ($(temp)$);
\node (label) at ($(temp) + (labelshift)$)  {$c_i$};

\coordinate (temp) at (4);
\draw[black] ($(temp)$) -- ($(temp) + (width)$) -- ($(temp) + (width) + (height)$) -- ($(temp) + (height)$) -- ($(temp)$);
\node (label) at ($(temp) + (labelshift)$)  {$h_1$};

\coordinate (temp) at (6);
\draw[black] ($(temp)$) -- ($(temp) + (width)$) -- ($(temp) + (width) + (height)$) -- ($(temp) + (height)$) -- ($(temp)$);
\node (label) at ($(temp) + (labelshift)$)  {$h_s$};

\node[draw=none]  at (7) {$\large\vdots$};
\node[draw=none] (u) at (8) {};
\node[draw=none] (l) at (9) {};
\draw[dotted] (u)--(l);
\end{tikzpicture}&\centering \begin{tikzpicture}
\coordinate (width) at (0.4,0);
\coordinate (height) at (0,0.4);
\coordinate (labelshift) at (0.2, 0.2);

\coordinate (1) at ($(0,0)$);
\coordinate (2) at ($(1) + 0.2*(width) + 10*(height)$);
\coordinate (3) at ($(1) + 0.4*(width) + 4*(height)$);
\coordinate (4) at ($(1) - 0.6*(width) + 2*(height)$);
\coordinate (5) at ($(1) - 0.6*(width) + 6*(height)$);
\coordinate (6) at ($(1) - 0.6*(width) + 8*(height)$);
\coordinate (7) at ($(5) + 0.5*(width) + 0.2*(height)$);
\coordinate (8) at ($(3) + 6.1*(height)$);
\coordinate (9) at ($(3) - 3.1*(height)$);

\coordinate (temp) at (1);
\draw[black] ($(temp)$) -- ($(temp) + (width)$) -- ($(temp) + (width) + (height)$) -- ($(temp) + (height)$) -- ($(temp)$);
\node (label) at ($(temp) + (labelshift)$)  {$z$};

\coordinate (temp) at (2);
\draw[black] ($(temp)$) -- ($(temp) + (width)$) -- ($(temp) + (width) + (height)$) -- ($(temp) + (height)$) -- ($(temp)$);
\node (label) at ($(temp) + (labelshift)$)  {$l^r_i$};

\coordinate (temp) at (3);
\draw[black] ($(temp)$) -- ($(temp) + (width)$) -- ($(temp) + (width) + (height)$) -- ($(temp) + (height)$) -- ($(temp)$);
\node (label) at ($(temp) + (labelshift)$)  {$c_i$};

\coordinate (temp) at (4);
\draw[black] ($(temp)$) -- ($(temp) + (width)$) -- ($(temp) + (width) + (height)$) -- ($(temp) + (height)$) -- ($(temp)$);
\node (label) at ($(temp) + (labelshift)$)  {$h_1$};

\coordinate (temp) at (6);
\draw[black] ($(temp)$) -- ($(temp) + (width)$) -- ($(temp) + (width) + (height)$) -- ($(temp) + (height)$) -- ($(temp)$);
\node (label) at ($(temp) + (labelshift)$)  {$h_s$};

\node[draw=none]  at (7) {$\large\vdots$};
\node[draw=none] (u) at (8) {};
\node[draw=none] (l) at (9) {};
\draw[dotted] (u)--(l);
\end{tikzpicture}
& \centering 
\begin{tikzpicture}

\coordinate (width) at (0.4,0);
\coordinate (height) at (0,0.4);
\coordinate (labelshift) at (0.2, 0.2);

\coordinate (1) at ($(0,0)$);
\coordinate (2) at ($(1) + 0.2*(width) + 10*(height)$);
\coordinate (3) at ($(1) + 0.4*(width) + 4*(height)$);
\coordinate (4) at ($(1) - 0.3*(width) + 2*(height)$);
\coordinate (5) at ($(1) - 0.6*(width) + 6*(height)$);
\coordinate (6) at ($(1) - 0.6*(width) + 8*(height)$);
\coordinate (7) at ($(6) + 0.5*(width) - 0.4*(height)$);
\coordinate (8) at ($(3) + 6.1*(height)$);
\coordinate (9) at ($(3) - 1.1*(height)$);
\coordinate (10) at ($(1) - 2*(height)$);
\coordinate (11) at ($(1) - 0.3*(width) $);
\coordinate (12) at ($(4) + 0.5*(width) - 0.4*(height)$);

\coordinate (temp) at (10);
\draw[black] ($(temp)$) -- ($(temp) + (width)$) -- ($(temp) + (width) + (height)$) -- ($(temp) + (height)$) -- ($(temp)$);
\node (label) at ($(temp) + (labelshift)$)  {$z$};

\coordinate (temp) at (2);
\draw[black] ($(temp)$) -- ($(temp) + (width)$) -- ($(temp) + (width) + (height)$) -- ($(temp) + (height)$) -- ($(temp)$);
\node (label) at ($(temp) + (labelshift)$)  {$l^r_i$};

\coordinate (temp) at (3);
\draw[black] ($(temp)$) -- ($(temp) + (width)$) -- ($(temp) + (width) + (height)$) -- ($(temp) + (height)$) -- ($(temp)$);
\node (label) at ($(temp) + (labelshift)$)  {$c_i$};

\coordinate (temp) at (4);
\draw[black] ($(temp)$) -- ($(temp) + (width)$) -- ($(temp) + (width) + (height)$) -- ($(temp) + (height)$) -- ($(temp)$);
\node (label) at ($(temp) + (labelshift)$)  {$h_s$};
\coordinate (temp) at (11);
\draw[black] ($(temp)$) -- ($(temp) + (width)$) -- ($(temp) + (width) + (height)$) -- ($(temp) + (height)$) -- ($(temp)$);
\node (label) at ($(temp) + (labelshift)$)  {$h_1$};

\coordinate (temp) at (5);
\draw[black] ($(temp)$) -- ($(temp) + (width)$) -- ($(temp) + (width) + (height)$) -- ($(temp) + (height)$) -- ($(temp)$);
\node (label) at ($(temp) + (labelshift)$)  {$k_1$};

\coordinate (temp) at (6);
\draw[black] ($(temp)$) -- ($(temp) + (width)$) -- ($(temp) + (width) + (height)$) -- ($(temp) + (height)$) -- ($(temp)$);
\node (label) at ($(temp) + (labelshift)$)  {$k_t$};

\node[draw=none]  at ($(7)-(0,0.15)$) {.};
\node[draw=none]  at ($(7)-(0,0.05)$) {.};
\node[draw=none]  at ($(7)+(0,0.05)$) {.};
\node[draw=none]  at ($(12)-(0,0.15)$) {.};
\node[draw=none]  at ($(12)-(0,0.05)$) {.};
\node[draw=none]  at ($(12)+(0,0.05)$) {.};
\node[draw=none] (u) at (8) {};
\node[draw=none] (l) at (9) {};
\draw[dotted] (u)--(l);

\end{tikzpicture}
\cr
&&&\cr
\centering$(a)$&\centering $(b)$&\centering $(c)$&\centering $(d)$

\end{tabular}
\caption{Illustrations for the proof of Lemma~\ref{between_all}.}
\label{betweenlemma}
\end{figure}
\end{center}
Since Lemma~\ref{aligned_path} holds equivalently, the same argumentation yields this result for  $t_j$ or $f_j^1$ or $f_j^2$ instead of  $l_i^r$ and $x_j$ instead of $c_i$ for all $1\leq j\leq n$.
\end{proof}

\subsection*{Proof of Lemma~\ref{notAllOneSideLemma}}


\begin{proof}
We first note that, independent from the choice of $y$, either $C^l_j$ or $C^r_j$ is completely contained in $R_{C_j \setminus \{y\}}$. We assume that the former applies, which means that the $K_4$ on vertices $C^l_j$ must satisfy case $1$ of Lemma~\ref{K4Lemma} with $c_j$ playing the role of $R_1$, or it satisfies case $3$ of Lemma~\ref{K4Lemma} with $c_j$ playing the role of $R_4$. Next, we note that $R_y \Hvis R_{c_j}$ is not possible, since then the $K_4$ on vertices $C^r_j$ contains a unit square, namely $R_{c_j}$, which horizontally sees all other vertices, but not in the same direction and this is, according to Lemma~\ref{K4Lemma}, not possible; thus, we have either case $R_y \symVvis R_{c_j}$ or case $R_{c_j} \Hvis R_y$, which we shall now consider separately. \par
\begin{enumerate}
\item Case $R_y \Vvis R_{c_j}$ (the case $R_{c_j} \Vvis R_y $ can be handled analogously): For the $K_4$ on vertices $C^r_{j}$, we have that $R_y \Vvis R_{c_j}$, while all unit squares $R_{C^r_{j} \setminus \{c_j, y\}}$ see $R_{c_j}$ horizontally to the same side. This means that the $K_4$ on vertices $C^r_{j}$ satisfies case $2$ or case $3$ of Lemma~\ref{K4Lemma}. If it satisfies case $2$, then we have the situation illustrated in Figure~\ref{notAllOneSideLemmaFigure}$(a)$ (where the four vertices on the left are the vertices from $C^r_{j}$). On the other hand, if it satisfies case $3$, then we claim that the only unit square that can play the role of the unit square $R_4$ (i.\,e., the one that sees all the others by the same kind of visibility) is $R_{y}$. In order to verify this claim, we first observe that $R_{c_j}$ cannot play the role of $R_4$, since it sees $R_{y}$ vertically and the two other unit squares in $R_{C^r_{j} \setminus \{c_j, y\}}$ horizontally. If, for a $z \in C^r_j \setminus \{y, c_j\}$, $R_z$ plays the role of $R_4$, then, since $R_{c_j} \Hvis R_z$, we must have $R_{C^r_j \setminus \{z\}} \Hvis R_{z}$; in particular, $R_{z}$ plays the role of $R_4$, $R_{c_j}$ plays the role of $R_2$ and $R_{y}$ plays the role of $R_1$. Now it is not possible to add another unit square $R$ with $R_{c_j} \Hvis R$, since in order to see $R_{c_j}$, $R$ must be placed to the left of $R_{z}$, which means that it necessarily blocks the visibility between $R_{z}$ and one of $C^r_j \setminus \{c_j, z\}$.
This is a contradiction, since such a unit square must exist in order to represent the edges of the $K_4$ on vertices $C^l_{j}$. Consequently, if the $K_4$ on vertices $C^r_{j}$ satisfies case $3$, then we have the situation illustrated in Figure~\ref{notAllOneSideLemmaFigure}$(b)$.\par
We now turn to the $K_4$ on vertices $R_{C^l_{j}}$. As mentioned before, $R_{c_j} \Hvis R_{C^l_{j} \setminus \{c_j\}}$ implies that the $K_4$ on vertices $C^l_{j}$ must satisfy case $1$ of Lemma~\ref{K4Lemma} with $R_{c_{j}}$ playing the role of $R_1$ or case $3$ of Lemma~\ref{K4Lemma} with $R_{c_j}$ playing the role of $R_4$. (Note that in Figure~\ref{notAllOneSideLemmaFigure}$(a)$ and $(b)$, we only illustrate case $1$ for the $K_4$ on vertices $C^l_{j}$, which are the three vertices to the right together with vertex $c_j$; the following arguments can be carried out analogously for the case that the $K_4$ on vertices $C^l_{j}$ satisfies case $3$ of Lemma~\ref{K4Lemma}.) However, for both situations depicted in Figure~\ref{notAllOneSideLemmaFigure}$(a)$ and $(b)$, it is not possible that a unit square in $R_{C^l_{j} \setminus \{c_j\}}$ sees $R_{c_j}$ horizontally and at the same time (by any kind of visibility) also $R_{y}$. Thus, $y$ must be a vertex that is not connected to any vertex from $R_{C^l_{j} \setminus \{c_j\}}$, which implies $y = c_{j + 1}$. More precisely, since $R_{c^1_{j-1}} \in R_{C^l_{j} \setminus \{c_j\}}$ must see both $R_{c_j}$ and $R_{c^1_{j}}$, and $R_{c^2_{j-1}} \in R_{C^l_{j} \setminus \{c_j\}}$ must see both $R_{c_j}$ and $R_{c^2_{j}}$, we can conclude that $y \notin \{c^1_{j}, c^2_{j}\}$, which implies $y = c_{j + 1}$.
\begin{enumerate}
\item Situation illustrated in Figure~\ref{notAllOneSideLemmaFigure}$(b)$: We assume that $R_{c^1_{j}} \Vvis R_{c^2_{j}}$ (note that, due to the edges $\{c^1_{j}, c^1_{j-1}\}$ and $\{c^2_{j}, c^2_{j-1}\}$, this uniquely defines all the unlabelled unit squares of Figure~\ref{notAllOneSideLemmaFigure}$(b)$); the case $R_{c^2_{j}} \Vvis R_{c^1_{j}}$ can be handled analogously. We now consider the vertices $c^1_{j+1}$ and $c^2_{j+1}$ from the $K_4$ on vertices $C^l_{j + 1}$ (which we have not considered so far and which are not present in Figure~\ref{notAllOneSideLemmaFigure}$(b)$). This vertex $c^1_{j+1}$ is connected to both $c_{j + 1}$ and $c^1_{j}$ and, likewise, the vertex $c^2_{j+1}$ is connected to both $c_{j + 1}$ and $c^2_{j}$. To see both $R_{c_{j + 1}}$ and $R_{c^2_{j}}$, $R_{c^2_{j+1}}$ must be placed such that $R_{c_{j + 1} } \Vvis R_{c^2_{j+1}}$ and $R_{c^2_{j+1}}\Hvis R_{c_{j}^2}$  or $R_{c_{j}^2}\Vvis R_{c_{j+1}^2}$. For $R_{c^1_{j+1}}$, there are several possibilities to see $R_{c_{j + 1}}$ and $R_{c^1_{j}}$, but all of them would either block one of the necessary visibilities of the $K_4$ on vertices $C^r_{j}$, or would be such that $R_{c^1_{j+1}}$ and $R_{c^2_{j+1}}$ cannot see each other. More precisely, $R_{\{c_{j + 1}, c^1_{j}\}} \Vvis R_{c^1_{j + 1}}$ is clearly not possible, while $R_{c^1_{j + 1}} \Vvis R_{\{c_{j + 1}, c^1_{j}\}}$ or $R_{c^1_{j + 1}} \symHvis R_{\{c_{j + 1}, c^1_{j}\}}$ means that $R_{c^1_{j+1}}$ and $R_{c^2_{j+1}}$ cannot see each other. Moreover, also $R_{c_{j + 1}} \Hvis R_{c^1_{j + 1}}$ and $R_{c^1_{j + 1}} \Vvis R_{c^1_{j}}$ implies that $R_{c^1_{j+1}}$ and $R_{c^2_{j+1}}$ cannot see each other, while $R_{c_{j + 1}} \Vvis R_{c^1_{j + 1}}$ and $R_{c^1_{j + 1}} \Hvis R_{c^1_{j}}$ means that either the visibility between $R_{c_{j+1}}$ and $R_{c_j}$ is blocked by $R_{c^1_{j+1}}$ or again $R_{c^1_{j+1}}$ and $R_{c^2_{j+1}}$ cannot see each other. This means that the situation illustrated in Figure~\ref{notAllOneSideLemmaFigure}$(b)$ is not possible. 
\item Situation illustrated in Figure~\ref{notAllOneSideLemmaFigure}$(a)$: Similar as in the previous case, we assume that $R_{c^1_{j}} \Vvis R_{c^2_{j}}$ (again, this uniquely defines all the unlabelled unit squares of Figure~\ref{notAllOneSideLemmaFigure}$(a)$) and note that the case $R_{c^2_{j}} \Vvis R_{c^1_{j}}$ can be handled analogously. We now consider all possibilities of how the unit squares $R_{L_{j + 1}}$ for the literal vertices $l^1_{j + 1}, l^2_{j + 1}, l^3_{j + 1}$ can be placed such that they see $R_{c_{j + 1}}$. \par
If, for some $r$, $1 \leq r \leq 3$, $R_{c_{j+1}} \Vvis R_{l^r_{j+1}}$, then there is at least one unit square $R_{z}$, with $z \in N(c_{j + 1})$, strictly between $R_{c_{j+1}}$ and $R_{l^r_{j+1}}$, which according to case $1$ of Lemma~\ref{between_all}, is not possible. If, for some $r$, $1 \leq r \leq 3$, $R_{l^r_{j+1}} \Vvis R_{c_{j+1}}$, such that $R_{l^r_{j+1}}$ and $R_{c_{j+1}}$ are not aligned, then there is at least one unit square $R_{z}$, with $z \in N(c_{j + 1})$, such that $R_{c_{j + 1}}$ does not block the view between $R_{l_{j+1}^r}$ and $R_z$, which according to case $3$ of Lemma~\ref{between_all}, is not possible.
Consequently, there is at most one $r$, $1 \leq r \leq 3$, such that $R_{l^r_{j+1}} \Vvis R_{c_{j+1}}$ and, furthermore, $R_{l^r_{j+1}}$ and $R_{c_{j+1}}$ must be aligned. If, for some $r$, $1 \leq r \leq 3$, $R_{c_{j+1}} \Hvis R_{l^r_{j+1}}$, then, due to case $1$ of Lemma~\ref{between_all} and the position of $R_{c^1_j}$, we can conclude that $R_{l^r_{j+1}}$ is not aligned with $R_{c_{j+1}}$, but shifted upwards. Furthermore, again due to case $1$ of Lemma~\ref{between_all}, there is at most one such $R_{l^r_{j+1}}$ with $R_{c_{j+1}} \Hvis R_{l^r_{j+1}}$. If, for some $r$, $1 \leq r \leq 3$, $R_{l^r_{j+1}} \Hvis R_{c_{j+1}}$, such that $R_{l^r_{j+1}}$ and $R_{c_{j+1}}$ are not aligned, but $R_{l^r_{j+1}}$ is shifted downwards, $R_{c_{j + 1}}$ does not block the view between $R_{c_j^1}$ and $R_{l_{j+1}^r}$ which according to case $3$ of Lemma~\ref{between_all}, is not possible. In particular, due to case $1$ of Lemma~\ref{between_all}, this means that there is at most one $R_{l^r_{j+1}}$ with $R_{l^r_{j+1}} \Hvis R_{c_{j+1}}$, which is either aligned with $R_{c_{j+1}}$ or shifted upwards. However, we may assume that there is an $r$, $1 \leq r \leq 3$, with $R_{c_{j+1}} \Hvis R_{l^r_{j+1}}$ (which, as explained above, is shifted upwards), since otherwise not all three unit squares in $R_{L_{j + 1}}$ can be placed. Consequently, by applying case $3$ Lemma~\ref{between_all}, if there is an $R_{l^r_{j+1}}$ with $R_{l^r_{j+1}} \Hvis R_{c_{j+1}}$, then $R_{l^r_{j+1}}$ is aligned with $R_{c_{j+1}}$. We conclude that the unit squares in $R_{L_{j + 1}}$ must be placed as illustrated in Figure~\ref{notAllOneSideLemmaFigureTwo}$(a)$ (obviously, the positions of the unit squares in $R_{L_{j + 1}}$ can be switched). \par
Next, as already done in the previous case, we again consider the unit squares $R_{c^1_{j + 1}}$ and $R_{c^2_{j + 1}}$ from $C^l_{j + 1}$, which both must see $R_{c_{j + 1}}$. Due to the positions of $R_{l^1_{j + 1}}$ and $R_{l^2_{j + 1}}$, and due to cases $1$ and $2$ of Lemma~\ref{between_all}, for every $z \in \{c^1_{j + 1}, c^2_{j + 1}\}$, neither $R_{z} \Hvis R_{c_{j + 1}}$ nor $R_{z} \Vvis R_{c_{j + 1}}$ is possible. If $R_{c_{j + 1}} \Hvis R_{c^2_{j + 1}}$, then, in order to also see $R_{c^2_{j}}$, $R_{c^2_{j + 1}}$ must be placed such that $R_{c^2_{j + 1}} \Vvis R_{c^2_{j}}$; as there is no space between $R_{c^2_{j + 1}}$ and $R_{c^1_{j}}$ to add another unit square, this implies that $R_{c^2_{j + 1}} \Vvis R_{c^1_{j}}$, which is a contradiction. Consequently, $R_{c_{j + 1}} \Vvis R_{c^2_{j + 1}}$. Clearly, $R_{c_{j + 1}} \Vvis R_{c^1_{j + 1}}$ is not possible, since then visibility between $R_{c^1_{j + 1}}$ and $R_{c^1_{j}}$ is not possible. Hence, $R_{c_{j + 1}} \Hvis R_{c^1_{j + 1}}$ (recall that above we have excluded all other direction). However, now there is no visibility between $R_{c^1_{j + 1}}$ and $R_{c^2_{j + 1}}$, which is a contradiction. Consequently, the situation illustrated in Figure~\ref{notAllOneSideLemmaFigure}$(a)$ is not possible.
\end{enumerate}
\item Case $R_{c_j} \Hvis R_y$: We first note that this yields the situation illustrated in Figure~\ref{notAllOneSideLemmaFigure}$(c)$. Again, we only consider the situation where the $K_4$ on $C^l_j$ satisfies case $1$ of Lemma~\ref{K4Lemma} as the further argument does not differ for case $3$ of Lemma~\ref{K4Lemma}. In the same way as for case $1$ from above, we can conclude that $y = c_{j+1}$. We again assume $R_{c^1_{j}} \Vvis R_{c^2_{j}}$ (since $R_{c^2_{j}} \Vvis R_{c^1_{j}}$ can be handled analogously), we note that this uniquely defines all unit squares (as illustrated in Figure~\ref{notAllOneSideLemmaFigureTwo}$(b)$), and again we consider how the literal-vertices for $c_{j+1}$ can be placed in order to see $R_{c_{j + 1}}$. Note that from case $1$ of Lemma~\ref{between_all}, it follows that if, for some $r$, $1 \leq r \leq 3$, $R_{c_{j+1}} \symHvis R_{l^r_{j+1}}$, then $R_{c_{j+1}}$ and $R_{l^r_{j+1}}$ is not aligned, but shifted upwards. Moreover, by case $3$ of Lemma~\ref{between_all}, there is at most one $r$, $1 \leq r \leq 3$, with $R_{c_{j+1}} \symHvis R_{l^r_{j+1}}$. From case $3$ of Lemma~\ref{between_all} it also follows that if, for some $r$, $1 \leq r \leq 3$, $R_{l^r_{j+1}} \Vvis R_{c_{j+1}}$, then $R_{l^r_{j+1}}$ is aligned with $R_{c_{j+1}}$ or shifted to the left.
From case $1$ of Lemma~\ref{between_all} it follows that if, for some $r$, $1 \leq r \leq 3$, $R_{c_{j+1}} \Vvis R_{l^r_{j+1}}$, then $R_{l^r_{j+1}}$ is not aligned with $R_{c_{j+1}}$, but shifted to the left. Since, according to case $3$ of Lemma~\ref{between_all}, it is not possible that, for some $r, r'$, $1 \leq r < r' \leq 3$, $R_{l^r_{j+1}} \Vvis R_{c_{j+1}}$ and $R_{c_{j+1}} \Vvis R_{l^{r'}_{j+1}}$ in such a way that both $R_{l^{r}_{j+1}}$ and $R_{l^{r'}_{j+1}}$ are not aligned with $R_{c_{j+1}}$, but shifted to the left, we can conclude that we either have the situation illustrated in Figure~\ref{notAllOneSideLemmaFigureTwo}$(b)$, or a similar situation with the only difference that $R_{c_{j + 1}} \Hvis R_{l^1_{j + 1}}$ instead of $R_{l^1_{j + 1}} \Hvis R_{c_{j + 1}}$, which can be handled analogously. Again, we now consider the unit squares $R_{c^1_{j + 1}}$ and $R_{c^2_{j + 1}}$. We first note that, due to cases $1$ and $2$ of Lemma~\ref{between_all}, $R_{c^1_{j + 1}} \Vvis R_{c_{j + 1}}$ is not possible. If $R_{c^1_{j + 1}} \Hvis R_{c_{j + 1}}$, then the position of $R_{l^1_{j + 1}}$ and cases $1$ and $2$ of Lemma~\ref{between_all} imply that $R_{c^1_{j + 1}}$ cannot be aligned with $R_{c_{j + 1}}$, but must be shifted down. However, then either $R_{c^1_{j + 1}}$ cannot see $c^1_{j}$, or it blocks the view between $R_{c_j}$ and one of $R_{C^r_{j}\setminus\{c_j\}}$. If $R_{c_{j + 1}} \Vvis R_{c^1_{j + 1}}$, then, due to the position of $R_{l^3_{j+1}}$ and cases $1$ and $2$ of Lemma~\ref{between_all}, $R_{c^1_{j + 1}}$ cannot be aligned with $R_{c_{j + 1}}$, but must be shifted to the right. However, then $R_{c^1_{j + 1}}$ cannot see $R_{c^1_{j}}$. Consequently, we can conclude that $R_{c_{j + 1}} \Hvis R_{c^1_{j + 1}}$. In the same way, we can exclude $R_{c^2_{j + 1}} \Vvis R_{c_{j + 1}}$ and $R_{c^2_{j + 1}} \Hvis R_{c_{j + 1}}$. If $R_{c_{j + 1}} \Vvis R_{c^2_{j + 1}}$, then, due to $R_{c^2_j}$, $R_{c^2_{j + 1}}$ cannot see $R^1_{c_{j + 1}}$. Consequently, we must have $R_{c_{j + 1}} \Hvis R_{c^2_{j + 1}}$. However, there is no way for $R_{c^2_{j+1}}$ to see both $R_{c_{j + 1}}$ and $R_{c^2_j}$ without seeing $R_{c^1_j}$ (i.\,e., $R_{c^2_{j+1}}$ would have been placed with less than one unit distance to $R_{c^1_j}$). Consequently, the situation illustrated in Figure~\ref{notAllOneSideLemmaFigure}$(b)$ is not possible.
\end{enumerate}
\end{proof}

\begin{figure}
\begin{center}

\begin{tikzpicture}

\coordinate (width) at (0.5,0);
\coordinate (height) at (0,0.5);
\coordinate (labelshift) at (0.25, 0.25);

\coordinate (cj) at ($(0,0)$);
\coordinate (cj+1) at ($(cj) + 0.75*(width) + 1.25*(height)$);
\coordinate (cj1) at ($(cj) + 2*(width) + 0.9*(height) $);
\coordinate (cj2) at ($(cj) + 1.25*(width) - 0.75*(height)$);
\coordinate (cj-11) at ($(cj) + 3.5*(width) + 0.75*(height)$);
\coordinate (cj-12) at ($(cj) + 3.25*(width) - 0.5*(height)$);
\coordinate (cj-1) at ($(cj) + 4.75*(width) + 0.25*(height)$);

\coordinate (5) at ($(cj2) - (0,0.75)$);

\node (label) at ($(5) + (labelshift)$)  {$(a)$};

\coordinate (temp) at (cj);
\draw[black] ($(temp)$) -- ($(temp) + (width)$) -- ($(temp) + (width) + (height)$) -- ($(temp) + (height)$) -- ($(temp)$);
\node (label) at ($(temp) + (labelshift)$)  {$c_j$};

\coordinate (temp) at (cj+1);
\draw[black] ($(temp)$) -- ($(temp) + (width)$) -- ($(temp) + (width) + (height)$) -- ($(temp) + (height)$) -- ($(temp)$);
\node (label) at ($(temp) + (labelshift)$)  {$y$};

\coordinate (temp) at (cj1);
\draw[black] ($(temp)$) -- ($(temp) + (width)$) -- ($(temp) + (width) + (height)$) -- ($(temp) + (height)$) -- ($(temp)$);

\coordinate (temp) at (cj2);
\draw[black] ($(temp)$) -- ($(temp) + (width)$) -- ($(temp) + (width) + (height)$) -- ($(temp) + (height)$) -- ($(temp)$);

\coordinate (temp) at (cj-11);
\draw[black] ($(temp)$) -- ($(temp) + (width)$) -- ($(temp) + (width) + (height)$) -- ($(temp) + (height)$) -- ($(temp)$);

\coordinate (temp) at (cj-12);
\draw[black] ($(temp)$) -- ($(temp) + (width)$) -- ($(temp) + (width) + (height)$) -- ($(temp) + (height)$) -- ($(temp)$);

\coordinate (temp) at (cj-1);
\draw[black] ($(temp)$) -- ($(temp) + (width)$) -- ($(temp) + (width) + (height)$) -- ($(temp) + (height)$) -- ($(temp)$);

\end{tikzpicture}
\hspace{1cm}
\begin{tikzpicture}

\coordinate (width) at (0.5,0);
\coordinate (height) at (0,0.5);
\coordinate (labelshift) at (0.25, 0.25);

\coordinate (cj) at ($(0,0)$);
\coordinate (cj+1) at ($(cj) + 0.9*(width) + 2*(height)$);
\coordinate (cj1) at ($(cj) + 1.8*(width) + 0.9*(height) $);
\coordinate (cj2) at ($(cj) + 1.25*(width) - 0.75*(height)$);
\coordinate (cj-11) at ($(cj) + 3.5*(width) + 0.75*(height)$);
\coordinate (cj-12) at ($(cj) + 3.25*(width) - 0.5*(height)$);
\coordinate (cj-1) at ($(cj) + 4.75*(width) + 0.25*(height)$);

\coordinate (5) at ($(cj2) - (0,0.75)$);

\node (label) at ($(5) + (labelshift)$)  {$(b)$};

\coordinate (temp) at (cj);
\draw[black] ($(temp)$) -- ($(temp) + (width)$) -- ($(temp) + (width) + (height)$) -- ($(temp) + (height)$) -- ($(temp)$);
\node (label) at ($(temp) + (labelshift)$)  {$c_j$};

\coordinate (temp) at (cj+1);
\draw[black] ($(temp)$) -- ($(temp) + (width)$) -- ($(temp) + (width) + (height)$) -- ($(temp) + (height)$) -- ($(temp)$);
\node (label) at ($(temp) + (labelshift)$)  {$y$};

\coordinate (temp) at (cj1);
\draw[black] ($(temp)$) -- ($(temp) + (width)$) -- ($(temp) + (width) + (height)$) -- ($(temp) + (height)$) -- ($(temp)$);

\coordinate (temp) at (cj2);
\draw[black] ($(temp)$) -- ($(temp) + (width)$) -- ($(temp) + (width) + (height)$) -- ($(temp) + (height)$) -- ($(temp)$);

\coordinate (temp) at (cj-11);
\draw[black] ($(temp)$) -- ($(temp) + (width)$) -- ($(temp) + (width) + (height)$) -- ($(temp) + (height)$) -- ($(temp)$);

\coordinate (temp) at (cj-12);
\draw[black] ($(temp)$) -- ($(temp) + (width)$) -- ($(temp) + (width) + (height)$) -- ($(temp) + (height)$) -- ($(temp)$);

\coordinate (temp) at (cj-1);
\draw[black] ($(temp)$) -- ($(temp) + (width)$) -- ($(temp) + (width) + (height)$) -- ($(temp) + (height)$) -- ($(temp)$);

\end{tikzpicture}
\hspace{1cm}

\begin{tikzpicture}

\coordinate (width) at (0.5,0);
\coordinate (height) at (0,0.5);
\coordinate (labelshift) at (0.25, 0.25);

\coordinate (cj) at ($(0,0)$);
\coordinate (cj+1) at ($(cj) + 1.25*(width) + 0.85*(height)$);
\coordinate (cj1) at ($(cj) + 2.5*(width) + 0.7*(height) $);
\coordinate (cj2) at ($(cj) + 1.6*(width) - 0.85*(height)$);
\coordinate (cj-11) at ($(cj) + 4*(width) + 0.55*(height)$);
\coordinate (cj-12) at ($(cj) + 3.75*(width) - 0.7*(height)$);
\coordinate (cj-1) at ($(cj) + 5.25*(width) + 0.05*(height)$);

\coordinate (5) at ($(cj2) - (0,0.75)$);

\node (label) at ($(5) + (labelshift)$)  {$(c)$};

\coordinate (temp) at (cj);
\draw[black] ($(temp)$) -- ($(temp) + (width)$) -- ($(temp) + (width) + (height)$) -- ($(temp) + (height)$) -- ($(temp)$);
\node (label) at ($(temp) + (labelshift)$)  {$c_j$};

\coordinate (temp) at (cj+1);
\draw[black] ($(temp)$) -- ($(temp) + (width)$) -- ($(temp) + (width) + (height)$) -- ($(temp) + (height)$) -- ($(temp)$);
\node (label) at ($(temp) + (labelshift)$)  {$y$};

\coordinate (temp) at (cj1);
\draw[black] ($(temp)$) -- ($(temp) + (width)$) -- ($(temp) + (width) + (height)$) -- ($(temp) + (height)$) -- ($(temp)$);

\coordinate (temp) at (cj2);
\draw[black] ($(temp)$) -- ($(temp) + (width)$) -- ($(temp) + (width) + (height)$) -- ($(temp) + (height)$) -- ($(temp)$);

\coordinate (temp) at (cj-11);
\draw[black] ($(temp)$) -- ($(temp) + (width)$) -- ($(temp) + (width) + (height)$) -- ($(temp) + (height)$) -- ($(temp)$);

\coordinate (temp) at (cj-12);
\draw[black] ($(temp)$) -- ($(temp) + (width)$) -- ($(temp) + (width) + (height)$) -- ($(temp) + (height)$) -- ($(temp)$);

\coordinate (temp) at (cj-1);
\draw[black] ($(temp)$) -- ($(temp) + (width)$) -- ($(temp) + (width) + (height)$) -- ($(temp) + (height)$) -- ($(temp)$);

\end{tikzpicture}

\caption{Illustrations for the proof of Lemma~\ref{notAllOneSideLemma}}
\label{notAllOneSideLemmaFigure}
\end{center}
\end{figure}

\begin{figure}
\begin{center}

\begin{tikzpicture}

\coordinate (width) at (0.6,0);
\coordinate (height) at (0,0.6);
\coordinate (labelshift) at (0.325, 0.3);

\coordinate (cj) at ($(0,0)$);
\coordinate (cj+1) at ($(cj) + 0.75*(width) + 1.25*(height)$);
\coordinate (cj1) at ($(cj) + 2*(width) + 0.9*(height) $);
\coordinate (cj2) at ($(cj) + 1.25*(width) - 0.75*(height)$);
\coordinate (cj-11) at ($(cj) + 3.5*(width) + 0.75*(height)$);
\coordinate (cj-12) at ($(cj) + 3.25*(width) - 0.5*(height)$);
\coordinate (cj-1) at ($(cj) + 4.75*(width) + 0.25*(height)$);

\coordinate (lj+11) at ($(cj+1) - 2.5*(width)$);
\coordinate (lj+12) at ($(cj+1) + 2.5*(height)$);
\coordinate (lj+13) at ($(cj+1) + 5.25*(width) + 0.9*(height)$);

\coordinate (5) at ($(cj2) - (0,0.75)$);

\node (label) at ($(5) + (labelshift)$)  {$(a)$};

\coordinate (temp) at (cj);
\draw[black] ($(temp)$) -- ($(temp) + (width)$) -- ($(temp) + (width) + (height)$) -- ($(temp) + (height)$) -- ($(temp)$);
\node (label) at ($(temp) + (labelshift)$)  {$c_j$};

\coordinate (temp) at (cj+1);
\draw[black] ($(temp)$) -- ($(temp) + (width)$) -- ($(temp) + (width) + (height)$) -- ($(temp) + (height)$) -- ($(temp)$);
\node (label) at ($(temp) + (labelshift)$)  {$c_{j + 1}$};

\coordinate (temp) at (cj1);
\draw[black] ($(temp)$) -- ($(temp) + (width)$) -- ($(temp) + (width) + (height)$) -- ($(temp) + (height)$) -- ($(temp)$);
\node (label) at ($(temp) + (labelshift)$)  {$c^1_j$};

\coordinate (temp) at (cj2);
\draw[black] ($(temp)$) -- ($(temp) + (width)$) -- ($(temp) + (width) + (height)$) -- ($(temp) + (height)$) -- ($(temp)$);
\node (label) at ($(temp) + (labelshift)$)  {$c^2_j$};

\coordinate (temp) at (cj-11);
\draw[black] ($(temp)$) -- ($(temp) + (width)$) -- ($(temp) + (width) + (height)$) -- ($(temp) + (height)$) -- ($(temp)$);
\node (label) at ($(temp) + (labelshift)$)  {$c^1_{j-1}$};

\coordinate (temp) at (cj-12);
\draw[black] ($(temp)$) -- ($(temp) + (width)$) -- ($(temp) + (width) + (height)$) -- ($(temp) + (height)$) -- ($(temp)$);
\node (label) at ($(temp) + (labelshift)$)  {$c^2_{j-1}$};

\coordinate (temp) at (cj-1);
\draw[black] ($(temp)$) -- ($(temp) + (width)$) -- ($(temp) + (width) + (height)$) -- ($(temp) + (height)$) -- ($(temp)$);
\node (label) at ($(temp) + (labelshift)$)  {$c_{j-1}$};

\coordinate (temp) at (lj+11);
\draw[black] ($(temp)$) -- ($(temp) + (width)$) -- ($(temp) + (width) + (height)$) -- ($(temp) + (height)$) -- ($(temp)$);
\node (label) at ($(temp) + (labelshift)$)  {$l^1_{j + 1}$};

\coordinate (temp) at (lj+12);
\draw[black] ($(temp)$) -- ($(temp) + (width)$) -- ($(temp) + (width) + (height)$) -- ($(temp) + (height)$) -- ($(temp)$);
\node (label) at ($(temp) + (labelshift)$)  {$l^2_{j + 1}$};

\coordinate (temp) at (lj+13);
\draw[black] ($(temp)$) -- ($(temp) + (width)$) -- ($(temp) + (width) + (height)$) -- ($(temp) + (height)$) -- ($(temp)$);
\node (label) at ($(temp) + (labelshift)$)  {$l^3_{j + 1}$};
\end{tikzpicture}
\hspace{1cm}
\begin{tikzpicture}

\coordinate (width) at (0.6,0);
\coordinate (height) at (0,0.6);
\coordinate (labelshift) at (0.325, 0.3);

\coordinate (cj) at ($(0,0)$);
\coordinate (cj+1) at ($(cj) + 1.75*(width) + 0.85*(height)$);
\coordinate (cj1) at ($(cj) + 3*(width) + 0.7*(height) $);
\coordinate (cj2) at ($(cj) + 2.5*(width) - 0.85*(height)$);
\coordinate (cj-11) at ($(cj) + 4.5*(width) + 0.55*(height)$);
\coordinate (cj-12) at ($(cj) + 4.25*(width) - 0.7*(height)$);
\coordinate (cj-1) at ($(cj) + 5.75*(width) + 0.05*(height)$);

\coordinate (lj+11) at ($(cj+1) - 3*(width) + 0.4*(height)$);
\coordinate (lj+12) at ($(cj+1) + 2.5*(height)$);
\coordinate (lj+13) at ($(cj+1) - 0.4*(width) - 3*(height)$);

\coordinate (5) at ($(lj+13) - (0,0.75)$);

\node (label) at ($(5) + (labelshift)$)  {$(b)$};

\coordinate (temp) at (cj);
\draw[black] ($(temp)$) -- ($(temp) + (width)$) -- ($(temp) + (width) + (height)$) -- ($(temp) + (height)$) -- ($(temp)$);
\node (label) at ($(temp) + (labelshift)$)  {$c_j$};

\coordinate (temp) at (cj+1);
\draw[black] ($(temp)$) -- ($(temp) + (width)$) -- ($(temp) + (width) + (height)$) -- ($(temp) + (height)$) -- ($(temp)$);
\node (label) at ($(temp) + (labelshift)$)  {$c_{j + 1}$};

\coordinate (temp) at (cj1);
\draw[black] ($(temp)$) -- ($(temp) + (width)$) -- ($(temp) + (width) + (height)$) -- ($(temp) + (height)$) -- ($(temp)$);
\node (label) at ($(temp) + (labelshift)$)  {$c^1_j$};

\coordinate (temp) at (cj2);
\draw[black] ($(temp)$) -- ($(temp) + (width)$) -- ($(temp) + (width) + (height)$) -- ($(temp) + (height)$) -- ($(temp)$);
\node (label) at ($(temp) + (labelshift)$)  {$c^2_j$};

\coordinate (temp) at (cj-11);
\draw[black] ($(temp)$) -- ($(temp) + (width)$) -- ($(temp) + (width) + (height)$) -- ($(temp) + (height)$) -- ($(temp)$);
\node (label) at ($(temp) + (labelshift)$)  {$c^1_{j-1}$};

\coordinate (temp) at (cj-12);
\draw[black] ($(temp)$) -- ($(temp) + (width)$) -- ($(temp) + (width) + (height)$) -- ($(temp) + (height)$) -- ($(temp)$);
\node (label) at ($(temp) + (labelshift)$)  {$c^2_{j-1}$};

\coordinate (temp) at (cj-1);
\draw[black] ($(temp)$) -- ($(temp) + (width)$) -- ($(temp) + (width) + (height)$) -- ($(temp) + (height)$) -- ($(temp)$);
\node (label) at ($(temp) + (labelshift)$)  {$c_{j-1}$};

\coordinate (temp) at (lj+11);
\draw[black] ($(temp)$) -- ($(temp) + (width)$) -- ($(temp) + (width) + (height)$) -- ($(temp) + (height)$) -- ($(temp)$);
\node (label) at ($(temp) + (labelshift)$)  {$l^1_{j + 1}$};

\coordinate (temp) at (lj+12);
\draw[black] ($(temp)$) -- ($(temp) + (width)$) -- ($(temp) + (width) + (height)$) -- ($(temp) + (height)$) -- ($(temp)$);
\node (label) at ($(temp) + (labelshift)$)  {$l^2_{j + 1}$};

\coordinate (temp) at (lj+13);
\draw[black] ($(temp)$) -- ($(temp) + (width)$) -- ($(temp) + (width) + (height)$) -- ($(temp) + (height)$) -- ($(temp)$);
\node (label) at ($(temp) + (labelshift)$)  {$l^3_{j + 1}$};
\end{tikzpicture}

\caption{Illustrations for the proof of Lemma~\ref{notAllOneSideLemma}}
\label{notAllOneSideLemmaFigureTwo}
\end{center}
\end{figure}

\subsection*{Proof of Lemma~\ref{literalsLemma}}


\begin{proof}
We first observe that it is not possible that $R_{c_j} \Hvis R_{L_j}$, $R_{L_j} \Hvis R_{c_j}$, $R_{c_j} \Vvis R_{L_j}$ or $R_{L_j} \Vvis R_{c_j}$. More precisely, all these cases mean that there are $R, R' \in R_{L_j}$, such that $R'$ is strictly between $R$ and $R_{c_j}$, which is a contradiction to statement $1$ of Lemma~\ref{between_all}. Hence, in order to prove the lemma, we only have to rule out the following cases (for the sake of convenience, we set $R_{L_j} = \{x, y, z\}$):
\begin{enumerate}
\item $x \Vvis R_{c_j}$, $R_{c_j} \Vvis y$ and $R_{c_j} \symHvis z$,
\item $x \Hvis R_{c_j}$, $R_{c_j} \Hvis y$ and $R_{c_j} \symVvis z$,
\item $\{x, y\} \Vvis R_{c_j}$ and $R_{c_j} \symHvis z$,
\item $R_{c_j} \Vvis \{x, y\}$ and $R_{c_j} \symHvis z$,
\item $\{x, y\} \Hvis R_{c_j}$ and $R_{c_j} \symVvis z$,
\item $R_{c_j} \Hvis \{x, y\}$ and $R_{c_j} \symVvis z$.
\end{enumerate}
Since cases $1$ and $2$ are symmetric, as well as cases $3$, $4$, $5$ and $6$, we only consider cases $1$ and $3$.
\begin{enumerate}
\item Case $1$ (see Figure~\ref{clausePathLemmaFigure}$(a)$): We assume that $R_{c_j} \Hvis z$; the case $z \Hvis R_{c_j}$ can be handled analogously. Due to statement $3$ of Lemma~\ref{between_all}, we can assume that $R_{c_j}$ blocks the view between $x$ and $y$ (which, in particular, means that if both $x$ and $y$ are not aligned with $R_{c_j}$, then they cannot both be shifted to the same side). We now consider the $K_4$ on vertices $C^l_{j}$. First, we assume that the unit squares $R_{C^l_{j}}$ are placed such that for some $S, S' \in R_{C^l_{j}}$ with $S \neq S'$, $R_{c_{j}} \Hvis \{S, S'\}$. By consulting Lemma~\ref{K4Lemma}, we observe that this means that there are $T, T' \in R_{C^l_{j}}$, such that  $R_{c_{j}} \Hvis \{T, T'\}$, neither $T$ nor $T'$ are aligned with $R_{c_{j}}$, $T$ is shifted upwards and $T'$ is shifted downwards. However, this necessarily means that there is a unit square $R \in R_{N(c_j)}$, such that $R$ is strictly between $R_{c_j}$ and $z$, or $z$ is strictly between $R_{c_j}$ and $R$, which is a contradiction to statement $1$ or $2$, respectively, of Lemma~\ref{between_all}. The same argument applies to the situations that two unit squares of $R_{C^l_{j}}$ are placed within vertical visibility both above or both below $R_{c_j}$. \par
If there are $R, R' \in R_{C^l_{j}}$ with $R \Vvis R_{c_j}$ and $R_{c_j} \Vvis R'$, they have to be shifted to the same side in order to see each other. However, since $x$ and $y$ are shifted to opposite directions, this means that $R$ or $R'$ is strictly between $R_{c_j}$ and $x$ or $y$, or $x$ or $y$ is strictly between $R$ or $R'$, which is a contradiction to case $1$ or $2$, respectively, of Lemma~\ref{between_all}. Consequently, there is at most one $R \in R_{C^l_{j}}$ with $R \symVvis R_{c_j}$ and, in the following, we assume that $R \Vvis R_{c_j}$ holds (the case $R_{c_j} \Vvis R$ can be handled analogously). In particular, this means $y$ is aligned with $R_{c_j}$. Now let $R'$ and $R''$ be the two remaining unit squares from $R_{C^l_{j}}$, i.\,e., $\{R', R''\} = R_{C^l_{j} \setminus \{c_j\}} \setminus \{R\}$. According to what we observed above, either $R' \Hvis R_{c_j}$ and $R_{c_j} \Hvis R''$, or $\{R', R''\} \Hvis R_{c_j}$. We first assume the former case and note that, according to Lemma~\ref{K4Lemma}, this means that the $K_4$ on vertices $C^l_{j}$ satisfies case $1$ of Lemma~\ref{K4Lemma} with $R_{c_j}$ playing the role of $R_4$ (Figure~\ref{clausePathLemmaFigure2}(a)), or case $3$ of Lemma~\ref{K4Lemma} with $R_{c_j}$ playing the role of $R_3$ (Figure~\ref{clausePathLemmaFigure2}(b)). In both these cases $z$ has to be shifted downwards to avoid a contradiction to case $1$ or $2$ of Lemma~\ref{between_all} with $R''$. On the other hand, if $\{R', R''\} \Hvis R_{c_j}$, then the $K_4$ on $C^l_{j}$ either satisfies case $2$ (Figure~\ref{clausePathLemmaFigure2}(c)) or case $3$ with $R$ in the role of $R_4$  (Figure~\ref{clausePathLemmaFigure2}(d)) of Lemma~\ref{K4Lemma}. Consequently, under the assumption that, for some $R \in R_{C^l_{j} \setminus \{c_j\}}$, $R \Vvis R_{c_j}$, the unit squares for the $K_4$ on vertices $C^l_{j}$ satisfy one of the cases illustrated in Figure~\ref{clausePathLemmaFigure2}$(a)$ to $(d)$, and, since the arguments from above apply in the same way, the same holds for the unit squares for the $K_4$ on vertices $C^r_{j}$. 
\begin{enumerate}
\item Both $K_4$ on vertices $C^l_{j}$ and $C^r_{j}$ satisfy case $(a)$, $(b)$ or $(c)$ from Figure~\ref{clausePathLemmaFigure2}: We note that this implies that there are $R \in R_{C^l_{j} \setminus \{c_j\}}$ and $S \in R_{C^r_{j} \setminus \{c_j\}}$ with $\{R, S\} \Vvis R_{c_j}$ with a distance of less than one unit from $R_{c_j}$. This is only possible if $R$ and $S$ are place (horizontally) next to each other, which means that one of them is strictly between $x$ and $R_{c_j}$, which yields a contradiction with case $1$ Lemma~\ref{between_all}.
\item The $K_4$ on vertices $C^l_{j}$ or the the $K_4$ on vertices $C^r_{j}$ satisfies case $(d)$ of Figure~\ref{clausePathLemmaFigure2}: We assume that the $K_4$ on vertices $C^l_{j}$ satisfies case $(d)$ of Figure~\ref{clausePathLemmaFigure2} (the other case is analogous). We note that this implies that there are $R', R'' \in R_{C^l_{j} \setminus \{c_j\}}$ with $\{R', R''\} \Hvis R_{c_j}$, such that both $R'$ and $R''$ have a horizontal distance of less than one to $R_{c_j}$. This means that there is no $S \in R_{C^r_{j} \setminus \{c_j\}}$ with $S \Hvis R_{c_j}$ that also has a distance of less that one unit from $R_{c_j}$. Consequently, the $K_4$ on vertices $C^r_{j}$ can only satisfy case $(a)$ of Figure~\ref{clausePathLemmaFigure2}. However, in this case $z$ cannot be aligned with $R_{c_j}$, which contradicts the fact that the $K_4$ on vertices $C^l_{j}$ satisfies case $(d)$ of Figure~\ref{clausePathLemmaFigure2} (which requires $z$ to be aligned with $R_{c_j}$).
\end{enumerate}
Consequently, we can assume that there is no $R \in R_{C^l_j \setminus \{c_j\}}$ with $R \symVvis R_{c_j}$ (or that this holds for the $K_4$ on vertices $C^r_{j}$, which can be handled analogously). Consequently, $R_{C^l_{j} \setminus \{c_j\}} \symHvis R_{c_j}$, which, by Lemma~\ref{K4Lemma}, implies that either $R_{C^l_{j} \setminus \{c_j\}} \Hvis R_{c_j}$ or $R_{c_j} \Hvis R_{C^l_{j} \setminus \{c_j\}}$. Since, as explained above, the latter leads to a contradiction, we can conclude that $R_{C^l_{j} \setminus \{c_j\}} \Hvis R_{c_j}$. Now if the $K_4$ on vertices $C^r_{j}$ satisfies case $(c)$ or $(d)$ of Figure~\ref{clausePathLemmaFigure2}, or if this $K_4$ is also realised exclusively by horizontal visibilities, then we obtain a contradiction to Lemma~\ref{notAllOneSideLemma}. Thus, we assume that the $K_4$ on vertices $C^r_{j}$ satisfies case $(a)$ or $(b)$, which means that $z$ is not aligned with $R_{c_j}$. This is a contradiction, since, due to Lemmas~\ref{K4Lemma}~and~\ref{between_all}, $R_{C^l_{j} \setminus \{c_j\}} \Hvis R_{c_j}$ implies that $z$ must be aligned with $R_{c_j}$. 
\item Case $3$ (Figure~\ref{clausePathLemmaFigure}$(b)$): Due to cases $1$ and $2$ of Lemma~\ref{between_all}, we know that neither $x$ nor $y$ is aligned with $R_{c_j}$ and, furthermore, there is no $R \in R_{C_{j}}$ with $R \Vvis R_{c_j}$. Next, we assume that there is also no $R \in R_{C_{j}}$ with $R_{c_j} \Vvis R$, which implies $R_{C_j \setminus \{c_j\}} \symHvis R_{c_j}$. By Lemma~\ref{K4Lemma}, this means that either $R_{C^l_j \setminus \{c_j\}} \Hvis R_{c_j}$ or $R_{c_j} \Hvis R_{C^l_j \setminus \{c_j\}}$ and either $R_{C^r_j \setminus \{c_j\}} \Hvis R_{c_j}$ or $R_{c_j} \Hvis R_{C^r_j \setminus \{c_j\}}$. However, $R_{c_j} \Hvis R_{C^l_j \setminus \{c_j\}}$ or $R_{c_j} \Hvis R_{C^r_j \setminus \{c_j\}}$ yields a contradiction with Lemma~\ref{between_all}, which implies that $R_{C_{j} \setminus \{c_j\}} \Hvis R_{c_j}$. This is a contradiction to Lemma~\ref{notAllOneSideLemma}. Consequently, there is at least one $R \in R_{C_{j}}$ with $R_{c_j} \Vvis R$ and, due to case $3$ of Lemma~\ref{between_all}, we can conclude that there is exactly one such unit square that is aligned with $R_{c_j}$. Moreover, without loss of generality, let $R \in R_{C^l_{j}}$. This means that the $K_4$ on vertices $C^l_j$ satisfies case $1$ of Lemma~\ref{K4Lemma} with $R_{c_j}$ playing the role of $R_2$. In particular, this implies that $z$ cannot be aligned with $R_{c_j}$, since this would lead to a contradiction with case $1$ or $2$ of Lemma~\ref{between_all}. However, due to the fact $R_{C^r_{j} \setminus \{c_j\}} \symHvis R_{c_j}$, we obtain a contradiction to one of the cases of Lemma~\ref{between_all}.

\begin{figure}
\begin{center}
\begin{tabular}{p{3.5cm} p{3.5cm} p{3cm} p{3cm}}
\centering\begin{tikzpicture}

\coordinate (width) at (0.4,0);
\coordinate (height) at (0,0.4);
\coordinate (labelshift) at (0.2, 0.2);

\coordinate (1) at ($(0,0)$);
\coordinate (2) at ($(1) + 0.6*(width) + 3*(height)$);
\coordinate (3) at ($(1) - 2.3*(height)$);
\coordinate (4) at ($(1) + 4*(width) - 0.6*(height)$);
\coordinate (5) at ($(1) -0.5*(width)+1.3*(height)$);
\coordinate (7) at ($(1) + 2.3*(width)+0.5*(height)$);
\coordinate (6) at ($(1) -1.7*(width)+0.5*(height)$);

\coordinate (temp) at (1);
\draw[black] ($(temp)$) -- ($(temp) + (width)$) -- ($(temp) + (width) + (height)$) -- ($(temp) + (height)$) -- ($(temp)$);
\node (label) at ($(temp) + (labelshift)$)  {$c_j$};

\coordinate (temp) at (2);
\draw[black] ($(temp)$) -- ($(temp) + (width)$) -- ($(temp) + (width) + (height)$) -- ($(temp) + (height)$) -- ($(temp)$);
\node (label) at ($(temp) + (labelshift)$)  {$x$};

\coordinate (temp) at (3);
\draw[black] ($(temp)$) -- ($(temp) + (width)$) -- ($(temp) + (width) + (height)$) -- ($(temp) + (height)$) -- ($(temp)$);
\node (label) at ($(temp) + (labelshift)$)  {$y$};

\coordinate (temp) at (4);
\draw[black] ($(temp)$) -- ($(temp) + (width)$) -- ($(temp) + (width) + (height)$) -- ($(temp) + (height)$) -- ($(temp)$);
\node (label) at ($(temp) + (labelshift)$)  {$z$};

\coordinate (temp) at (5);
\draw[black] ($(temp)$) -- ($(temp) + (width)$) -- ($(temp) + (width) + (height)$) -- ($(temp) + (height)$) -- ($(temp)$);
\node (label) at ($(temp) + (labelshift)$)  {$R$};

\coordinate (temp) at (6);
\draw[black] ($(temp)$) -- ($(temp) + (width)$) -- ($(temp) + (width) + (height)$) -- ($(temp) + (height)$) -- ($(temp)$);
\node (label) at ($(temp) + (labelshift)$)  {$R'$};

\coordinate (temp) at (7);
\draw[black] ($(temp)$) -- ($(temp) + (width)$) -- ($(temp) + (width) + (height)$) -- ($(temp) + (height)$) -- ($(temp)$);
\node (label) at ($(temp) + (labelshift)$)  {$R''$};

\end{tikzpicture}
&
\centering \begin{tikzpicture}

\coordinate (width) at (0.4,0);
\coordinate (height) at (0,0.4);
\coordinate (labelshift) at (0.2, 0.2);

\coordinate (1) at ($(0,0)$);
\coordinate (2) at ($(1) + 0.6*(width) + 3*(height)$);
\coordinate (3) at ($(1) - 2.3*(height)$);
\coordinate (4) at ($(1) + 4*(width) - 0.6*(height)$);
\coordinate (5) at ($(1) -0.5*(width)+1.5*(height)$);
\coordinate (7) at ($(1) +2.3*(width)+0.6*(height)$);
\coordinate (6) at ($(1) -1.3*(width)+0.3*(height)$);

\coordinate (temp) at (1);
\draw[black] ($(temp)$) -- ($(temp) + (width)$) -- ($(temp) + (width) + (height)$) -- ($(temp) + (height)$) -- ($(temp)$);
\node (label) at ($(temp) + (labelshift)$)  {$c_j$};

\coordinate (temp) at (2);
\draw[black] ($(temp)$) -- ($(temp) + (width)$) -- ($(temp) + (width) + (height)$) -- ($(temp) + (height)$) -- ($(temp)$);
\node (label) at ($(temp) + (labelshift)$)  {$x$};

\coordinate (temp) at (3);
\draw[black] ($(temp)$) -- ($(temp) + (width)$) -- ($(temp) + (width) + (height)$) -- ($(temp) + (height)$) -- ($(temp)$);
\node (label) at ($(temp) + (labelshift)$)  {$y$};

\coordinate (temp) at (4);
\draw[black] ($(temp)$) -- ($(temp) + (width)$) -- ($(temp) + (width) + (height)$) -- ($(temp) + (height)$) -- ($(temp)$);
\node (label) at ($(temp) + (labelshift)$)  {$z$};

\coordinate (temp) at (5);
\draw[black] ($(temp)$) -- ($(temp) + (width)$) -- ($(temp) + (width) + (height)$) -- ($(temp) + (height)$) -- ($(temp)$);
\node (label) at ($(temp) + (labelshift)$)  {$R$};

\coordinate (temp) at (6);
\draw[black] ($(temp)$) -- ($(temp) + (width)$) -- ($(temp) + (width) + (height)$) -- ($(temp) + (height)$) -- ($(temp)$);
\node (label) at ($(temp) + (labelshift)$)  {$R'$};

\coordinate (temp) at (7);
\draw[black] ($(temp)$) -- ($(temp) + (width)$) -- ($(temp) + (width) + (height)$) -- ($(temp) + (height)$) -- ($(temp)$);
\node (label) at ($(temp) + (labelshift)$)  {$R''$};

\end{tikzpicture}&\centering \begin{tikzpicture}

\coordinate (width) at (0.4,0);
\coordinate (height) at (0,0.4);
\coordinate (labelshift) at (0.2, 0.2);

\coordinate (1) at ($(0,0)$);
\coordinate (2) at ($(1) + 0.6*(width) + 3*(height)$);
\coordinate (3) at ($(1) - 2.3*(height)$);
\coordinate (4) at ($(1) + 2*(width)$);
\coordinate (5) at ($(1) -0.5*(width)+1.3*(height)$);
\coordinate (7) at ($(1) -1.2*(width)-0.6*(height)$);
\coordinate (6) at ($(1) -1.7*(width)+0.8*(height)$);

\coordinate (temp) at (1);
\draw[black] ($(temp)$) -- ($(temp) + (width)$) -- ($(temp) + (width) + (height)$) -- ($(temp) + (height)$) -- ($(temp)$);
\node (label) at ($(temp) + (labelshift)$)  {$c_j$};

\coordinate (temp) at (2);
\draw[black] ($(temp)$) -- ($(temp) + (width)$) -- ($(temp) + (width) + (height)$) -- ($(temp) + (height)$) -- ($(temp)$);
\node (label) at ($(temp) + (labelshift)$)  {$x$};

\coordinate (temp) at (3);
\draw[black] ($(temp)$) -- ($(temp) + (width)$) -- ($(temp) + (width) + (height)$) -- ($(temp) + (height)$) -- ($(temp)$);
\node (label) at ($(temp) + (labelshift)$)  {$y$};

\coordinate (temp) at (4);
\draw[black] ($(temp)$) -- ($(temp) + (width)$) -- ($(temp) + (width) + (height)$) -- ($(temp) + (height)$) -- ($(temp)$);
\node (label) at ($(temp) + (labelshift)$)  {$z$};

\coordinate (temp) at (5);
\draw[black] ($(temp)$) -- ($(temp) + (width)$) -- ($(temp) + (width) + (height)$) -- ($(temp) + (height)$) -- ($(temp)$);
\node (label) at ($(temp) + (labelshift)$)  {$R$};

\coordinate (temp) at (6);
\draw[black] ($(temp)$) -- ($(temp) + (width)$) -- ($(temp) + (width) + (height)$) -- ($(temp) + (height)$) -- ($(temp)$);
\node (label) at ($(temp) + (labelshift)$)  {$R'$};

\coordinate (temp) at (7);
\draw[black] ($(temp)$) -- ($(temp) + (width)$) -- ($(temp) + (width) + (height)$) -- ($(temp) + (height)$) -- ($(temp)$);
\node (label) at ($(temp) + (labelshift)$)  {$R''$};

\end{tikzpicture}
&
\centering\begin{tikzpicture}

\coordinate (width) at (0.4,0);
\coordinate (height) at (0,0.4);
\coordinate (labelshift) at (0.2, 0.2);

\coordinate (1) at ($(0,0)$);
\coordinate (2) at ($(1) + 0.7*(width) + 3*(height)$);
\coordinate (3) at ($(1) - 2.3*(height)$);
\coordinate (4) at ($(1) + 2*(width) $);
\coordinate (5) at ($(1) -0.7*(width)+2*(height)$);
\coordinate (7) at ($(1) -1.2*(width)-0.7*(height)$);
\coordinate (6) at ($(1) -1.5*(width)+0.7*(height)$);

\coordinate (temp) at (1);
\draw[black] ($(temp)$) -- ($(temp) + (width)$) -- ($(temp) + (width) + (height)$) -- ($(temp) + (height)$) -- ($(temp)$);
\node (label) at ($(temp) + (labelshift)$)  {$c_j$};

\coordinate (temp) at (2);
\draw[black] ($(temp)$) -- ($(temp) + (width)$) -- ($(temp) + (width) + (height)$) -- ($(temp) + (height)$) -- ($(temp)$);
\node (label) at ($(temp) + (labelshift)$)  {$x$};

\coordinate (temp) at (3);
\draw[black] ($(temp)$) -- ($(temp) + (width)$) -- ($(temp) + (width) + (height)$) -- ($(temp) + (height)$) -- ($(temp)$);
\node (label) at ($(temp) + (labelshift)$)  {$y$};

\coordinate (temp) at (4);
\draw[black] ($(temp)$) -- ($(temp) + (width)$) -- ($(temp) + (width) + (height)$) -- ($(temp) + (height)$) -- ($(temp)$);
\node (label) at ($(temp) + (labelshift)$)  {$z$};

\coordinate (temp) at (5);
\draw[black] ($(temp)$) -- ($(temp) + (width)$) -- ($(temp) + (width) + (height)$) -- ($(temp) + (height)$) -- ($(temp)$);
\node (label) at ($(temp) + (labelshift)$)  {$R$};

\coordinate (temp) at (6);
\draw[black] ($(temp)$) -- ($(temp) + (width)$) -- ($(temp) + (width) + (height)$) -- ($(temp) + (height)$) -- ($(temp)$);
\node (label) at ($(temp) + (labelshift)$)  {$R'$};

\coordinate (temp) at (7);
\draw[black] ($(temp)$) -- ($(temp) + (width)$) -- ($(temp) + (width) + (height)$) -- ($(temp) + (height)$) -- ($(temp)$);
\node (label) at ($(temp) + (labelshift)$)  {$R''$};

\end{tikzpicture}

\cr & \cr
\centering$(a)$&\centering $(b)$&\centering$(c)$&\centering $(d)$

\end{tabular}
\caption{Illustrations for the proof of Lemma~\ref{literalsLemma}.} 
\label{clausePathLemmaFigure2}
\end{center}
\end{figure}

\begin{figure}
\begin{center} 

\begin{tikzpicture}

\coordinate (width) at (0.4,0);
\coordinate (height) at (0,0.4);
\coordinate (labelshift) at (0.2, 0.2);

\coordinate (1) at ($(0,0)$);
\coordinate (2) at ($(1) - 0.6*(width) + 2.3*(height)$);
\coordinate (3) at ($(1) + 0.6*(width) - 2.3*(height)$);
\coordinate (4) at ($(1) + 2.3*(width) + 0.3*(height)$);
\coordinate (5) at ($(3) - (0,0.75)$);

\node (label) at ($(5) + (labelshift)$)  {$(a)$};

\coordinate (temp) at (1);
\draw[black] ($(temp)$) -- ($(temp) + (width)$) -- ($(temp) + (width) + (height)$) -- ($(temp) + (height)$) -- ($(temp)$);
\node (label) at ($(temp) + (labelshift)$)  {$c_j$};

\coordinate (temp) at (2);
\draw[black] ($(temp)$) -- ($(temp) + (width)$) -- ($(temp) + (width) + (height)$) -- ($(temp) + (height)$) -- ($(temp)$);
\node (label) at ($(temp) + (labelshift)$)  {$x$};

\coordinate (temp) at (3);
\draw[black] ($(temp)$) -- ($(temp) + (width)$) -- ($(temp) + (width) + (height)$) -- ($(temp) + (height)$) -- ($(temp)$);
\node (label) at ($(temp) + (labelshift)$)  {$y$};

\coordinate (temp) at (4);
\draw[black] ($(temp)$) -- ($(temp) + (width)$) -- ($(temp) + (width) + (height)$) -- ($(temp) + (height)$) -- ($(temp)$);
\node (label) at ($(temp) + (labelshift)$)  {$z$};

\end{tikzpicture}
\hspace{1.5cm}
\begin{tikzpicture}

\coordinate (width) at (0.4,0);
\coordinate (height) at (0,0.4);
\coordinate (labelshift) at (0.2, 0.2);

\coordinate (1) at ($(0,0)$);
\coordinate (2) at ($(1) - 0.8*(width) + 3.2*(height)$);
\coordinate (3) at ($(1) + 0.4*(width) + 2.0*(height)$);
\coordinate (4) at ($(1) + 3*(width) + 0.5*(height)$);
\coordinate (5) at ($(1) - (0,0.75)$);

\node (label) at ($(5) + (labelshift)$)  {$(b)$};

\coordinate (temp) at (1);
\draw[black] ($(temp)$) -- ($(temp) + (width)$) -- ($(temp) + (width) + (height)$) -- ($(temp) + (height)$) -- ($(temp)$);
\node (label) at ($(temp) + (labelshift)$)  {$c_j$};

\coordinate (temp) at (2);
\draw[black] ($(temp)$) -- ($(temp) + (width)$) -- ($(temp) + (width) + (height)$) -- ($(temp) + (height)$) -- ($(temp)$);
\node (label) at ($(temp) + (labelshift)$)  {$x$};

\coordinate (temp) at (3);
\draw[black] ($(temp)$) -- ($(temp) + (width)$) -- ($(temp) + (width) + (height)$) -- ($(temp) + (height)$) -- ($(temp)$);
\node (label) at ($(temp) + (labelshift)$)  {$y$};

\coordinate (temp) at (4);
\draw[black] ($(temp)$) -- ($(temp) + (width)$) -- ($(temp) + (width) + (height)$) -- ($(temp) + (height)$) -- ($(temp)$);
\node (label) at ($(temp) + (labelshift)$)  {$z$};

\end{tikzpicture}
\hspace{1.5cm}

\caption{Illustrations for the proof of Lemma~\ref{literalsLemma}.}
\label{clausePathLemmaFigure}
\end{center}
\end{figure}
\end{enumerate}

\end{proof}

\subsection*{Proof of Lemma~\ref{final}}

Before we prove the lemma, we observe that Lemmas~\ref{between_all},~\ref{notAllOneSideLemma}~and~\ref{literalsLemma} also hold for the part of the graph consisiting of the vertices $\{x_i, x^1_i, x^2_i \mid 1 \leq i \leq n + 1\} \cup \{t_i, f^1_i, f^2_i,  \mid 1 \leq i \leq n\}$. More precisely, let $X_0=\{c_{2m-1},c_{2m-1}^1,c_{2m-1}^2,c_{2m},x_1^1,x_1^2,x_1\}$, $X_1=\{c_{2m},x_1^1,x_1^2,x_1,x_2^1,x_2^2,x_2\}$ and $X_i=\{x_{i-1,}x_i^1,x_i^2,x_i,x_{i+1}^1,x_{i+1}^2,x_{i+1}\}$, for all $i$, $2\leq i\leq n$, and let $A_i=\{t_i,f_i^1,f_i^2\}$, for all $i$, $1 \leq i \leq n$. Then Lemma~\ref{between_all} also holds for the version where $c_i$ is replaced by $x_i$ and $l^r_i$ is replaced by $t_i$, $f^1_i$ or $f^2_i$, Lemma~\ref{notAllOneSideLemma} also holds for the version where $C_i$ is replaced by $X_i$ and $c_i$ is replaced by $x_i$ (or $c_{2m}$ in case of $X_0$), and Lemma~\ref{literalsLemma} also holds for the version where $c_i$ is replaced by $x_i$ and $L_i$ is replaced by $A_i$. This is due to the identical structure of these parts of the graph. In the following, we shall refer to these more general versions of the lemmas.


\begin{proof}
We assume that, for some $j$, $1 \leq j \leq 2m-1$, $R_{c_j} \symVvis L_{j}$. If, for some $x \in C_{j} \setminus \{c_j\}$, $R_x \symVvis R_{c_j}$, then we obtain a contradiction with case $1$ or $2$ of Lemma~\ref{between_all}; thus, $R_{c_j} \symHvis R_{C_j \setminus \{c_j\}}$. Consequently, for every $j$, $1\leq j\leq 2m-1$, $R_{c_j} \symHvis R_{C_j \setminus \{c_j\}}$ or $R_{c_j} \symVvis R_{C_j \setminus \{c_j\}}$, $R_{c_{2m}} \symHvis R_{X_0 \setminus \{c_{2m}\}}$ or $R_{c_{2m}} \symVvis R_{X_0 \setminus \{c_{2m}\}}$ and, for every $i$, $1\leq i\leq n$, $R_{x_i} \symHvis R_{X_i \setminus \{x_i\}}$ or $R_{x_i} \symVvis R_{X_i \setminus \{x_i\}}$.\par
We assume, without loss of generality, that $R_{c_1} \symHvis R_{C_1 \setminus \{c_1\}}$. By Lemma~\ref{K4Lemma}, this implies that either $R_{c_1} \Hvis R_{C^l_1 \setminus \{c_1\}}$ or $R_{C^l_1 \setminus \{c_1\}} \Hvis R_{c_1}$ and that either $R_{c_1} \Hvis R_{C^r_1 \setminus \{c_1\}}$ or $R_{C^r_1 \setminus \{c_1\}} \Hvis R_{c_1}$. Moreover, Lemma~\ref{notAllOneSideLemma} yields that $R_{C^l_1 \setminus \{c_1\}} \Hvis R_{c_1}$ if and only if  $R_{c_1} \Hvis R_{C^r_1 \setminus \{c_1\}}$. Obviously, this argument applies to every $c_j$, $1 \leq i \leq 2m$, and every $i$, $1 \leq i \leq n$. We now assume, without loss of generality, that $R_{C^l_1 \setminus \{c_1\}} \Hvis R_{c_1}$, which implies $R_{c_1} \Hvis R_{C^r_1 \setminus \{c_1\}}$ and, in particular, $R_{C^l_2 \setminus \{c_2\}} \Hvis R_{c_2}$. Repeating this argument inductively on all $C_{j}$, $1\leq j\leq 2m-1$, and on all $X_i$, $0 \leq i \leq n$, implies that the part of the graph consisting of vertices $\{c_i,c_i^1,c_i^2\mid 0\leq j\leq 2m-1\}\cup\{c_{2m}\}\cup\{x_i,x_i^1,x_i^2\mid 1\leq i\leq n+1\}$, which we shall call \emph{backbone} in the following, is represented by a layout that is \visomorphic{} to the one in Figure~\ref{backbone}, except for the $K_4$ on vertices $R_{C_1^l}$ and the $K_4$ on vertices $x_n, x^1_{n+1}, x^2_{n+1}, , x_{n+1}$, which could also satisfy case $3$ of Lemma~\ref{K4Lemma} (note that all the other $K_4$ must satisfy case $1$ of Lemma~\ref{K4Lemma}, since all their visibilities are horizontal). Moreover, as explained above, this also implies that, for every $j$, $1 \leq j \leq 2m$, $R_{L_j} \symVvis R_{c_j}$ and, for every $i$, $1 \leq i \leq n$, $R_{A_i} \symVvis R_{x_i}$.\par

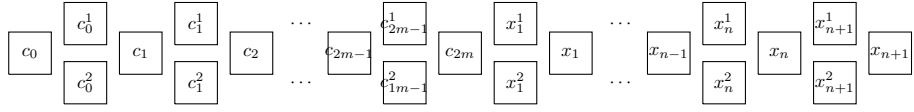
\begin{figure}
\begin{center}
\begin{tikzpicture}[scale=0.7, transform shape]
\coordinate (width) at (0.8,0);
\coordinate (height) at (0,0.8);
\coordinate (labelshift) at (0.4, 0.4);

\coordinate (1) at ($(0,0)$);
\coordinate (2) at ($(1) + 2.6*(width)$);
\coordinate (3) at ($(1) + 0.7*(height) + 1.3*(width)$);
\coordinate (4) at ($(1) - 0.7*(height) + 1.3*(width)$);
\coordinate (5) at ($(2) + 2.6*(width)$);
\coordinate (6) at ($(2) + 0.7*(height) + 1.3*(width)$);
\coordinate (7) at ($(2) - 0.7*(height) + 1.3*(width)$);

\coordinate (8) at ($(6) + 3*(width) + 0.5*(height)$);
\coordinate (9) at ($(7) + 3*(width) + 0.5*(height)$);

\coordinate (10) at ($7.5*(width)$);
\coordinate (11) at ($(10) + 2.6*(width)$);
\coordinate (12) at ($(10) + 0.7*(height) + 1.3*(width)$);
\coordinate (13) at ($(10) - 0.7*(height) + 1.3*(width)$);
\coordinate (14) at ($(11) + 2.6*(width)$);
\coordinate (15) at ($(11) + 0.7*(height) + 1.3*(width)$);
\coordinate (16) at ($(11) - 0.7*(height) + 1.3*(width)$);

\coordinate (17) at ($(15) + 3*(width) + 0.5*(height)$);
\coordinate (18) at ($(16) + 3*(width) + 0.5*(height)$);

\coordinate (101) at ($15*(width)$);
\coordinate (111) at ($(101) + 2.6*(width)$);
\coordinate (121) at ($(101) + 0.7*(height) + 1.3*(width)$);
\coordinate (131) at ($(101) - 0.7*(height) + 1.3*(width)$);
\coordinate (141) at ($(111) + 2.6*(width)$);
\coordinate (151) at ($(111) + 0.7*(height) + 1.3*(width)$);
\coordinate (161) at ($(111) - 0.7*(height) + 1.3*(width)$);

\coordinate (temp) at (1);
\draw[black] ($(temp)$) -- ($(temp) + (width)$) -- ($(temp) + (width) + (height)$) -- ($(temp) + (height)$) -- ($(temp)$);
\node (label) at ($(temp) + (labelshift)$)  {$c_{0}$};

\coordinate (temp) at (2);
\draw[black] ($(temp)$) -- ($(temp) + (width)$) -- ($(temp) + (width) + (height)$) -- ($(temp) + (height)$) -- ($(temp)$);
\node (label) at ($(temp) + (labelshift)$)  {$c_{1}$};

\coordinate (temp) at (3);
\draw[black] ($(temp)$) -- ($(temp) + (width)$) -- ($(temp) + (width) + (height)$) -- ($(temp) + (height)$) -- ($(temp)$);
\node (label) at ($(temp) + (labelshift)$)  {$c_{0}^1$};

\coordinate (temp) at (4);
\draw[black] ($(temp)$) -- ($(temp) + (width)$) -- ($(temp) + (width) + (height)$) -- ($(temp) + (height)$) -- ($(temp)$);
\node (label) at ($(temp) + (labelshift)$)  {$c_{0}^2$};

\coordinate (temp) at (5);
\draw[black] ($(temp)$) -- ($(temp) + (width)$) -- ($(temp) + (width) + (height)$) -- ($(temp) + (height)$) -- ($(temp)$);
\node (label) at ($(temp) + (labelshift)$)  {$c_{2}$};

\coordinate (temp) at (6);
\draw[black] ($(temp)$) -- ($(temp) + (width)$) -- ($(temp) + (width) + (height)$) -- ($(temp) + (height)$) -- ($(temp)$);
\node (label) at ($(temp) + (labelshift)$)  {$c_{1}^1$};

\coordinate (temp) at (7);
\draw[black] ($(temp)$) -- ($(temp) + (width)$) -- ($(temp) + (width) + (height)$) -- ($(temp) + (height)$) -- ($(temp)$);
\node (label) at ($(temp) + (labelshift)$)  {$c_{1}^2$};

\coordinate (temp) at (10);
\draw[black] ($(temp)$) -- ($(temp) + (width)$) -- ($(temp) + (width) + (height)$) -- ($(temp) + (height)$) -- ($(temp)$);
\node (label) at ($(temp) + (labelshift)$)  {$c_{2m-1}$};

\coordinate (temp) at (11);
\draw[black] ($(temp)$) -- ($(temp) + (width)$) -- ($(temp) + (width) + (height)$) -- ($(temp) + (height)$) -- ($(temp)$);
\node (label) at ($(temp) + (labelshift)$)  {$c_{2m}$};

\coordinate (temp) at (12);
\draw[black] ($(temp)$) -- ($(temp) + (width)$) -- ($(temp) + (width) + (height)$) -- ($(temp) + (height)$) -- ($(temp)$);
\node (label) at ($(temp) + (labelshift)$)  {$c_{2m-1}^1$};

\coordinate (temp) at (13);
\draw[black] ($(temp)$) -- ($(temp) + (width)$) -- ($(temp) + (width) + (height)$) -- ($(temp) + (height)$) -- ($(temp)$);
\node (label) at ($(temp) + (labelshift)$)  {$c_{1m-1}^2$};

\coordinate (temp) at (14);
\draw[black] ($(temp)$) -- ($(temp) + (width)$) -- ($(temp) + (width) + (height)$) -- ($(temp) + (height)$) -- ($(temp)$);
\node (label) at ($(temp) + (labelshift)$)  {$x_1$};

\coordinate (temp) at (15);
\draw[black] ($(temp)$) -- ($(temp) + (width)$) -- ($(temp) + (width) + (height)$) -- ($(temp) + (height)$) -- ($(temp)$);
\node (label) at ($(temp) + (labelshift)$)  {$x_1^1$};

\coordinate (temp) at (16);
\draw[black] ($(temp)$) -- ($(temp) + (width)$) -- ($(temp) + (width) + (height)$) -- ($(temp) + (height)$) -- ($(temp)$);
\node (label) at ($(temp) + (labelshift)$)  {$x_1^2$};

\coordinate (temp) at (101);
\draw[black] ($(temp)$) -- ($(temp) + (width)$) -- ($(temp) + (width) + (height)$) -- ($(temp) + (height)$) -- ($(temp)$);
\node (label) at ($(temp) + (labelshift)$)  {$x_{n-1}$};

\coordinate (temp) at (111);
\draw[black] ($(temp)$) -- ($(temp) + (width)$) -- ($(temp) + (width) + (height)$) -- ($(temp) + (height)$) -- ($(temp)$);
\node (label) at ($(temp) + (labelshift)$)  {$x_n$};

\coordinate (temp) at (121);
\draw[black] ($(temp)$) -- ($(temp) + (width)$) -- ($(temp) + (width) + (height)$) -- ($(temp) + (height)$) -- ($(temp)$);
\node (label) at ($(temp) + (labelshift)$)  {$x_n^1$};

\coordinate (temp) at (131);
\draw[black] ($(temp)$) -- ($(temp) + (width)$) -- ($(temp) + (width) + (height)$) -- ($(temp) + (height)$) -- ($(temp)$);
\node (label) at ($(temp) + (labelshift)$)  {$x_n^2$};

\coordinate (temp) at (141);
\draw[black] ($(temp)$) -- ($(temp) + (width)$) -- ($(temp) + (width) + (height)$) -- ($(temp) + (height)$) -- ($(temp)$);
\node (label) at ($(temp) + (labelshift)$)  {$x_{n+1}$};

\coordinate (temp) at (151);
\draw[black] ($(temp)$) -- ($(temp) + (width)$) -- ($(temp) + (width) + (height)$) -- ($(temp) + (height)$) -- ($(temp)$);
\node (label) at ($(temp) + (labelshift)$)  {$x_{n+1}^1$};

\coordinate (temp) at (161);
\draw[black] ($(temp)$) -- ($(temp) + (width)$) -- ($(temp) + (width) + (height)$) -- ($(temp) + (height)$) -- ($(temp)$);
\node (label) at ($(temp) + (labelshift)$)  {$x_{n+1}^2$};

\node[draw=none]  at ($(8)$) {$\cdots$};
\node[draw=none]  at ($(9)$) {$\cdots$};
\node[draw=none]  at ($(17)$) {$\cdots$};
\node[draw=none]  at ($(18)$) {$\cdots$};
\end{tikzpicture}
\caption{Illustrations for the proof of Lemma~\ref{final}.}
\label{backboneApp}
\end{center}
\end{figure}

Now, for some $i$, $1 \leq i \leq n$, let $l_{j_1}^{r_1}, \dots, l_{j_q}^{r_q}$ be exactly the literal vertices corresponding to occurrences of literal $x_i$. By definition, these vertices form a path in this order and the structure of the backbone implies that the $x$-coordinates of their corresponding unit squares differ by at least two, which means that the visibilities between the unit-squares for $l_{j_1}^{r_1},\dots,l_{j_q}^{r_q}$ are all horizontal; thus, they form a horizontal path in this order and are all on the same side of the backbone. By definition, both vertices $\overset{_{\rightarrow}}{t_i}$ and $\overset{_{\leftarrow}}{t_i}$ are adjacent to all vertices $l_{j_1}^{r_1},\dots,l_{j_q}^{r_q}$. Since every literal of the formula has at least three occurrences, i.\,e., $q \geq 3$, the only possibility to place unit squares for $\overset{_{\rightarrow}}{t_i}$ and $\overset{_{\leftarrow}}{t_i}$ in order to see every unit square of the path is horizontally from opposite sides, i.\,e., either $R_{\overset{_{\rightarrow}}{t_i}} \Hvis R_{\{l_{j_1}^{r_1}, \dots, l_{j_q}^{r_q}\}}$ and $R_{\{l_{j_1}^{r_1}, \dots, l_{j_q}^{r_q}\}} \Hvis R_{\overset{_{\leftarrow}}{t_i}}$ or $R_{\overset{_{\leftarrow}}{t_i}} \Hvis R_{\{l_{j_1}^{r_1}, \dots, l_{j_q}^{r_q}\}}$ and $R_{\{l_{j_1}^{r_1}, \dots, l_{j_q}^{r_q}\}} \Hvis R_{\overset{_{\rightarrow}}{t_i}}$; since these two cases can be handled analogously, we assume the former. Since the horizontal distance between $R_{\overset{_{\rightarrow}}{t_i}}$ and $R_{\overset{_{\leftarrow}}{t_i}}$ is more than one unit, the unit squares for all other mutual neighbours of $\overset{_{\rightarrow}}{t_i}$ and $\overset{_{\leftarrow}}{t_i}$, i.\,e., the vertices $h_{t_i}^r$, $0 \leq r \leq 4$, and $t_i$, must also be placed horizontally in between $R_{\overset{_{\rightarrow}}{t_i}}$ and $R_{\overset{_{\leftarrow}}{t_i}}$. In particular, this implies that $R_{t_i}$ has to be placed on the same side as the path $R_{l_{j_1}^{r_1}},\dots,R_{l_{j_q}^{r_q}}$ with respect to the backbone. The same arguments hold for the literals corresponding to occurrences of a negated variable $\overline{x_i}$ connected to $f^1_{i}$ or $f^2_{i}$.\par
We now define an assignment $\sigma : \{x_1, x_2, \ldots, x_n\} \to \{\emph{true}, \emph{false}\}$ as follows. For every $i$, $1 \leq i \leq n$, we define $\sigma(x_i) = \emph{true}$ if and only if $R_{x_i}\Vvis R_{t_i}$. We claim that this assignment is a satisfying not-all-equal assignment for the formula $F$. To this end, let $c_j = \{y_{j,1},y_{j,2},y_{j,3}\}$ be an arbitrary clause of $F$. We first assume that $c_j$ does not contain any negated variable, i.\,e., $c_j = \{x_{\ell_1}, x_{\ell_2}, x_{\ell_2}\}$. Due to Lemma~\ref{between_all} it is not possible that $R_{L_j} \Vvis R_{c_j}$ or $R_{c_j} \Vvis R_{L_j}$, which implies that at least one of $R_{\{t_{\ell_1}, t_{\ell_2}, t_{\ell_3}\}}$ is placed below the backbone and at least one of them is placed above the backbone (since, as explained above, they are placed at the same side as their corresponding unit square from $R_{L_j}$). Consequently, at least one literal of $c_j$ is set to \emph{true} and at least one is set to \emph{false}.\par
Next, we assume that $c_j$ contains a negated variable and we recall that, by definition of $F$, there is at most one negated variable in $c_j$. Without loss of generality, we assume that $y_{j,1} = \overline{x_i}$ and we let $r\in\{1,2\}$ be such that $R_{t_i}$ and $R_{f_i^r}$ are on opposite sides of the backbone; note that such $r$ must exist, since otherwise $R_{A_i} \Vvis R_{x_i}$ or $R_{x_i} \Vvis R_{A_i}$, which contradicts Lemma~\ref{between_all}. We first assume that $r = 1$. If $R_{l^1_j} \Vvis R_{c_j}$, then also $R_{f^r_i} \Vvis R_{x_i}$, which, by definition of $r$, implies $R_{x_i} \Vvis R_{t_i}$; thus $\sigma(x_i) = \emph{true}$. Moreover, we must have $R_{c_j} \Vvis R_{l^2_j}$ or $R_{c_j} \Vvis R_{l^3_j}$ (since otherwise $R_{L_j} \Vvis R_{c_j}$), which means that $\sigma(y_{j,2}) = \emph{true}$ or $\sigma(y_{j,3}) = \emph{true}$. The case that $R_{c_j} \Vvis R_{l^1_j}$ analogously leads to $\sigma(x_i) = \emph{false}$ and either $\sigma(y_{j,2}) = \emph{false}$ or $\sigma(y_{j,3}) = \emph{false}$. Hence, $c_j$ is not-all-equal satisfied. In the case $r = 2$, we can apply the same argument, but with respect to $c_{j + m}, l^1_{j + m}, l^2_{j + m}, l^3_{j + m}$ instead of $c_{j}, l^1_{j}, l^2_{j}, l^3_{j}$. Since clause $c_{j + m}$ is a copy of $c_{j}$, we can again conclude that $c_j$ is not-all-equal satisfied.
\end{proof}

\subsection*{Proof of Theorem~\ref{nonGridNPCompletenessTheorem}}
 
\begin{proof} 
Due to Theorem~\ref{NPMembershipTheorem}, it only remains to show $\npclass$-hardness, which follows from Lemmas~\ref{nonGridReductionEasyDirectionLemma}~and~\ref{final}.
\end{proof}

\end{document}